\newcommand{\hV}{\widehat  V }
\newcommand{\hu}{\widehat u}
\newcommand{\hSig}{\widehat\Sigma}
\newcommand{\hlam}{\widehat\lambda}
\newcommand{\hLam}{\widehat \Lambda}
\newcommand{\cov}{\mathrm{cov}}
\newcommand{\tr}{\mathrm{tr}}
\newcommand{\hO}{\widehat \Omega}
\newcommand{\var}{\mathrm{var}}
\newcommand{\beq}{\begin{eqnarray*}}
\newcommand{\eeq}{\end{eqnarray*}}
\newcommand{\Sigu}{\Sigma_{u}}
\newcommand{\Sigun}{\Sigma_{u0}}
\newcommand{\Siguni}{\Sigma_{u0}^{-1}}
\newcommand{\lamj}{\lambda_{0j}}
\newcommand{\hSigo}{\hSig_u^{(1)}}
\newcommand{\hSigoi}{(\hSigo)^{-1}}
\newcommand{\hLamo}{\hLam^{(1)}}
\newcommand{\hLamop}{\hLam^{(1)'}}
\newcommand{\hfto}{\widehat {f}_t^{(1)}}
\newcommand{\hftt}{\widehat {f}_t^{(2)}}
\newcommand{\fpca}{\widehat {f}_t^{PCA}}
\newcommand{\lampca}{\hlam^{PCA}}
\newcommand{\hLamt}{\hLam^{(2)}}
\newcommand{\hLamtp}{\hLam^{(2)'}}
\newcommand{\hSigt}{\hSig_u^{(2)}}
\newcommand{\hSigti}{(\hSigt)^{-1}}
\numberwithin{equation}{section}
\theoremstyle{plain}
\newtheorem{thm}{Theorem}[section]
\newtheorem{lem}{Lemma}[section]
\newtheorem{assum}{Assumption}[section]
\theoremstyle{definition}
\newtheorem{remark}{Remark}[section]
\def\@biblabel#1{\hspace*{-\labelsep}}
\begin{document}

\title{ Efficient Estimation of Approximate Factor Models via Regularized Maximum Likelihood}
\author{Jushan Bai\medskip\\{\normalsize Columbia University}\\{\normalsize Department of Economics}      \and Yuan Liao \medskip\\{\normalsize  University of Maryland}\\{\normalsize Department of Mathematics}    }
\date{\today}
\maketitle

\sloppy%

\onehalfspacing


\begin{abstract}

We study the estimation of a high dimensional approximate factor model in the presence of both cross sectional dependence and heteroskedasticity. The classical   method of principal components analysis (PCA) does not efficiently estimate the factor loadings or common factors because it essentially treats the idiosyncratic error to be homoskedastic and cross sectionally uncorrelated.    For  efficient estimation it is   essential to estimate a large error covariance matrix. We assume the model to be conditionally sparse, and propose two approaches to estimating the common factors and factor loadings; both are based on maximizing a Gaussian quasi-likelihood and involve regularizing a large covariance sparse matrix. In the first approach the factor loadings and the error covariance are estimated separately while in the second approach they are estimated jointly.  Extensive asymptotic analysis has been carried out. In particular, we develop the inferential theory for the two-step estimation. Because the proposed approaches take into account the large error covariance matrix, they produce more efficient estimators than the classical PCA methods or methods based on a strict factor model.

\end{abstract}

\strut

\textbf{Keywords:} High dimensionality,  unknown factors, principal components, sparse matrix, conditional sparse, thresholding, cross-sectional correlation, penalized maximum likelihood, adaptive lasso, heteroskedasticity

\strut

\pagebreak%
\doublespacing

\onehalfspacing

\section{Introduction}
In many applications of economics, finance, and other scientific fields, researchers often face a large panel data set in which there are multiple observations for each individual; here individuals can be families, firms, countries, etc.  Modern applications usually involve data-rich environments in which both the number of observations for each individual and the number of individuals are  large.  One useful method for summarizing information in a large dataset is the factor model:
\begin{equation}
y_{it}=\alpha_i+\lambda_{0i}'f_t+u_{it}, \quad i\leq N, t\leq T, 
\end{equation}
where $\alpha_i$ is an individual effect, $\lambda_{0i}$ is an $r\times 1$ vector of factor loadings and $f_t$ is an  $r\times 1$ vector of common factors; $u_{it}$ denotes the idiosyncratic component of the model.  Note that $y_{it}$ is the only observable random variable in this model.  If we  write $y_t=(y_{1t},...,y_{Nt})'$, $\Lambda_0=(\lambda_{01},...,\lambda_{0N})'$, $\alpha=(\alpha_1,...,\alpha_N)'$ and $u_t=(u_{1t},...,u_{Nt})'$, then model (1.1) can be equivalently written as
$$
y_t=\alpha+\Lambda_0f_t+u_{t}.
$$

Because $y_{it}$ is the only observable in the model, both factors and loadings are treated as parameters to   estimate.  As  was shown by Chamberlain and Rothschild (1983),  in many applications of factor analysis, it is  desirable to allow  dependence  among the error terms $\{u_{it}\}_{i\leq N, t\leq T}$ not only serially but also cross-sectionally. This gives rise to the \textit{approximate factor model}, in which    the $N\times N$ covariance matrix $\Sigun=\cov(u_t)$ is not diagonal. In addition, the diagonal entries may vary in a large range. As a result, efficiently estimating the factor model under both large $N$ and large $T$ is difficult because to take into  account both cross-sectional heteroskedasticity and dependence of $\{u_{it}\}_{i\leq N, t\leq T}$, it is essential to estimate the large covariance $\Sigun$. The latter has been known as  a challenging problem when $N$ is larger than $T$.

In this paper, we assume the model to be \textit{conditionally sparse}, in the sense that $\Sigun$ is a sparse matrix with bounded eigenvalues. This assumption effectively reduces the number of parameters to be estimated in the model, and allows  a consistent estimation of $\Sigun$. The latter is needed to efficiently estimate the factor loadings. In addition,  it enables the model to identify the common components $\alpha_i+\lambda_{0i}'f_t$   asymptotically as $N\rightarrow\infty$.     We propose two alternative methods, both are likelihood-based. The first one is a two-step procedure. In step one, we   apply the \textit{principal orthogonal complement thresholding} (POET) estimator of Fan et al. (2012) to estimate $\Sigun$ using the adaptive thresholding as in Cai and Liu (2011); in step two, we estimate the factor loadings by maximizing a Gaussian-quasi likelihood function, which depends on the covariance estimator in the first step. These two steps can be carried out iteratively. We also propose an alternative method for jointly estimating the factor loadings and the error covariance matrix by maximizing a weighted $l_1$ penalized likelihood function. The likelihood penalizes the estimation of the off-diagonal entries of the error covariance and automatically produces  a sparse covariance estimator. We present asymptotic analysis for both methods. In particular, we derive the uniform rate of convergence and limiting distribution of the estimators for the two-step procedure.  The analysis of the joint-estimation is more difficult as it involves penalizing a large covariance with diverging eigenvalues.  We establish the consistency  for this method. 
 
Moreover, we achieve the ``sparsistency"     for the estimated error covariance matrix in factor analysis (see Section 3 for detailed explanations). The estimated covariance  is consistent for both approaches under the normalized Frobenius  norm even when $N$ is much larger than $T$. This is important in the applications of approximate factor models.

There has been a large  literature  on estimating the approximate factor model. Stock and Watson (1998, 2002) and Bai (2003) considered the principal components analysis (PCA), and they  developed large-sample inferential theory. However, the PCA essentially treats $u_{it}$ to have the same variance across $i$, hence is inefficient when cross-sectional heteroskedasticity is  present.    Choi (2012) proposed a generalized PCA that requires $N<T$ to invert  the error sample  covariance  matrix.    More recently, Bai and Li (2012)   estimated the factor loadings by maximizing the Gaussian-quasi likelihood, which  addresses the  heteroskedasticity under large $N$, but they consider the strict factor model in which $(u_{1t},...,u_{Nt})$ are uncorrelated.   Additional literature on factor analysis includes, e.g., Bai  and Ng (2002),  Wang (2009), Dias, Pinherio and Rua (2008), Breitung and Tenhofen (2011), Han (2012), etc;  most of these studies are based on the PCA method. In contrast, our methods are maximum-likelihood-based.    Maximum likelihood methods have been one of the fundamental tools for statistical estimation and inference.


Our approach is closely related to the large covariance estimation literature, which has been rapidly growing in recent years.   There are in general two ways to estimate a sparse covariance in the literature:  thresholding and penalized maximum likelihood.  For our two-step procedure, we apply the POET estimator recently proposed by Fan et al. (2012), corresponding to the thresholding approach of Bickel and Levina (2008a), Rothman et al. (2009) and Cai and Liu (2011). For the joint estimation procedure, we use the penalized likelihood, corresponding to that of Lam and Fan (2009), Bien and Tibshirani (2011), etc. In either way, we need to show that the impact of estimating the large covariances is asymptotically negligible for an efficient estimation, which is not easy in our context since the likelihood function is highly nonlinear, and   $\Lambda_0\Lambda_0'$ contains a few eigenvalues that grow very fast. It was recently shown by Fan et al. (2012) that estimating a covariance matrix with fast diverging eigenvalues is a challenging problem. Other works on large covariance estimation include  Cai and Zhou (2012),   Fan et al. (2008), Jung and Marron (2009), Witten, Tibshirani and Hastie (2009),  Deng and Tsui (2010), Yuan (2010), Ledoit and Wolf (2012), El Karoui (2008), Pati et al. (2012), Rohde and Tsybakov (2011),  Zhou et al. (2011),   Ravikumar et al. (2011) etc.

This paper focuses on  high-dimensional \textit{static   factor models} although  the factors and errors can be serially correlated. The model considered is different from the generalized  \textit{dynamic factor models} as in  Forni, Hallin, Lippi and Reichlin (2000),  Forni and Lippi (2001),  Hallin and Li\v{s}ka (2007), and other references therein. Both static and  dynamic factor models are receiving increasing attention in  applications of many fields.

The paper is organized as follows. Section 2 introduces  the conditional sparsity assumption and  the likelihood function. Section 3 proposes the two-step estimation procedure. In particular, we present asymptotic inferential theory of the estimators. Both uniform rate of convergence and limiting distributions are derived. Section 4 gives the joint estimation as an alternative procedure, where  we demonstrate the estimation consistency. Section 5 illustrates some numerical examples which compare the proposed methods with the existing ones in the literature. Finally, Section 6 concludes with further discussions. All proofs are given in the appendix.

\textbf{Notation}

Let $\lambda_{\max}(A)$ and $\lambda_{\min}(A)$ denote the maximum and minimum eigenvalues of a matrix $A$ respectively. Also Let $\|A\|_1$, $\|A\| $ and $\|A\|_F$ denote the $l_1$, spectral   and Frobenius norms of $A$, respectively. They are defined  as $\|A\|_1=\max_{i}\sum_j|A_{ij}|$, $\|A\| =\sqrt{\lambda_{\max}(A'A)}$, $\|A\|_F=\sqrt{\tr(A'A)}$.  Note that $\|A\|$ is also the Euclidean norm when $A$ is a vector. For two sequences $a_T$ and $b_T$, we write $a_T\ll b_T$, and equivalently $b_T\gg a_T$, if $a_T=o(b_T)$ as $T\rightarrow\infty.$

\section{Approximate Factor Models}
\subsection{The model}
 The approximate factor  model (1.1) implies the following covariance decomposition:
\begin{equation}\label{eq2.2}
\Sigma_{y0}=\Lambda_0\,\cov(f_t)\,\Lambda_0'+\Sigun,
\end{equation}
 assuming $f_t$ to be uncorrelated with $u_t$,
where $\Sigma_{y0}$ and $\Sigun$ denote the $N\times N$ covariance matrices of $y_t$ and $u_t$; $\cov(f_t)$ denotes the $r\times r$ covariance of $f_t$, all assumed to be time-invariant. The approximate factor model typically requires the idiosyncratic covariance $\Sigun$ have bounded eigenvalues and $\Lambda_0'\Lambda_0$ have eigenvalues diverging at rate $O(N)$. One of the key concepts of approximate factor models is that it allows $\Sigma_{u0}$ to be non-diagonal.
 
Stock and Watson (1998) and Bai (2003) derived the rates of convergence as well as the inferential theory of  the method of principal component analysis (PCA) for estimating the factors and loadings.  Let $Y=(y_1,..., y_T)'$ be the  $T\times N$ data matrix. Then PCA estimates the $T\times r$ factor matrix $F$ by maximizing $\tr(F'(YY')F)$ subject to normalization restrictions for $F$.  The PCA method  essentially restricts to have cross-sectional homoskedasticity and independence.  Thus it is known to be inefficient when the idiosyncratic errors are either cross sectionally heteroskedastic or correlated. 

 This paper aims at the efficient estimation of the approximate factor model, and assumes the number of factors $r$ to be known.  In practice, $r$ can be estimated from the data, and there has been a large literature addressing its consistent estimation, e.g., Bai and Ng (2002), Kapetanios (2010), Onatski (2010),  Alessi et al. (2010), Hallin and Li\v{s}ka (2007), Lam and Yao (2012), among others.
 
 \subsection{Conditional sparsity}
 An efficient estimation of the factor loadings and factors should  take into account both cross-sectional dependence and heteroskedasticity, which will then involve estimating   $\Sigma_{u0}=\cov(u_t)$, or more precisely, the precision matrix $\Sigma_{u0}^{-1}$. In a data-rich environment, $N$ can be either comparable with or much larger than $T$. Then estimating $\Sigma_{u0}$ is a challenging problem even when the idiosyncratics $\{u_{it}\}_{i\leq N, t\leq T}$ are observable, because the sample covariance is nonsingular when $N>T$, whose spectrum is inconsistent (Johnstone and Ma 2009).

Under the    regular approximate factor model considered by Chamberlain and Rothschild (1983) and Stock and Watson (2002), it is difficult to estimate $\Sigun$ without further structural assumptions. A natural assumption to go one-step further is   that of sparsity, which assumes  that many off-diagonal elements of $\Sigun$ be either zero or vanishing as the dimensionality increases. In an approximate factor model,     it is more appropriate to assume $\Sigun$ be a sparse matrix instead of $\Sigma_{y0}$. Due to the presence of common factors, we call such a special structure of the factor model to be \textit{conditionally sparse}.
  
  Therefore, the model studied in the current paper is the approximate factor model with conditional sparsity (sparsity structure on $\Sigma_{u0}$), which is  sightly more restrictive than that of  Chamberlain and Rothschild (1983). The conditional sparsity is required to regularize a large idiosyncratic covariance, which allows us to take both cross sectional correlation and heteroskedasticity into account, and is needed for an efficient estimation. However, such an assumption is still quite general and covers most of the applications of factor models in economics, finance, genomics, and many important applied areas.

\subsection{Maximum likelihood}
Compared to PCA, a more efficient estimation for model (2.1) of high dimension  is based on a Gaussian quasi-likelihood approach.  Let $\bar{f}=T^{-1}\sum_{t=1}^Tf_t.$ Because of the existence of $\alpha$, the model $y_t=\Lambda_0f_t+\alpha+u_t$ is observationally equivalent to $y_t=\Lambda_0f_t^*+\alpha^*+u_t$, where $f_t^*=f_t-\bar{f}$ and $\alpha^*=\alpha+\Lambda_0\bar{f}.$ Therefore without loss of generality,  we assume  $\bar{f}=0$. The Guassian quasi-likelihood for $\Sigma_y$ is given by $$-N^{-1}\log|\det(\Sigma_y)|-N^{-1}\tr(S_y\Sigma_y^{-1})$$ where $S_y=T^{-1}\sum_{t=1}^T(y_t-\bar{y})(y_t-\bar{y})'$ is the sample covariance matrix, with $\bar{y}=T^{-1}\sum_{t=1}^Ty_t$. Plugging in (\ref{eq2.2}), using the notation  $S_f=\frac{1}{T}\sum_{t=1}^Tf_tf_t'$, we obtain the quasi-likelihood function for the factors and loadings:
\begin{equation}\label{eq2.3}
-\frac{1}{N}\log\left|\det\left(\Lambda S_f\Lambda'+\Sigma_u\right)\right|-\frac{1}{N}\tr\left(S_y(\Lambda S_f\Lambda'+\Sigu)^{-1}\right),
\end{equation}
where $\Lambda=(\lambda_1,...,\lambda_N)'$ is an $N\times r$ matrix of factor loadings. 

It has been well known that the factors and loadings are not separably identified without further restrictions. Note that the factors and loadings enter the likelihood through $\Lambda S_f\Lambda'$. Hence for any invertible $r\times r$ matrix $\bar{H}$, if we define $\Lambda^*=\Lambda \bar{H}^{-1}$,  $f_t^*=\bar{H}f_t$ and $S_{f^*}=\frac{1}{T}\sum_{t=1}^Tf^*_tf_t^{*'}$, then $\Lambda^*S_{f^*}\Lambda^{*'}=\Lambda S_f \Lambda'$, and they produce observationally equivalent models.    In this paper, we focus on a usual restriction for MLE of factor analysis (see e.g., Lawley and Maxwell 1971) as follows:
 \begin{equation}\label{eq2.4}
S_f=I_r,\text{ and } \Lambda'\Sigu^{-1}\Lambda \text{ is diagonal,}
\end{equation}and the diagonal entries of $ \Lambda'\Sigu^{-1}\Lambda$ are  distinct and are arranged in a decreasing order. Restriction (\ref{eq2.4}) guarantees a unique solution to the maximization of the log-likelihood function up to a column sign change for $\Lambda$. Therefore we assume the estimator $\hLam$ and $\Lambda_0$ have the same column signs, as part of the identification conditions.

The negative log-likelihood function  (\ref{eq2.3}) simplifies to 
\begin{equation}\label{eq2.5}
-L(\Lambda,\Sigu)=\frac{1}{N}\log\left|\det\left(\Lambda\Lambda'+\Sigma_u\right)\right|+\frac{1}{N}\tr\left(S_y(\Lambda\Lambda'+\Sigu)^{-1}\right).
\end{equation}


In the presence of cross sectional dependence, $\Sigun$ is not necessarily diagonal. Therefore there can be up to $O(N^2)$ free parameters in the  likelihood function (\ref{eq2.5}).  There are in general two main    regularization approaches to estimating a large sparse covariance:  (adaptive) thresholding (Bickel and Levina 2008a, Rothman et al. 2009, Cai and Liu 2011, etc.) and penalized maximum likelihood (Lam and Fan 2009, Bien and Tibshirani 2011). Correspondingly in this paper, we propose two methods for regularizing the likelihood function to efficiently estimate  the factor loadings as well as the unknown factors. One estimates $\Sigun$ and $\Lambda_0$ in two steps and the other estimates them jointly.

\section{Two-Step Estimation}

The two-step estimation estimates $(\Lambda_0,\Sigun)$ separately. In the first step, we  estimate $\Sigun$ by the principal orthogonal complement thresholding (POET), proposed by Fan et al. (2012), and in the second step we estimate $\Lambda_0$ only, using the quasi-maximum likelihood, replacing $\Sigma_u$ by the covariance estimator obtained in step one.

\subsection{Step one: covariance estimation by thresholding}
The  POET  is based on a spectrum expansion of the sample covariance matrix and adaptive thresholding. Let $(\nu_j,\xi_j)_{j=1}^N$ be the eigenvalues-vectors of the sample covariance $S_y$ of $y_t$, in a decreasing order such that $\nu_1\geq \nu_2\geq...\geq\nu_N.$ Then $S_y$ has the following spectrum decomposition:
$$
S_y=\sum_{i=1}^r\nu_i\xi_i\xi_i'+R
$$
where $R=\sum_{i=r+1}^N\nu_i\xi_i\xi_i'$ is the \textit{orthogonal complement component}. Define a general thresholding function $s_{ij}(z): \mathbb{R}\rightarrow\mathbb{R}$ as in Rothman et al. (2009) and Cai and Liu (2011)  with an entry-dependent threshold $\tau_{ij}$ such that:\\
(i) $s_{ij}(z)=0$  if $|z|<\tau_{ij};$\\
(ii) $|s_{ij}(z)-z|\leq \tau_{ij}.$\\
(iii) There are constants $a>0$ and $b>1$ such that $|s_{ij}(z)-z|\leq a\tau_{ij}^2$ if $|z|>b\tau_{ij}$.\\
Examples of $s_{ij}(z)$ include the hard-thresholding: $s_{ij}(z)=zI_{(|z|>\tau_{ij})}$; SCAD (Fan and Li 2001), MPC (Zhang 2010) etc. Then we obtain the step-one consistent estimator for $\Sigun$:
$$
\hSig_u^{(1)}=(s_{ij}(R_{ij}))_{N\times N}, \text{ where } R=(R_{ij})_{N\times N}.
$$
We can choose the threshold as $\tau_{ij}=C\sqrt{R_{ii}R_{jj}}(\sqrt{(\log N)/T}+1/\sqrt{N})$ for some universal constant $C>0$, which corresponds to applying the threshold $C(\sqrt{(\log N)/T}+1/\sqrt{N})$ to the correlation matrix of $R$ [defined to be diag$(R)^{-1/2}R$ diag$(R)^{-1/2}$]. The POET estimator also has an equivalent expression using PCA. Let $\{\hu_{it}^{\text{PCA}}\}_{i\leq N, t\leq T}$ denote the PCA estimators of $\{u_{it}\}_{i\leq N, t\leq T}$ (Bai 2003). Then $\widehat\Sigma_{u,ij}^{(1)}=s_{ij}(T^{-1}\sum_{t=1}^T\hu_{it}^{\text{PCA}}\hu_{jt}^{\text{PCA}})$.

It was shown by Fan et al. (2012) that under some regularity conditions $\|\hSigo-\Sigun\| =O_p(N^{-1/2}+T^{-1/2}(\log N)^{1/2})$, which guarantees the positive definiteness asymptotically, given that $\lambda_{\min}(\Sigun)>0$ is bounded away from zero.

\subsection{Step two: estimating factor loadings and factors}
Replacing $\Sigu$ in (\ref{eq2.5}) by $\hSigo$, we obtain the objective function for $\Lambda$. Under the identification condition (\ref{eq2.4}), in this step, we estimate the loadings as: 
\begin{eqnarray}\label{eq3.1}
\hLamo&=&\arg\min_{\Lambda\in\Theta_{\lambda}}L_1(\Lambda)\cr
&=&\arg\min_{\Lambda\in\Theta_{\lambda}}\frac{1}{N}\log|\det(\Lambda\Lambda'+\hSigo)|+\frac{1}{N}\tr(S_y(\Lambda\Lambda'+\hSigo)^{-1})
\end{eqnarray}
where $\Theta_{\lambda}$ is a parameter space for the loading matrix, to be defined later. Suppose that  $y_t\sim N(0,\Lambda_0\Lambda_0'+\Sigun)$, the negative log-likelihood is then  the same (up to a constant) as (\ref{eq3.1}) except that $\hSigo$ should be replaced by $\Sigun$. Consequently, (\ref{eq3.1}) can be treated as a Gaussian quasi-likelihood of $\Lambda$, which will give an efficient estimation of $\Lambda_0$ since it takes into account the cross sectional heteroskedasticity and dependence in $\Sigun$ through its consistent estimator.  

After obtaining $\hLamo$, we estimate $f_t$ via the generalized least squares (GLS) as suggested by Bai and Li (2012): 
$$
\hfto=(\hLamop(\hSigo)^{-1}\hLamo)^{-1}\hLamop(\hSigo)^{-1} (y_t-\bar{y}).
$$

The proposed two-step procedure can be carried out iteratively. After obtaining $(\hLamo, \hfto)$, we  update
$$
\hu_t=y_t-\hLamo\hfto, \quad \hSigo=(s_{ij}(T^{-1}\sum_{t=1}^T\hu_{it}\hu_{jt}))_{N\times N}.
$$
Then $\hSigo$ in the objective function (\ref{eq3.1}) is updated, which gives updated  $\hLamo$ and $\hfto$ respectively. This procedure can be continued until convergence.

\subsection{Positive definiteness}
The objective function (\ref{eq3.1})  requires $\Lambda\Lambda'+\hSigo$ be positive definite for any given finite sample. A sufficient condition is the finite-sample positive definiteness of $\hSigo$,
which also depends on the choice of the adaptive threshold value $\tau_{ij}$.  We specify
$$\tau_{ij}=C\alpha_{ij}\left(\sqrt{\frac{\log N}{T}}+\frac{1}{\sqrt{N}}\right)$$
where $\alpha_{ij}$ is an entry-dependent value that captures the variability of individual variables such as $\sqrt{R_{ii}R_{jj}}$; $C>0$ is a pre-determined universal constant. More concretely, the finite sample positive definiteness depends on the choice of $C.$ If we write $\hSigo=\hSigo(C)$  in step one to indicate its dependence on the threshold, then  $C$ should be chosen in the interval $(C_{\min}, C_{\max}]$, where $$
C_{\min}=\inf\{M: \lambda_{\min}(\hSigo(C))>0, \forall C>M\},
$$
and $C_{\max}$ is a large constant that thresholds all the off-diagonal elements of $\hSigo$ to zero. Then by construction, $\hSigo(C)$ is finite-sample positive definite for any $C>C_{\min}$ (see Figure \ref{f1}).
\begin{figure}[htbp]
\begin{center}
\caption{Minimum eigenvalue of $\lambda_{\min}(\hSigo(C)) $}
\includegraphics[width=10cm]{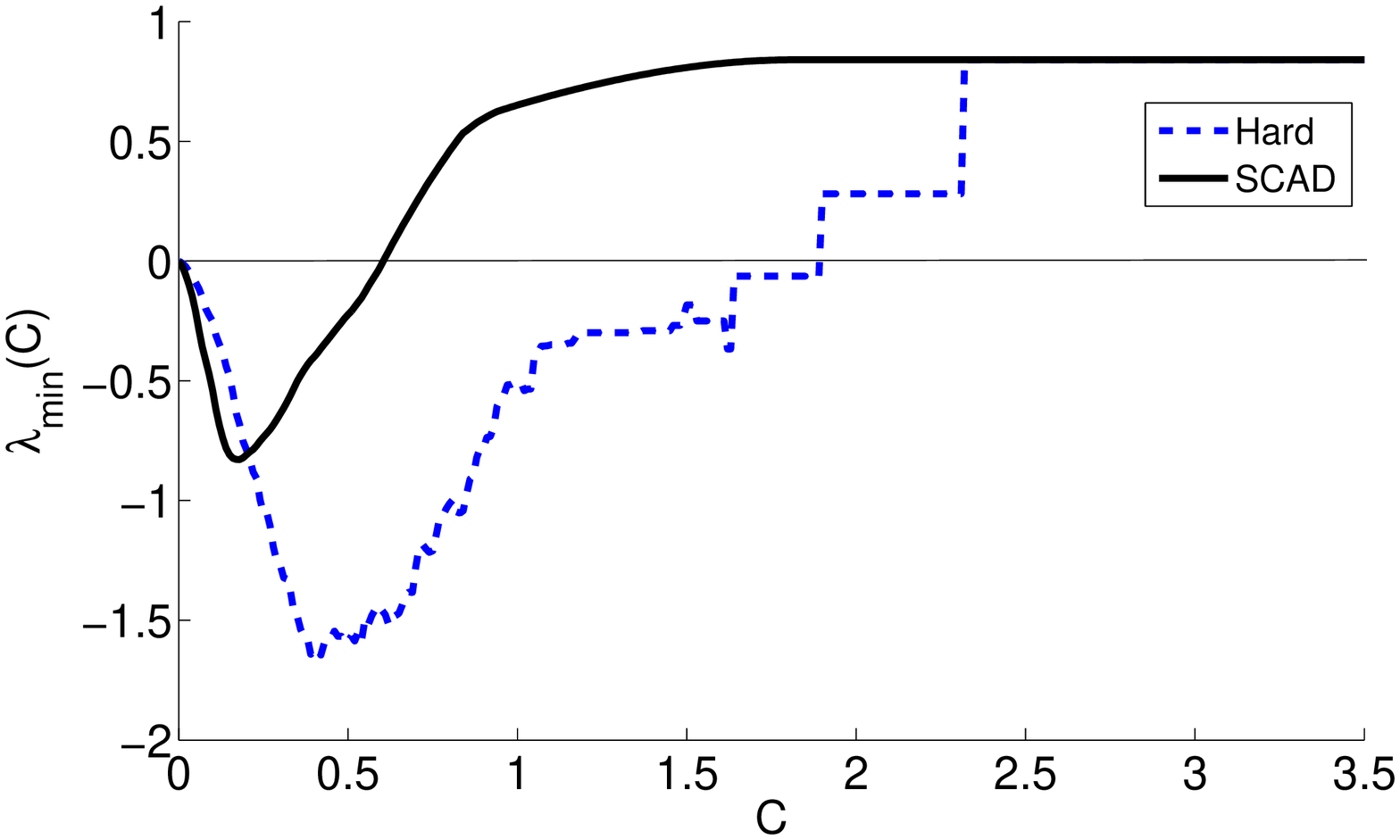}
 
\small Data are simulated from the setting of Section 5 with $T=100, N=150$. Both hard and SCAD with adaptive thresholds (Cai and Liu 2011) are plotted.
\label{f1}
\end{center}
\end{figure}

\subsection{Asymptotic analysis} 
We now present the asymptotic analysis of the proposed two-step estimator. We first list a set of regularity conditions and then present the consistency. A more refined set of assumptions are needed to achieve the optimal rate of convergence as well as the limiting distributions.

\subsubsection{Consistency}

\begin{assum}\label{ass3.1}
Let $\Sigma_{u0,ij}$ denote the $(i,j)$th entry of $\Sigun$.  There is  $q\in[0,1)$ such that
  $$
  m_N\equiv\max_{i\leq N}\sum_{j=1}^N|\Sigma_{u0,ij}|^q=o(\min(\sqrt{N}, \sqrt{T/\log N})).
  $$
In particular, when $q=0$, we define $m_N=\max_{i\leq N}\sum_{j=1}^NI_{(\Sigma_{u0,ij}\neq0)}$, which corresponds to the ``exactly sparse" case.
\end{assum}
The first assumption sets a condition on the sparsity of $\Sigun$,   under which Fan et al. (2012)  showed that the POET estimator $\hSigo$ is consistent under the operator norm.  The sparsity is in terms of the maximum row sum, considered by Bickel and Levina (2008a).

The following assumption provides the regularity conditions on the data generating process. We introduce the strong mixing condition. Let $\mathcal{F}_{-\infty}^0$ and $\mathcal{F}_{T}^{\infty}$ denote the $\sigma$-algebras generated by $\{(f_t,u_t): -\infty\leq t\leq 0\}$ and  $\{(f_t,u_t): T\leq t\leq \infty\}$ respectively. In addition, define the mixing coefficient
\begin{equation} \label{mixing}
\alpha(T)=\sup_{A\in\mathcal{F}_{-\infty}^0, B\in\mathcal{F}_{T}^{\infty}}|P(A)P(B)-P(AB)|.
\end{equation}

\begin{assum}\label{ass3.2} (i)
 $\{u_t, f_t\}_{t\geq1}$ is strictly stationary. In addition, $Eu_{it}=Eu_{it}f_{jt}=0$ for all $i\leq p, j\leq r$ and $t\leq T.$
\\
(ii) There exist  constants $c_1, c_2>0$   such that  $c_2<\lambda_{\min}(\Sigun)\leq\lambda_{\max}(\Sigun)<c_1,$ and $\max_{j\leq N}\|\lambda_{0j}\|<c_1$. \\
(iii)   Exponential tail:   There exist $r_1, r_2>0$ and $b_1, b_2>0$, such that for any $s>0$, $i\leq p$ and $j\leq r$,
\begin{equation*}
P(|u_{it}|>s)\leq\exp(-(s/b_1)^{r_1}), \quad P(|f_{jt}|>s)\leq \exp(-(s/b_2)^{r_2}).
\end{equation*}
(iv) Strong mixing: There exists   $r_3>0$ such that $3r_1^{-1}+1.5r_2^{-1}+r_3^{-1}>1$, and $C>0$ satisfying:  for all $T\in\mathbb{Z}^+$,
$$\alpha(T)\leq \exp(-CT^{r_3}).$$
\end{assum}

The following assumptions are standard in the approximate factor models, see e.g., Stock and Watson (1998, 2002) and Bai (2003). In particular, Assumption \ref{ass3.3} implies that the first $r$ eigenvalues of $\Lambda_0\Lambda_0'$ are growing rapidly at $O(N)$. Intuitively, it requires the factors be pervasive in the sense that  they  impact a non-vanishing
proportion of time series  $\{y_{1t}\}_{t\leq T},...,\{y_{Nt}\}_{t\leq T}$.

\begin{assum}\label{ass3.3}  There is a $\delta>0$ such that for all large $N$,   $$\delta^{-1}<\lambda_{\min}(N^{-1}\Lambda_0'\Lambda_0)\leq\lambda_{\max}(N^{-1}\Lambda_0'\Lambda_0)<\delta.$$
Therefore all the eigenvalues of $N^{-1}\Lambda_0'\Lambda_0$ are bounded away from both zero and infinity as $N\rightarrow\infty.$
\end{assum}
\begin{assum}\label{ass3.4}   There exists $M>0$  such that  for all    $t\leq T$ and $s\leq T$,\\
(i)  $E[N^{-1/2}({u}_s'{u}_t-E{u}_s'{u}_t)]^4<M$,\\
(ii) $E\|N^{-1/2}\sum_{j=1}^N\lamj u_{jt}\|^4<M$.
\end{assum}

 The following assumption defines the threshold $\tau_{ij}$  on the $(i,j)$th entry of $R_{ij}$ for the step-one POET estimator.  
\begin{assum}\label{ass3.5} The threshold $\tau_{ij}=C\alpha_{ij}(\sqrt{(\log N)/T}+1/\sqrt{T})$ where $\alpha_{ij}>0$ is  entry-dependent, either stochastic or deterministic, such that $\forall\epsilon>0$,  there are positive $C_1$ and $C_2$ so that
\begin{equation}\label{eq3.3}
P(C_1<\min_{i,j\leq N}\alpha_{ij}\leq\max_{i,j\leq N}\alpha_{i,j}<C_2)>1-\epsilon
\end{equation}for all large $N$ and $T$. Here $C>0$ is a deterministic constant. 
\end{assum}
 Condition (\ref{eq3.3}) requires the  rate $\tau_{ij}\asymp (\sqrt{(\log N)/T}+1/\sqrt{T})$ uniformly in $(i,j)$. This condition  is satisfied by the universal threshold $\alpha_{ij}=\alpha$ for all $(i,j)$, the  correlation threshold $\alpha_{ij}=\sqrt{R_{ii}R_{jj}}$ as discussed before, and the adaptive threshold in Cai and Liu (2011).

For identification, we require the objective function be minimized subject to the diagonality of $\Lambda'(\hSigo)^{-1}\Lambda$. In addition, since Assumption \ref{ass3.3} is essential in asymptotically identifying the covariance decomposition $\Sigma_{y0}=\Lambda_0\Lambda_0'+\Sigun$, we need to take it into account when minimizing the objective function. Therefore we assume $\delta$ in Assumption \ref{ass3.3} is sufficiently large, which leads to the following parameter space:
\begin{eqnarray}\label{eq3.4}
\Theta_{\lambda}=\{\Lambda: &&\delta^{-1}<\lambda_{\min}(N^{-1}\Lambda'\Lambda)\leq\lambda_{\max}(N^{-1}\Lambda'\Lambda)<\delta,\cr
  &&\Lambda'(\hSigo)^{-1}\Lambda \text{ is diagonal.}\}
\end{eqnarray}


Write $\gamma^{-1}=3r_1^{-1}+1.5r_2^{-1}+r_3^{-1}+1$ and $\hLamo=(\hlam_1^{(1)},...,\hlam_N^{(1)})'$. We have the following theorem.
\begin{thm}\label{th3.1} Suppose $(\log N)^{6/\gamma}=o(T),$ $T=o(N^2).$
Under Assumptions \ref{ass3.1}-\ref{ass3.5}, 
$$
\frac{1}{\sqrt{N}}\|\hLamo-\Lambda_0\|_F=o_p(1),\quad
\max_{j\leq N}\|\hlam_j^{(1)}-\lamj\|=o_p(1).
$$
\end{thm}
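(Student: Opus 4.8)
The plan is to exploit the fact that $\hLamo$ minimizes the empirical objective $L_1(\Lambda)$ over $\Theta_\lambda$, and to show that this objective is, uniformly over $\Theta_\lambda$, close to a population objective whose unique minimizer (under the identification restriction \eqref{eq2.4}) is $\Lambda_0$. The natural population object is obtained by replacing $S_y$ with $\Sigma_{y0}=\Lambda_0\Lambda_0'+\Sigun$ and $\hSigo$ with $\Sigun$ in \eqref{eq3.1}; call it $L_1^0(\Lambda)$. First I would establish the identification fact that $L_1^0$, viewed as a function of $\Lambda\Lambda'$, is uniquely minimized at $\Lambda_0\Lambda_0'$ — this is the standard Gaussian-likelihood/Kullback--Leibler argument: $-L(\Sigma)=\log|\det\Sigma| + \tr(\Sigma_{y0}\Sigma^{-1})$ is minimized at $\Sigma=\Sigma_{y0}$, and within the factor-plus-sparse decomposition the restriction that $\Lambda'\Sigun^{-1}\Lambda$ be diagonal with distinct, ordered entries pins $\Lambda_0$ down up to column signs. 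Since $\Sigun$ has bounded eigenvalues and $N^{-1}\Lambda_0'\Lambda_0$ is bounded away from $0$ and $\infty$, one also needs a quantitative version: a lower bound on $L_1^0(\Lambda)-L_1^0(\Lambda_0)$ in terms of $N^{-1}\|\Lambda\Lambda'-\Lambda_0\Lambda_0'\|_F^2$ or an equivalent distance, which in turn controls $N^{-1/2}\|\hLamo-\Lambda_0\|_F$ once one uses the parameter-space constraint to pass from $\Lambda\Lambda'$ back to $\Lambda$.

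The second ingredient is a uniform stochastic approximation: $\sup_{\Lambda\in\Theta_\lambda}|L_1(\Lambda)-L_1^0(\Lambda)| = o_p(1)$. Here there are two sources of error to control. One is the sampling error $S_y-\Sigma_{y0}$, which enters only through $\tr((S_y-\Sigma_{y0})(\Lambda\Lambda'+\hSigo)^{-1})$; writing $(\Lambda\Lambda'+\hSigo)^{-1}$ via the Sherman--Morrison--Woodbury identity separates a ``low-rank'' piece (governed by $\Lambda'(\cdot)^{-1}\Lambda$, whose inverse is $O(N^{-1})$ by Assumption \ref{ass3.3}) from the dominant $\hSigoi$ piece, and one bounds $N^{-1}\tr((S_y-\Sigma_{y0})\hSigoi)$ using Assumption \ref{ass3.4}(i), the moment/mixing conditions in Assumption \ref{ass3.2}, and the fact that $\|\hSigoi\|$ is $O_p(1)$. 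The other source is the plug-in error $\hSigo-\Sigun$, for which I would invoke the Fan et al.\ (2012) rate $\|\hSigo-\Sigun\|=O_p(N^{-1/2}+\sqrt{(\log N)/T})=o_p(1)$ quoted in Section 3.1 (valid under Assumptions \ref{ass3.1}--\ref{ass3.2}, \ref{ass3.5}); combined with $\lambda_{\min}(\Sigun)>c_2$ this gives $\|\hSigoi-\Siguni\|=o_p(1)$, and the $\log\det$ and trace terms depend smoothly on this inverse, uniformly over the compact-in-relevant-norms set $\Theta_\lambda$. Here the constraint $T=o(N^2)$ is what makes the combined error $N^{-1/2}+\sqrt{(\log N)/T}$, multiplied by the $O(N)$-scale eigenvalues, still negligible at the relevant normalization, and $(\log N)^{6/\gamma}=o(T)$ is inherited from the covariance-estimation step.

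Combining the two: with probability tending to one, $L_1^0(\hLamo)-L_1^0(\Lambda_0) = [L_1^0(\hLamo)-L_1(\hLamo)] + [L_1(\hLamo)-L_1(\Lambda_0)] + [L_1(\Lambda_0)-L_1^0(\Lambda_0)] \le 2\sup_{\Theta_\lambda}|L_1-L_1^0| + 0 = o_p(1)$, since $\hLamo$ minimizes $L_1$ and $\Lambda_0\in\Theta_\lambda$ (for $\delta$ large). The quantitative identification bound then forces $N^{-1/2}\|\hLamo-\Lambda_0\|_F=o_p(1)$. For the uniform statement $\max_{j\le N}\|\hlam_j^{(1)}-\lamj\|=o_p(1)$, the Frobenius bound alone is too weak; I would instead derive an entrywise/rowwise first-order characterization of $\hlam_j^{(1)}$ from the first-order conditions of \eqref{eq3.1} — essentially $\hlam_j^{(1)}$ is a (GLS-type) regression coefficient of the $j$-th row of $Y$ on the estimated factors — and then bound the row-wise deviation uniformly in $j$ using the exponential-tail Assumption \ref{ass3.2}(iii), the mixing Assumption \ref{ass3.2}(iv), and a maximal inequality, together with the already-established consistency of the factor-space estimate implied by the Frobenius bound.

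I expect the main obstacle to be the uniform approximation step, and within it specifically controlling the interaction between the large $O(N)$ eigenvalues of $\Lambda\Lambda'$ and the plug-in error $\hSigo-\Sigun$ inside the highly nonlinear $\log\det$ and trace terms. The Woodbury decomposition is the right tool, but one must be careful that perturbing $\hSigo$ does not destabilize the ``low-rank correction'' $(\Lambda'(\Lambda\Lambda'+\hSigo)^{-1}\Lambda)^{-1}$, whose scale $O(N^{-1})$ is exactly what cancels the $O(N)$ growth; showing this cancellation is robust to the covariance-estimation error, uniformly over $\Theta_\lambda$, is the delicate part of the argument and is where Assumptions \ref{ass3.3}--\ref{ass3.4} and the rate conditions on $(N,T)$ do the real work.
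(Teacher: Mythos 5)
Your overall architecture --- minimizer comparison against an approximating criterion, a uniform stochastic approximation handled via the Woodbury identity and the Fan et al.\ (2012) rate for $\hSigo$, and a row-wise first-order-condition argument for $\max_{j\le N}\|\hlam_j^{(1)}-\lamj\|$ --- is the same as the paper's. But there is a genuine gap in the identification step. The quantitative bound you propose, $L_1^0(\Lambda)-L_1^0(\Lambda_0)\gtrsim N^{-1}\|\Lambda\Lambda'-\Lambda_0\Lambda_0'\|_F^2$, is false. Take $\Lambda=\Lambda_0 H$ with $H$ invertible and $HH'\neq I_r$: the two Gaussian models $N(0,\Lambda\Lambda'+\Sigun)$ and $N(0,\Lambda_0\Lambda_0'+\Sigun)$ differ only on an $r$-dimensional subspace, so their Kullback--Leibler divergence is $O(1)$ and $L_1^0(\Lambda)-L_1^0(\Lambda_0)=O(N^{-1})$, whereas $N^{-1}\|\Lambda\Lambda'-\Lambda_0\Lambda_0'\|_F^2\asymp N\|HH'-I_r\|_F^2\rightarrow\infty$. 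The paper's Lemma \ref{la.2} makes this flatness explicit: up to a remainder that is $O(\log N/N+\sqrt{\log N/T})$ uniformly over the parameter space, the $\Lambda$-dependent part of the criterion is the projection residual $Q_2(\Lambda,\Sigma_u)$ in (\ref{q2}), which vanishes identically whenever $\Lambda$ and $\Lambda_0$ span the same column space. Consequently the sandwich argument $L_1(\hLamo)\le L_1(\Lambda_0)$ delivers only $Q_2(\hLamo,\hSigo)=o_p(1)$, i.e.\ consistency of the column space of $\hLamo$, not consistency of $\hLamo$ itself.

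The missing ingredient is the step that pins down the rotation, and it cannot be obtained from the criterion alone; it must be run jointly with the first-order conditions and the diagonality restriction. The paper does this through $J=(\hLamo-\Lambda_0)'\hSigoi\hLamo(\hLamop\hSigoi\hLamo)^{-1}$: column-space consistency gives $(I_r-J)\,N^{-1}\hLamop\hSigoi\hLamo\,(I_r-J)'=N^{-1}\Lambda_0'\Siguni\Lambda_0+o_p(1)$, where both inner matrices are diagonal with eigenvalues bounded away from zero by the identification restriction; the first-order condition separately yields $(J-I_r)'(J-I_r)=I_r+o_p(1)$ (Lemma \ref{la.3}); and Lemma A.1 of Bai and Li (2012), together with the ordering and column-sign convention, then forces $J=o_p(1)$, from which $N^{-1}\|\hLamo-\Lambda_0\|_F^2=o_p(1)$ follows. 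Your final step also depends on this, since the row-wise expansion is $\hlam_j^{(1)}-\lamj=-J'\lamj+(\hLamop\hSigoi\hLamo)^{-1}\hLamop\hSigoi a_j$ and the first term is controlled uniformly in $j$ only once $J=o_p(1)$. As written, your argument establishes subspace consistency but not the theorem.
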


By a more careful large-sample analysis, we can improve the above result and derive the rate of convergence. Throughout the paper, we will frequently use the notation:
$$
\omega_T=\frac{1}{\sqrt{N}}+\sqrt{\frac{\log N}{T}}.
$$

\begin{thm}\label{th3.2}
Under the Assumptions of Theorem \ref{th3.1},
$$
\frac{1}{\sqrt{N}}\|\hLamo-\Lambda_0\|_F=O_p(m_N\omega_T^{1-q}), \quad
\max_{j\leq N}\|\hlam_j^{(1)}-\lamj\|=O_p(m_N\omega_T^{1-q}),
$$
where $m_N$ and $q$ are defined in Assumption \ref{ass3.1}.
\end{thm}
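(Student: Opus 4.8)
The idea is to promote the consistency of Theorem~\ref{th3.1} to a rate by linearising the first-order condition of the step-two problem (\ref{eq3.1}) around $\Lambda_0$, with the rate then inherited from the step-one covariance error. Since $L_1$ depends on $\Lambda$ only through $\Lambda\Lambda'$ and the eigenvalue constraint defining $\Theta_\lambda$ is a constraint on the orbit $\{\Lambda H:HH'=I_r\}$, the minimiser $\hLamo$ satisfies the interior stationarity condition of the unconstrained problem, which (using the Woodbury identity) can be written as
$$
\hLamo\big(I_r+\hLamop(\hSigo)^{-1}\hLamo\big)=S_y(\hSigo)^{-1}\hLamo .
$$
Its population counterpart, satisfied exactly by $\Lambda_0$ because $\Sigma_{y0}=\Lambda_0\Lambda_0'+\Sigun$, is $\Lambda_0(I_r+\Lambda_0'\Siguni\Lambda_0)=\Sigma_{y0}\Siguni\Lambda_0$. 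One technical point must be cleared first: $\hLamo$ is pinned down by diagonality of $\Lambda'(\hSigo)^{-1}\Lambda$ whereas $\Lambda_0$ satisfies diagonality of $\Lambda'\Siguni\Lambda$, so before comparing them I would show, by a perturbation argument exploiting the distinctness of the diagonal entries of $N^{-1}\Lambda_0'\Siguni\Lambda_0$, that the gauge mismatch between the two equations is of order $\|\hSigo-\Sigun\|$, hence no worse than the claimed rate.

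Writing $\Delta\equiv\hLamo-\Lambda_0$ and subtracting the two equations, I would obtain an identity of the schematic form $\mathcal{L}_N(\Delta)=\mathcal{E}_N+\mathcal{R}_N(\Delta)$, where $\mathcal{L}_N$ is the linearisation at $\Lambda_0$ of $\Lambda\mapsto\Lambda(I_r+\Lambda'\Siguni\Lambda)-\Sigma_{y0}\Siguni\Lambda$, the remainder $\mathcal{R}_N$ is quadratic in $\Delta$, and $\mathcal{E}_N$ collects the stochastic perturbations due to $S_y-\Sigma_{y0}$ and $(\hSigo)^{-1}-\Siguni$. The structural heart of the proof is that $\mathcal{L}_N$, after the rotational zero modes are removed (they are, by restriction~(\ref{eq2.4})), is boundedly invertible \emph{in the normalisation $N^{-1}\|\cdot\|$}: this is where Assumption~\ref{ass3.3} ($\lambda_{\min}(N^{-1}\Lambda_0'\Lambda_0)$ bounded away from zero) and Assumption~\ref{ass3.2}(ii) (bounded eigenvalues of $\Sigun$) enter, and it reflects the fact that the fast-diverging eigenvalues of $\Lambda_0\Lambda_0'$ make the loading directions strongly identified. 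Because $N^{-1/2}\|\Delta\|_F=o_p(1)$ from Theorem~\ref{th3.1}, the quadratic remainder $\mathcal{R}_N(\Delta)$ is of smaller order and is absorbed, giving $N^{-1/2}\|\Delta\|_F\lesssim\|\mathcal{E}_N\|$ in the relevant norm; treating the $j$th row of the identity separately yields the analogous bound for $\max_{j\le N}\|\hlam_j^{(1)}-\lamj\|$.

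It then remains to bound $\mathcal{E}_N$. The $(\hSigo)^{-1}-\Siguni$ part is $O_p(\|\hSigo-\Sigun\|)=O_p(m_N\omega_T^{1-q})$ by the POET rate of Fan et al. (2012) recalled in Section~3.1 (the passage from $\hSigo-\Sigun$ to $(\hSigo)^{-1}-\Siguni$ uses $\lambda_{\min}(\Sigun)$ bounded away from zero). The $S_y-\Sigma_{y0}$ part reduces to bilinear forms such as $N^{-1}\Lambda_0'(S_y-\Sigma_{y0})\Siguni\Lambda_0$, $N^{-1}\Lambda_0'(S_y-\Sigma_{y0})\Siguni$ contracted against bounded vectors, and, row-wise, $N^{-1}e_j'(S_y-\Sigma_{y0})\Siguni\Lambda_0$; these are $O_p(\omega_T)$ under Assumptions~\ref{ass3.2}(iii)--(iv) and \ref{ass3.4}, because the exponential-tail and strong-mixing conditions give $\max_{i,j\le N}|S_{y,ij}-\Sigma_{y0,ij}|=O_p(\sqrt{(\log N)/T})$ (legitimate since $(\log N)^{6/\gamma}=o(T)$) while the fourth-moment bounds in Assumption~\ref{ass3.4} furnish the extra $N^{-1/2}$ in the factor-weighted averages (and $T=o(N^2)$ keeps the two contributions balanced). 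Since $\omega_T\le m_N\omega_T^{1-q}$ for $q\in[0,1)$, we get $\|\mathcal{E}_N\|=O_p(m_N\omega_T^{1-q})$, and both displays in the theorem follow.

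\textbf{Main obstacle.} I expect the hardest part to be the quantitative, uniform invertibility of $\mathcal{L}_N$ with the plug-in: one must show that the linearised information operator at $\Lambda_0$, restricted to the complement of the rotational directions, is bounded below uniformly in $N$ in the $N^{-1}\|\cdot\|$ normalisation, that this lower bound is not destroyed when $\Siguni$ is replaced by $(\hSigo)^{-1}$ inside the operator, and that it holds uniformly over the consistency neighbourhood of Theorem~\ref{th3.1}; for the $\max_j$ statement one must additionally verify that the row-by-row inversion does not leak an extra $\sqrt{\log N}$ factor. Making the alignment of the two diagonalising gauges rigorous, so that the estimator and population first-order conditions are genuinely comparable term by term, is the other delicate ingredient.
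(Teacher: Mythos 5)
Your plan is essentially the paper's argument: the paper also starts from the first-order condition of (\ref{eq3.1}), writes the row-wise error as $\hlam_j^{(1)}-\lamj=-J'\lamj+(\hLamop(\hSigo)^{-1}\hLamo)^{-1}\hLamop(\hSigo)^{-1}a_j$ with $J=(\hLamo-\Lambda_0)'(\hSigo)^{-1}\hLamo(\hLamop(\hSigo)^{-1}\hLamo)^{-1}$ playing exactly the role of your rotational zero mode and $a_j$ collecting the $S_y-\Sigma_{y0}$ and $\hSigo-\Sigun$ perturbations, bounds those by $O_p(m_N\omega_T^{1-q})$ via the POET rate and uniform concentration, and resolves your stated ``main obstacle'' concretely by pinning the diagonal of $J$ from the symmetrized first-order condition and its off-diagonal from the diagonality gauge combined with the distinctness of the diagonal entries of $\hLamop(\hSigo)^{-1}\hLamo$, absorbing the quadratic remainder through a contradiction step that invokes the consistency of Theorem \ref{th3.1}. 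Thus the invertibility of your $\mathcal{L}_N$ modulo rotations collapses to an $r\times r$ linear system rather than a genuine operator inversion, and no $\sqrt{\log N}$ leaks into the $\max_j$ bound because $J$ does not depend on $j$ and $\max_{j\leq N}\|\lamj\|$ is bounded.
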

\begin{remark} In the above theorem $m_N$ does not need to be bounded. But in order to achieve the $\sqrt{T}$-consistency for each $\hlam_j$, the uniform rate of convergence above would require it be bounded (which is a strong assumption on the sparsity of $\Sigun$). Later in Section 3.4.3 we will enhance this  convergence rate  so that the boundedness of $m_N$ is not necessary and $\sqrt{T}$-consistency can still be achieved. This will require additional regularity conditions.

\end{remark}

\subsubsection{Covariance estimation and sparsistency}

In order to obtain the limiting distribution for each individual $\hlam_j^{(1)}$, we also need to achieve the \textit{sparsistency} for estimating $\Sigun.$ By sparsistency, we mean the property that all small entries of $\Sigun$   are  estimated as exactly zeros with  a probability  arbitrarily close to one. Besides being important for deriving the limiting distribution of $\hlam_j^{(1)}$, the  sparsistency itself is of independent interest for large covariance estimation, and  has been studied by many authors, for instance, Lam and Fan (2009) and Rothman et al. (2009).   To our best knowledge, this is the first place where the sparsistency for an estimated idiosyncratic $\Sigun$ is achieved in a high dimensional approximate factor model. 

 Let $S_L$ and $S_U$ denote two disjoint sets and  respectively include the indices of small and large elements of $\Sigun$ in absolute value, and
$$
\{(i,j): i\leq N, j\leq N\}=S_L\cup S_U.
$$
Because the diagonal elements represent the individual variances of the idiosyncratic components, we assume $(i,i)\in S_U$ for all $i\leq N.$  The   sparsity assumes that most of the indices $(i,j)$ belong to $S_L$ when $i\neq j$.  A special case arises when $\Sigma_{u0}$ is strictly sparse, in the sense that its elements in small magnitudes ($S_L$) are exactly zero. For the banded matrix as an example,
$$\Sigma_{u0,ij}\neq 0 \text{ if }|i-j|\leq k;\quad \Sigma_{u0,ij}=0 \text{ if }|i-j|> k$$
for some fixed $k.$ Then $S_L=\{(i,j):|i-j|>k\}$ and $S_U=\{(i,j):|i-j|\leq k\}$.

The following assumption quantifies the ``small" and ``big" entries of $\Sigun$. By ``small"  entries  we mean those of smaller order than $\omega_T=N^{-1/2}+T^{-1/2}(\log N)^{1/2}.$  The partition $\{(i,j): i\leq N, j\leq N\}=S_L\cup S_U$ may not be unique. Our analysis suffices as long as such a partition exists.
\begin{assum}\label{ass3.6}
There is a partition $\{(i,j): i\leq N, j\leq N\}=S_L\cup S_U$ such that $(i,i)\in S_U$ for all $i\leq N$ and $S_L$ is nonempty. In addition, 
$$\max_{(i,j)\in S_L}|\Sigma_{u0, ij}|\ll  \omega_T\ll\min_{(i,j)\in S_U}|\Sigma_{u0, ij}|.$$
\end{assum}

The conditional sparsity assumption requires most off-diagonal entries of $\Sigun$ be inside $S_L$, hence it is reasonable to have $S_L\neq\emptyset$ in the condition.  It is likely that $S_U$ only contains the diagonal elements. It then essentially corresponds to the strict factor model where $\Sigun$ is almost a diagonal matrix and error terms are only weakly cross-sectionally correlated.  That is also a special case of Assumption \ref{ass3.6}.


\begin{thm}\label{th3.3}
Under Assumption \ref{ass3.6} and those of Theorem \ref{th3.2}, for any $\epsilon>0$ and $M>0$, 
there is an integer $N_0>0$ such that as long as $T$ and $ N>N_0$,
\begin{eqnarray*}
 P(\hSig_{u,ij}^{(1)}=0, \forall (i,j)\in S_L)>1-\epsilon, \\
 P(|\hSig_{u,ij}^{(1)}|>M\omega_T, \forall (i,j)\in S_U)>1-\epsilon.
\end{eqnarray*}
\end{thm}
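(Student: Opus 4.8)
The plan is to reduce the statement to a deterministic argument about the thresholding map once two high‑probability facts are in hand. Throughout, write $R_{ij}$ for the $(i,j)$ entry of the orthogonal complement $R$ of Section~3.1 (equivalently $R_{ij}=T^{-1}\sum_{t=1}^T\hu_{it}^{\text{PCA}}\hu_{jt}^{\text{PCA}}$, or, for the iterated version, the residual‑based pre‑thresholding estimate), so that $\hSig_{u,ij}^{(1)}=s_{ij}(R_{ij})$. The first fact I would record is the uniform entrywise rate
\[
\max_{i,j\leq N}|R_{ij}-\Sigma_{u0,ij}|=O_p(\omega_T),
\]
which is the entrywise bound underlying the POET analysis of Fan et al. (2012) under Assumptions~\ref{ass3.1}--\ref{ass3.4} (and, for the iterated version, follows from the rates behind Theorems~\ref{th3.1}--\ref{th3.2}): one decomposes $R_{ij}-\Sigma_{u0,ij}$ into the oracle part $T^{-1}\sum_t(u_{it}u_{jt}-Eu_{it}u_{jt})$, which is $O_p(\sqrt{(\log N)/T})$ by a Bernstein inequality for strong‑mixing sequences together with the exponential‑tail condition, and a remainder from replacing $(\lambda_{0i},f_t)$ by their principal‑component estimates, which contributes at order $\omega_T$. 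The second fact is that, by Assumption~\ref{ass3.5}, the event $\mathcal{A}_N=\{C_1\leq\min_{i,j}\alpha_{ij}\leq\max_{i,j}\alpha_{ij}\leq C_2\}$ has probability exceeding $1-\epsilon/2$ for all large $N,T$, and on $\mathcal{A}_N$ one has $C C_1\,\omega_T\leq\tau_{ij}\leq C C_2\,\omega_T$ uniformly in $(i,j)$ (since $\tau_{ij}=C\alpha_{ij}\omega_T$).

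For the $S_L$ part I would work on $\mathcal{A}_N\cap\{\max_{i,j}|R_{ij}-\Sigma_{u0,ij}|\leq K\omega_T\}$, whose probability is at least $1-\epsilon$ once $K$ is taken large (using the first fact). On this event, for $(i,j)\in S_L$ Assumption~\ref{ass3.6} gives $|\Sigma_{u0,ij}|=o(\omega_T)$, hence $|\Sigma_{u0,ij}|<\omega_T$ for $N>N_0$, so
\[
|R_{ij}|\leq|\Sigma_{u0,ij}|+K\omega_T<(K+1)\omega_T\leq\tau_{ij}
\]
provided the universal threshold constant $C$ is large enough that $C C_1\geq K+1$ — the same flavour of requirement that guarantees finite‑sample positive definiteness in Section~3.3. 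Property (i) of $s_{ij}$ then forces $\hSig_{u,ij}^{(1)}=0$ for every $(i,j)\in S_L$ simultaneously, giving the first displayed inequality of the theorem.

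For the $S_U$ part I would use property (ii), $|s_{ij}(z)-z|\leq\tau_{ij}$, and the reverse triangle inequality on the same event to obtain
\[
|\hSig_{u,ij}^{(1)}|\geq|R_{ij}|-\tau_{ij}\geq|\Sigma_{u0,ij}|-|R_{ij}-\Sigma_{u0,ij}|-\tau_{ij}\geq|\Sigma_{u0,ij}|-(K+C C_2)\omega_T.
\]
Since $(i,j)\in S_U$ forces $|\Sigma_{u0,ij}|\gg\omega_T$ by Assumption~\ref{ass3.6} (with $|\Sigma_{u0,ii}|\geq c_2$ on the diagonal, by Assumption~\ref{ass3.2}(ii)), the right‑hand side exceeds $(M+1)\omega_T>M\omega_T$ uniformly over $S_U$ for all $N>N_0$, which is the second displayed inequality. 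Finally I would take $N_0$ large enough to absorb all the ``for $N$ large'' thresholds and to push each of the two governing events to probability at least $1-\epsilon/2$, and intersect.

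The step I expect to be the main obstacle is the first fact: showing $\max_{i,j}|R_{ij}-\Sigma_{u0,ij}|=O_p(\omega_T)$ \emph{with the $O_p(1)$ constant under control}, so that it is genuinely dominated by $\tau_{ij}$ uniformly over the up to $N^2$ small entries. The oracle part is a routine mixing Bernstein/maximal‑inequality computation, but the contribution of estimating the $r$ pervasive factors and loadings — where $\Lambda_0\Lambda_0'$ carries eigenvalues of order $N$ — must be shown to enter the entrywise error only at order $\omega_T$ rather than at some slower rate; this is exactly the delicate part inherited from Fan et al. (2012) and the PCA consistency arguments underlying Theorems~\ref{th3.1}--\ref{th3.2}. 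Everything after that is deterministic bookkeeping with properties (i)--(iii) of $s_{ij}$ and the separation between $S_L$ and $S_U$ supplied by Assumption~\ref{ass3.6}.
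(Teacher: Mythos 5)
Your proposal is correct and follows essentially the same route as the paper's proof: both rest on the uniform entrywise bound $\max_{i,j}|R_{ij}-\Sigma_{u0,ij}|=O_p(\omega_T)$ from Fan et al. (2012), the uniform control $\tau_{ij}\asymp\omega_T$ from Assumption \ref{ass3.5}, and the same deterministic use of properties (i)--(ii) of $s_{ij}$ together with the $S_L$/$S_U$ separation of Assumption \ref{ass3.6}; even your requirement that the universal threshold constant dominate the $O_p(1)$ constant of the entrywise rate mirrors the paper's condition $M>2C/C_1$.
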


\vspace{2em}

  It was shown by Fan et al. (2012) that $\|\hSigoi-\Sigun^{-1}\|=O_p(m_N\omega_T^{1-q})$. Theorem \ref{l3.1} below demonstrates a strengthened convergence rate for the averaged estimation error.

\begin{assum} \label{ass3.7} There is $c>0$ such that $\|\Sigun^{-1}\|_1<c.$
\end{assum}

In addition to Assumptions \ref{ass3.1} and \ref{ass3.6}, we require the following condition on the sparsity of $\Sigun$, which further characterizes $S_L$ and $S_U$:

\begin{assum}\label{ass3.8add}
The index sets $S_L$ and $S_U$ satisfy: $\sum_{i\neq j, (i,j)\in S_U}1=O(N)$ and  
  $\sum_{(i,j)\in S_L}|\Sigma_{u0,ij}|=O(1)$.
\end{assum}
Assumption \ref{ass3.8add} requires that the number of off-diagonal large entries of $\Sigun$ be of order $O(N)$, and that the absolute sum of the small entries is bounded. This assumption is satisfied, for example, if $\{u_{it}\}_{i\leq N}$ follows an heteroskedastic MA($p$) process with a fixed $p$, where $\sum_{i\neq j, (i,j)\in S_U}1=O(N)$ and     $\sum_{(i,j)\in S_L}|\Sigma_{u0,ij}|=0$. It is also satisfied by banded matrices (Bickel and Levina 2008b, Cai and Yuan 2012) and block-diagonal matrices with fixed block size.

Define an $r\times N$ matrix $\Xi=\Lambda_0'\Siguni=(\xi_1,...,\xi_N)$. Then $\|\Sigun^{-1}\|_1<c$ implies 
$$\max_{j\leq N}\|\xi_j\|=\max_{j\leq N}\|\sum_{i=1}^N\lambda_{0i}(\Siguni)_{ij}\|\leq\|\Siguni\|_1\max_{j\leq N}\|\lamj\|<\infty.$$

The following assumption corresponds to those of PCA in Bai (2003), and also extends to the non-diagonal $\Sigun$.
\begin{assum}\label{ass3.9}
(i)
$E\|\frac{1}{\sqrt{TN}}\sum_{s=1}^Tf_s(u_s'u_t-Eu_s'u_t)\|^2=O(1)$\\
(ii) 
For each element $d_{i,kl}$ of $\xi_i\xi_i'$ ($k,l\leq r$), \\
$\frac{1}{N\sqrt{NT}}\sum_{j=1}^N\sum_{i=1}^N\sum_{t=1}^T(u_{it}u_{jt}-Eu_{it}u_{jt})\lambda_{0i}\lambda_{0j}'d_{i,kl}=O_p(1)$,\\
 $\frac{1}{\sqrt{NT}}\sum_{i=1}^N\sum_{t=1}^T(u_{it}^2-Eu_{it}^2)\xi_i\xi_i'=O_p(1).$\\
(iii) For each element $d_{ij, kl}$ of $\xi_i\xi_j'$,\\
$
\frac{1}{N\sqrt{NT}}\sum_{i\neq j, (i,j)\in S_U}\sum_{t=1}^T\sum_{v=1}^N(u_{it}u_{vt}-Eu_{it}u_{vt})\lambda_{0j}\lambda_{0v}'d_{ij, kl}=O_p(1)$,\\
$
\frac{1}{\sqrt{NT}}\sum_{i\neq j, (i,j)\in S_U}\sum_{t=1}^T(u_{it}u_{jt}-Eu_{it}u_{jt})\xi_i\xi_j'=O_p(1)$.

\end{assum}

Under Assumption \ref{ass3.9},  we can achieve the following improved rate of convergence for the averaged  estimation error $\hSigo-\Sigun$:

\begin{thm}\label{l3.1}
Under the assumptions of  Theorem \ref{th3.3} and Assumption \ref{ass3.9},
$$
\frac{1}{N}\|\Lambda_0'[\hSigoi-\Sigun^{-1}]\Lambda_0\|_F=O_p(m_N^2\omega_T^{2-2q}).
$$
\end{thm}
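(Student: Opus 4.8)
The plan is to linearize the matrix inverse. Using the exact identity
\begin{equation*}
\hSigoi-\Sigun^{-1}=-\Sigun^{-1}(\hSigo-\Sigun)\Sigun^{-1}+\hSigoi(\hSigo-\Sigun)\Sigun^{-1}(\hSigo-\Sigun)\Sigun^{-1},
\end{equation*}
pre- and post-multiplying by $\Lambda_0'$ and $\Lambda_0$, and dividing by $N$, we obtain, with $\Xi=\Lambda_0'\Sigun^{-1}=(\xi_1,\dots,\xi_N)$ (which satisfies $\max_{j\leq N}\|\xi_j\|=O(1)$ under Assumption \ref{ass3.7}),
\begin{equation*}
\tfrac1N\Lambda_0'(\hSigoi-\Sigun^{-1})\Lambda_0=-\tfrac1N\Xi(\hSigo-\Sigun)\Xi'+\tfrac1N\Lambda_0'\hSigoi(\hSigo-\Sigun)\Sigun^{-1}(\hSigo-\Sigun)\Sigun^{-1}\Lambda_0.
\end{equation*}
The quadratic remainder is the easy term: bounding $\|\Lambda_0\|=O(\sqrt N)$, $\|\Sigun^{-1}\|=O(1)$, $\|\hSigoi\|=O_p(1)$ (from $\|\hSigoi-\Sigun^{-1}\|=o_p(1)$), and using $\|\hSigo-\Sigun\|=O_p(m_N\omega_T^{1-q})$ twice, its Frobenius norm is at most $\tfrac{\sqrt r}N\,O_p(\sqrt N)\,O_p(m_N\omega_T^{1-q})\,O(1)\,O_p(m_N\omega_T^{1-q})\,O(\sqrt N)=O_p(m_N^2\omega_T^{2-2q})$.

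The real work is the linear term $\tfrac1N\|\Xi(\hSigo-\Sigun)\Xi'\|_F=\tfrac1N\|\sum_{i,j\leq N}(\hSigo-\Sigun)_{ij}\xi_i\xi_j'\|_F$: the crude bound $\tfrac1N\|\Xi\|^2\sqrt r\,\|\hSigo-\Sigun\|$ only yields $O_p(m_N\omega_T^{1-q})$, so the structure of the sum must be exploited. Using the partition of Assumption \ref{ass3.6}, split the double sum into the pieces $(i,j)\in S_L$, the diagonal $(i,i)\in S_U$, and the off-diagonal $(i,j)\in S_U$ with $i\neq j$. On the sparsistency event of Theorem \ref{th3.3} (probability $\to1$), $(\hSigo)_{ij}=0$ for all $(i,j)\in S_L$, so the first piece equals $\|\sum_{(i,j)\in S_L}\Sigma_{u0,ij}\xi_i\xi_j'\|_F\leq(\max_j\|\xi_j\|)^2\sum_{(i,j)\in S_L}|\Sigma_{u0,ij}|=O(1)$ by Assumption \ref{ass3.8add}, hence $O(N^{-1})$ after dividing by $N$. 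For the two $S_U$ pieces write $(\hSigo)_{ij}-\Sigma_{u0,ij}=(s_{ij}(R_{ij})-R_{ij})+(R_{ij}-\Sigma_{u0,ij})$; since $|\Sigma_{u0,ij}|\gg\omega_T$ on $S_U$ while $\max_{i,j}|R_{ij}-\Sigma_{u0,ij}|=O_p(\omega_T)$ by the POET analysis of Fan et al. (2012), one has $|R_{ij}|>b\tau_{ij}$ uniformly with probability $\to1$, so property (iii) of the thresholding rule gives $|s_{ij}(R_{ij})-R_{ij}|\leq a\tau_{ij}^2=O(\omega_T^2)$; since $S_U$ has only $O(N)$ indices (Assumption \ref{ass3.8add}), this bias contributes at most $\tfrac1N(\max_j\|\xi_j\|)^2\,O(N)\,O(\omega_T^2)=O(\omega_T^2)$.

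It remains to handle the genuinely stochastic part $R_{ij}-\Sigma_{u0,ij}$ weighted by $\tfrac1N\sum\xi_i\xi_j'$. Expanding the POET residuals $\hu_{it}^{\text{PCA}}$ as in Bai (2003) and Fan et al. (2012), $R_{ij}-\Sigma_{u0,ij}$ equals $T^{-1}\sum_t(u_{it}u_{jt}-Eu_{it}u_{jt})$ plus remainders involving the loading/factor estimation errors; the leading part, summed over the diagonal, is controlled by Assumption \ref{ass3.9}(ii), which gives $\|\sum_i\xi_i\xi_i'T^{-1}\sum_t(u_{it}^2-Eu_{it}^2)\|_F=O_p(\sqrt{N/T})$, i.e.\ $O_p((NT)^{-1/2})$ after the $1/N$; the $O(N)$ off-diagonal $S_U$ terms are controlled the same way by Assumption \ref{ass3.9}(iii), and the PCA remainders by Assumption \ref{ass3.9}(i) together with the first lines of (ii) and (iii). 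Since $\omega_T\to0$, $m_N$ is bounded away from zero, and $\omega_T\geq N^{-1/2}$, each of $O(N^{-1})$, $O(\omega_T^2)$ and $O_p((NT)^{-1/2})$ is $O(m_N^2\omega_T^{2-2q})$ (e.g.\ $(NT)^{-1/2}\leq\tfrac12(N^{-1}+T^{-1})=O(\omega_T^2)$); collecting the three pieces with the quadratic remainder gives the claim.

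I expect the main obstacle to be the PCA remainder terms in the decomposition of $R_{ij}-\Sigma_{u0,ij}$: each has to be matched to the correct line of Assumption \ref{ass3.9} and bounded after the $\tfrac1N\sum_{i,j}\xi_i\xi_j'$ weighting, which must be shown to supply genuine cancellation — a mere entrywise triangle-inequality bound over the $O(N)$ relevant indices would only recover the slower rate $m_N\omega_T^{1-q}$.
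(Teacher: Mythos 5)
Your proposal is correct and follows essentially the same route as the paper: a resolvent expansion whose quadratic remainder is crudely bounded by $O_p(m_N^2\omega_T^{2-2q})$ via two applications of $\|\hSigo-\Sigun\|=O_p(m_N\omega_T^{1-q})$, and whose linear term $\tfrac1N\Xi(\hSigo-\Sigun)\Xi'$ is split over $S_L$ (killed by sparsistency), the thresholding bias on $S_U$ (property (iii) of $s_{ij}$ plus Lemma \ref{lb.4}), and the stochastic part $R_{ij}-\Sigma_{u0,ij}$ on $S_U$ (the PCA expansion plus Assumption \ref{ass3.9}, which is the content of the paper's Lemma \ref{lb.6}). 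The only part you leave as a sketch — matching the PCA remainder terms to the lines of Assumption \ref{ass3.9} — is exactly where the paper spends its effort (Lemmas \ref{lb.5add}–\ref{lb.9add}), and you have correctly identified it as the crux.
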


\begin{remark}

\begin{enumerate}

\item 
A simple application of \\
$\|\hSigoi-\Sigun^{-1}\|=O_p(m_N\omega_T^{1-q})$ by Fan et al. (2012) yields \\
$\frac{1}{N}\|\Lambda_0'[\hSigoi-\Sigun^{-1}]\Lambda_0\|_F=O_p(m_N\omega_T^{1-q}).$  In contrast, the rate we present in Theorem \ref{l3.1}  requires more refined asymptotic analysis.    It shows that after weighted by the factor loadings, the averaged convergence rate is faster.
\item
The condition on  the large-entry-set $S_U$ in Assumption \ref{ass3.8add} can be relaxed a bit to $\sum_{i\neq j, (i,j)\in S_U}1=O(N^{1+\epsilon})$ for an arbitrarily small $\epsilon>0$, which will allow  less sparse covariances.   For example, Suppose $\{u_{it}\}_{i\leq N}$ follows a cross sectional  AR($1$) process such that $$u_{it}=\rho u_{i-1,t}+e_{it}$$ for $|\rho|<1$ and $\{e_{it}\}_{i\leq N, t\leq T}$ being independent across both $i$  and $t$. We can then find a partition $S_L\cup S_U$ such that  $\sum_{(i,j)\in S_L}|\Sigma_{u0,ij}|=O(1)$ and $\sum_{i\neq j, (i,j)\in S_U}1=O(N^{1+\epsilon})$ for any $\epsilon>0.$  Theorems \ref{l3.1} and \ref{th3.5} below still hold. But  conditions  in Assumption \ref{ass3.9} need to be adjusted accordingly. For example,  in condition (iii) the normalizing constant $\frac{1}{N\sqrt{N}}$ in the first equation should be changed to $\frac{1}{N^{\epsilon+3/2}}$, and $\frac{1}{\sqrt{N}}$ in the second  equation should be changed to $\frac{1}{N^{(1+\epsilon)/2}}$.  The current Assumption \ref{ass3.9}, on the other hand, keeps our presentation simple.

\end{enumerate}
\end{remark}

\subsubsection{Limiting distribution}

As a result of Theorem \ref{l3.1},   the impact of estimating $\Sigun$ at step one is asymptotically negligible. This  enables us to achieve  the $\sqrt{T}$-consistency and the limiting distribution of $\hlam_j^{(1)}$ for each $j$. We impose further assumptions. 

\begin{assum}\label{ass3.10}
(i)  $\frac{1}{N\sqrt{NT}}\sum_{i=1}^N\sum_{j=1}^N\sum_{t=1}^T(u_{it}u_{jt}-Eu_{it}u_{jt})\xi_i\xi_j'=O_p(1)$.\\
 For each $j\leq N$,$\frac{1}{\sqrt{NT}}\sum_{i=1}^N\sum_{t=1}^T(u_{it}u_{jt}-Eu_{it}u_{jt})\xi_i=O_p(1).$\\
(iii)
$\frac{1}{\sqrt{NT}}\sum_{i=1}^N\sum_{t=1}^T\xi_iu_{it}f_t'=O_p(1)$.

\end{assum}

\begin{thm}\label{th3.5} Suppose $0\leq q<1/2$, and $T=o(N^{2-2q})$. In addition, 
$m_N^2\omega_T^{2-2q}=o(T^{-1/2})$. Then under the assumptions of Theorem \ref{l3.1} and Assumption \ref{ass3.10}, for each $j\leq N$,
$$ \sqrt{T}(\hlam_j^{(1)}-\lamj)\rightarrow^dN_r\left(0,E(u_{jt}f_tf_t')\right).$$
\end{thm}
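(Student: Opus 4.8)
The plan is to start from the first-order condition for $\hLamo$ in the minimization \eqref{eq3.1}, expand it around the truth $\Lambda_0$, and isolate the leading stochastic term for a single row $\hlam_j^{(1)}$. Differentiating $L_1(\Lambda)$ and using the matrix identity for the derivative of $\log\det$ and of the trace, the first-order condition has the schematic form $(\Lambda\Lambda'+\hSigo)^{-1}(S_y-\Lambda\Lambda'-\hSigo)(\Lambda\Lambda'+\hSigo)^{-1}\Lambda=0$ up to the Lagrange terms from the diagonality constraint in $\Theta_\lambda$. Writing $S_y=\Lambda_0\Lambda_0'+\Sigun+(S_y-\Sigma_{y0})$ and substituting the decomposition $S_y-\Sigma_{y0}=\Lambda_0(\frac1T\sum_t f_tu_t')+(\frac1T\sum_t u_tf_t')\Lambda_0'+(\frac1T\sum_t u_tu_t'-\Sigun)$ isolates the sampling noise. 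Using the Woodbury formula to expand $(\hLamo\hLamop+\hSigo)^{-1}$ in terms of $\hSigoi$ and $\hLamop\hSigoi\hLamo$ — which by Assumption \ref{ass3.3} is of order $N$ — the dominant contribution, after multiplying through by $\hLamop\hSigoi$ and using the GLS-type normalization, reduces to $\hlam_j^{(1)}-\lamj \approx \frac1T\sum_t f_t u_{jt}$ plus remainder terms.

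The key steps in order: (1) Derive and simplify the first-order condition, handling the Lagrange multiplier from the diagonality restriction — here one uses Theorem \ref{th3.2} to argue that $\hLamo$ is already $o_p(1)$-close to $\Lambda_0$ so that the quadratic curvature terms are controlled. (2) Plug in the sample covariance decomposition and, via Woodbury, write $\sqrt T(\hlam_j^{(1)}-\lamj)$ as the sum of (a) the target term $\frac{1}{\sqrt T}\sum_t f_t u_{jt}$, (b) a term involving $\frac1N\Lambda_0'[\hSigoi-\Sigun^{-1}]\Lambda_0$, which by Theorem \ref{l3.1} is $O_p(m_N^2\omega_T^{2-2q})$, hence $\sqrt T$ times it is $o_p(1)$ by the hypothesis $m_N^2\omega_T^{2-2q}=o(T^{-1/2})$, (c) cross terms such as $\frac{1}{\sqrt{NT}}\sum_{i,t}\xi_i u_{it}f_t'$ and $\frac{1}{\sqrt{NT}}\sum_{i,t}(u_{it}u_{jt}-Eu_{it}u_{jt})\xi_i$, which are $O_p(1)$ by Assumption \ref{ass3.10} and carry an extra $N^{-1/2}$ factor relative to the target (so are $o_p(1)$ under $T=o(N^{2-2q})$, in fact under $T=o(N)$), and (d) higher-order remainders from the expansion of $\hLamo\hLamop-\Lambda_0\Lambda_0'$, controlled by Theorem \ref{th3.2} together with $T=o(N^{2-2q})$. (3) Apply a CLT for strong mixing sequences (Assumption \ref{ass3.2}(iv), with the exponential-tail moment bounds of Assumption \ref{ass3.2}(iii)) to $\frac{1}{\sqrt T}\sum_t f_t u_{jt}$, whose variance is $E(u_{jt}f_tf_t')$ by stationarity and the martingale-type orthogonality $Eu_{jt}f_t=0$ from Assumption \ref{ass3.2}(i), giving the stated $N_r(0,E(u_{jt}f_tf_t'))$ limit.

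The main obstacle I expect is step (2): carefully bookkeeping the Woodbury expansion so that every remainder term is shown to be genuinely negligible at the $\sqrt T$ scale. The delicate point is that $\hLamop\hSigoi\hLamo$ grows like $N$, so naive bounds on $\hSigoi-\Sigun^{-1}$ in operator norm (which is only $O_p(m_N\omega_T^{1-q})$) are not sharp enough — this is precisely why the sandwiched rate of Theorem \ref{l3.1}, $\frac1N\|\Lambda_0'(\hSigoi-\Sigun^{-1})\Lambda_0\|_F=O_p(m_N^2\omega_T^{2-2q})$, is indispensable, and why the sparsistency of Theorem \ref{th3.3} (ensuring the thresholded zeros are exactly zero, so no accumulation of small biases over $S_L$) feeds into it. A secondary subtlety is that the identification restriction $\Lambda'\hSigoi\Lambda$ diagonal differs from $\Lambda_0'\Siguni\Lambda_0$ diagonal, so one must show the induced rotation of $\hLamo$ relative to $\Lambda_0$ is $I_r+o_p(T^{-1/2})$ per row; this uses the distinctness of the diagonal entries assumed after \eqref{eq2.4} together with Theorem \ref{l3.1}. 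Once these are in place, the rest is a routine application of the mixing CLT.
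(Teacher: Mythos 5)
Your proposal follows essentially the same route as the paper's proof: the first-order condition yields $\hlam_j^{(1)}-\lamj=-J'\lamj+H\hLamop\hSigoi a_j$, the rotation $J$ and the covariance-estimation remainder are shown to be $O_p(m_N^2\omega_T^{2-2q})=o_p(T^{-1/2})$ via the sandwiched rate of Theorem \ref{l3.1} together with the diagonality/distinct-eigenvalue identification, the cross terms are $O_p((NT)^{-1/2})$ by Assumption \ref{ass3.10}, and the mixing CLT is applied to $T^{-1/2}\sum_t f_tu_{jt}$. This matches the paper's Lemmas B.8--B.11 and the concluding argument, so the plan is correct.
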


We make some technical remarks regarding Theorem \ref{th3.5}.
\begin{remark}
\begin{enumerate}
\item The condition $m_N^2\omega_T^{2-2q}=o(T^{-1/2})$ (roughly speaking, this is $m_N=o(T^{1/4})$ when $N$ is very large and $q=0$) strengthens the sparsity condition of Assumption \ref{ass3.1}. The required upper bound for $m_N$ is tight. Roughly speaking,  the estimation error of $\hSigo$ plays a role in the  asymptotic expansion of $\sqrt{T}(\hlam_j^{(1)}-\lamj)$ only through an averaged term as in Theorem \ref{l3.1}. Condition $m_N^2\omega_T^{2-2q}=o(T^{-1/2})$   is required  for that term to be asymptotically negligible. 

\item The asymptotic normality also holds jointly for finitely many estimators. For any finite and fixed $k$, we have, $$
\sqrt{T}(\hlam_1^{(1)'}-\lambda_{01}', \cdots,  \hlam_k^{(1)'}-\lambda_{0k}')'\rightarrow^d N_{kr}(0, E[\cov(u_t^k|f_t)\otimes f_tf_t']).
$$
where  $\cov(u_t^k|f_t)=\cov(u_{1t},...,u_{kt}|f_t)$.
\item   If  Assumption \ref{ass3.10}(i) is replaced by a uniform convergence, by assuming
$
\max_{j\leq N}\|\frac{1}{\sqrt{NT}}\sum_{i=1}^N\sum_{t=1}^T(u_{it}u_{jt}-Eu_{it}u_{jt})\xi_i\|=O_p(\sqrt{N\log N}),
$
 we can then improve the uniform rate of convergence in Theorem \ref{th3.2} and obtain
$$
\max_{j\leq N}\|\hlam_j^{(1)}-\lamj\|=O_p(\sqrt{\frac{\log N}{T}}).
$$
\end{enumerate}
\end{remark}
 
\subsubsection{Estimation of common factors}

For the limiting distribution of $\hfto$,  we make the following additional assumption:
\begin{assum}\label{ass3.11}
There is a positive definite matrix  $Q$ such that  for each $t\leq T,$ 
$$
\frac{1}{N}\Lambda_0'\Sigun^{-1}\Lambda_0\rightarrow Q,\quad \frac{1}{\sqrt{N}}\Lambda_0'\Sigun^{-1}u_t=\frac{1}{\sqrt{N}}\sum_{j=1}^N\xi_j u_{jt}\rightarrow N_r(0, Q).
$$

\end{assum}

For the next assumption, we define $\beta_t=\Sigun^{-1}u_t$. Then  $\beta_t$ has mean zero and covariance matrix  $\Sigun^{-1}$.
\begin{assum}\label{ass3.12}
 For any fixed $t\leq T$,\\
(i) $\frac{1}{\sqrt{NT}}\sum_{s=1}^T\sum_{i=1}^Nf_su_{is}\beta_{it}=O_p(1)$,\\
 $\frac{1}{NT\sqrt{N}}\sum_{i=1}^N\sum_{j=1}^N\sum_{s=1}^T\xi_i(u_{is}u_{js}-Eu_{is}u_{js})\beta_{jt}=o_p(1)$\\
$\frac{1}{T\sqrt{N}}\sum_{i=1}^N\sum_{s=1}^T(u_{is}^2-Eu_{is}^2)\xi_i\beta_{it}=o_p(1)$\\
$
\frac{1}{T\sqrt{N}}\sum_{i\neq j, (i,j)\in S_U}\sum_{s=1}^T(u_{is}u_{js}-Eu_{is}u_{js})\xi_i\beta_{jt}=o_p(1). 
$\\
(ii) For each $k\leq r$,\\
$
\frac{1}{NT\sqrt{N}}\sum_{i=1}^N\sum_{j=1}^N\sum_{s=1}^T(u_{is}u_{js}-Eu_{is}u_{js})\lambda_{0i}\lambda_{0j}'\xi_{ik}\beta_{it}=o_p(1),$\\
$\frac{1}{NT\sqrt{N}}\sum_{i\neq j, (i,j)\in S_U}\sum_{s=1}^T\sum_{l=1}^N(u_{is}u_{ls}-Eu_{is}u_{ls})\lambda_{0j}\lambda_{0l}'\xi_{ik}\beta_{jt}=o_p(1).
$

\end{assum}

\begin{thm}\label{th3.6} Under the assumptions of Theorem \ref{th3.5}, we have for each fixed $t\leq T$,
$$
 \|\hfto-f_t\|=O_p(m_N\omega_T^{1-q}(\log T)^{1/r_1+1/r_2}).
$$where $r_1, r_2>0$ are defined in Assumption \ref{ass3.2}.

If in addition Assumptions \ref{ass3.11}, \ref{ass3.12} are  satisfied and $\sqrt{N}m_N^2\omega_T^{2-2q}=o(1)$. 
Then when  $T^{1/(2-2q)}\ll N\ll T^{2-2q}$,
$$
\sqrt{N}(\hfto-f_t)\rightarrow^d N(0,Q^{-1}).
$$
\end{thm}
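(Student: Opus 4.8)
The plan is to work from the GLS formula. Writing $A:=(\hLamop(\hSigo)^{-1}\hLamo)^{-1}$ and substituting $y_t-\bar y=\Lambda_0 f_t+u_t-\bar u$ into $\hfto=A\,\hLamop(\hSigo)^{-1}(y_t-\bar y)$ (using the normalization $\bar f=0$, so that $\bar y=\bar u$), the identity $A\,\hLamop(\hSigo)^{-1}\hLamo=I_r$ yields the master decomposition
$$\hfto-f_t=A\,\hLamop(\hSigo)^{-1}(\Lambda_0-\hLamo)f_t+A\,\hLamop(\hSigo)^{-1}u_t-A\,\hLamop(\hSigo)^{-1}\bar u.$$
First I would control $A$. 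Theorem \ref{th3.1} gives $\hLamo\to\Lambda_0$, and the bounds $\|(\hSigo)^{-1}-\Sigun^{-1}\|=O_p(m_N\omega_T^{1-q})=o_p(1)$ from Fan et al. (2012) together with $\|\Sigun^{-1}\|_1<c$ (Assumption \ref{ass3.7}) give $\|(\hSigo)^{-1}\|=O_p(1)$. Expanding $\frac{1}{N}\hLamop(\hSigo)^{-1}\hLamo$ around $\frac{1}{N}\Lambda_0'\Sigun^{-1}\Lambda_0$, the cross terms are $o_p(1)$ by Theorem \ref{th3.2} (for $\|\hLamo-\Lambda_0\|_F$) and Theorem \ref{l3.1} (for $\frac{1}{N}\|\Lambda_0'[(\hSigo)^{-1}-\Sigun^{-1}]\Lambda_0\|_F$), so that $NA=O_p(1)$ and, by Assumption \ref{ass3.11}, $NA\rightarrow_p Q^{-1}$.

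For the first assertion (the rate) I would bound each term of the master decomposition crudely. On $\Theta_{\lambda}$ one has $\|\hLamo\|=O_p(\sqrt N)$; combined with $\|A\|=O_p(N^{-1})$, $\|(\hSigo)^{-1}\|=O_p(1)$ and $\|\Lambda_0-\hLamo\|\le\|\Lambda_0-\hLamo\|_F=O_p(\sqrt N\,m_N\omega_T^{1-q})$ from Theorem \ref{th3.2}, the loading-error term is $O_p(m_N\omega_T^{1-q}\|f_t\|)$. For fixed $t$, the finer remainders that arise when this bound is made precise involve $\max_{s\le T}\|f_s\|$ and $\max_{s\le T,\,i\le N}|u_{is}|$, which are $O_p((\log T)^{1/r_2})$ and $O_p((\log T)^{1/r_1})$ respectively by the exponential tails of Assumption \ref{ass3.2}(iii) and a union bound; this is the source of the factor $(\log T)^{1/r_1+1/r_2}$. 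The $u_t$-term is $O_p(N^{-1/2})$ and the $\bar u$-term is $O_p((NT)^{-1/2})$, both dominated, giving $\|\hfto-f_t\|=O_p(m_N\omega_T^{1-q}(\log T)^{1/r_1+1/r_2})$.

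For the limiting distribution the crude bound on the loading-error term is not sharp enough (for $q>0$, $\sqrt N\,m_N\omega_T^{1-q}$ need not vanish), so I would instead use the algebraic identity
$$\hLamop(\hSigo)^{-1}(\Lambda_0-\hLamo)=-\Lambda_0'(\hSigo)^{-1}(\hLamo-\Lambda_0)-(\hLamo-\Lambda_0)'(\hSigo)^{-1}(\hLamo-\Lambda_0).$$
The quadratic piece has norm $O_p(N\,m_N^2\omega_T^{2-2q})$, so after premultiplying by $A$ and scaling by $\sqrt N$ it is $O_p(\sqrt N\,m_N^2\omega_T^{2-2q})=o_p(1)$ by the imposed side condition. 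For the linear piece $\Lambda_0'(\hSigo)^{-1}(\hLamo-\Lambda_0)f_t$ and for the noise contribution $\frac{1}{\sqrt N}[\hLamop(\hSigo)^{-1}-\Lambda_0'\Sigun^{-1}]u_t$, I would insert the asymptotic expansion of $\hLamo-\Lambda_0$ underlying Theorems \ref{th3.2} and \ref{th3.5} and the expansion $(\hSigo)^{-1}-\Sigun^{-1}=-\Sigun^{-1}(\hSigo-\Sigun)\Sigun^{-1}+\cdots$, where by the sparsistency Theorem \ref{th3.3} the matrix $\hSigo-\Sigun$ may be taken supported on $S_U$ with probability arbitrarily close to one. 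Each resulting cross term is then checked to be $o_p(1)$: the first-order cross terms (sums over $t$, over the cross-section, and, for the covariance part, over $S_U$) are precisely the normalized sums posited in Assumption \ref{ass3.12}(i)--(ii) to be $o_p(1)$ or $O_p(1)$ carrying a spare $T^{-1/2}$, while the second-order terms involving the covariance error are again absorbed via Theorem \ref{l3.1} with $\sqrt N\,m_N^2\omega_T^{2-2q}=o(1)$. What survives is $A\,\Lambda_0'\Sigun^{-1}u_t=(NA)\cdot\frac{1}{N}\sum_{j=1}^N\xi_j u_{jt}$, so $\sqrt N(\hfto-f_t)=(NA)\cdot\frac{1}{\sqrt N}\sum_{j=1}^N\xi_j u_{jt}+o_p(1)$; combining $NA\rightarrow_p Q^{-1}$, the central limit theorem for $\frac{1}{\sqrt N}\sum_j\xi_j u_{jt}$ in Assumption \ref{ass3.11}, and Slutsky's theorem gives $\sqrt N(\hfto-f_t)\rightarrow^d N(0,Q^{-1}QQ^{-1})=N(0,Q^{-1})$.

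The two-sided rate restriction is exactly what makes all of the above remainders $o_p(N^{-1/2})$: the lower bound $T^{1/(2-2q)}\ll N$ --- equivalently $T=o(N^{2-2q})$ --- is the general-$q$ analogue of the condition $T=o(N^2)$ already imposed for consistency in Theorem \ref{th3.1}, while the upper bound $N\ll T^{2-2q}$, together with $q<1/2$, forces the thresholding/covariance-estimation error (which enters through $\omega_T$ and Theorem \ref{l3.1}) to be negligible relative to $N^{-1/2}$. The main obstacle is the simultaneous expansion of the two first-step errors in the noise term $\frac{1}{\sqrt N}[\hLamop(\hSigo)^{-1}-\Lambda_0'\Sigun^{-1}]u_t$: one must show that expanding $\hLamo-\Lambda_0$ and $(\hSigo)^{-1}-\Sigun^{-1}$ at the same time produces only triple sums that vanish after the $N^{-1/2}$ normalization, and checking that these expansions generate exactly the terms collected in Assumption \ref{ass3.12} and nothing of larger order --- together with the algebraic split above, which is what rescues the loading-error term when $q>0$ --- is where the bookkeeping is delicate.
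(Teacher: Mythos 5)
Your proposal is correct and follows essentially the same route as the paper: the identical GLS decomposition $\hfto-f_t=-J'f_t+(\hLamop\hSigoi\hLamo)^{-1}\hLamop\hSigoi(u_t-\bar u)$, the same exponential-tail/union bounds producing the $(\log T)^{1/r_1+1/r_2}$ factor, and the same reduction of the noise term to $\sqrt{N}(\Lambda_0'\Sigun^{-1}\Lambda_0)^{-1}\Lambda_0'\Sigun^{-1}u_t$ via Assumptions \ref{ass3.11}--\ref{ass3.12} and the covariance expansion over $S_L\cup S_U$. The only cosmetic difference is that the paper disposes of the loading-error term $\sqrt{N}J'f_t$ in one line using the refined rate $J=O_p(m_N^2\omega_T^{2-2q})$ already proved in Lemma \ref{lb.9}, whereas you re-derive that negligibility by splitting $\hLamop\hSigoi(\Lambda_0-\hLamo)$ into linear and quadratic pieces --- which is essentially the content of that lemma.
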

\begin{remark}
\begin{enumerate}
\item It follows from Theorem \ref{th3.6}  that for each fixed $t$, $\hfto$ is a root- $N$ consistent estimator of $f_t$.  Root- $N$ consistency for the estimated common factors also holds for the principal components estimator as in Bai (2003). In addition,  the above limiting distribution holds only when $N=o(T^2).$

\item If we strengthen the assumption to $\max_{t\leq T}\|\frac{1}{\sqrt{N}}\sum_{i=1}^N\xi_iu_{it}\|=O_p(\log T)$, then the uniform rate of convergence  can be achieved:
$$
\max_{t\leq T}\|\hfto-f_t\|=O_p(m_N\omega_T^{1-q}(\log T)^{1/r_1+1/r_2+1}).
$$
To compare this rate with that of the PCA estimator, we consider for simplicity, the strictly sparse case $q=0$. Then when $N=o(T^{3/2})$ and $m_N$ is either bounded or growing slowly ($m_N^2\ll\min\{\sqrt{T}, T^{3/2}/N\}$), the above rate is faster than that of the PCA estimator. (The above rate is $O_p((\log T)^{1/r_1+1/r_2}/\sqrt{N})$ when $N=O(T)$, whereas the uniform convergence rate for PCA estimator is $O_p(T^{1/4}/\sqrt{N}).$)

\end{enumerate}
\end{remark}

\section{Joint Estimation}

\subsection{$l_1$- penalized maximum likelihood}

One can also jointly estimate $(\Lambda_0, \Sigun)$  to take into account the cross-sectional dependence and heteroskedasticity simultaneously. As in the sparse covariance estimation literature (e.g., Lam and Fan 2009, Bien and Tibshirani 2011), we penalize the off-diagonal elements of the error covariance estimator, and minimize the following weighted-$l_1$ penalized objective function, motivated by a penalized Gaussian likelihood function:
\begin{eqnarray}\label{eq4.1addd}
(\hLamt, \hSigt)&=&\arg\min_{(\Lambda,\Sigma_u)\in\Theta_{\lambda}\times\Gamma}L_2(\Lambda,\Sigma_u)\cr
&=&\arg\min_{\Lambda\in\Theta_{\lambda}\times\Gamma}\frac{1}{N}\log|\det(\Lambda\Lambda'+\Sigma_u)|+\frac{1}{N}\tr(S_y(\Lambda\Lambda'+\Sigma_u)^{-1})\cr
&&+\frac{1}{N}\sum_{i\neq j} \mu_Tw_{ij}|\Sigma_{u, ij}|,
\end{eqnarray}
where $\Gamma$ is the parameter space for $\Sigma_u$, to be defined later.  We introduce the weighted $l_1$-penalty $N^{-1}\mu_T\sum_{i\neq j} w_{ij}|\Sigma_{u, ij}|$ with $w_{ij}\geq 0$ to penalize the inclusion of many off-diagonal elements  of $\Sigma_{u,ij}$ in small magnitudes, which therefore produces a sparse estimator  $\hSigt$. Here $\mu_T$ is a  tuning parameter that converges to zero at a not-too-fast rate; $w_{ij}$ is an entry-dependent weight parameter, which can be either deterministic or stochastic.   Popular choices of $w_{ij}$ in the literature include: 

\begin{description}
  \item[Lasso] The choice $w_{ij}=1$ for all $i\neq j$ gives the well-known Lasso penalty $N^{-1}\mu_T\sum_{i\neq j}|\Sigma_{u,ij}|$ studied by Tibshirani (1996). The Lasso penalty puts an equal weight to each element of the idiosyncratic covariance matrix.

\item[Adaptive-Lasso] Let $\hSig_{u,ij}^*$ be a preliminary consistent estimator of $\Sigma_{u0, ij}$. Let $
w_{ij}=|\hSig_{u,ij}^*|^{-\gamma}$ for some $\gamma>0$, then 
$$
\frac{\mu_T}{N}\sum_{i\neq j} w_{ij}|\Sigma_{u, ij}|=\frac{\mu_T}{N}\sum_{i\neq j}|\hSig_{u,ij}^*|^{-\gamma}|\Sigma_{u, ij}|
$$
corresponds to the adaptive-lasso penalty proposed by Zou (2006). Note that the adaptive-lasso puts an entry-adaptive weight on each off-diagonal element of $\Sigma_u$, whose reciprocal is proportional to the preliminary estimate. If the true element $\Sigma_{u0, ij}\in S_L$, the weight $|\hSig_{u,ij}^*|^{-\gamma}$ should be quite large, and results in a heavy penalty on that entry. The preliminary estimator $\hSig_{u,ij}^*$ can be taken, for example, as the PCA estimator  $\hSig_{u,ij}^{PCA}=T^{-1}\sum_{t=1}^T\hu_{it}^{PCA}\hu_{jt}^{PCA'}$.  It was shown by Bai (2003) that under mild conditions, $\hSig_{u,ij}^{PCA}-\Sigma_{u0,ij}=O_p(N^{-1/2}+T^{-1/2})$.

\item[SCAD:] Fan and Li (2001) proposed  to use, for some $a>2$ (e.g, $a=3.7$)  $$w_{ij}=
I_{(|\hSig_{u,ij}^*|\leq\mu_T)}+\frac{(a-|\hSig_{u,ij}^*|/\mu_T)_+}{a-1}I_{(|\hSig_{u,ij}^*|>\mu_T)}.
$$
 The notation $z_+$ stands for the positive part  of $z$; $z_+$ is $z$ if $z>0$, zero otherwise. Here $\hSig_{u,ij}^*$  is still a preliminary consistent estimator, which can be taken as the PCA estimator.

\end{description}

\subsection{Consistency of the joint estimation}

We assume the parameter space for $\Sigun$ to be, for some known sufficiently large $M>0$,
$$
\Gamma=\{\Sigma_u: \|\Sigma_u\|_1< M, \|\Sigma_u^{-1}\|_1< M\}.
$$
Then $\Sigun\in\Gamma$ implies that all the eigenvalues of $\Sigun$ are bounded away from both zero and infinity.  There are many examples where both the covariance  and its inverse  have bounded row sums. For example, for each  $t$, when $\{u_{it}\}_{i=1}^N$ follows a cross sectional autoregressive process AR$(p)$ for some fixed $p$, then the maximum row sum of $\Sigma_{u0}$ is bounded. The inverse of $\Sigma_{u0}$ is a banded matrix, whose maximum row sum is also bounded.


As before we assume  $T^{-1}\sum_{t=1}^Tf_tf_t'=I_r$ and $\Lambda_0'\Sigma_{u0}^{-1}\Lambda$ be diagonal for identification. In addition, Assumptions \ref{ass3.2} and \ref{ass3.3} for the two-step estimation are still needed. Those conditions such as strong mixing, weakly dependence and bounded eigenvalues of $N^{-1}\Lambda_0'\Lambda_0$   regulate the data generating process, and asymptotically identify the covariance decomposition (\ref{eq2.2}).

 The conditions for the partition $\{(i,j): i, j \leq N\}=S_L\cup S_U$  of $\Sigun$ are replaced by the following, which are weaker than those of two-step estimation in  Assumption \ref{ass3.8add}. Define the number of off-diagonal large entries:
\begin{equation}
D=\sum_{i\neq j, (i,j)\in S_U}1.
\end{equation}

\begin{assum} \label{ass4.1}There exists a partition $\{(i,j): i\leq N, j\leq N\}=S_L\cup S_U$ where $S_U$ and $S_L$ are disjoint, which satisfies:\\
(i) $\Sigma_{u0, ii}\in S_U$ for all $i\leq N$,\\
(ii) $D=o(\min\{N\sqrt{T/\log N}, N^2/\log N\}),$\\
(iii) $\sum_{(i,j)\in S_L}|\Sigma_{u0, ij}|=o(N).$
\end{assum}

The following assumption is imposed on the penalty parameters. Define the weights ratios
$$
\alpha_T=\frac{\max_{i\neq j, (i,j)\in S_U}w_{ij}}{\min_{(i,j)\in S_L}w_{ij}}, \hspace{1em}\beta_T=\frac{\max_{(i,j)\in S_L}w_{ij}}{\min_{(i,j)\in S_L}w_{ij}}.
$$

\begin{assum} \label{ass4.2}The tuning parameter $\mu_T$ and the weights $\{w_{ij}\}_{i\leq N, j\leq N}$ satisfy:\\
(i)
$$
\alpha_T=o_p\left[\min\left\{
\sqrt{\frac{T}{\log N}}\frac{N}{D},\left(\frac{T}{\log N}\right)^{1/4}\sqrt{\frac{N}{D}}, \frac{N}{\sqrt{D
\log N}}\right\}\right],
$$
$$
\beta_T\sum_{(i,j)\in S_L}|\Sigma_{u0, ij}|=o_p(N),
$$
(ii)
$\mu_T\max_{(i,j)\in S_L}w_{ij}\sum_{(i,j)\in S_L}|\Sigma_{u0, ij}|=o(\min\{N, N^2/D, N^2/(D\alpha_T^2)\}),$\\
$\mu_T\max_{i\neq j, (i,j)\in S_U}w_{ij}=o(\min\{N/D, \sqrt{N/D}, N/(D\alpha_T)\}),$\\
$\mu_T\min_{(i,j)\in S_L}w_{ij}\gg\sqrt{\log N/T}+(\log N)/N.$

\end{assum}
The above assumption is not as complicated as it looks, and is satisfied by many examples. For instance, the Lasso penalty sets $w_{ij}=1$ for all $i, j\leq N$. Hence $\alpha_T=\beta_T=1.$ Then condition (i) of Assumption \ref{ass4.2} follows from Assumption \ref{ass4.1}(ii), which is also satisfied if $D=O(N)$. Condition (ii) is also straightforward to verify. This immediately implies the following lemma.
\begin{lem}[Lasso] \label{l4.1} Choose $w_{ij}=1$ for all $i,j\leq N, i\neq j$. Suppose in addition $D=O(N)$ and $\log N=o(T)$. Then Assumption \ref{ass4.2} is satisfied if the tuning parameter $\mu_T=o(1)$ is such that
$$
\sqrt{\frac{\log N}{T}}+\frac{\log N}{N}=o(\mu_T).
$$
\end{lem}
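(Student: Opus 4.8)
The plan is to verify the three groups of conditions in Assumption~\ref{ass4.2} directly, exploiting that the Lasso weights $w_{ij}\equiv 1$ make the weight ratios trivial: $\alpha_T=\beta_T=1$, and these are deterministic, so every ``$o_p$'' in Assumption~\ref{ass4.2} becomes a deterministic ``$o$''. Throughout I would use three elementary facts implied by the hypotheses: first, $D=O(N)$ forces $N/D$ to be bounded away from zero, so $\min\{N/D,\sqrt{N/D}\}\gtrsim 1$ and $N^2/D\gtrsim N$; second, $\log N=o(T)$, and $\log N=o(N)$ (the latter holding automatically); third, $\mu_T=o(1)$ together with $\sqrt{(\log N)/T}+(\log N)/N=o(\mu_T)$.

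For part (i), consider the first display: by the first and second facts, each of $\sqrt{T/\log N}\,(N/D)$, $(T/\log N)^{1/4}\sqrt{N/D}$, and $N/\sqrt{D\log N}\gtrsim\sqrt{N/\log N}$ diverges, so their minimum diverges and $\alpha_T=1$ is $o$ of it. The second display of part (i) reads $\beta_T\sum_{(i,j)\in S_L}|\Sigma_{u0,ij}|=o(N)$, which is exactly Assumption~\ref{ass4.1}(iii) because $\beta_T=1$. For part (ii), the first condition reduces to $\mu_T\sum_{(i,j)\in S_L}|\Sigma_{u0,ij}|=o(\min\{N,N^2/D\})$; the right side is of order $N$ by the first fact, and the left side is $o(1)\cdot o(N)=o(N)$ by the third fact and Assumption~\ref{ass4.1}(iii). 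The second condition reduces to $\mu_T=o(\min\{N/D,\sqrt{N/D}\})$, whose right side is bounded below by an absolute constant, so $\mu_T=o(1)$ suffices. The third condition reduces to $\mu_T\gg\sqrt{(\log N)/T}+(\log N)/N$, which is precisely the standing hypothesis on $\mu_T$.

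The verification is entirely routine; the only point that deserves care is the direction of $D=O(N)$: it makes $N/D$ bounded below but not necessarily large, which is exactly why the minima appearing in part (ii) are controlled by absolute constants (or by an $N$-order quantity) rather than by diverging ones, so that $\mu_T=o(1)$ and Assumption~\ref{ass4.1}(iii) are enough. With that observation in hand there is no real obstacle, and the lemma follows by collecting the above.
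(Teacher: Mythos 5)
Your verification is correct and follows essentially the same route as the paper, which simply notes that $w_{ij}\equiv 1$ gives $\alpha_T=\beta_T=1$, that condition (i) of Assumption \ref{ass4.2} then follows from $D=O(N)$ (or Assumption \ref{ass4.1}(ii)), and that condition (ii) is straightforward. Your write-up merely fills in the routine details the paper leaves implicit, including the correct observation that $D=O(N)$ bounds $N/D$ from below so that $\mu_T=o(1)$ handles the second condition in part (ii).
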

One of the attractive features of this lemma is that the condition on $\mu_T$ does not depend on the   unknown $\Sigma_{u0}.$ We will present the adaptive lasso and SCAD as another two examples of the weighted-$l_1$ penalty in Section 4.3 below, both satisfy the above assumption.


Our main theorem is stated as follows.

\begin{thm}\label{th4.1} Suppose $\log N=o(T)$. Under Assumptions \ref{ass3.2}, \ref{ass3.3},  \ref{ass3.7}, \ref{ass4.1}, and \ref{ass4.2}, the penalized ML estimator satisfies: as $T$ and $N\rightarrow\infty,$
$$
\frac{1}{N}\|\hSigt-\Sigma_{u0}\|^2_F\rightarrow^p0, \quad
\frac{1}{N}\|\hLamt-\Lambda_{0}\|^2_F\rightarrow^p0.
$$  For each $t\leq T$,
$$
\|\hftt-f_t\|=o_p(1).
$$
\end{thm}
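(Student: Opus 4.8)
The plan is to treat Theorem \ref{th4.1} as an M-estimation consistency statement, driven by the defining inequality $L_2(\hLamt,\hSigt)\le L_2(\Lambda_0,\Sigun)$. Write $\hSig_y=\hLamt\hLamtp+\hSigt$ and $\Sigma_{y0}=\Lambda_0\Lambda_0'+\Sigun$, and record the a priori spectral facts available throughout: on $\Theta_\lambda\times\Gamma$ the top $r$ eigenvalues of $\hSig_y$ are of exact order $N$ while the rest lie in a fixed compact subset of $(0,\infty)$, and $\hSig_y^{-1}\preceq\hSigti$, so $\|\hSig_y^{-1}\|$ is bounded; the same holds for $\Sigma_{y0}$ under Assumptions \ref{ass3.2}--\ref{ass3.3}. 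Adding and subtracting $\Sigma_{y0}$ inside $\tr(S_y\Sigma_y^{-1})$ splits $N^{-1}\log|\det\Sigma_y|+N^{-1}\tr(S_y\Sigma_y^{-1})$ into $N^{-1}\mathcal{D}(\Sigma_y,\Sigma_{y0})$ plus a stochastic remainder $N^{-1}\tr((S_y-\Sigma_{y0})(\Sigma_y^{-1}-\Sigma_{y0}^{-1}))$, where $\mathcal{D}(\Sigma_y,\Sigma_{y0})=\tr(\Sigma_{y0}\Sigma_y^{-1})-\log\det(\Sigma_{y0}\Sigma_y^{-1})-N\ge0$ is the Bregman divergence of $X\mapsto-\log\det X$ at the (bounded) precision matrices. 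Discarding the nonnegative penalty at $\hSigt$, the minimality inequality reads
$$
\frac1N\mathcal{D}(\hSig_y,\Sigma_{y0})\le\Big|\frac1N\tr\big((S_y-\Sigma_{y0})(\hSig_y^{-1}-\Sigma_{y0}^{-1})\big)\Big|+\frac{\mu_T}{N}\sum_{i\neq j}w_{ij}|\Sigma_{u0,ij}|.
$$

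I would bound the right-hand side at scale $o_p(1)$. Use $S_f=I_r$ to write $S_y-\Sigma_{y0}=\Lambda_0C+C'\Lambda_0'+(S_u-\Sigun)$ with $C=T^{-1}\sum_tf_tu_t'$, $S_u=T^{-1}\sum_tu_tu_t'$, and Woodbury to write $\hSig_y^{-1}-\Sigma_{y0}^{-1}=(\hSigti-\Siguni)-W$ with $\|W\|_F=O(1)$ because $(I+\Lambda'\Sigma_u^{-1}\Lambda)^{-1}=O(1/N)$ on $\Theta_\lambda\times\Gamma$. The idiosyncratic-against-idiosyncratic term $N^{-1}\tr((S_u-\Sigun)(\hSigti-\Siguni))$ is bounded by $\max_{ij}|S_{u,ij}-\Sigma_{u0,ij}|$ times row-sum factors $\|\Siguni\|_1<c$ (Assumption \ref{ass3.7}) and $\|\hSigti\|_1<M$, hence is $O_p(\sqrt{(\log N)/T})=o_p(1)$ by the exponential-tail and mixing conditions of Assumption \ref{ass3.2} and $\log N=o(T)$; the factor-involving terms and the terms hitting $W$ are tempered by the $O(1/N)$ Woodbury factor and the smallness $\Sigma_{y0}^{-1}\Lambda_0=O(N^{-1/2})$, and are controlled by the moment bounds of Assumptions \ref{ass3.2}--\ref{ass3.4}. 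For the penalty, split over $S_L$ and $S_U$: Assumption \ref{ass4.2}(ii) is calibrated exactly so that $\mu_T\max_{(i,j)\in S_L}w_{ij}\sum_{(i,j)\in S_L}|\Sigma_{u0,ij}|$ and $\mu_T\max_{i\neq j,(i,j)\in S_U}w_{ij}\cdot D$ are $o(N)$, whence the penalty is $o(1)$. Since a few stochastic bounds carry a factor $\|\hSig_y^{-1}-\Sigma_{y0}^{-1}\|_F$, I would finish with a peeling/self-bounding step. This gives $N^{-1}\mathcal{D}(\hSig_y,\Sigma_{y0})\rightarrow^p0$.

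From here I would disentangle the components in two stages. A second-order expansion of $\mathcal{D}$ together with the eigenvalue structure shows $\mathcal{D}(\hSig_y,\Sigma_{y0})$ is bounded below (up to constants, modulo a cross term) by $\|\hSigti-\Siguni\|_F^2$ in the idiosyncratic directions (weight $O(1)$) and by the squared sine of the angle between $\mathrm{col}(\hLamt)$ and $\mathrm{col}(\Lambda_0)$ with weight of order $N$ in the factor directions; so $N^{-1}\mathcal{D}\rightarrow^p0$ forces $N^{-1}\|\hSigti-\Siguni\|_F^2\rightarrow^p0$---hence, by the bounded-spectrum sandwich $\hSigt-\Sigun=-\hSigt(\hSigti-\Siguni)\Sigun$, also $N^{-1}\|\hSigt-\Sigun\|_F^2\rightarrow^p0$---and column-space alignment of $\hLamt$ with $\Lambda_0$: after absorbing an $r\times r$ distortion $A$ (bounded and bounded away from singular by $\Theta_\lambda$) one has $N^{-1}\|\hLamt-\Lambda_0A\|_F^2\rightarrow^p0$. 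In the second stage I would pin $A$ down: the likelihood has a genuine (if $O(1)$-, not $O(N)$-scale) valley at $A=I$ in these directions, while its stochastic fluctuation there is negligible because $\Sigma_{y0}^{-1}$ damps the factor directions; combining this with the identifying restriction that $\Lambda'\hSigti\Lambda$ be diagonal with distinct, decreasing entries and matched column signs forces $A\rightarrow^pI$, hence $N^{-1}\|\hLamt-\Lambda_0\|_F^2\rightarrow^p0$. Finally, plugging $(\hLamt,\hSigt)$ into the associated GLS factor estimator $\hftt=(\hLamtp\hSigti\hLamt)^{-1}\hLamtp\hSigti(y_t-\bar y)$ and using $y_t-\bar y=\Lambda_0f_t+u_t-\bar u+o_p(1)$, write
$$
\hftt-f_t=\Big[\big(\tfrac1N\hLamtp\hSigti\hLamt\big)^{-1}\tfrac1N\hLamtp\hSigti\Lambda_0-I_r\Big]f_t+\big(\tfrac1N\hLamtp\hSigti\hLamt\big)^{-1}\tfrac1N\hLamtp\hSigti(u_t-\bar u);
$$
averaged consistency of $\hLamt,\hSigt$ makes $\tfrac1N\hLamtp\hSigti\hLamt$, $\tfrac1N\hLamtp\hSigti\Lambda_0$ and $\tfrac1N\Lambda_0'\Siguni\Lambda_0$ asymptotically equal, positive definite by Assumption \ref{ass3.3}, so the first term vanishes, while the second is $o_p(1)$ by Assumption \ref{ass3.4}(ii) after replacing $\hLamt,\hSigt$ by $\Lambda_0,\Sigun$, with replacement error controlled by the averaged rates.

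\textbf{Main obstacle.} The crux throughout is the $r$ diverging eigenvalues of $\Sigma_{y0}$. They force one to read the divergence through the bounded precision matrices rather than the covariances, to track the factor directions separately from the idiosyncratic ones---the divergence lower bound splits into an $O(1)$-weighted idiosyncratic piece and an $N$-weighted subspace-alignment piece---and to handle the stochastic term so that the $r$ spiked directions feel the $O(1/N)$ Woodbury factor and the $O(N^{-1/2})$ smallness of $\Sigma_{y0}^{-1}\Lambda_0$ rather than crude norm bounds; this is also what makes the scheme survive the regime $N\gg T$. The consequence is that the likelihood is only $O(1)$-curved in the loading directions, so establishing $N^{-1}\|\hLamt-\Lambda_0\|_F^2\rightarrow^p0$ (rather than merely column-space alignment) is a delicate ``small signal, smaller noise'' balance rather than a standard well-separated-minimum argument. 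The remaining steps---the self-referential peeling, and checking that the penalty difference is genuinely $o(1)$ and not merely $o(N)$ under the several conditions of Assumption \ref{ass4.2}---are more routine but still need care.
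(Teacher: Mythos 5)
Your route is genuinely different from the paper's. The paper never forms the Bregman divergence of the full $N\times N$ covariance: its Lemma A.2 shows that, uniformly over $\Theta_\lambda\times\Gamma$, the quasi-likelihood equals $N^{-1}\tr(S_u\Sigma_u^{-1})+N^{-1}\log|\Sigma_u|+Q_2(\Lambda,\Sigma_u)$ up to an $O(\log N/N+\sqrt{\log N/T})$ remainder, where $Q_2\ge 0$ measures the misalignment of $\Lambda$ with $\Lambda_0$ in the $\Sigma_u^{-1}$ metric. The Taylor/integral-remainder quadratic lower bound is then applied only to the bounded-spectrum object $\Sigma_u^{-1}$ (Lemma C.2), and the loadings are recovered from $Q_2(\hLamt,\hSigt)=o_p(1)$ combined with the \emph{first-order condition} and the diagonality identification via Bai--Li's algebraic lemma (Lemmas A.3, A.4, C.6). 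Your first half --- $N^{-1}\mathcal{D}(\hSig_y,\Sigma_{y0})=o_p(1)$, hence $N^{-1}\|\hSigti-\Siguni\|_F^2=o_p(1)$ after stripping the $O(1)$-Frobenius Woodbury corrections --- is a plausible alternative for the covariance part and the penalty-at-truth bookkeeping is fine.

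The gap is in your second stage. To pass from column-space alignment $\hLamt\approx\Lambda_0A$ to $A\rightarrow^p I$, the diagonality restriction must be transferred from $\hLamtp\hSigti\hLamt$ to $A'\bigl(N^{-1}\Lambda_0'\Siguni\Lambda_0\bigr)A$, which requires $N^{-1}\Lambda_0'(\hSigti-\Siguni)\Lambda_0=o_p(1)$ (the paper's Lemma C.5). This does \emph{not} follow from $N^{-1}\|\hSigt-\Sigun\|_F^2=o_p(1)$: writing $\Delta=\hSigti-\Siguni$, the bounds $\|\Delta\|_1=O(1)$ and $\|\Delta\|_F=o_p(\sqrt{N})$ only yield $N^{-1}\Lambda_0'\Delta\Lambda_0=O_p(1)$, not $o_p(1)$. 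The paper obtains the missing control by \emph{retaining} the penalty at $\hSigt$ --- which you discard at the outset --- to prove $N^{-1}\sum_{(i,j)\in S_L}|\hSig_{u,ij}-\Sigma_{u0,ij}|=o_p(1)$ (Lemma C.4), and then exploiting the cardinality bound on $S_U$ with Cauchy--Schwarz. Relatedly, your ``valley at $A=I$'' mechanism cannot be right as stated: the likelihood depends on $\Lambda$ only through $\Lambda\Lambda'$, so it is exactly flat over orthogonal $A$; at best it forces $AA'\rightarrow I$, and only the identification condition, together with something playing the role of the first-order condition (which you never invoke for the joint estimator), can pin down the orthogonal part. Until you supply an analogue of Lemma C.5 --- which means not throwing away the penalty term --- the conclusion $N^{-1}\|\hLamt-\Lambda_0\|_F^2\rightarrow^p 0$ does not close.
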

\begin{remark}
\begin{enumerate}
\item
The consistency for $\hftt$ can be made uniformly in $t\leq T$ if the condition is strengthened to $\max_{t\leq T}\|N^{-1/2}\sum_{i=1}^N\xi_iu_{it}\|=o_p(\sqrt{N})$.
\item
To establish the consistency in the high dimensional literature, one usually    constructs a neighborhood of the true parameters $(\Lambda_0,\Sigma_{u0})\in U$ 
(e.g., Rothman et al. 2008, Lam and Fan 2009), and show that with probability approaching one, $ L_2(\Lambda_0,\Sigma_{u0})>\sup_{(\Lambda,\Sigma_u)\notin U}L_2(\Lambda,\Sigma_u)$.  This strategy however, does not work  here due to the technical difficulty in dealing with the term $(\Lambda\Lambda'+\Sigma_u)$ in the likelihood function, because its largest $r$  eigenvalues are unbounded and grow at rate $O(N)$ uniformly in the parameter space.  One of the contributions of Theorem \ref{th4.1} is to achieve  consistency using  a new strategy to deal with the penalized  likelihood function, which  involves diverging eigenvalues.

 \end{enumerate}

\end{remark}
 
 In this paper we only present the consistency for the joint estimation, which  is already technically difficult as one needs to deal with an equilibrium of the first order conditions for both $(\hLamt,\hSigt)$ simultaneously. Deriving the limiting distributions for the joint estimators is difficult, and we leave this as a future topic.


\subsection{Two examples}

We present two popular choices for the weights as examples: one is adaptive lasso, proposed by Zou (2006), and the other is SCAD by Fan and Li (2001).  Both weights depend on a preliminary consistent estimate of each element of   $\Sigma_{u0}$. In the high dimensional approximate factor model, a simple consistent estimate for each element can be obtained by the principal component analysis (Stock and Watson 1998 and Bai 2003).  

To simplify the presentation, we  will assume that $D=O(N)$, which controls the number of off-diagonal large entries of $\Sigun$. Moreover, we retain Assumption \ref{ass3.6}: 
$$
\max\{|\Sigma_{u0, ij}|: (i,j)\in S_L\}\ll\omega_T\ll \min\{|\Sigma_{u0, ij}|: \Sigma_{u0, ij}\in S_U\}, $$
and recall that $\omega_T=\sqrt{\frac{\log N}{T}}+\frac{1}{\sqrt{N}}$.

 Let the initial estimate $\hSig_{u,ij}^*=R_{ij}$, where $R_{ij}$ is the PCA estimator of $\Sigma_{u0,ij}$ as in Bai (2003). The adaptive lasso chooses the weights to be, for some constant $\gamma\in(0, 1]$,
\begin{equation}\label{eq4.3}
(\text{Adaptive Lasso}):\quad w_{ij}=(|\hSig_{u,ij}^*| +\delta_T)^{-\gamma},
\end{equation}
where $\delta_T=o(1)$ is a pre-determined nonnegative sequence. The additive $\delta_T$ was not included in the original definition of adaptive lasso in Zou (2006), but has often been seen in recent literature, e.g., Xue and Zou (2012).  We include it here in the weights to prevent $w_{ij}$ getting too large if $|\hSig_{u,ij}^*|$ is very close to zero. The adaptive lasso has been used extensively in the high dimensional literature, see for example, Huang, Ma and Zhang (2006), van de Geer, B\"{u}hlmann and Zhou (2011), Caner and Fan (2011), etc.



 Another important example is SCAD, defined as: for some $a>2$,
 \begin{equation}\label{eq4.4}
 (\text{SCAD}):   \quad w_{ij}=
I_{(|\hSig_{u,ij}^*|\leq\mu_T)}+\frac{(a-|\hSig_{u,ij}^*|/\mu_T)_+}{a-1}I_{(|\hSig_{u,ij}^*|>\mu_T)}.
 \end{equation}

We have the following theorem.
\begin{thm}\label{th4.2} Suppose either the Adaptive Lasso or SCAD is used for the weighted-$l_1$ penalized objective function.  Also, suppose $ \log N=o(T)$, $D=O(N)$, $\sum_{(i,j)\in S_L}|\Sigma_{u0,ij}|=o(N)$ and Assumptions \ref{ass3.2}, \ref{ass3.3},  \ref{ass3.7}, \ref{ass4.1}  hold.
 In addition, assume  the tuning parameters are such that:\\
 (i) for  Adaptive Lasso,  
 \begin{equation}\label{e4.2}
\omega_T\left(\frac{\sum_{(i,j)\in S_L}|\Sigma_{u0, ij}|}{N}\right)^{1/\gamma}\ll\delta_T\ll \omega_T,
\end{equation}
\begin{equation}\label{e4.3}
\omega_T^{1+\gamma}\ll\mu_T\ll\omega_T^{\gamma};
\end{equation}
(ii) for SCAD:
 \begin{equation}\label{eq4.7}
 \left(\frac{\log N}{T}\right)^{1/4}+\left(\frac{\log N}{N}\right)^{1/2}\ll\mu_T\ll \min_{i\neq j, (i,j)\in S_U}|\Sigma_{u0,ij}|.
 \end{equation}
 
  Then Assumption \ref{ass4.2} is satisfied, and
 $$
\frac{1}{N}\|\hSigt-\Sigma_{u0}\|^2_F=o_p(1), \quad
\frac{1}{N}\|\hLamt-\Lambda_{0}\|^2_F=o_p(1).
$$
$$
\|\hftt-f_t\|=o_p(1).
$$
\end{thm}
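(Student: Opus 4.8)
The plan is to verify that, under the adaptive-lasso or SCAD choices of $w_{ij}$ together with the stated rates on $\delta_T$ and $\mu_T$, the generic conditions of Assumption \ref{ass4.2} hold, so that Theorem \ref{th4.1} applies verbatim and delivers the three consistency statements. Thus the entire content of the proof is the verification $\text{(hypotheses of Theorem \ref{th4.2})} \Rightarrow \text{Assumption \ref{ass4.2}}$; there is no new probabilistic argument, only a bookkeeping of orders of magnitude using the preliminary PCA rate $\hSig_{u,ij}^*-\Sigma_{u0,ij}=O_p(N^{-1/2}+T^{-1/2})=O_p(\omega_T)$ from Bai (2003), which we may assume. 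Because $D=O(N)$ is imposed, the three quantities inside the $\min$ in Assumption \ref{ass4.2}(i) simplify: $\sqrt{T/\log N}\,N/D\asymp\sqrt{T/\log N}$, $(T/\log N)^{1/4}\sqrt{N/D}\asymp(T/\log N)^{1/4}$, and $N/\sqrt{D\log N}\asymp\sqrt{N/\log N}$; together with $\log N=o(T)$ the binding term is the middle one, $(T/\log N)^{1/4}\to\infty$. So it suffices to bound $\alpha_T$ by something slowly diverging, and to control the three displays in Assumption \ref{ass4.2}(ii).

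First I would handle the adaptive lasso. On the event from (\ref{eq3.3})-type concentration, for $(i,j)\in S_U$ we have $|\hSig_{u,ij}^*|\asymp|\Sigma_{u0,ij}|\gg\omega_T\gg\delta_T$, hence $\max_{(i,j)\in S_U,i\neq j}w_{ij}=O_p((\min_{S_U}|\Sigma_{u0,ij}|)^{-\gamma})=O_p(\omega_T^{-\gamma})$ crudely, and more carefully $O_p(1)$ is not available but a polynomial-in-$\omega_T^{-1}$ bound is. For $(i,j)\in S_L$, since $|\Sigma_{u0,ij}|\ll\omega_T$ the preliminary estimate is $|\hSig_{u,ij}^*|=O_p(\omega_T)$, so $|\hSig_{u,ij}^*|+\delta_T\asymp\delta_T$ when $\delta_T\gg\omega_T(\sum_{S_L}|\Sigma_{u0,ij}|/N)^{1/\gamma}$ dominates — actually the left inequality in (\ref{e4.2}) is exactly what makes $\min_{S_L}w_{ij}$ not too small and $\beta_T\sum_{S_L}|\Sigma_{u0,ij}|=o(N)$; I would spell this out. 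The upshot is $\alpha_T=O_p((\delta_T/\omega_T)^{\gamma}\cdot\text{something})$ which is $o_p((T/\log N)^{1/4})$ by $\delta_T\ll\omega_T$, and $\beta_T=O_p(1)$ on the good event because all $S_L$ weights are comparable to $\delta_T^{-\gamma}$. Then Assumption \ref{ass4.2}(ii): the first line needs $\mu_T\max_{S_L}w_{ij}\sum_{S_L}|\Sigma_{u0,ij}|=o(N)$, which follows from $\mu_T\ll\omega_T^{\gamma}$, $\max_{S_L}w_{ij}\asymp\delta_T^{-\gamma}$, $\delta_T\gg\omega_T(\sum_{S_L}|\Sigma_{u0,ij}|/N)^{1/\gamma}$; the second needs $\mu_T\max_{S_U}w_{ij}=o(\sqrt{N/D})=o(1)$, which follows since $\max_{S_U}w_{ij}=O(\omega_T^{-\gamma})$ and $\mu_T\ll\omega_T^{\gamma}$; the third, $\mu_T\min_{S_L}w_{ij}\gg\sqrt{\log N/T}+(\log N)/N$, follows from $\mu_T\gg\omega_T^{1+\gamma}$, $\min_{S_L}w_{ij}\asymp\delta_T^{-\gamma}\gg\omega_T^{-\gamma}$, giving $\mu_T\min_{S_L}w_{ij}\gg\omega_T\gg\omega_T^{2}\geq\sqrt{\log N/T}+(\log N)/N$ for $N,T$ large.

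For SCAD the verification is similar but easier: on the good event $|\hSig_{u,ij}^*|\asymp|\Sigma_{u0,ij}|\gg\omega_T$ for $(i,j)\in S_U$, and since (\ref{eq4.7}) forces $\mu_T\ll\min_{S_U}|\Sigma_{u0,ij}|$ we get $|\hSig_{u,ij}^*|/\mu_T>a$ eventually, so $w_{ij}=0$ for all off-diagonal $(i,j)\in S_U$; this makes $\alpha_T=0$ (or all the $\alpha_T$-terms vanish) and the $S_U$ lines of Assumption \ref{ass4.2}(ii) trivial. For $(i,j)\in S_L$, $|\hSig_{u,ij}^*|=O_p(\omega_T)\ll\mu_T$ because $\mu_T\gg(\log N/T)^{1/4}+(\log N/N)^{1/2}\gg\omega_T$, so $w_{ij}=1$; hence $\beta_T=1$ on that event and the remaining conditions reduce to $\mu_T\sum_{S_L}|\Sigma_{u0,ij}|=o(N)$ (from $\sum_{S_L}|\Sigma_{u0,ij}|=o(N)$ and $\mu_T=o(1)$) and $\mu_T\gg\sqrt{\log N/T}+(\log N)/N$ (from $\mu_T\gg(\log N/T)^{1/4}\gg\sqrt{\log N/T}$ for large $T$). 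In both cases Assumption \ref{ass4.2} holds with probability tending to one, and then invoking Theorem \ref{th4.1} on that event finishes the proof.

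The main obstacle I anticipate is not any single step but the care needed to make the ``on the good event'' reasoning rigorous: the weights $w_{ij}$ are random (through $\hSig_{u,ij}^*=R_{ij}$), so $\alpha_T$, $\beta_T$, and the products in Assumption \ref{ass4.2} are random, and Assumption \ref{ass4.2} is stated with $o_p$/$\gg$ in probability. One must argue uniformly over $(i,j)$ that $\max_{i,j\leq N}|\hSig_{u,ij}^*-\Sigma_{u0,ij}|=O_p(\omega_T)$ (a uniform-in-entries PCA rate, available from Bai 2003 under the exponential-tail and mixing Assumption \ref{ass3.2}), so that simultaneously every $S_U$ entry is correctly detected as ``large'' and every $S_L$ entry is dominated by $\delta_T$ (adaptive lasso) or $\mu_T$ (SCAD). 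Handling the borderline behavior of $w_{ij}$ when $|\hSig_{u,ij}^*|$ is near the detection boundary is where the separation condition $\max_{S_L}|\Sigma_{u0,ij}|\ll\omega_T\ll\min_{S_U}|\Sigma_{u0,ij}|$ in Assumption \ref{ass3.6} does the essential work, and I would foreground that.
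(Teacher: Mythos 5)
Your proposal follows the same route as the paper's proof: verify Assumption \ref{ass4.2} from the uniform consistency of the preliminary PCA estimator, $\max_{i,j}|\hSig_{u,ij}^*-\Sigma_{u0,ij}|=O_p(\omega_T)$, combined with the separation $\max_{S_L}|\Sigma_{u0,ij}|\ll\omega_T\ll\min_{i\neq j,(i,j)\in S_U}|\Sigma_{u0,ij}|$, and then invoke Theorem \ref{th4.1}. Your SCAD argument ($w_{ij}=0$ on off-diagonal $S_U$, $w_{ij}=1$ on $S_L$ with probability approaching one, hence $\alpha_T=0$, $\beta_T=1$) is identical to the paper's, and your adaptive-lasso verification of the three conditions in Assumption \ref{ass4.2}(ii) identifies the correct driving inequalities ($\mu_T\ll\omega_T^{\gamma}$, $\mu_T\gg\omega_T^{1+\gamma}$, and the left inequality of (\ref{e4.2})).

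One intermediate claim in the adaptive-lasso part is wrong, though it turns out to be self-healing. On $S_L$ you only know $|\hSig_{u,ij}^*|=O_p(\omega_T)$, and since $\delta_T\ll\omega_T$ you cannot conclude $|\hSig_{u,ij}^*|+\delta_T\asymp\delta_T$; the correct two-sided bound is $\delta_T\leq|\hSig_{u,ij}^*|+\delta_T\leq C\omega_T$ with high probability, so $c\,\omega_T^{-\gamma}\leq w_{ij}\leq\delta_T^{-\gamma}$ on $S_L$ — this is exactly the content of the paper's Lemma \ref{lc.7}(ii)--(iii). Consequently $\beta_T=O_p((\omega_T/\delta_T)^{\gamma})$, which diverges, not $O_p(1)$; and the guaranteed lower bound on $\min_{(i,j)\in S_L}w_{ij}$ is $c\,\omega_T^{-\gamma}$, not $\asymp\delta_T^{-\gamma}$. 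All conditions still check out with the corrected bounds: the left inequality of (\ref{e4.2}) is there precisely to absorb the $(\omega_T/\delta_T)^{\gamma}$ factor so that $\beta_T\sum_{(i,j)\in S_L}|\Sigma_{u0,ij}|=o_p(N)$ (a mechanism you cite yourself), and $\mu_T\min_{(i,j)\in S_L}w_{ij}\gtrsim\mu_T\omega_T^{-\gamma}\gg\omega_T\geq\sqrt{\log N/T}+(\log N)/N$ still holds. On that last point, your chain $\omega_T\gg\omega_T^2\geq\sqrt{\log N/T}$ is also incorrect ($\omega_T^2$ is not $\geq\sqrt{\log N/T}$ in general); compare $\omega_T$ itself to $\sqrt{\log N/T}$ directly. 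With these bounds stated accurately the proof is complete and coincides with the paper's.
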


As in the case of Lemma \ref{l4.1},  an  attractive feature of this theorem is that, if both the upper bound of $\sum_{(i,j)\in S_L}|\Sigma_{u0, ij}|$ and the lower bound of\\ $\min_{i\neq j, (i,j)\in S_U}|\Sigma_{u0,ij}|$  are known,  [e.g., in the strictly sparse model, \\$\sum_{(i,j)\in S_L}|\Sigma_{u0, ij}|=0$, and assume  $\min_{i\neq j, (i,j)\in S_U}|\Sigma_{u0,ij}|$ is bounded away from zero as in MA(1)] then Conditions (\ref{e4.2}) - (\ref{eq4.7}) do not depend on any other unknown feature of $\Sigma_{u0}$.

\section{Numerical Examples}

We propose a novel algorithm to numerically minimize the  objective function $L_2(\Lambda, \Sigma_u)$ (\ref{eq4.1addd}) for joint estimation, which combines the EM algorithm with the majorize-minimize method recently proposed by Bien and Tibshirani (2011). The algorithm uses the PCA as initial values, and updates the estimator iteratively. At each iteration, an EM-algorithm is carried out to estimate $\Lambda$ and the empirical residual covariance $\frac{1}{T}\sum_{t=1}^T\hu_t\hu_t'.$ Then a majorize-minimize method (Bien and Tibshirani 2011) is used to obtain a positive definite estimate of the covariance $\Sigma_u$ based on $\frac{1}{T}\sum_{t=1}^T\hu_t\hu_t'$ and soft-thresholding. The algorithm is summarized as follows (see Bai and Li (2012) and Bien and Tibshirani (2011) for detailed descriptions of the algorithm).

\begin{enumerate}
  \item Initialize $\hLam$ and $\widehat{u}$ as the PCA estimators. Initialize $\hSig_u$ as a diagonal matrix of  the sample covariance based on the PCA residuals.
\item At step k+1,     $\hLam_{k+1}=AM^{-1}$, where \\$
M=\hLam_k'\hSig_{y,k}^{-1}S_y\hSig_{y,k}^{-1}\hLam_k+I_r-\hLam_k'\hSig_{y,k}^{-1}\hLam_k
$,
$$
A=S_y\hSig_{y,k}^{-1}\hLam_k,\quad
\hSig_{y,k}=\hLam_k\hLam_k'+\hSig_{u,k}.
$$
 Let $
S_{u,k}=S_y-A\hLam_{k+1}'-\hLam_{k+1}A'+\hLam_{k+1}M\hLam_{k+1}'.
$
\item Still at step $k+1$, For some small value $t>0$ , let $B=\hSig_{u,k}-t(\hSig_{u,k}^{-1}-\hSig_{u,k}^{-1} S_{u,k}\hSig_{u,k}^{-1})$.  Let
$$
\hSig_{u,k+1}=S(B, \lambda tK)
$$
where  $S(A,B)_{ij}=sign(A_{ij})(A_{ij}-B_{ij})$ and $K$ is a matrix  whose off-diagonal $K_{ij}$ is $|(S_{u,k})_{ij}|^{-\gamma}$ and diagonal elements are zero.  

\item Repeat  2-3 until converge.
\end{enumerate}

We present a numerical experiment to illustrate the performance of the proposed method. The data was generated as following:  $\{e_{it}\}_{i\leq N, t\leq T}$ are both serially and cross-sectionally independent as $N(0,1)$. Let
$$u_{1t}=e_{1t}, \hspace{1em}u_{2t}=e_{2t}+a_1e_{1t},  \hspace{1em}u_{3t}=e_{3t}+a_2e_{2t}+b_1e_{1t},$$
$$
u_{i+1,t}=e_{i+1,t}+a_ie_{it}+b_{i-1}e_{i-1,t}+c_{i-2}e_{i-2,t},
$$
where $\{a_i, b_i, c_i\}_{i=1}^N$ are  i.i.d. $N(0,0.7^2)$.  Let the two factors $\{f_{1t}, f_{2t}\}$ be i.i.d. $N(0,1)$, and $\{\lambda_{i,1}, \lambda_{i,2}\}_{i\leq N}$ be uniform on $[0,1]$. Then $\Sigma_{u0}$ is a banded matrix.   

We apply the adaptive lasso penalty for our joint estimation, with various choices of the tuning parameters $\gamma$ and $\mu_T$. The result is compared with the PCA estimator and the regular maximum likelihood restricted to diagonal $\hSig_u$ (DML, Bai and Li 2012).  More specifically, DML estimates $(\Lambda_0,\Sigun)$ by:
\begin{equation}\label{eq5.1}
\min_{\Sigma_{u,ij}=0 \text{  for }i\neq j}\min_{\Lambda}\frac{1}{N}\log|\Lambda\Lambda'+\Sigma_u|+\frac{1}{N}\tr(S_y(\Lambda\Lambda'+\Sigma_u)^{-1}).
\end{equation}
Therefore DML forces the covariance estimator to be diagonal even though the true $\Sigun$ is not. Hence it does not take the idiosyncratic cross-sectional dependence into account.

For each estimator, the smallest canonical correlation (the higher the better) between the estimator and the parameter has been used as a measurement to assess the accuracy of each estimator.  Tables  \ref{table1} and \ref{table2} list the results of the estimated factor loadings and common factors from joint-estimation.



\begin{table}[htdp]
\caption{Canonical correlations between $\hLamt$ and $\Lambda_0$}
\begin{center}
\begin{tabular}{cc|cc|cc|cccc}
 
\hline
&& \multirow{3}{*}{PCA} & \multirow{3}{*}{DML} & \multicolumn{4}{c}{Penalized ML}     \\
&&  && \multicolumn{2}{c}{$\gamma=1$} &   \multicolumn{2}{c}{$\gamma=5$} \\
$T$ & $N$ && & $\mu_T=0.08$ & $\mu_T=0.3$ & $ \mu_T=0.08$ & $\mu_T=0.3$\\
\hline
50 & 50 & 0.205 &  0.199&  0.212  &  0.222  &   0.230& 0.234 \\
50 & 100 &  0.429 &  0.558 &  0.591&  0.613 &   0.627  &  0.631    \\
50 & 150 &   0.328 &  0.470 &  0.494 & 0.495  & 0.515 &  0.507  \\
& & & & & & & & \\
100 & 50 &  0.496&  0.519 & 0.560 &    0.537&  0.558 & 0.537   \\
100 & 100 &  0.394  &  0.574 &  0.621  &   0.648 &   0.648 &  0.658 \\
100 & 150 &   0.774  & 0.819& 0.837  &   0.829& 0.840&  0.836   \\
\hline
\end{tabular}
\label{table1}
\small

\it Canonical correlations are presented.  DML is defined in (\ref{eq5.1}) which treats $\Sigma_u$ to be diagonal. Penalized ML uses the one-step adaptive Lasso estimation.
\end{center}
\end{table}

\begin{table}[htdp]
\caption{Canonical correlations between $\widehat{F}^{(2)}$ and $F$}
\begin{center}
\begin{tabular}{cc|cc|cc|cccc}
 
\hline
&& \multirow{3}{*}{PCA} & \multirow{3}{*}{DML} & \multicolumn{4}{c}{Penalized ML}     \\
&&  && \multicolumn{2}{c}{$\gamma=1$} &   \multicolumn{2}{c}{$\gamma=5$} \\
$T$ & $N$ && & $\mu_T=0.08$ & $\mu_T=0.3$ & $ \mu_T=0.08$ & $\mu_T=0.3$\\
\hline
50 & 50 & 0.232 &  0.234  &  0.251 &  0.267 & 0.279  & 0.283  \\
50 & 100 &   0.477& 0.640 & 0.671 &   0.732 & 0.748 &  0.749    \\
50 & 150 & 0.411&   0.599 &  0.623 & 0.638  & 0.666 &  0.650  \\
& & & & & & & & \\
100 & 50 & 0.430  &  0.446 & 0.503& 0.473 & 0.508  & 0.474   \\
100 & 100 &  0.371 &  0.579 &  0.647 &  0.688 &   0.687 &  0.697   \\
100 & 150 &  0.820  & 0.867&   0.880&   0.892 &  0.912&    0.903 \\
\hline
\end{tabular}
\label{table2}
\small

\it Canonical correlations are presented. Penalized ML uses the one-step adaptive Lasso estimation.
\end{center}
\end{table}

We have also computed the canonical correlations between the estimators and the true parameters using the regularized two-step method (Section 3) with iterations. For computational simplicity, the threshold value in the first step has been fixed to be the adaptive threshold of Fan et al. (2012) with a universal constant $C=1$, which we find to maintain the finite-sample positive definiteness well. The results demonstrate that both two-step and joint estimations have  higher canonical correlations, and thus outperform the PCA and DML.   

Our  EM plus majorize-minimize algorithm maximizes an approximate penalized likelihood function. Developing an efficient algorithm for maximizing the original  likelihood function will be a future research direction.



\begin{table}[htdp]
\caption{Canonical correlations between the regularized two-step ML estimators (Section 3) and the true parameters}
\begin{center}
\begin{tabular}{cc|ccc|ccc}
 
\hline
&& \multicolumn{3}{c|}{ Factor loadings}  &\multicolumn{3}{c}{ Factors}   \\
$T$ & $N$ &PCA& DML &Two-step &PCA&DML &Two-step\\
 & & &  & ML & & & ML\\
\hline
50 & 50 & 0.205 &  0.199&  0.241 &  0.232 &  0.234    & 0.277\\
50 & 100 &  0.429 &  0.558 &       0.643&  0.477& 0.640&    0.752   \\
50 & 150 &   0.328 &  0.470 & 0.565 & 0.411&   0.599  &  0.731 \\
& & & & & &  \\
100 & 50 &  0.496&  0.519 &  0.548  &     0.430  &  0.446  &0.469 \\
100 & 100 &  0.394  &  0.574 &   0.717  & 0.371 &  0.579 &0.758 \\
100 & 150 &   0.774  & 0.819&   0.846  &   0.820  & 0.867 &    0.927  \\
\hline
\end{tabular}
\label{table3}
\small

\it The SCAD$(\tau_{ij})$ threshold has been used for the covariance estimation, where $\tau_{ij}=\alpha_{ij}\omega_T$ with the adaptive threshold constant $\alpha_{ij}$ proposed by Cai and Liu (2011).
\end{center}
\end{table}

\section{Conclusion}
We study the estimation of a high dimensional approximate factor model in the presence of cross sectional dependence and heteroskedasticity. The classical PCA method does not efficiently estimate the factor loadings or common factors because it essentially treats the idiosyncratic error to be homoskedastic and cross sectionally uncorrelated.  For the efficient estimation it is   essential to estimate a large error covariance matrix.

We assume the model to be conditionally sparse in the sense that after the common factors are taken out, the idiosyncratic components have a sparse covariance matrix. This enables us to combine the merits of both sparsity  and high dimensional factor analysis. Two maximum-likelihood-based approaches are proposed to estimate the common factors and factor loadings, both involve regularizing a large covariance sparse matrix. Extensive asymptotic analysis has been carried out. In particular, we develop the inferential theory for the two-step estimation.

It remains to derive the limiting distribution as well as the optimal rates of convergence for the estimators by the joint-estimation method. This will extend the consistency results obtained in the current paper.  In the presence of a covariance $\Lambda_0\Lambda_0'$ that has fast-diverging eigenvalues, the task is difficult because it requires  the consistency of the penalized covariance estimator under the operator norm.    We intend to address this issue  in  future research.

\appendix

\section{Proofs for generic estimators}

We need to establish the results for two sets of estimators: the two-step estimator and the joint estimator, whose proofs for consistency share some similarities. Therefore in this section we establish some preliminary results for generic estimators that can be used for both cases. We denote by $(\hLam,\hSig_u)$ as a generic estimator for $(\Lambda_0, \Sigun)$, which can be either $(\hLamo,\hSigo)$ or $(\hLam^{(2)}, \hSig_u^{(2)})$. Define
\begin{equation}\label{q2}
Q_2(\Lambda, \Sigma_u)=\frac{1}{N}\tr(\Lambda_0'\Sigma_u^{-1}\Lambda_0-\Lambda_0'\Sigma_u^{-1}\Lambda(\Lambda'\Sigma_u^{-1}\Lambda)^{-1}
\Lambda'\Sigma_u^{-1}\Lambda_0),
\end{equation}
\begin{equation}\label{q3}
Q_3(\Lambda,\Sigma_u)=\frac{1}{N}\log|\Lambda\Lambda'+\Sigma_u|+\frac{1}{N}\tr(S_y(\Lambda\Lambda'+\Sigma_u)^{-1})-\frac{1}{N}\tr(S_u\Sigma_u^{-1})-\frac{1}{N}\log|\Sigma_u|-Q_2( \Lambda, \Sigma_u).
\end{equation}
Define the set
\begin{eqnarray*}
\Xi_{\delta}=\{(\Lambda,\Sigma_u): &&\delta^{-1}<\lambda_{\min}(N^{-1}\Lambda'\Lambda)\leq\lambda_{\max}(N^{-1}\Lambda'\Lambda)<\delta, 
\cr
&&\delta^{-1}<\lambda_{\min}(\Sigma_u)\leq\lambda_{\max}(\Sigma_u)<\delta\}
\end{eqnarray*}

We first present a   lemma that will be needed throughout the proof.
\begin{lem}\label{la.1}  
(i) $\max_{i,j\leq r}|\frac{1}{T}\sum_{t=1}^Tf_{it}f_{jt}-Ef_{it}f_{jt}|=O_p(\sqrt{1/T})$.\\
(ii) $\max_{i,j\leq N}|\frac{1}{T}\sum_{t=1}^Tu_{it}u_{jt}-Eu_{it}u_{jt}|=O_p(\sqrt{(\log N)/T})$.\\
(iii) $\max_{i\leq r, j\leq N}|\frac{1}{T}\sum_{t=1}^Tf_{it}u_{jt}|=O_p(\sqrt{(\log N)/T})$.
\end{lem}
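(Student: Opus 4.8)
The plan is to prove all three bounds with the same three steps: a pointwise deviation bound for a single coordinate, a union bound over the relevant index set, and a deviation level matched to the cardinality of that set. The only inputs are Assumption \ref{ass3.2}: strict stationarity, the sub-exponential tails of \ref{ass3.2}(iii), and the exponentially decaying strong-mixing coefficients of \ref{ass3.2}(iv). Part (i) requires no exponential inequality, since the index set has fixed size $r^2$. For fixed $i,j\le r$, the sequence $\{f_{it}f_{jt}-Ef_{it}f_{jt}\}_t$ is zero-mean, strictly stationary, has a finite fourth moment by the exponential tails, and is strongly mixing with the same coefficients $\alpha(\cdot)$ as $\{(f_t,u_t)\}$ (being a measurable function of it). Hence its autocovariances are absolutely summable, $\var\big(T^{-1}\sum_t f_{it}f_{jt}\big)=O(T^{-1})$, and Markov's inequality gives $T^{-1}\sum_t(f_{it}f_{jt}-Ef_{it}f_{jt})=O_p(T^{-1/2})$ for each pair; the maximum over the $r^2$ pairs has the same order.

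For parts (ii) and (iii) the substance is the pointwise exponential bound. Fix $(i,j)$ in (ii): the summands $u_{it}u_{jt}$ are measurable functions of $(f_t,u_t)$, so $\{u_{it}u_{jt}\}_t$ is strictly stationary and strongly mixing with coefficient $\alpha(\cdot)$ as in \ref{ass3.2}(iv); and since $P(|u_{it}|>s)\le\exp(-(s/b_1)^{r_1})$, a Cauchy--Schwarz-type argument gives $P(|u_{it}u_{jt}|>s)\le 2\exp(-(s/b)^{r_1/2})$ for a suitable $b>0$, so the centered summands are sub-Weibull with exponent depending only on $r_1$. A Bernstein-type deviation inequality for weakly dependent, sub-Weibull sequences (of the kind used in Fan et al. 2012) then gives, for $s$ in a moderate range,
$$P\!\left(\left|\frac{1}{T}\sum_{t=1}^T(u_{it}u_{jt}-Eu_{it}u_{jt})\right|>s\right)\le c_1 T\exp\big(-(Ts)^{\kappa}/c_2\big)+c_3\exp\big(-Ts^2/c_4\big),$$
where $\kappa>0$ is an exponent built from $r_1$ and $r_3$ and the $c_k$ are absolute constants.

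Taking $s=M\sqrt{(\log N)/T}$ and using the rate restriction relating $\log N$ and $T$ that is in force in the theorems invoking this lemma, the sum over the $N^2$ pairs of the first (non-Gaussian) term tends to zero while the second is at most $N^{-c_5 M^2}$; a union bound over the $N^2$ pairs $(i,j)$ then yields $\max_{i,j\le N}|T^{-1}\sum_t(u_{it}u_{jt}-Eu_{it}u_{jt})|=O_p(\sqrt{(\log N)/T})$. Part (iii) is identical with $\{f_{it}u_{jt}\}_t$ replacing $\{u_{it}u_{jt}\}_t$: the summands are again strongly mixing, their centered versions are sub-Weibull with exponent built from $r_1$ and $r_2$, the mean is zero by \ref{ass3.2}(i), and the union runs over the $\le N^2$ pairs with $i\le r$, $j\le N$.

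The main obstacle is the pointwise bound for (ii)--(iii): one must check that at the scale $s\asymp\sqrt{(\log N)/T}$ the non-Gaussian correction term $T\exp(-(Ts)^{\kappa}/c_2)$, which comes from the heavy tails and the blocking in the mixing Bernstein inequality, is dominated by the genuine Gaussian term $\exp(-Ts^2/c_4)$. This is exactly what the exponent inequality $3r_1^{-1}+1.5r_2^{-1}+r_3^{-1}>1$ in Assumption \ref{ass3.2}(iv), together with the growth condition relating $\log N$ and $T$, are designed to ensure; everything else is routine bookkeeping.
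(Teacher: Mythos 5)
Your argument is correct: the paper itself proves Lemma A.1 only by citing Lemmas A.3 and B.1 of Fan, Liao and Mincheva, and your proposal is precisely the standard argument behind those cited lemmas (Chebyshev plus summable autocovariances for the fixed-dimensional part; products of sub-exponential variables are sub-Weibull; a Bernstein inequality for strongly mixing sub-Weibull sequences; a union bound over $N^2$ pairs at scale $s\asymp\sqrt{(\log N)/T}$, with the exponent condition in Assumption 3.2(iv) and the restriction $(\log N)^{6/\gamma}=o(T)$ ensuring the non-Gaussian remainder is negligible). No gaps worth flagging.
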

\begin{proof}  See Lemmas A.3 and B.1 in Fan, Liao and Mincheva (2011). \end{proof}

\begin{lem} \label{la.2}  Under Assumption  3.2, for any $\delta>0$,  $$\sup_{(\Lambda,\Sigma_u)\in\Xi_{\delta}}|Q_3(\Lambda,\Sigma_u)|=O\left(\frac{\log N}{N}+\sqrt{\frac{\log N}{T}}\right).$$
Therefore we can write
\begin{eqnarray}
&&\frac{1}{N}\log|\Lambda\Lambda'+\Sigma_u|+\frac{1}{N}\tr(S_y(\Lambda\Lambda'+\Sigma_u)^{-1})\cr
&&=\frac{1}{N}\tr(S_u\Sigma_u^{-1})+\frac{1}{N}\log|\Sigma_u|+Q_2( \Lambda, \Sigma_u)+O\left(\frac{\log N}{N}+\sqrt{\frac{\log N}{T}}\right).
\end{eqnarray}
\end{lem}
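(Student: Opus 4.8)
The plan is to peel off the three ``benchmark'' pieces $\frac1N\log|\Sigu|$, $\frac1N\tr(S_u\Sigu^{-1})$ and $Q_2(\Lambda,\Sigu)$ from the objective function by two exact matrix identities and then bound the leftover uniformly over $\Xi_\delta$. First, Sylvester's determinant identity gives $\log|\Lambda\Lambda'+\Sigu|=\log|\Sigu|+\log|I_r+\Lambda'\Sigu^{-1}\Lambda|$, and on $\Xi_\delta$ the eigenvalues of $\Lambda'\Sigu^{-1}\Lambda$ lie in $[\delta^{-2}N,\delta^2N]$, so $\log|I_r+\Lambda'\Sigu^{-1}\Lambda|\le r\log(1+\delta^2N)$, i.e. $\frac1N(\log|\Lambda\Lambda'+\Sigu|-\log|\Sigu|)=O((\log N)/N)$ uniformly. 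Second, the Woodbury identity gives $(\Lambda\Lambda'+\Sigu)^{-1}=\Sigu^{-1}-M$ with $M:=\Sigu^{-1}\Lambda(I_r+\Lambda'\Sigu^{-1}\Lambda)^{-1}\Lambda'\Sigu^{-1}\succeq0$; since $\|\Sigu^{-1}\Lambda\|^2=O(N)$ and $\|(I_r+\Lambda'\Sigu^{-1}\Lambda)^{-1}\|=O(1/N)$ on $\Xi_\delta$, the matrix $M$ has rank $\le r$, $\|M\|=O(1)$, and hence $\tr(M)\le r\|M\|=O(1)$, again uniformly.

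Next I would substitute the decomposition $S_y=\Lambda_0\Lambda_0'+\Lambda_0G'+G\Lambda_0'+S_u$, where I use the imposed normalization $T^{-1}\sum_t f_tf_t'=I_r$, set $G=T^{-1}\sum_t u_tf_t'$ (the $\bar u$-terms drop because $\bar f=0$), and $S_u=T^{-1}\sum_t(u_t-\bar u)(u_t-\bar u)'$. Then $\tr(S_y(\Lambda\Lambda'+\Sigu)^{-1})=\tr(S_y\Sigu^{-1})-\tr(S_yM)$, and I expand each of the two pieces along the four summands of $S_y$. The benchmark terms come out of $\frac1N\tr(\Lambda_0\Lambda_0'\Sigu^{-1})$, of $\frac1N\tr(S_u\Sigu^{-1})$, and of $-\frac1N\tr(\Lambda_0\Lambda_0'M)$; the last matches the second term of $Q_2$ up to replacing $(I_r+\Lambda'\Sigu^{-1}\Lambda)^{-1}$ by $(\Lambda'\Sigu^{-1}\Lambda)^{-1}$, a discrepancy that (after pre/post-multiplying by the surrounding $\Lambda,\Lambda_0$ factors and using $A^{-1}-(I_r+A)^{-1}=A^{-1}(I_r+A)^{-1}$) has trace $O(1)$, hence $O(1/N)$ after dividing by $N$. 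The cross terms $\frac1N\tr(G'\Sigu^{-1}\Lambda_0)$ and $\frac1N\tr(G'M\Lambda_0)$ are bounded by $\frac{r}{N}\|G\|\cdot O(\sqrt N)$ using $\|G\|\le\|G\|_F=O_p(\sqrt{N(\log N)/T})$ from Lemma \ref{la.1}(iii), which is $O_p(\sqrt{(\log N)/T})$. What remains is $\frac1N\tr(S_uM)$.

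The term $\frac1N\tr(S_uM)$ is the main obstacle, precisely because $\Xi_\delta$ controls only the \emph{spectrum} of $\Sigu$, not its row-sum ($\ell_1$) norm, so one must not route the bound through $\|\Sigu^{-1}\|_1\|S_u\|_1$. Instead I split $\tr(S_uM)=\tr(\Sigun M)+\tr((S_u-\Sigun)M)$ and use the matrix Hölder inequality $|\tr(AM)|\le\|A\|\,\tr(M)$ (valid since $M\succeq0$): with $A=\Sigun$ this is $O(1)$ because $\|\Sigun\|=O(1)$ and $\tr(M)=O(1)$, and with $A=S_u-\Sigun$ we use $\|S_u-\Sigun\|\le\|S_u-\Sigun\|_1\le N\max_{i,j}|S_{u,ij}-\Sigma_{u0,ij}|=O_p(N\sqrt{(\log N)/T})$ from Lemma \ref{la.1}(ii) (the $\bar u\bar u'$ contribution being of smaller order). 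Thus $\frac1N\tr(S_uM)=O_p(1/N)+O_p(\sqrt{(\log N)/T})$.

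Collecting everything — $O((\log N)/N)$ from the determinant, $O(1/N)$ from the $(I_r+A)^{-1}$ versus $A^{-1}$ discrepancy, and $O_p(\sqrt{(\log N)/T})$ from the $G$-cross terms and from $\tr(S_uM)$ — and noting that every bound depends on $(\Lambda,\Sigu)$ only through $\delta$, yields $\sup_{(\Lambda,\Sigu)\in\Xi_\delta}|Q_3(\Lambda,\Sigu)|=O_p((\log N)/N+\sqrt{(\log N)/T})$, which is the assertion (the $O(\cdot)$ in the statement being understood in probability, as $G$ and $S_u$ are random); the displayed rewriting of the objective function is then a direct restatement of the definition of $Q_3$.
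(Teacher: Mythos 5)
Your proposal is correct and follows essentially the same route as the paper's proof: Sylvester's identity for the log-determinant, the Woodbury formula, the decomposition of $S_y$ into $\Lambda_0\Lambda_0'$, the $f$–$u$ cross terms and $S_u$, the $(\Lambda'\Sigma_u^{-1}\Lambda)^{-1}$ versus $(I_r+\Lambda'\Sigma_u^{-1}\Lambda)^{-1}$ correction of order $O(N^{-1})$, and the bound $\|S_u-\Sigma_{u0}\|=O_p(N\sqrt{(\log N)/T})$ for the remaining quadratic term (your trace-Hölder step with $\tr(M)=O(1)$ is just a cosmetic repackaging of the paper's Frobenius/eigenvalue bound for its term $A_4$). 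The only nitpick is that the paper defines $S_u=T^{-1}\sum_t u_tu_t'$, so the $\bar u\bar u'$ term does not literally drop but is carried as a separate negligible remainder; this does not affect the result.
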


\begin{proof}

First of all, note that $|\Lambda\Lambda'+\Sigma_u|=|\Sigma_u|\times|I_r+\Lambda'\Sigma_u^{-1}\Lambda|$,  and  $\sup_{(\Lambda,\Sigma_u)\in\Xi_{\delta}}\frac{1}{N}\log|I_r+\Lambda'\Sigma_u^{-1}\Lambda|=O\left(\frac{\log N}{N}\right),$ hence we have
\begin{equation}\label{eqa.2}
\frac{1}{N}\log|\Lambda\Lambda'+\Sigma_u|=\frac{1}{N}\log|\Sigma_u|+O\left(\frac{\log N}{N}\right),
\end{equation}
where $O(\cdot)$ is uniform in $\Xi_{\delta}$.  Equation (\ref{eqa.2}) will be used later in the proof.

We now consider the term $N^{-1}\tr(S_y(\Lambda\Lambda'+\Sigma_u)^{-1})$.   With  the identification condition $\frac{1}{T}\sum_{t=1}^Tf_tf_t'=I_r,$ $\bar{f}=0,$ and  $S_u=\frac{1}{T}\sum_{t=1}^Tu_tu_t'$, 
  $$S_y=\frac{1}{T}\sum_{t=1}^T(y_t-\bar{y})(y_t-\bar{y})'=\Lambda_0\Lambda_0'+S_u+\Lambda_0\frac{1}{T}\sum_{t=1}^T f_t u_t'+(\Lambda_0\frac{1}{T}\sum_{t=1}^T f_t u_t')'-\bar{u}\bar{u}'.$$
By the matrix inversion formula  $(\Lambda\Lambda'+\Sigma_u)^{-1}=\Sigma_u^{-1}-\Sigma_u^{-1}\Lambda(I_r+\Lambda'\Sigma_u^{-1}\Lambda)
^{-1}\Lambda'\Sigma_u^{-1}$, 
\begin{equation}\label{eqa.3}
\frac{1}{N}\tr(S_y(\Lambda\Lambda'+\Sigma_u)^{-1})=\frac{1}{N}\tr(\Lambda_0'\Sigma_u^{-1}\Lambda_0)+\frac{1}{N}\tr(S_u\Sigma_u^{-1})-A_1+A_2+A_3-A_4-A_5,
\end{equation}
where
$A_1=N^{-1}\tr(\Lambda_0\Lambda_0'\Sigma_u^{-1}\Lambda(I_r+\Lambda'\Sigma_u^{-1}\Lambda)
^{-1}\Lambda'\Sigma_u^{-1}),$
$
A_2=\frac{1}{N}\tr(\frac{1}{T}\sum_{t=1}^T\Lambda_0 f_t u_t'(\Lambda\Lambda'+\Sigma_u)^{-1})
$, $
A_3=\frac{1}{N}\tr(\frac{1}{T}\sum_{t=1}^Tu_tf_t'\Lambda_0'(\Lambda\Lambda'+\Sigma_u)^{-1}),
$ and $
A_4=\frac{1}{N}\tr(S_u\Sigma_u^{-1}\Lambda(I_r+\Lambda'\Sigma_u^{-1}\Lambda)
^{-1}\Lambda'\Sigma_u^{-1}).$ Term $A_5=N^{-1}\tr(\bar{u}\bar{u}'(\Lambda\Lambda'+\Sigma_u)^{-1})=O_p((\log N)/T)$ uniformly in the parameter space, and hence can be ignored.

Let us look at terms $A_1, A_2, A_3$ and $A_4$ subsequently.  Note that $\lambda_{\max}(\Sigma_u)$ and $N\lambda^{-1}_{\min}(\Lambda'\Lambda)$ are both bounded from above uniformly in $\Xi_{\delta}$, we have,
\begin{equation}\label{eqa.4}
\sup_{(\Lambda,\Sigma_u)\in\Xi_{\delta}}\lambda_{\max}[(\Lambda'\Sigma_u^{-1}\Lambda)^{-1}]\leq\sup_{(\Lambda,\Sigma_u)\in\Xi_{\delta}}\frac{\lambda_{\max}(\Sigma_u)}{\lambda_{\min}(\Lambda'\Lambda)}=O(N^{-1}),
\end{equation}
\begin{equation}\label{eqa.5}
\sup_{(\Lambda,\Sigma_u)\in\Xi_{\delta}}\lambda_{\max}[(I_r+\Lambda'\Sigma_u^{-1}\Lambda)^{-1}]\leq\sup_{(\Lambda,\Sigma_u)\in\Xi_{\delta}}\lambda_{\max}[(\Lambda'\Sigma_u^{-1}\Lambda)^{-1}]=O(N^{-1}).
\end{equation}
In addition, $\|\Lambda\|_F=O(\sqrt{N})$, $\lambda_{\max}(\Sigma_u^{-1})=O(1)$ uniformly in $\Xi_{\delta}$,  and $ \|\Lambda_0\|_F=O(\sqrt{N})$. Applying the matrix inversion formula   yields
\begin{eqnarray}\label{eqa.6}
A_1&=&\frac{1}{N}\tr(\Lambda_0'\Sigma_u^{-1}\Lambda(\Lambda'\Sigma_u^{-1}\Lambda)
^{-1}\Lambda'\Sigma_u^{-1}\Lambda_0)-\frac{1}{N}\tr(\Lambda_0'\Sigma_u^{-1}\Lambda(\Lambda'\Sigma_u^{-1}\Lambda)^{-1}(I_r+\Lambda'\Sigma_u^{-1}\Lambda)
^{-1}\Lambda'\Sigma_u^{-1}\Lambda_0)\cr
&=&\frac{1}{N}\tr(\Lambda_0'\Sigma_u^{-1}\Lambda(\Lambda'\Sigma_u^{-1}\Lambda)
^{-1}\Lambda'\Sigma_u^{-1}\Lambda_0)+O\left(\frac{1}{N}\right),
\end{eqnarray}
where $O(\cdot)$ is uniform over $(\Lambda, \Sigma_u)\in\Xi_{\delta}$.  In the second equality above we applied (\ref{eqa.4}) and (\ref{eqa.5}) and  the following inequality:
\begin{eqnarray*}
&&\frac{1}{N}\tr(\Lambda_0'\Sigma_u^{-1}\Lambda(\Lambda'\Sigma_u^{-1}\Lambda)^{-1}(I_r+\Lambda'\Sigma_u^{-1}\Lambda)
^{-1}\Lambda'\Sigma_u^{-1}\Lambda_0)\cr
&&\leq \frac{1}{N}\|\Lambda_0'\Sigma_u^{-1}\Lambda\|_F^2\lambda_{\max}[(\Lambda'\Sigma_u^{-1}\Lambda)^{-1}]
\lambda_{\max}[(I_r+\Lambda'\Sigma_u^{-1}\Lambda)^{-1}]\cr
&&\leq O(N^{-3})\|\Lambda_0\|_F^2\|\Lambda\|_F^2\lambda_{\max}(\Sigma_u^{-1})=O(N^{-1}).
\end{eqnarray*}
By Lemma \ref{la.1}(iii), and $\lambda_{\max}((\Lambda\Lambda'+\Sigma_u)^{-1})\leq\lambda_{\max}(\Sigma_u^{-1})=O(1)$ uniformly in $\Xi_{\delta}$, 
\begin{eqnarray}\label{eqa.7}
\sup_{(\Lambda,\Sigma_u)\in\Xi_{\delta}}  |A_2|&\leq&\frac{1}{N}\|\Lambda_0'(\Lambda\Lambda'+\Sigma_u)^{-1}\|_F\left\|\frac{1}{T}\sum_{t=1}^T f_t u_t'\right\|_F=O_p(\sqrt{\frac{\log N}{T}}).
\end{eqnarray} Similarly,  $\sup_{(\Lambda,\Sigma_u)\in\Xi_{\delta}}  |A_3|=O_p(\sqrt{\frac{\log N}{T}}).
$ Again by the matrix inversion formula,
$$A_4=\frac{1}{N}\tr(S_u\Sigma_u^{-1}\Lambda(\Lambda'\Sigma_u^{-1}\Lambda)
^{-1}\Lambda'\Sigma_u^{-1})
-\frac{1}{N}\tr(S_u\Sigma_u^{-1}\Lambda(\Lambda'\Sigma_u^{-1}\Lambda)
^{-1}(I+\Lambda'\Sigma_u^{-1}\Lambda)^{-1}\Lambda'\Sigma_u^{-1}).
$$
The second term on the right hand side is of smaller order (uniformly) than the first term,  because it has an additional term $(I+\Lambda'\Sigma_u^{-1}\Lambda)^{-1}$, whose maximum eigenvalue is $O(N^{-1})$ uniformly by (\ref{eqa.5}). The first term is bounded by (uniformly in $\Xi_{\delta}$ ):
$$
 \frac{c}{N}\| S_u\Sigma_u^{-1}\Lambda\|_F O(N^{-1})\|\Lambda'\Sigma_u^{-1}\|_F\leq O(N^{-1})\lambda_{\max}(S_u)=O(\sqrt{\frac{\log N}{T}}+\frac{1}{N}). 
$$
Hence $\sup_{(\Lambda,\Sigma_u)\in\Xi_{\delta}} |A_4|=O(T^{-1/2}(\log N)^{1/2}+N^{-1}). $
Results (\ref{eqa.2}) and (\ref{eqa.3}) then  yield
\begin{eqnarray*}
&&\frac{1}{N}\log|\Lambda\Lambda'+\Sigma_u|+\frac{1}{N}\tr(S_y(\Lambda\Lambda'+\Sigma_u)^{-1})\cr
&&=\frac{1}{N}\tr(\Lambda_0'\Sigma_u^{-1}\Lambda_0)+\frac{1}{N}\tr(S_u\Sigma_u^{-1})+\frac{1}{N}\log|\Sigma_u|-\frac{1}{N}\tr(\Lambda_0'\Sigma_u^{-1}\Lambda(\Lambda'\Sigma_u^{-1}\Lambda)
^{-1}\Lambda'\Sigma_u^{-1}\Lambda_0)\cr
&&+O\left(\frac{\log N}{N}+\sqrt{\frac{\log N}{T}}\right)\cr
&&=\frac{1}{N}\tr(S_u\Sigma_u^{-1})+\frac{1}{N}\log|\Sigma_u|+Q_2( \Lambda, \Sigma_u)+O\left(\frac{\log N}{N}+\sqrt{\frac{\log N}{T}}\right).
\end{eqnarray*}

\end{proof}

Throughout the proofs, we note that the consistency depends crucially on the consistency of the following quantities:
$$
J=(\hLam-\Lambda_0)'\hSig_u^{-1}\hLam(\hLam'\hSig_u^{-1}\hLam)^{-1}
$$
We state the following lemma for the generic estimators. 
\begin{lem} \label{assa.1}
(i) $\Lambda_0' \Sigma_{u0}^{-1}\Lambda_0 -(I_r- J )\hLam'\hSig_u^{-1}\hLam(I_r- J )'=o_p(N)$\\
(ii) First order condition: $\hLam'(\hLam\hLam'+\hSig_u)^{-1}(S_y-\hLam\hLam'-\hSig_u)=0.$
\end{lem}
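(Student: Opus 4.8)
The plan is to establish part (ii) first, since it is purely mechanical, and then use it together with Lemma \ref{la.2} to derive part (i). For part (ii), recall that $(\hLam,\hSig_u)$ is defined as a minimizer of an objective function whose leading term is $\frac{1}{N}\log|\det(\Lambda\Lambda'+\Sigma_u)|+\frac{1}{N}\tr(S_y(\Lambda\Lambda'+\Sigma_u)^{-1})$. Differentiating with respect to $\Lambda$ and setting the derivative to zero, and using the standard matrix calculus identities $\partial\log|\det(\Sigma)|/\partial\Lambda = 2(\Lambda\Lambda'+\Sigma_u)^{-1}\Lambda$ and $\partial\tr(S_y(\Lambda\Lambda'+\Sigma_u)^{-1})/\partial\Lambda = -2(\Lambda\Lambda'+\Sigma_u)^{-1}S_y(\Lambda\Lambda'+\Sigma_u)^{-1}\Lambda$, gives $(\hLam\hLam'+\hSig_u)^{-1}\hLam - (\hLam\hLam'+\hSig_u)^{-1}S_y(\hLam\hLam'+\hSig_u)^{-1}\hLam = 0$. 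Left-multiplying by $\hLam'(\hLam\hLam'+\hSig_u)$ and rearranging yields $\hLam'(\hLam\hLam'+\hSig_u)^{-1}(S_y-\hLam\hLam'-\hSig_u) = 0$, which is exactly (ii). One must be slightly careful that the penalty term in the joint-estimation case does not involve $\Lambda$, so it drops out of the $\Lambda$-derivative; and that the identification constraint ($\Lambda'\Sigma_u^{-1}\Lambda$ diagonal, $\Lambda\in\Theta_\lambda$) is an interior constraint that does not affect the first-order condition in the relevant directions — this is the usual argument for MLE of factor models and I would cite it rather than belabor it.

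For part (i), the starting point is Lemma \ref{la.2} applied to the generic estimator: since $(\hLam,\hSig_u)\in\Xi_\delta$ with probability approaching one (this follows from the parameter-space restrictions $\Theta_\lambda$ and $\Gamma$, together with the positive-definiteness discussion in Section 3.3), we have
\begin{equation*}
\frac{1}{N}\log|\hLam\hLam'+\hSig_u|+\frac{1}{N}\tr(S_y(\hLam\hLam'+\hSig_u)^{-1}) = \frac{1}{N}\tr(S_u\hSig_u^{-1})+\frac{1}{N}\log|\hSig_u|+Q_2(\hLam,\hSig_u)+o_p(1),
\end{equation*}
and the same identity holds at the true parameter $(\Lambda_0,\Sigun)$ with $Q_2(\Lambda_0,\Sigun)=0$ (since there the projection term cancels $\tr(\Lambda_0'\Sigun^{-1}\Lambda_0)/N$ exactly). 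Because $(\hLam,\hSig_u)$ minimizes the left-hand side over $\Theta_\lambda\times\Gamma$ (or $\Theta_\lambda$ with $\hSig_u=\hSigo$ in the two-step case), comparing the value at $(\hLam,\hSig_u)$ to the value at a suitable feasible comparison point — essentially $(\Lambda_0,\hSig_u)$ or $(\Lambda_0,\Sigun)$ — forces $Q_2(\hLam,\hSig_u)=o_p(1)$, i.e.
\begin{equation*}
\frac{1}{N}\tr\bigl(\Lambda_0'\hSig_u^{-1}\Lambda_0 - \Lambda_0'\hSig_u^{-1}\hLam(\hLam'\hSig_u^{-1}\hLam)^{-1}\hLam'\hSig_u^{-1}\Lambda_0\bigr)=o_p(1).
\end{equation*}
The final step is algebraic: writing $\Lambda_0'\hSig_u^{-1}\hLam = \hLam'\hSig_u^{-1}\hLam - (\hLam-\Lambda_0)'\hSig_u^{-1}\hLam = (I_r-J)'\hLam'\hSig_u^{-1}\hLam$ via the definition $J=(\hLam-\Lambda_0)'\hSig_u^{-1}\hLam(\hLam'\hSig_u^{-1}\hLam)^{-1}$, one checks that $\Lambda_0'\hSig_u^{-1}\hLam(\hLam'\hSig_u^{-1}\hLam)^{-1}\hLam'\hSig_u^{-1}\Lambda_0 = (I_r-J)\hLam'\hSig_u^{-1}\hLam(I_r-J)'$, so that the trace identity above becomes $\frac1N\tr\bigl(\Lambda_0'\Sigun^{-1}\Lambda_0 - (I_r-J)\hLam'\hSig_u^{-1}\hLam(I_r-J)'\bigr)$ plus a term $\frac1N\tr\bigl(\Lambda_0'(\hSig_u^{-1}-\Sigun^{-1})\Lambda_0\bigr)$, which is $o_p(1)$ by Theorem \ref{l3.1} in the two-step case and by the covariance consistency of Theorem \ref{th4.1} in the joint case. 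Since the matrix inside the trace is positive semidefinite (it is of the form $M - P M P'$ with $P$ idempotent-like projection structure — more precisely the difference is a Gram matrix of the residual after projecting $\Sigun^{-1/2}\Lambda_0$ onto the column space of $\Sigun^{-1/2}\hLam$, up to the covariance-swap error), convergence of its normalized trace to zero upgrades to convergence of the matrix itself, giving the $o_p(N)$ bound claimed in (i).

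The main obstacle is the middle step in part (i): justifying rigorously that minimization forces $Q_2(\hLam,\hSig_u)=o_p(1)$. This requires exhibiting a legitimate comparison point inside the constrained parameter space $\Theta_\lambda\times\Gamma$ whose objective value is within $o_p(1)$ of that at the true parameter — one cannot simply plug in $(\Lambda_0,\Sigun)$ because it may violate the diagonality constraint $\Lambda'\hSig_u^{-1}\Lambda$ diagonal, so one must rotate $\Lambda_0$ by the appropriate (data-dependent) orthogonal matrix and verify the rotated point lies in $\Theta_\lambda$ and that $Q_2$ is rotation-invariant in the relevant sense. Handling the $\hSig_u$-dependence (the two-step case fixes $\hSig_u=\hSigo$, the joint case optimizes over it with a penalty) and controlling the $o_p(1)$ remainder uniformly is where the care is needed; the rest is bookkeeping with the matrix inversion formula and the already-established covariance rates.
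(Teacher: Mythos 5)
Your overall route is the paper's route: part (ii) is the standard first-order condition (the paper simply points to Bai and Li (2012)), and part (i) is obtained by writing the objective via Lemma \ref{la.2} as $Q_1+Q_2+Q_3$ (with $Q_1\equiv 0$ in the two-step case), using $L(\hLam,\hSig_u)\le L(\Lambda_0,\cdot)$ and $Q_2(\Lambda_0,\cdot)=0$ to force $Q_2(\hLam,\hSig_u)=o_p(1)$, upgrading the trace bound to a matrix bound by positive semidefiniteness, applying the identity $\Lambda_0'\hSig_u^{-1}\hLam(\hLam'\hSig_u^{-1}\hLam)^{-1}=I_r-J$, and finally swapping $\hSig_u^{-1}$ for $\Sigun^{-1}$. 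However, your citations for the final swap are circular. In the two-step case you invoke Theorem \ref{l3.1}, but that theorem is proved using Theorems \ref{th3.2} and \ref{th3.3}, which themselves rest on the present lemma; the paper instead uses only the logically prior POET operator-norm rate $\|\hSigoi-\Siguni\|=O_p(m_N\omega_T^{1-q})$ (Theorem B.1, from Fan et al.\ 2012, which does not involve $\hLamo$), so that $N^{-1}\|\Lambda_0'(\hSigoi-\Siguni)\Lambda_0\|_F\le N^{-1}\|\Lambda_0\|_F^2\,\|\hSigoi-\Siguni\|=o_p(1)$. In the joint case you invoke the covariance consistency of Theorem \ref{th4.1}, but the proof of Theorem \ref{th4.1} requires this lemma; the paper instead first establishes $N^{-1}\Lambda_0'(\hSigti-\Siguni)\Lambda_0=o_p(1)$ as Lemma \ref{lc.5}, derived from Lemmas \ref{lc.1}--\ref{lc.4}, which depend only on the covariance block of the likelihood and the penalty, not on the loading estimates.

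A second, related omission concerns the step ``minimization forces $Q_2(\hLamt,\hSigt)=o_p(1)$'' in the joint case. There the inequality $L_c(\hLamt,\hSigt)\le L_c(\Lambda_0,\Sigun)$ only yields $Q_1(\hSigt)+Q_2(\hLamt,\hSigt)\le d_T$ with $d_T=o_p(1)$; since $Q_1$ is not sign-constrained a priori, a large positive $Q_2$ could in principle be offset by a very negative $Q_1(\hSigt)$. The paper closes this with Lemma \ref{lc.2}, a Taylor-expansion lower bound on $NQ_1(\hSigt)$ whose negative terms are controlled by Assumption \ref{ass4.2} on $\mu_T$ and the weights; this is the one genuinely non-mechanical ingredient of the joint-case proof and is absent from your sketch. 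By contrast, the feasibility issue you flag as the ``main obstacle'' (whether $\Lambda_0$ satisfies the diagonality constraint $\Lambda'\hSig_u^{-1}\Lambda$ diagonal) is handled implicitly: $Q_2$ and the likelihood are invariant under right-multiplication of $\Lambda$ by an invertible $r\times r$ matrix, so the rotated comparison point changes nothing in the argument, and the paper does not belabor it.
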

We will prove Lemma \ref{assa.1} for  both   $(\hLamo,\hSigo)$ and $(\hLam^{(2)}, \hSig_u^{(2)})$ later when we deal with these two estimators individually.

\begin{lem}\label{la.3}
Suppose Lemma \ref{assa.1} holds, then \\
(i) $\hLam'\hSig_u^{-1}(S_y-\hLam\hLam'-\hSig_u)=0.$\\
(ii) $( J -I_r)'( J -I_r)-I_r=O_p(N^{-1}+T^{-1/2}(\log N)^{1/2}).$
\end{lem}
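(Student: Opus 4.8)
The plan is to deduce (i) from the first-order condition in Lemma~\ref{assa.1}(ii) by a Sherman--Morrison--Woodbury manipulation, and then to deduce (ii) from (i) by writing $I_r-J$ explicitly and substituting the decomposition of $S_y$ used in the proof of Lemma~\ref{la.2}.

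For part (i), I would apply $(\hLam\hLam'+\hSig_u)^{-1}=\hSig_u^{-1}-\hSig_u^{-1}\hLam(I_r+\hLam'\hSig_u^{-1}\hLam)^{-1}\hLam'\hSig_u^{-1}$ to the left factor in $\hLam'(\hLam\hLam'+\hSig_u)^{-1}(S_y-\hLam\hLam'-\hSig_u)=0$. Writing $A=\hLam'\hSig_u^{-1}\hLam$, a one-line identity gives $\hLam'(\hLam\hLam'+\hSig_u)^{-1}=(I_r-A(I_r+A)^{-1})\hLam'\hSig_u^{-1}=(I_r+A)^{-1}\hLam'\hSig_u^{-1}$, so the first-order condition becomes $(I_r+A)^{-1}\hLam'\hSig_u^{-1}(S_y-\hLam\hLam'-\hSig_u)=0$; since $I_r+A$ is invertible, left-multiplying by it yields (i).

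For part (ii), note $J=(\hLam-\Lambda_0)'\hSig_u^{-1}\hLam(\hLam'\hSig_u^{-1}\hLam)^{-1}=I_r-\Lambda_0'W$ with $W:=\hSig_u^{-1}\hLam(\hLam'\hSig_u^{-1}\hLam)^{-1}$, hence $(J-I_r)'(J-I_r)=(I_r-J)'(I_r-J)=W'\Lambda_0\Lambda_0'W$. Substitute $\Lambda_0\Lambda_0'=S_y-S_u-\Lambda_0\frac{1}{T}\sum_tf_tu_t'-(\Lambda_0\frac{1}{T}\sum_tf_tu_t')'+\bar u\bar u'$. The leading term $W'S_yW$ is treated via part (i): rewriting it as $\hLam'\hSig_u^{-1}S_y=\hLam'\hSig_u^{-1}\hLam\hLam'+\hLam'$ gives $\hLam'\hSig_u^{-1}S_y\hSig_u^{-1}\hLam=A^2+A$, whence $W'S_yW=A^{-1}(A^2+A)A^{-1}=I_r+A^{-1}$. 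Since $\hLam\in\Theta_{\lambda}$ and $\hSig_u$ has eigenvalues bounded away from $0$ and $\infty$, we get $\|A^{-1}\|=O_p(N^{-1})$ and $\|W\|=O_p(N^{-1/2})$, so $W'S_yW=I_r+O_p(N^{-1})$. The remaining four terms are each $O_p(N^{-1}+T^{-1/2}(\log N)^{1/2})$: writing $W'S_uW=W'\Sigma_{u0}W+W'(S_u-\Sigma_{u0})W$, the first summand is $\le\|W\|^2\|\Sigma_{u0}\|=O_p(N^{-1})$, while each entry of the second is at most $\|W_{\cdot k}\|_1\|W_{\cdot l}\|_1\max_{ij}|S_{u,ij}-\Sigma_{u0,ij}|=O_p(\sqrt{(\log N)/T})$ using $\|W_{\cdot k}\|_1\le\sqrt N\|W\|=O_p(1)$ and Lemma~\ref{la.1}(ii); the two cross terms are bounded by $\|W'\Lambda_0\|\,\|\frac{1}{T}\sum_tf_tu_t'\|\,\|W\|$ with $\|W'\Lambda_0\|=O(1)$ and $\|\frac{1}{T}\sum_tf_tu_t'\|\le\sqrt{rN}\max_{kj}|\frac{1}{T}\sum_tf_{kt}u_{jt}|=O_p(\sqrt{N(\log N)/T})$ (Lemma~\ref{la.1}(iii)); and $\|W'\bar u\bar u'W\|\le\|W\|^2\|\bar u\|^2=O_p(N^{-1})O_p(N/T)$, where $\|\bar u\|^2=O_p(N/T)$ follows from stationarity and summable autocovariances of $\{u_{it}\}_t$. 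Collecting the contributions gives $(J-I_r)'(J-I_r)-I_r=O_p(N^{-1}+T^{-1/2}(\log N)^{1/2})$.

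The main obstacle is the leading term $W'S_yW$: because $\Lambda_0\Lambda_0'+\Sigma_{u0}$ (and hence $S_y$) carries $r$ eigenvalues of order $N$, no crude operator-norm bound on $W$ can control this term, and one must rely on the exact cancellation forced by the first-order condition, which collapses $A^{-1}(A^2+A)A^{-1}$ to $I_r+A^{-1}$. After that, the rest is bookkeeping with cheap norm and entrywise estimates together with Lemma~\ref{la.1}; the only other point needing care is that the entrywise bound on $W'(S_u-\Sigma_{u0})W$ needs the $\ell_1$-norms of the columns of $W$ to be $O_p(1)$, which is precisely where the normalization by $(\hLam'\hSig_u^{-1}\hLam)^{-1}$ in $W$ is used.
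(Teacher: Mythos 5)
Your proposal is correct and follows essentially the same route as the paper: part (i) is the identical Woodbury manipulation, and part (ii) amounts to sandwiching the first-order condition as $W'S_yW=I_r+A^{-1}$ and substituting the decomposition of $S_y$, which is exactly the paper's reduction of $(J-I_r)'(J-I_r)-I_r$ to the remainder term $K$. The only cosmetic difference is that you bound $W'(S_u-\Sigma_{u0})W$ entrywise via $\ell_1$-norms of the columns of $W$, whereas the paper uses an operator-norm bound on $S_u$; both give the same $O_p(N^{-1}+T^{-1/2}(\log N)^{1/2})$ rate.
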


\begin{proof}

(i) Using the matrix inverse formula, the same argument of Bai and Li (2012)'s (A.2) implies $\hLam'(\hLam\hLam'+\hSig_u)^{-1}=(I_r+\hLam'\hSig_u^{-1}\hLam)^{-1}\hLam'\hSig_u^{-1}$. Thus part (i) follows from the first order condition in Lemma \ref{assa.1}.

(ii) Let $H=(\hLam'\hSig_u^{-1}\hLam)^{-1}$. Part (i) can be equivalently written as 
$J + J'-J'J +K=0$
where
$$
K= J ' \frac{1}{T}\sum_{t=1}^T f_t u_t' \hSig_u^{-1}\hLam H + H \hLam'\hSig_u^{-1} \frac{1}{T}\sum_{t=1}^Tu_tf_t' J   - \frac{1}{T}\sum_{t=1}^T f_t u_t' \hSig_u^{-1}\hLam H - H \hLam'\hSig_u^{-1} \frac{1}{T}\sum_{t=1}^Tu_tf_t'
$$
$$- H \hLam'\hSig_u^{-1}(S_u-\hSig_u)\hSig_u^{-1}\hLam H.$$
Note that  for $(\hLam,\hSig_u)\in\Xi_{\delta}$, $H=O_p(N^{-1})$, $J=O_p(1)$ for each element, $\|\hSig_u^{-1}\| =O_p(1)$, $\|\hLam\|_F=O_p(\sqrt{N})$,  hence 
$$\| \frac{1}{T}\sum_{t=1}^T f_t u_t' \hSig_u^{-1}\hLam H \|_F\leq O_p(1) \|\frac{1}{NT}\sum_{t=1}^Tf_tu_t'\|_F\|\hSig_u^{-1}\| \|\hLam\|_F=O_p(\frac{1}{N}\sqrt{\frac{N\log N}{T}}\sqrt{N})
=O_p(\sqrt{\frac{\log N}{T}})$$
Moreover, for the  empirical covariance $\|S_u\|^2\leq2\sum_{i,j\leq N}(T^{-1}\sum_{t=1}^Tu_{it}u_{jt}-\sigma_{u0,ij})^2+2\|\Sigun\|^2=O_p(T^{-1}N^2\log N+1)$ by Lemma \ref{la.1},  which implies $ H \hLam'\hSig_u^{-1}S_u\hSig_u^{-1}\hLam H =O_p(N^{-1}+T^{-1/2}(\log N)^{1/2})$. Also, $ H \hLam'\hSig_u^{-1}\hSig_u\hSig_u^{-1}\hLam H =H =O_p(N^{-1})$. Therefore $K=O_p(N^{-1}+T^{-1/2}(\log N)^{1/2}).$ It then implies (ii).

\end{proof}

\begin{lem} \label{la.4} Suppose Lemma \ref{assa.1} holds, then $J=o_p(1)$.
\end{lem}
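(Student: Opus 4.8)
The plan is to show that $G:=I_r-J=\Lambda_0'\hSig_u^{-1}\hLam(\hLam'\hSig_u^{-1}\hLam)^{-1}$ satisfies $G=I_r+o_p(1)$, which is equivalent to $J=o_p(1)$. The only ingredients are Lemma~\ref{la.3} (which holds whenever Lemma~\ref{assa.1} does) together with the identification restriction~(\ref{eq2.4}).

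First I would record that $G$ is asymptotically orthogonal. By Lemma~\ref{la.3}(ii), $G'G=(J-I_r)'(J-I_r)=I_r+o_p(1)$ since $\log N=o(T)$; hence $\|G\|=1+o_p(1)$ and $\lambda_{\min}(G'G)=1+o_p(1)$, so on an event of probability tending to one $G$ is nonsingular and $(G'G)^{-1/2}=I_r+o_p(1)$. Setting $O:=G(G'G)^{-1/2}$ then gives an orthogonal matrix with $G=O+o_p(1)$.

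Next I would feed in Lemma~\ref{assa.1}(i). Dividing it by $N$ and writing $D_0:=N^{-1}\Lambda_0'\Siguni\Lambda_0$ and $P:=N^{-1}\hLam'\hSig_u^{-1}\hLam$, it becomes $D_0=GPG'+o_p(1)$. By the identification restriction~(\ref{eq2.4}) — imposed on $(\Lambda_0,\Sigun)$ and, through the parameter space, on $\hLam$ — both $D_0$ and $P$ are diagonal with entries in strictly decreasing order and with eigenvalues bounded away from $0$ and $\infty$; in particular the gaps between consecutive diagonal entries of $D_0$ stay bounded away from zero for all large $N$. Substituting $G=O+o_p(1)$ and using $\|P\|=O_p(1)$ gives $OPO'=D_0+o_p(1)$, so Weyl's inequality yields $\lambda_j(P)=\lambda_j(D_0)+o_p(1)$ for each $j$; since $P,D_0$ are diagonal with ordered entries these eigenvalues are exactly the diagonal entries, whence $P=D_0+o_p(1)$ and therefore $OD_0O'=D_0+o_p(1)$. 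Right-multiplying by the orthogonal $O$ gives $OD_0=D_0O+o_p(1)$, i.e. $(D_{0,jj}-D_{0,ii})O_{ij}=o_p(1)$ for every $i,j$; the eigengap then forces $O_{ij}=o_p(1)$ for $i\neq j$, so orthogonality gives $O=\diag(\pm1,\ldots,\pm1)+o_p(1)$ and hence $G=\diag(\pm1,\ldots,\pm1)+o_p(1)$. Finally, the sign normalization that is part of the identification conditions — the columns of $\hLam$ are taken with the same signs as those of $\Lambda_0$, equivalently the $(j,j)$ entry of $\Lambda_0'\hSig_u^{-1}\hLam$ is nonnegative for large $N$ — rules out the value $-1$ in each coordinate, so $G=I_r+o_p(1)$ and $J=o_p(1)$.

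The hard part is the linear-algebra step in the previous paragraph: upgrading ``$G$ is asymptotically orthogonal and conjugates one diagonal matrix into another'' to ``$G$ is asymptotically a signed permutation (hence, after the sign fix, the identity).'' This is exactly where diagonality and the uniform separation of the eigenvalues of $\Lambda'\Sigma_u^{-1}\Lambda$ in~(\ref{eq2.4}) are indispensable; with a vanishing eigengap one could only conclude that $G$ is close to some orthogonal matrix commuting with $D_0$, which is much weaker than $G\approx I_r$.
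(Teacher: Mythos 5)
Your proof is correct and takes essentially the same route as the paper: both arguments reduce the claim to the two relations $(I_r-J)M_1(I_r-J)'=M_2+o_p(1)$ (from Lemma \ref{assa.1}(i) together with the diagonality and bounded, separated eigenvalues imposed by (\ref{eq2.4})) and $(J-I_r)'(J-I_r)=I_r+o_p(1)$ (Lemma \ref{la.3}(ii)), and then invoke the column-sign normalization to pin down the sign. The only difference is that the paper concludes by citing Lemma A.1 of Bai and Li (2012) as a black box, whereas you prove that linear-algebra step yourself (an asymptotically orthogonal conjugation between diagonal matrices with uniformly separated entries must be asymptotically a signed identity), which is a faithful, self-contained rendering of the same argument.
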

\begin{proof}
By our assumption, both $\hLam'\hSig_u^{-1}\hLam$ and $\Lambda_0'\Sigun^{-1}\Lambda$ are diagonal. Moreover, the eigenvalues of $N^{-1}\hLam'\hSig_u^{-1}\hLam$ and $N^{-1}\Lambda_0'\Sigun^{-1}\Lambda$ are bounded away from zero. Therefore by Lemma \ref{assa.1}(i) and Lemma \ref{la.3}(ii),  there are two diagonal matrices $M_1$ and $M_2$ whose eigenvalues are all bounded away from zero, such that
\begin{eqnarray}
(I_r-J)M_1(I_r-J)'=M_2+o_p(1), \quad ( J -I_r)'( J -I_r)=I_r+o_p(1)
\end{eqnarray}
Applying  Lemma A.1 of Bai and Li (2012),  we have $J=o_p(1)$ and $M_1=M_2+o_p(1)$. We also assumed $\hLam$ and $\Lambda_0$ have the  same column signs, as a part of identification condition.

\end{proof}

\section{Proofs for Section 3}
In this section, $(\hLam,\hSig_u)=(\hLamo,\hSigo)$ and $
J=(\hLamo-\Lambda_0)(\hSigo)^{-1}\hLamo(\hLamop(\hSigo)^{-1}\hLamo)^{-1}.$  Throughout Appendix B, we will let $H=(\hLamop\hSigoi\hLamo)^{-1}$. For notational simplicity,  we let $$\omega_T=\frac{1}{\sqrt{N}}+\sqrt{\frac{\log N}{T}}.$$ 

We first cite a result from Fan et al. (2012):
\begin{thm}[Theorem 3.1 in Fan et al. (2012)] \label{thb.1}Suppose $(\log N)^{6/\gamma}=o(T)$ and $\sqrt{T}=o(N)$, then under Assumptions  3.1- 3.5, 
$$
\|\hSigo-\Sigun\|=O_p\left(m_N\omega_T^{1-q}\right)=\|(\hSigo)^{-1}-\Sigun^{-1}\|.
$$
\end{thm}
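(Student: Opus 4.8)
The plan is to recognize that this is Theorem~3.1 of Fan et al.\ (2012) and to follow the strategy used there. The key is the equivalent PCA representation of the POET estimator, $\hSig_{u,ij}^{(1)}=s_{ij}\big(T^{-1}\sum_{t=1}^T\hu_{it}^{\text{PCA}}\hu_{jt}^{\text{PCA}}\big)$, which lets one reduce the operator-norm statement to two ingredients: (a) a uniform entrywise bound $\max_{i,j\leq N}\big|T^{-1}\sum_{t=1}^T\hu_{it}^{\text{PCA}}\hu_{jt}^{\text{PCA}}-\Sigma_{u0,ij}\big|=O_p(\omega_T)$, and (b) a generic ``entrywise consistency plus $q$-sparsity $\Rightarrow$ operator-norm consistency'' lemma of the Bickel--Levina (2008a)/Cai--Liu (2011) type, which upgrades (a) to $\|\hSigo-\Sigun\|=O_p(m_N\omega_T^{1-q})$ once the threshold is of the order $\tau_{ij}\asymp\omega_T$ guaranteed by Assumption~\ref{ass3.5}.

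For step (a), I would first insert the standard Bai (2003) asymptotic representation of the principal-components estimators of the factors and loadings, which is valid under the pervasiveness condition Assumption~\ref{ass3.3}, the fourth-moment conditions Assumption~\ref{ass3.4}, and the weak-dependence Assumption~\ref{ass3.2}. Writing $\hu_{it}^{\text{PCA}}$ as $u_{it}$ minus terms linear in the factor- and loading-estimation errors plus higher-order remainders, substituting into $T^{-1}\sum_t\hu_{it}^{\text{PCA}}\hu_{jt}^{\text{PCA}}$, one is left with the oracle term $T^{-1}\sum_t u_{it}u_{jt}-\Sigma_{u0,ij}$, which is $O_p(\sqrt{(\log N)/T})$ uniformly in $i,j$ by Lemma~\ref{la.1}(ii), together with a finite list of cross terms. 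Each cross term must be bounded \emph{uniformly} over the $O(N^2)$ pairs; the factor/loading estimation errors contribute the $N^{-1/2}$ component of $\omega_T$ -- the unavoidable price of not observing $f_t$ -- while the uniform control is obtained from the exponential-tail and strong-mixing conditions Assumption~\ref{ass3.2}(iii)--(iv) via an exponential-inequality/Bonferroni argument. This is exactly where the bandwidth restriction $(\log N)^{6/\gamma}=o(T)$ and the side condition $\sqrt{T}=o(N)$ are consumed.

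For step (b), on the high-probability event that the entrywise error is at most a constant times $\omega_T$ and simultaneously $\tau_{ij}\in(C_1\omega_T,C_2\omega_T)$, the thresholding properties (i)--(iii) of $s_{ij}$ give, row by row: entries with $|\Sigma_{u0,ij}|$ of smaller order than $\omega_T$ are set to zero; every retained entry has error $O_p(\omega_T)$; and the number of $j$ with $|\Sigma_{u0,ij}|\gtrsim\omega_T$ in any fixed row is at most $m_N\omega_T^{-q}$ by Assumption~\ref{ass3.1}. Summing over a row gives $\sum_j|\hSig_{u,ij}^{(1)}-\Sigma_{u0,ij}|=O_p(m_N\omega_T^{1-q})$, hence $\|\hSigo-\Sigun\|\leq\|\hSigo-\Sigun\|_1=O_p(m_N\omega_T^{1-q})$. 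Since the hypotheses force $m_N\omega_T^{1-q}=o(1)$ and $\lambda_{\min}(\Sigun)\geq c_2>0$ by Assumption~\ref{ass3.2}(ii), $\hSigo$ is invertible for all large $N,T$ with $\|(\hSigo)^{-1}\|=O_p(1)$, and the identity $(\hSigo)^{-1}-\Sigun^{-1}=-(\hSigo)^{-1}(\hSigo-\Sigun)\Sigun^{-1}$ yields $\|(\hSigo)^{-1}-\Sigun^{-1}\|=O_p(m_N\omega_T^{1-q})$.

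The main obstacle is step (a): getting the entrywise bound \emph{uniformly in} $i,j\leq N$ rather than for fixed indices. This forces one to keep careful track of the PCA estimation error inside every residual and to control $O(N^2)$ maximal deviations of serially dependent, heavy-tailed averages, all while the largest $r$ eigenvalues of $\Lambda_0\Lambda_0'$ diverge at rate $O(N)$; balancing these effects is the delicate part and is where Assumptions~\ref{ass3.2}--\ref{ass3.4} together with the rate restrictions do their work.
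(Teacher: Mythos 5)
Your outline is correct, but it is worth being clear about what the paper actually does here: its entire ``proof'' of this statement is the one-line remark that the sufficient conditions of Theorem 3.1 in Fan et al.\ (2012) are implied by Assumptions 3.1--3.5, i.e.\ the result is imported wholesale from the cited reference rather than proved. What you have written is therefore not an alternative to the paper's argument but a faithful reconstruction of the argument in the reference it points to: (a) the uniform entrywise bound $\max_{i,j}|R_{ij}-\Sigma_{u0,ij}|=O_p(\omega_T)$ via the Bai (2003) expansion of the PCA residuals plus exponential-inequality/Bonferroni control over the $O(N^2)$ pairs (this is exactly Lemma B.5(iv), which the paper also just cites from Fan et al.), (b) the Bickel--Levina/Cai--Liu upgrade to $\|\hSigo-\Sigun\|_1=O_p(m_N\omega_T^{1-q})$ using the thresholding properties (i)--(iii) and the $q$-sparsity bound on the number of non-negligible entries per row, with $\|A\|\leq\|A\|_1$ for symmetric $A$, and (c) the resolvent identity for the inverse once $m_N\omega_T^{1-q}=o(1)$ and $\lambda_{\min}(\Sigun)\geq c_2$ guarantee invertibility. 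All three steps are sound, and you correctly identify step (a) as the place where the rate restrictions $(\log N)^{6/\gamma}=o(T)$ and $\sqrt{T}=o(N)$ and the tail/mixing conditions are consumed. The only thing your sketch buys beyond the paper is self-containedness; conversely, a full write-up of step (a) would require reproducing several pages of the reference (the uniform rates for $\hlam_j^{PCA}$ and $\fpca$ and the cross-term bookkeeping), which is precisely the work the paper chose to outsource.
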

\begin{proof}
The sufficient conditions of this theorem are satisfied by our assumptions. See Fan et al. (2012).
\end{proof}

We then prove Lemma \ref{assa.1}, which then enables us to apply Lemmas \ref{la.3} and \ref{la.4}. Under Assumptions  3.1- 3.3, there is $\delta>0$ such that $(\Lambda_0,\Sigun)\in\Xi_{\delta}$ and $(\hLamo,\hSigo)\in\Xi_{\delta}$ with probability approaching one for $\Xi_{\delta}$ in Appendix A.  
\begin{lem}\label{lb.1}
For  $(\hLam,\hSig_u)=(\hLamo,\hSigo)$, Lemma  \ref{assa.1} is satisfied.
\end{lem}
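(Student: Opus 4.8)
The plan is to establish the two parts of Lemma~\ref{assa.1} for the two-step pair $(\hLam,\hSig_u)=(\hLamo,\hSigo)$. The external input is Theorem~\ref{thb.1}, which gives $\|\hSigo-\Sigun\|=\|\hSigoi-\Siguni\|=O_p(m_N\omega_T^{1-q})=o_p(1)$; combined with the bounded-eigenvalue condition on $\Sigun$ in Assumption~\ref{ass3.2}(ii), this places $\hSigo$ inside the eigenvalue band of $\Xi_{\delta}$ with probability tending to one. Since $\hLamo\in\Theta_{\lambda}\subset\Xi_{\delta}$ by construction, the pair $(\hLamo,\hSigo)$ lies in $\Xi_{\delta}$ with probability tending to one, so the uniform expansion of Lemma~\ref{la.2} is available at this random pair.

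\emph{Part (i).} From the definition of $J$ one first records the exact algebraic identity
$$(I_r-J)\hLamop\hSigoi\hLamo(I_r-J)'=\Lambda_0'\hSigoi\hLamo(\hLamop\hSigoi\hLamo)^{-1}\hLamop\hSigoi\Lambda_0,$$
which follows because, writing $H=(\hLamop\hSigoi\hLamo)^{-1}$, a one-line computation gives $(I_r-J)H^{-1}=\Lambda_0'\hSigoi\hLamo$. Since $\|\Lambda_0'(\Siguni-\hSigoi)\Lambda_0\|\le\|\Lambda_0\|^2\|\hSigoi-\Siguni\|=O(N)o_p(1)=o_p(N)$ by Theorem~\ref{thb.1} and Assumption~\ref{ass3.3}, part (i) reduces to showing that $A:=\Lambda_0'\hSigoi\Lambda_0-\Lambda_0'\hSigoi\hLamo(\hLamop\hSigoi\hLamo)^{-1}\hLamop\hSigoi\Lambda_0=o_p(N)$; as $A$ is positive semidefinite (a projection-type residual) and $\tr A=N\,Q_2(\hLamo,\hSigo)$, it suffices to prove $Q_2(\hLamo,\hSigo)=o_p(1)$. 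I would do this by comparison with a rotated copy of $\Lambda_0$: choose an orthogonal $R$ that diagonalises $\Lambda_0'\hSigoi\Lambda_0$, so that $\Lambda_0R\in\Theta_{\lambda}$ (its eigenvalue bound holds since $(\Lambda_0R)'(\Lambda_0R)$ and $\Lambda_0'\Lambda_0$ have the same eigenvalues and $\delta$ in $\Theta_{\lambda}$ is taken large) and $Q_2(\Lambda_0R,\hSigo)=0$ because $\Lambda_0R$ spans the same column space as $\Lambda_0$. Applying Lemma~\ref{la.2} at $\Lambda=\hLamo$ and at $\Lambda=\Lambda_0R$, $L_1(\Lambda)$ equals $N^{-1}\tr(S_u\hSigoi)+N^{-1}\log|\hSigo|+Q_2(\Lambda,\hSigo)$ up to $O_p((\log N)/N+\sqrt{(\log N)/T})$; using optimality $L_1(\hLamo)\le L_1(\Lambda_0R)$, the $N^{-1}\tr(S_u\hSigoi)$ and $N^{-1}\log|\hSigo|$ terms cancel and $Q_2(\hLamo,\hSigo)\le Q_2(\Lambda_0R,\hSigo)+O_p((\log N)/N+\sqrt{(\log N)/T})=o_p(1)$. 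Since $Q_2\ge0$, this gives $Q_2(\hLamo,\hSigo)=o_p(1)$ and hence part (i).

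\emph{Part (ii).} This is the first-order condition of the minimization~(\ref{eq3.1}). Because $L_1(\Lambda)$ depends on $\Lambda$ only through $M=\Lambda\Lambda'$, and because every rank-$r$ positive semidefinite $M$ with $\lambda_j(N^{-1}M)\in(\delta^{-1},\delta)$ for its $r$ nonzero eigenvalues can be written $M=\Lambda\Lambda'$ with $\Lambda'\hSigoi\Lambda$ diagonal (diagonalise by an orthogonal rotation), the matrix $M^{\ast}=\hLamo\hLamo'$ minimises $\ell(M):=N^{-1}\log|M+\hSigo|+N^{-1}\tr(S_y(M+\hSigo)^{-1})$ over the corresponding set of rank-$r$ matrices. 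By part (i), $N^{-1}\hLamo'\hLamo$ is, after the appropriate normalisation, close to $N^{-1}\Lambda_0'\Lambda_0$, whose eigenvalues lie strictly inside $(\delta^{-1},\delta)$ once $\delta$ exceeds the constant of Assumption~\ref{ass3.3}; hence those bounds are inactive with probability tending to one, and $M^{\ast}$ is an interior stationary point of $\ell$ on the rank-$r$ manifold. Its tangent space at $M^{\ast}$ is $\{\hLamo B'+B\hLamo':B\in\mathbb{R}^{N\times r}\}$, and since $\nabla\ell(M^{\ast})=N^{-1}[(M^{\ast}+\hSigo)^{-1}-(M^{\ast}+\hSigo)^{-1}S_y(M^{\ast}+\hSigo)^{-1}]$ is symmetric, stationarity forces $\nabla\ell(M^{\ast})\hLamo=0$. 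Left-multiplying by $(M^{\ast}+\hSigo)$ and transposing yields $\hLamop(\hLamo\hLamo'+\hSigo)^{-1}(S_y-\hLamo\hLamo'-\hSigo)=0$, which is part (ii).

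The crux is Part (i), specifically the bound $Q_2(\hLamo,\hSigo)=o_p(1)$: it hinges on the uniform expansion of Lemma~\ref{la.2} (so that the infeasible sample term $N^{-1}\tr(S_u\hSigoi)$ and $N^{-1}\log|\hSigo|$ drop out of the comparison $L_1(\hLamo)\le L_1(\Lambda_0R)$) and on picking the comparison loading $\Lambda_0R$ inside $\Theta_{\lambda}$ in a way that annihilates $Q_2$. The only delicate point in Part (ii) is verifying that the eigenvalue constraints defining $\Theta_{\lambda}$ are slack at $\hLamo$, which is precisely why $\delta$ is taken strictly larger than the constant of Assumption~\ref{ass3.3}.
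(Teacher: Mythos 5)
Your proof is correct and follows essentially the same route as the paper: part (i) is the variational comparison $L_1(\hLamo)\le L_1(\cdot)$ evaluated at the true loadings, combined with the uniform expansion of Lemma \ref{la.2}, the nonnegativity of $Q_2$ with $Q_2=0$ at the truth, and Theorem \ref{thb.1} to swap $\hSigoi$ for $\Siguni$; part (ii) is the first-order condition, which the paper simply attributes to Bai and Li (2012) while you derive it explicitly. Your only substantive deviation is a welcome refinement: the paper compares with $\Lambda_0$ itself, which need not satisfy the diagonality constraint $\Lambda'\hSigoi\Lambda$ in $\Theta_{\lambda}$, whereas you compare with the orthogonally rotated $\Lambda_0R\in\Theta_{\lambda}$, which leaves both $L_1$ and $Q_2$ unchanged and makes the comparison legitimate.
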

\begin{proof}
The first order condition with respect to $\hLamo$ in  (ii)  is easy to verify, which is the same as that in Bai and Li (2012). We only show part (i).  

By definition, $L_1(\hLamo)\leq L_1(\Lambda_0)$. Also    the  representation  defined in Lemma \ref{la.2} yields
$$Q_3(\Lambda,\Sigma_u)+Q_2(\Lambda,\Sigma_u)=L_1(\Lambda)-N^{-1}\tr(S_u(\hSigo)^{-1})+N^{-1}\log|\hSigo|.$$
Thus
$$
Q_2( \hLamo, \hSigo)+Q_3(\hLamo, \hSigo)\leq Q_2( \Lambda_0, \hSigo)+Q_3(\Lambda_0, \hSigo)
$$
Note that $Q_2$ is always nonnegative and $Q_2( \Lambda_0, \hSigo)=0$. Therefore by Lemma \ref{la.2}, $0\leq Q_2( \hLamo, \hSigo)=o_p(1)$. Moreover, the matrix in the trace operation of $Q_2$ is semi-positive definite,  hence
\begin{equation}\label{eqb.1}
\frac{1}{N}\Lambda_0' (\hSigo)^{-1}\Lambda_0 -(I_r- J )\frac{1}{N}\hLamop(\hSigo)^{-1}\hLamo(I_r- J )'=o_p(1).
\end{equation} It remains to show that $N^{-1}\Lambda_0'((\hSigo)^{-1}-\Sigun^{-1})\Lambda_0=o_p(1)$, which follows immediately from Theorem \ref{thb.1} and that $m_N\omega_T^{1-q}=o(1)$.

\end{proof}

\subsection{Proof of Theorem 3.1}

\subsubsection{Consistency for $\hLamo$}
The equality  (\ref{eqb.1}) implies
$$
\frac{1}{N}(\hLamo-\Lambda_0)'(\hSigo)^{-1}(\hLamo-\Lambda_0)-\frac{1}{N}J\hLamop(\hSigo)^{-1}\hLamo J'=o_p(1).
$$ The second term is bounded by $N^{-1}\|J\|_F^2\|\hLamo\|_F^2\|\hSigoi\| =O_p(\|J\|_F^2)$. Lemma  \ref{la.4} then implies the second term is $o_p(1)$, which then implies that the first term is $o_p(1)$.   Because $\hSigoi$ has eigenvalues bounded away from zero asymptotically, we have $N^{-1}\|\hLamo-\Lambda_0\|_F^2=o_p(1)$.

\subsubsection{Consistency for $\hlam_j^{(1)}$}

Lemma \ref{la.3} (i) can be equivalently written as: for any $j\leq N$, 
\begin{equation}\label{eqb.2FOC}
\hlam_j^{(1)}-\lamj=-J'\lamj+H\hLamop(\hSigo)^{-1}a_j
\end{equation}
where $\hSig^{(1)}_{u,j}$ denotes the $j$th column of $\hSigo$, and $a_j$ is an $N\times 1$ vector $$
a_j= \Lambda_0T^{-1}\sum_{t=1}^Tf_tu_{jt}+T^{-1}\sum_{t=1}^T(u_tu_{jt}-\hSig^{(1)}_{u,j})+T^{-1}\sum_{t=1}^Tu_tf_t'\lamj
-\bar{u}\bar{u}_j.$$ 
The consistency of $\max_{j\leq N}\|\hlam_j^{(1)}-\lamj\|$ follows from Lemma \ref{la.4} and the following Lemma \ref{lb.2}.

\begin{lem}  \label{lb.2}$\max_{j\leq N}\|H\hLamop(\hSigo)^{-1}a_j\|=O_p(m_NN^{-1/2}\omega_T^{1-q}+T^{-1/2}(\log N)^{1/2}).$
\end{lem}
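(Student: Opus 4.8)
The plan is to bound $\|H\hLamop(\hSigo)^{-1}a_j\|$ uniformly in $j\le N$ by splitting $a_j$ into its four constituent sums and controlling each one. The workhorse is the a priori estimate $\|H\hLamop(\hSigo)^{-1}\|=O_p(N^{-1/2})$: on the event $\{(\hLamo,\hSigo)\in\Xi_\delta\}$, which has probability tending to one, $\lambda_{\min}(N^{-1}\hLamo'\hLamo)$ and $\lambda_{\min}((\hSigo)^{-1})$ are bounded away from zero, so $H=(\hLamop(\hSigo)^{-1}\hLamo)^{-1}$ has spectral norm $O_p(N^{-1})$; combining this with $\|\hLamo\|=O_p(\sqrt N)$ and $\|(\hSigo)^{-1}\|=O_p(1)$ gives the bound. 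Writing $a_j=a_{1j}+a_{2j}+a_{3j}-a_{4j}$ for the four terms in its definition, it then suffices to control the associated $\max_j$ quantities.

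First I would dispatch the three ``easy'' terms using Lemma \ref{la.1}. For $a_{1j}=\Lambda_0T^{-1}\sum_tf_tu_{jt}$, since $\|\Lambda_0\|_F=O(\sqrt N)$ and $\max_{j}\|T^{-1}\sum_tf_tu_{jt}\|=O_p(\sqrt{(\log N)/T})$ by Lemma \ref{la.1}(iii), the contribution is $O_p(N^{-1/2})\cdot O_p(\sqrt N)\cdot O_p(\sqrt{(\log N)/T})=O_p(\sqrt{(\log N)/T})$. For $a_{3j}=(T^{-1}\sum_tu_tf_t')\lamj$, regrouping as $[H\hLamop(\hSigo)^{-1}(T^{-1}\sum_tu_tf_t')]\lamj$ with $\|T^{-1}\sum_tu_tf_t'\|_F=O_p(\sqrt{N(\log N)/T})$ (again Lemma \ref{la.1}(iii)) and $\max_j\|\lamj\|<c_1$ gives the same rate $O_p(\sqrt{(\log N)/T})$. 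For $a_{4j}=\bar u\bar u_j$, one has $\|\bar u\|=O_p(\sqrt{N(\log N)/T})$ and $\max_j|\bar u_j|=O_p(\sqrt{(\log N)/T})$ (a Bernstein bound for strong-mixing sequences, as underlies Lemma \ref{la.1}), so this term is $O_p((\log N)/T)$ and negligible. Hence $a_{1j}$, $a_{3j}$, $a_{4j}$ together contribute $O_p(T^{-1/2}(\log N)^{1/2})$.

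The heart of the argument is the term $a_{2j}=T^{-1}\sum_t(u_tu_{jt}-\hSig^{(1)}_{u,j})$, which is where the step-one covariance estimation error enters. I would decompose $a_{2j}=T^{-1}\sum_t(u_tu_{jt}-Eu_tu_{jt})+(\Sigun-\hSigo)e_j$, using that $Eu_tu_{jt}$ and $\hSig^{(1)}_{u,j}$ are the $j$th columns of $\Sigun$ and $\hSigo$ respectively ($e_j$ the $j$th standard basis vector). For the sampling part, $\max_j\|T^{-1}\sum_t(u_tu_{jt}-Eu_tu_{jt})\|^2\le N\max_{i,j}|T^{-1}\sum_tu_{it}u_{jt}-Eu_{it}u_{jt}|^2=O_p(N(\log N)/T)$ by Lemma \ref{la.1}(ii), so this piece contributes $O_p(N^{-1/2})\cdot O_p(\sqrt{N(\log N)/T})=O_p(T^{-1/2}(\log N)^{1/2})$. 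For the estimation-error part, $\max_j\|(\Sigun-\hSigo)e_j\|\le\|\hSigo-\Sigun\|=O_p(m_N\omega_T^{1-q})$ by Theorem \ref{thb.1} (Fan et al.\ 2012), so it contributes $O_p(N^{-1/2}m_N\omega_T^{1-q})$. Adding up all the pieces yields the claimed rate $O_p(m_NN^{-1/2}\omega_T^{1-q}+T^{-1/2}(\log N)^{1/2})$.

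The main obstacle is precisely this last, estimation-error, piece of $a_{2j}$: the point of the lemma (and, downstream, of the improved rate and limiting distribution) is that multiplying $\hSigo-\Sigun$ by $H\hLamop(\hSigo)^{-1}$ buys an extra factor $N^{-1/2}$ over the raw operator-norm bound, and this has to be combined cleanly with Theorem \ref{thb.1}. Everything else is a routine application of the maximal inequalities in Lemma \ref{la.1} together with the spectral bounds valid on $\Xi_\delta$.
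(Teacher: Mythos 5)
Your proof is correct and follows essentially the same route as the paper: split $a_j$ into its four terms, use $\|H\hLamop(\hSigo)^{-1}\|=O_p(N^{-1/2})$ together with Lemma \ref{la.1} for the sampling-error pieces, and decompose the $a_{2j}$ term into $T^{-1}\sum_t(u_tu_{jt}-Eu_tu_{jt})$ plus the column of $\Sigun-\hSigo$, bounding the latter via Theorem \ref{thb.1}. No substantive differences from the paper's argument.
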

\begin{proof} By Lemma \ref{la.1},  uniformly in $j\leq N$, 
$$H\hLamop(\hSigo)^{-1}(\frac{1}{T}\sum_{t=1}^Tu_tf_t'\lamj+\Lambda_0\frac{1}{T}\sum_{t=1}^Tf_tu_{jt})=O_p( \frac{\sqrt{N}}{N}(2\sqrt{N}\sqrt{\frac{\log N}{T}}))=O_p(\sqrt{\frac{\log N}{T}}).$$
$$H\hLamop(\hSigo)^{-1}(\frac{1}{T}\sum_{t=1}^Tu_tu_{jt}-\Sigma_{u0,j})=O_p( \frac{\sqrt{N}}{N}\sqrt{N}\sqrt{\frac{\log N}{T}})=O_p(\sqrt{\frac{\log N}{T}}).$$
$$H\hLamop(\hSigo)^{-1}(\hSig_{u,j}^{(1)}-\Sigma_{u0,j})=O_p( \frac{\sqrt{N}}{N}m_N\omega_T^{1-q})=O_p( \frac{m_N}{\sqrt{N}}\omega_T^{1-q}).$$
Finally, $\max_{j\leq N}\|H\hLamo\hSigoi\bar{u}\bar{u}_j\|=O_p(\log N/T)$.
The result then follows from a triangular inequality and   that $m_N\omega_T^{1-q}=o(1)$.
\end{proof}

\subsection{Proof of Theorem  3.2}

\subsubsection{Uniform rate for $\hlam_j^{(1)}$}
By  (\ref{lb.2}), the uniform rate of convergence follows from Lemma \ref{lb.2} and the following Lemma \ref{lb.3}.

\begin{lem}\label{lb.3}
$J=O_p(m_N\omega_T^{1-q})$.
\end{lem}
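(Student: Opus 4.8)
The plan is to obtain the rate by feeding the crude bound $J=o_p(1)$ of Lemma~\ref{la.4} back into two exact relations for the $r\times r$ matrix $G:=I_r-J=\Lambda_0'\hSigoi\hLamo H$, which tends to $I_r$ in probability. The first relation is an approximate orthogonality: Lemma~\ref{la.3}(ii) states $(J-I_r)'(J-I_r)-I_r=O_p(N^{-1}+T^{-1/2}(\log N)^{1/2})$, i.e. $G'G=I_r+O_p(\omega_T)$, so $G$ is asymptotically well conditioned, $\|G\|=\|G'\|=O_p(1)$, $\|G^{-1}\|=O_p(1)$, and $G^{-1}=(G'G)^{-1}G'=G'+O_p(\omega_T)$. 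The second relation comes from the likelihood comparison in the proof of Lemma~\ref{lb.1}: from $L_1(\hLamo)\le L_1(\Lambda_0)$ together with $\sup_{\Xi_\delta}|Q_3|=O_p(\frac{\log N}{N}+\sqrt{\frac{\log N}{T}})=O_p(\omega_T)$ (Lemma~\ref{la.2}) one gets $0\le Q_2(\hLamo,\hSigo)=O_p(\omega_T)$. The matrix inside the trace defining $Q_2$ in (\ref{q2}) is positive semidefinite, so a trace bound of order $\omega_T$ upgrades to an operator-norm bound: writing $\widehat D:=\frac1N\hLamop\hSigoi\hLamo$ (diagonal, by the constraint in $\Theta_\lambda$, with eigenvalues bounded away from $0$ and $\infty$ on $\Xi_\delta$), one gets $\frac1N\Lambda_0'\hSigoi\Lambda_0-G\widehat DG'=O_p(\omega_T)$.

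Next I would eliminate the estimated covariance from that relation. By Theorem~\ref{thb.1}, $\|\hSigoi-\Sigun^{-1}\|=O_p(m_N\omega_T^{1-q})$, hence $\frac1N\Lambda_0'\hSigoi\Lambda_0=D^*+O_p(m_N\omega_T^{1-q})$ where $D^*:=\frac1N\Lambda_0'\Sigun^{-1}\Lambda_0$ is diagonal with distinct diagonal entries that are bounded away from one another and from $0$ and $\infty$ (the identification restriction (\ref{eq2.4})). Writing $\Delta:=m_N\omega_T^{1-q}\ge\omega_T$, this yields $G\widehat DG'=D^*+O_p(\Delta)$. Multiplying on the left by $G^{-1}$ and on the right by $(G')^{-1}$ and using $G^{-1}=G'+O_p(\omega_T)$, $(G')^{-1}=G+O_p(\omega_T)$ together with $\|G\|,\|D^*\|=O_p(1)$, I obtain $\widehat D=G'D^*G+O_p(\Delta)$, that is, $G'D^*G-\widehat D=O_p(\Delta)$.

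The last step is a bootstrap on $B:=G-I_r=-J$, known to be $o_p(1)$. Expanding the two relations gives $B+B'+B'B=O_p(\omega_T)$ and $D^*B+B'D^*+B'D^*B-(\widehat D-D^*)=O_p(\Delta)$; since $B=o_p(1)$ the quadratic terms $B'B$ and $B'D^*B$ are $o_p(\|B\|)$. For $i\ne j$, the $(i,j)$ entry of the second relation reads $(D^*)_{ii}B_{ij}+(D^*)_{jj}B_{ji}=O_p(\Delta)+o_p(\|B\|)$ (the off-diagonal part of $\widehat D-D^*$ vanishes as both are diagonal), while the $(i,j)$ entry of the first relation gives $B_{ji}=-B_{ij}+O_p(\omega_T)+o_p(\|B\|)$; substituting and dividing by $(D^*)_{ii}-(D^*)_{jj}$, which is bounded away from $0$, gives $B_{ij}=O_p(\Delta)+o_p(\|B\|)$. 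For the diagonal, $2B_{ii}=O_p(\omega_T)+o_p(\|B\|)$ from the first relation. Hence $\|B\|=O_p(\Delta)+o_p(\|B\|)$, which rearranges to $\|B\|=O_p(\Delta)$, i.e. $J=O_p(m_N\omega_T^{1-q})$.

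The main obstacle is the middle step: first, justifying the passage from the scalar bound $Q_2(\hLamo,\hSigo)=O_p(\omega_T)$ to a bound on the full matrix $\frac1N\Lambda_0'\hSigoi\Lambda_0-G\widehat DG'$ (this uses positive semidefiniteness, so $\|\cdot\|\le\tr(\cdot)$), and then the eigenvalue-separation argument that converts $G'D^*G-\widehat D=O_p(\Delta)$, $G'G=I_r+O_p(\omega_T)$ and $G\to_p I_r$ into $G-I_r=O_p(\Delta)$: this is exactly where the identification condition (distinct, well-separated diagonal entries of $\Lambda_0'\Sigun^{-1}\Lambda_0$) is indispensable, since without separation $G$ could be any near-orthogonal matrix nearly commuting with $D^*$. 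One must also check that the quadratic-in-$B$ remainders are genuinely of smaller order so that the crude $o_p(1)$ bound from Lemma~\ref{la.4} can be bootstrapped to the sharp rate.
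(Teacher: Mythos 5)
Your proof is correct, but it takes a genuinely different route from the paper's. The paper works directly with the first-order condition in the form $J'J+J+J'+H\hLamop\hSigoi B\hSigoi\hLamo H=0$ to obtain $J+J'=O_p(\sqrt{\log N/T}+m_N\omega_T^{1-q}/N)$, and then exploits the diagonality of $\hLamop\hSigoi\hLamo$ and $\Lambda_0'\Siguni\Lambda_0$ through the off-diagonal identity (\ref{eqb.4}), whose expansion in $\hLamo-\Lambda_0$ leaves the quadratic remainder $X=(\hLamo-\Lambda_0)'\hSigoi(\hLamo-\Lambda_0)$, of order $\|\hLamo-\Lambda_0\|_F^2$; disposing of that term is what forces the closing proof-by-contradiction. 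You encode the same three ingredients -- the first-order condition (via Lemma \ref{la.3}(ii)), the diagonality identification, and $\|\hSigoi-\Siguni\|=O_p(m_N\omega_T^{1-q})$ from Theorem \ref{thb.1} -- in two $r\times r$ identities in $G=I_r-J$ alone: $G'G=I_r+O_p(\omega_T)$ and $G'D^*G-\widehat D=O_p(m_N\omega_T^{1-q})$ with $D^*=N^{-1}\Lambda_0'\Siguni\Lambda_0$ and $\widehat D=N^{-1}\hLamop\hSigoi\hLamo$, the latter obtained by sharpening the paper's $Q_2(\hLamo,\hSigo)=o_p(1)$ in (\ref{eqb.1}) to $O_p(\omega_T)$ via the explicit rate of Lemma \ref{la.2} and using positive semidefiniteness to pass from the trace to the matrix (legitimate, since the matrix inside $Q_2$ is $A'M^{1/2}(I-P)M^{1/2}A$ for a projection $P$). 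Because your quadratic remainders $B'B$ and $B'D^*B$ involve only the $r\times r$ matrix $B=-J=o_p(1)$, they are automatically $o_p(\|B\|)$ and the crude consistency of Lemma \ref{la.4} bootstraps to the sharp rate with no contradiction argument -- this is the main simplification. Both proofs lean on the same unstated quantitative version of (\ref{eq2.4}), namely that the relevant diagonal entries are separated by a constant rather than merely distinct (the paper divides by $h_{ii}-h_{jj}$, you by $(D^*)_{ii}-(D^*)_{jj}$), so you are no worse off there. What the paper's longer route buys is the exact intermediate relation $J_{ij}=O_p(m_N\omega_T^{1-q})+O_p(N^{-1})X_{ji}$, whose structure is recycled in Lemma \ref{lb.9} to upgrade $J$ to $O_p(m_N^2\omega_T^{2-2q})$; your second identity comes from the likelihood-comparison inequality and is capped at precision $O_p(\sqrt{\log N/T})$, so the later sharpening could not be run through it and would still have to go back to the exact first-order-condition identities.
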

\begin{proof}

The first order condition in Lemma \ref{la.3} (i) is equivalent to:
\begin{equation}\label{eqb.2}
J'J+J'+J+H\hLamop(\hSigo)^{-1}B(\hSigo)^{-1}\hLamo H=0
\end{equation}
where $B=\Lambda_0T^{-1}\sum_{t=1}^Tf_tu_t'+(\Lambda_0T^{-1}\sum_{t=1}^Tf_tu_t')'+S_u-\hSigo-\bar{u}\bar{u}'.$
We have, $\|\Lambda_0\|_F=O(\sqrt{N})$, $\bar{u}\bar{u}'=O_p(N\log N/T)$,  and $
\|S_u-\hSigo\|\leq \|\hSigo-\Sigma_{u0}\|+\|S_u-\Sigma_{u0}\|=O_p(NT^{-1/2}(\log N)^{1/2}+m_N\omega_T^{1-q}).
$ Therefore $H\hLamop(\hSigo)^{-1}B(\hSigo)^{-1}\hLamo H=O_p(T^{-1/2}(\log N)^{1/2}+m_NN^{-1}\omega_T^{1-q}).$
Since $J=o_p(1)$, $J'J$ can be ignored.  
It follows from (\ref{eqb.2}) that
\begin{equation}\label{eqb.3}
J'+J=O_p(\sqrt{\frac{\log N}{T}}+\frac{m_N\omega_T^{1-q}}{N}).
\end{equation}
Let $J_{ij}$ denote the $(i,j)$the entry of $J$. It then follows that $J_{ii}=O_p(T^{-1/2}(\log N)^{1/2}+m_NN^{-1}\omega_T^{1-q})$ for all $i\leq r.$ It is also not hard to verify that $\sqrt{(\log N)/T}=O(m_N\omega_T^{1-q})$ for any $0\leq q<1$ since $m_N\geq 1$.

On the other hand, due to the identification condition, both $\Lambda_0'\Sigun^{-1}\Lambda_0$ and $\hLamop(\hSigo)^{-1}\hLamo$ are diagonal. Let ndg$(M)$ denote the off-diagonal elements of $M$.  Then ndg$(\Lambda_0'\Sigun^{-1}\Lambda_0)=$ndg$(\hLamop(\hSigo)^{-1}\hLamo)=0$ is equivalent to
$$
\text{ndg}\{(\hLamo-\Lambda_0)'\hSigoi\hLamo+\hLamop\hSigoi(\hLamo-\Lambda_0)\}
$$
$$
=\text{ndg}\{-\Lambda_0'(\hSigoi-\Sigun^{-1})\Lambda_0+(\hLamo-\Lambda_0)'\hSigoi(\hLamo-\Lambda_0)\}
$$
Note that if $\text{ndg}\{M_1\}=\text{ndg}\{M_2\}$ then $\text{ndg}\{HM_1H\}=\text{ndg}\{HM_2H\}$ for two matrices $M_1$ and $M_2$ since $H$ is diagonal.  Also, $(\hLamo-\Lambda_0)'\hSigoi\hLamo H=J$.
The above identification condition implies
\begin{equation}\label{eqb.4}
\text{ndg}\{HJ+J'H\}=\text{ndg}\{-H\Lambda_0'((\hSigo)^{-1}-\Sigma_{u0}^{-1})\Lambda_0H+H(\hLamo-\Lambda_0)'\hSigoi(\hLamo-\Lambda_0)H\}
\end{equation}
 Note that $
H\Lambda_0'(\hSigoi-\Sigma_{u0}^{-1})\Lambda_0H=O_p( m_NN^{-1}\omega_T^{1-q}). $ 
Let $h_{ii}$ denote the $i$th diagonal entry of $H$. Let  $X=(\hLamo-\Lambda_0)'\hSigoi(\hLamo-\Lambda_0)$. Then  for $i\neq j$, (\ref{eqb.3}) and (\ref{eqb.4}) imply that
$$
J_{ji}+J_{ij}=O_p(\sqrt{\frac{\log N}{T}}+ \frac{m_N\omega_T^{1-q}}{N}),
$$
$$
h_{ii}J_{ij}+h_{jj}J_{ji}=O_p(\frac{m_N\omega_T^{1-q}}{N})+h_{ii}h_{jj}X_{ij}.
$$By assumption, with probability one, there is $\delta>0$ such that   $(N\delta)^{-1}<h_{ii}<N^{-1}\delta$, and $h_{ii}\neq h_{jj}$ for $i\neq j.$  Moreover, since all the eigenvalues of $\hSig_u$ are bounded away from zero and infinity, wpa1, $\|\hLam-\Lambda_0\|_F^2\geq c\|X\|_F$ for some $c>0.$ Then the above two equations imply that for any $i\neq j$, 
$J_{ij}=O_p(m_N\omega_T^{1-q})+O_p(N^{-1})X_{ji}$ (since $\sqrt{\log N/T}=O(m_N\omega_T^{1-q})$).
Then 
\begin{equation}\label{eqb.6addd}
\|J\|_F^2=O_p(m_N^2\omega_T^{2-2q}+\frac{1}{N^2}\|X\|_F^2).
\end{equation}
Moreover, by Lemma \ref{lb.2},  $\max_{j\leq N}\|H\hLamo(\hSigo)^{-1}a_j\|=O_p(m_N\omega_T^{1-q} )$.

We now show that $J=O_p(m_N\omega_T^{1-q})$.  Suppose this does not hold,  then (\ref{eqb.6addd}) implies $J=O_p(N^{-1}X)$.     By the definition 
$$
X=(\hLamo-\Lambda_0)'\hSigoi(\hLamo-\Lambda_0),
$$
$\|X\|_F=O_p(\|\hLamo-\Lambda_0\|_F^2)$. Therefore $J=O_p(N^{-1}X)$ yields  $\|J\|_F^2=O_p(N^{-2}\|\hLam-\Lambda_0\|_F^4)$. The first order condition (\ref{eqb.2FOC}) also yields  
$$
\max_{j\leq N}\|\hlam_j^{(1)}-\lambda_{0j}\|^2=O_p(\|J\|_F^2)=O_p(N^{-2}\|\hLamo-\Lambda_0\|_F^4),
$$
which implies $
\|\hLamo-\Lambda_0\|_F^2=\sum_{j=1}^N\|\hlam_j^{(1)}-\lambda_{0j}\|^2=O_p(N^{-1}\|\hLamo-\Lambda_0\|_F^4)$. Therefore
$$
\frac{1}{N^{-1}\|\hLamo-\Lambda_0\|_F^2}=\frac{\|\hLamo-\Lambda_0\|_F^2}{N^{-1}\|\hLamo-\Lambda_0\|_F^4}=O_p(1),
$$
which contradicts with the consistency $N^{-1}\|\hLamo-\Lambda_0\|_F^2=o_p(1)$.   This concludes the proof.

 \end{proof}
 Therefore, (\ref{eqb.2FOC})  gives  $\max_{j\leq N}\|\hlam_j^{(1)}-\lamj\|=O_p(\|J\|_F)=O_p(m_N\omega_T^{1-q})$. The rate of convergence for $N^{-1/2}\|\hLamo-\Lambda_0\|_F$ then follows immediately since it is bounded by $\max_{j\leq N}\|\hlam_j^{(1)}-\lamj\|$.

\subsection{Proof of Theorem  3.3}
By the definition of the covariance estimator in the first step, $\hSigo=(s_{ij}(R_{ij}))_{N\times N}$, where $s_{ij}$ is a chosen thresholding function. It was shown by Fan et al. (2012, Theorem 2.1) that $R_{ij}$  is the PCA estimator of $T^{-1}\sum_{t=1}^Tu_{it}u_{jt}$, that is, $R_{ij}=T^{-1}\sum_{t=1}^T\hu_{it}^{PCA}\hu_{jt}^{PCA}$.
 
\begin{lem}\label{lb.4} 
For any $\epsilon>0$, and any constant $M>0$, for all large enough $N, T$, $$
P(|R_{ij}|>M\tau_{ij}, \forall(i,j)\in S_U)>1-\epsilon.
$$
\end{lem}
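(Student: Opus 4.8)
\textbf{Proof proposal for Lemma \ref{lb.4}.} The plan is to combine a uniform entrywise consistency bound for $R$ with the separation condition on $S_U$ supplied by Assumption \ref{ass3.6}. First I would record that, since $R_{ij}=T^{-1}\sum_{t=1}^T\hu_{it}^{PCA}\hu_{jt}^{PCA}$ is the PCA estimator of $\Sigma_{u0,ij}$, one has the uniform bound
$$
\max_{i,j\leq N}|R_{ij}-\Sigma_{u0,ij}|=O_p(\omega_T),
$$
which is exactly the entrywise estimate underlying Theorem \ref{thb.1} (established in Bai (2003) and Fan et al. (2012)). Next, Assumption \ref{ass3.5} gives $\max_{i,j\leq N}\alpha_{ij}=O_p(1)$, and hence $\max_{i,j\leq N}\tau_{ij}=O_p(\omega_T)$ as well, regardless of whether the last term in $\tau_{ij}$ is $1/\sqrt T$ or $1/\sqrt N$. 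Consequently, for any given $\epsilon>0$ there is a constant $K=K(\epsilon)>0$ (depending only on $\epsilon$ and the fixed constants $C,C_2$ of Assumption \ref{ass3.5}) and an integer beyond which, for all large $N,T$,
$$
P\Big(\max_{i,j\leq N}|R_{ij}-\Sigma_{u0,ij}|\leq K\omega_T\ \text{ and }\ \max_{i,j\leq N}\tau_{ij}\leq K\omega_T\Big)>1-\epsilon.
$$

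On that event, for every $(i,j)\in S_U$ the triangle inequality yields
$$
|R_{ij}|\geq|\Sigma_{u0,ij}|-|R_{ij}-\Sigma_{u0,ij}|\geq\big(g_N-K\big)\omega_T,\qquad g_N:=\omega_T^{-1}\min_{(k,l)\in S_U}|\Sigma_{u0,kl}|,
$$
while $M\tau_{ij}\leq MK\omega_T$. By Assumption \ref{ass3.6}, $g_N\to\infty$, so there is $N_0$ (depending on $M$ and $\epsilon$, but fixed once these are fixed) such that $g_N-K>MK$ for all $N>N_0$; enlarging $N_0$ so that the displayed probability bound also holds, we get $|R_{ij}|>M\tau_{ij}$ simultaneously over all $(i,j)\in S_U$ on that event, i.e. $P(|R_{ij}|>M\tau_{ij},\ \forall(i,j)\in S_U)>1-\epsilon$ for all $T$ and $N>N_0$.

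The only substantive ingredient is the uniform bound $\max_{i,j}|R_{ij}-\Sigma_{u0,ij}|=O_p(\omega_T)$; everything else is a deterministic estimate on a high-probability event, together with care in the order of quantifiers ($K$ is chosen from $\epsilon$ \emph{before} letting $N\to\infty$, which is legitimate since $K$ is independent of $N$). The one genuine obstacle — already handled in the cited work — is controlling the PCA residual $\hu_{it}^{PCA}-u_{it}$ uniformly in $i\leq N,\,t\leq T$; this is what injects the $N^{-1/2}$ term into $\omega_T$, the cross-sectional averaging in $R_{ij}$ being what prevents that error from accumulating. Note finally that this lemma is precisely what is needed to run thresholding property (iii) on $S_U$ and thereby obtain the second display of Theorem \ref{th3.3}.
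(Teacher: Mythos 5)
Your proof is correct and takes essentially the same route as the paper's: a triangle inequality $|R_{ij}|\geq|\Sigma_{u0,ij}|-|R_{ij}-\Sigma_{u0,ij}|$ combined with the uniform entrywise bound $\max_{i,j}|R_{ij}-\Sigma_{u0,ij}|=O_p(\omega_T)$, the bound $\max_{i,j}\tau_{ij}=O_p(\omega_T)$ from Assumption \ref{ass3.5}, and the separation $\omega_T\ll\min_{(i,j)\in S_U}|\Sigma_{u0,ij}|$ from Assumption \ref{ass3.6}. Your version is merely more explicit about the order of quantifiers; the substance is identical.
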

\begin{proof}
We have,  $|R_{ij}|\geq |\Sigma_{u0, ij}|-|\Sigma_{u0, ij}-R_{ij}|$. Thus for all large enough $N,T$,
\begin{eqnarray*}
P(|R_{ij}|>M\tau_{ij}, \forall(i,j)\in S_U)&\geq& P(|\Sigma_{u0, ij}|>M\tau_{ij}+|\Sigma_{u0, ij}-R_{ij}|, \forall(i,j)\in S_U)\cr
&\geq& P(|\Sigma_{u0, ij}|/2>|\Sigma_{u0, ij}-R_{ij}|, \forall(i,j)\in S_U)>1-\epsilon,
\end{eqnarray*}
where in the second  and last inequalities we used the assumption that $\omega_T=o(\min_{(i,j)\in S_U}|\Sigma_{u0,ij}|)$ and the fact that $\max_{ij}|\Sigma_{u0, ij}-R_{ij}|=O_p(\omega_T)$.

\end{proof}

\textbf{Proof of Theorem  3.3}

By  Fan et al. (2012),  $\max_{i,j}|R_{ij}-\Sigma_{u0,ij}|=O_p(\omega_T)$, which implies for any $\epsilon>0$, there is  $C>0$ such that $P(\max_{i,j}|R_{ij}-\Sigma_{u0,ij}|>C\omega_T)<\epsilon/2$. For some universal $M>0$, we set the threshold $\tau_{ij}=M\alpha_{ij}\omega_T$ at entry $(i,j)$, where $\alpha_{ij}$ is  a data-dependent value that satisfies, for any $\epsilon>0$, there is $C_1>0$ such that $P(\alpha_{ij}>C_1, \forall i\neq j)>1-\epsilon/2.$ Then as long as  the constant $M$ in the definition of the threshold is larger than $2C/C_1$,   
$$P(\max_{i,j}|R_{ij}- \Sigma_{u0,ij}|>\min_{ij}\tau_{ij}/2)<P(\max_{i,j}|R_{ij}- \Sigma_{u0,ij}|>MC_1\omega_T/2)+\epsilon/2<\epsilon.$$
Note also that if $s_{ij}(R_{ij})\equiv\hSig_{ij}^{(1)}\neq0$,  then $|R_{ij}|>\tau_{ij}$, by the definition of $s_{ij}$. This implies,
$$
P(\hSig_{ij}^{(1)}\neq0,\exists (i,j)\in S_L)\leq P(|R_{ij}|>\tau_{ij},\exists (i,j)\in S_L)\leq P(\max_{(i,j)\in S_L}|R_{ij}|>\min_{ij}\tau_{ij})
$$
$$
\leq P(\max_{i,j}|R_{ij}-\Sigma_{u0,ij}|+\max_{(i,j)\in S_L}| \Sigma_{u0,ij}|>\min_{ij}\tau_{ij}).
$$
Since $\max_{(ij)\in S_L}|\Sigma_{u0,ij}|=o(\omega_T)$ by assumption,  for all large $T, N$
$$
P(\hSig_{ij}^{(1)}\neq0,\exists (i,j)\in S_L)\leq P(\max_{i,j}|R_{ij}- \Sigma_{u0,ij}|>\min_{ij}\tau_{ij}/2)<\epsilon.
$$
On the other hand,  for arbitrarily small $\epsilon>0$,   $P(\max_{ij}\tau_{ij}\leq K\omega_T)>1-\epsilon/2$ for some $K>0$,     which implies $P(|R_{ij}|\geq  M\omega_T+ K\omega_T, \forall (i,j)\in S_U)\leq  P(|R_{ij}|\geq  M\omega_T+ \tau_{ij}, \forall (i,j)\in S_U)+\epsilon/2.$  By the definition of $s_{ij}$, $|s_{ij}(z)-z|\leq\tau_{ij}$ for all $z\in\mathbb{R}$. Therefore  $|R_{ij}-\hSig_{u,ij}^{(1)}|=|R_{ij}-s_{ij}(R_{ij})|\leq \tau_{ij}$, hence for arbitrarily large $M>0$,  
$$
P(|\hSig_{u,ij}^{(1)}|>M\omega_T, \forall (i,j)\in S_U)\geq P(|R_{ij}|\geq  M\omega_T+ |R_{ij}-\hSig_{u,ij}^{(1)}|, \forall (i,j)\in S_U)
$$
$$
\geq P(|R_{ij}|\geq  (M+K)\omega_T, \forall (i,j)\in S_U)-\epsilon/2\geq 1-\epsilon
$$
where the last inequality follows from Lemma \ref{lb.4}.

\subsection{Proof of Theorems  3.4 and  3.5}

\subsubsection{Proof of Theorem  3.4}

A simple derivation implies that $\|N^{-1}\Lambda_0'(\hSigoi-\Siguni)\Lambda_0\|_F\leq N^{-1}\|\Lambda_0\|_F^2\|\hSigoi-\Siguni\|=O_p(m_N\omega_T^{1-q})$. This rate is not tight enough for the $\sqrt{T}$-consistency and limiting distribution  $\hlam_j^{(1)}$. A more refined rate  of  $N^{-1}\Lambda_0'(\hSigoi-\Siguni)\Lambda_0$ depends on the convergence properties of the PCA estimator.  We begin by citing some results proved by Fan et al. (2012). Recall that $R_{ij}$ denotes the $(i,j)$th entry of the orthogonal complement covariance in the sample covariance's spectrum decomposition, and $\hSig_{u,ij}^{(1)}=s_{ij}(R_{ij}).$

Let $\{\hu_{it}\}_{i\leq N, t\leq T}$  be the PCA estimates of $\{u_{it}\}_{i\leq N, t\leq T}$. Let $\lampca_j$ and   $\fpca$ denote the PCA estimators of the factor loadings and factors. 

\begin{lem}\label{lb.5}
(i) For any $i, j$, with probability one $R_{ij}=T^{-1}\sum_{t=1}^T\hu_{it}\hu_{jt}$,\\
(ii) $\max_{i\leq N}T^{-1}\sum_{t=1}^T(\hu_{it}-u_{it})^2=O_p(\omega_T^2).$\\
(iii) There is a nonsingular matrix $\bar{H}$ such that $T^{-1}\sum_{t=1}^T\|\fpca-\bar{H}f_t\|^2=O_p(T^{-1}+N^{-1})$ and $ \max_j\|\lampca_j-\bar{H}^{'-1}\lambda_{0j}\|=O_p(\omega_T).$\\
(iv) $\max_{i,j\leq N}|R_{ij}-\Sigma_{u0, ij}|=O_p(\omega_T).$
\end{lem}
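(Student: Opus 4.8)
The plan is to derive all four parts from standard properties of the principal components estimator, after verifying that Assumptions~\ref{ass3.1}--\ref{ass3.5} imply the regularity conditions used in Bai (2003) and Fan, Liao and Mincheva (2011). I would prove (i) first, by a direct algebraic identity: the PCA estimators $(\lampca_j,\fpca)$ minimize $\sum_{i\le N, t\le T}(y_{it}-\lambda_i'f_t)^2$ over rank-$r$ decompositions, so the residual second-moment matrix with entries $T^{-1}\sum_{t=1}^T\hu_{it}\hu_{jt}$, where $\hu_{it}=y_{it}-(\lampca_i)'\fpca$, equals $S_y$ minus its best rank-$r$ approximation in Frobenius norm, which by the Eckart--Young theorem is $\sum_{i=1}^r\nu_i\xi_i\xi_i'$; hence this residual matrix equals $S_y-\sum_{i=1}^r\nu_i\xi_i\xi_i'=R$. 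This is exactly the identity of Fan et al. (2012, Theorem~2.1), and I would simply reproduce that short computation.

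Statement (iii) is the core of the lemma, and here I would invoke the classical PCA rates. Under Assumption~\ref{ass3.2} (strict stationarity, exponential tails, strong mixing with exponential rate), Assumptions~\ref{ass3.3}--\ref{ass3.4} (pervasiveness $\delta^{-1}<\lambda_{\min}(N^{-1}\Lambda_0'\Lambda_0)$ and the fourth-moment bounds), and the side condition $(\log N)^{6/\gamma}=o(T)$, the analysis of Bai (2003) and Fan et al. (2011) produces a nonsingular $\bar{H}$ with $T^{-1}\sum_{t=1}^T\|\fpca-\bar{H}f_t\|^2=O_p(T^{-1}+N^{-1})$ and $\max_{j\le N}\|\lampca_j-\bar{H}^{'-1}\lambda_{0j}\|=O_p(\omega_T)$; the $\sqrt{(\log N)/T}$ piece of $\omega_T$ arises from taking a maximum over the $N$ loadings, for which the exponential-tail and mixing assumptions supply the required Bernstein-type deviation bounds. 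I would state the needed conditions precisely and cite these references rather than redo the eigenvector perturbation argument.

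Given (i) and (iii), statements (ii) and (iv) are routine. For (ii), since $y_{it}=\lambda_{0i}'f_t+u_{it}$ one has $\hu_{it}-u_{it}=\lambda_{0i}'f_t-(\lampca_i)'\fpca=(\lambda_{0i}-\bar{H}'\lampca_i)'f_t+(\lampca_i)'(\bar{H}f_t-\fpca)$; squaring, averaging over $t$, and using $T^{-1}\sum_tf_tf_t'=I_r$, $\max_i\|\lambda_{0i}-\bar{H}'\lampca_i\|=O_p(\omega_T)$, $\max_i\|\lampca_i\|=O_p(1)$, and $T^{-1}\sum_t\|\bar{H}f_t-\fpca\|^2=O_p(\omega_T^2)$ from (iii), a Cauchy--Schwarz bound gives $\max_{i\le N}T^{-1}\sum_t(\hu_{it}-u_{it})^2=O_p(\omega_T^2)$. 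For (iv), I would split $|R_{ij}-\Sigma_{u0,ij}|\le|R_{ij}-T^{-1}\sum_tu_{it}u_{jt}|+|T^{-1}\sum_tu_{it}u_{jt}-\Sigma_{u0,ij}|$; the second term is $O_p(\sqrt{(\log N)/T})$ uniformly in $i,j$ by Lemma~\ref{la.1}(ii), and for the first I use (i) to write $R_{ij}=T^{-1}\sum_t\hu_{it}\hu_{jt}$, expand $\hu_{it}\hu_{jt}-u_{it}u_{jt}=(\hu_{it}-u_{it})\hu_{jt}+u_{it}(\hu_{jt}-u_{jt})$, and apply Cauchy--Schwarz with (ii) together with $\max_iT^{-1}\sum_tu_{it}^2=O_p(1)$ (from Lemma~\ref{la.1}(ii) and bounded $\Sigma_{u0,ii}$) and $\max_iT^{-1}\sum_t\hu_{it}^2=O_p(1)$ (from the former and (ii)).

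The main obstacle is (iii), and specifically the uniform loading rate $\max_{j\le N}\|\lampca_j-\bar{H}^{'-1}\lambda_{0j}\|=O_p(\omega_T)$: if one were to prove it from scratch rather than cite it, controlling the maximum over all $N$ coordinates is exactly the place where the exponential-tail and strong-mixing assumptions, together with $(\log N)^{6/\gamma}=o(T)$, are consumed; every other step reduces to elementary algebra and Cauchy--Schwarz once (iii) is in hand.
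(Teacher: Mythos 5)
Your proposal is correct and follows essentially the same route as the paper, whose entire proof of this lemma is a citation to Theorem 2.1 and Lemma C.11 of Fan et al. (2012): you likewise defer the hard content — the Eckart--Young identity for (i) and the uniform PCA rates in (iii) — to those references, and your derivations of (ii) and (iv) from (i) and (iii) via the loading/factor decomposition and Cauchy--Schwarz are the standard (and correct) arguments underlying the cited results.
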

\begin{proof}
See Theorem 2.1 and Lemma C.11 of Fan et al. (2012).
\end{proof}

\begin{lem}\label{lb.5add}
$
 \frac{1}{NT}\sum_{t=1}^T\sum_{i=1}^Nu_{it}\lambda_{0i}'\bar{H}^{-1}(\fpca-\bar{H}f_t) \xi_i\xi_i'=   O_p(\frac{1}{\sqrt{NT}}+\frac{1}{T}+\frac{1}{N})$.
\end{lem}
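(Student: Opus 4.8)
The plan is first to reorganize the sum so that the randomness of the PCA factor error is isolated in one vector per cross-sectional unit, and then to expand that error. Put $d_t:=\fpca-\bar{H} f_t$, $a_i:=(\bar{H}')^{-1}\lambda_{0i}$ and $c_i:=\xi_i\xi_i'$; Assumptions~\ref{ass3.2}(ii) and \ref{ass3.7} (the latter via $\max_j\|\xi_j\|<\infty$) give $\max_{i\le N}(\|a_i\|+\|c_i\|)=O_p(1)$. Since $\lambda_{0i}'\bar{H}^{-1}(\fpca-\bar{H} f_t)=a_i'd_t$, the quantity in the lemma equals $\tfrac1N\sum_{i=1}^N(a_i'g_i)c_i$ with $g_i:=\tfrac1T\sum_{t=1}^T u_{it}d_t$, and by Cauchy--Schwarz its norm is at most $C(\tfrac1N\sum_{i\le N}\|g_i\|^2)^{1/2}$. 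So it suffices to show $\tfrac1N\sum_i\|g_i\|^2=O_p(T^{-2}+(NT)^{-1}+N^{-2})$.

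Next I would substitute the standard expansion of the principal-component estimator (Bai 2003; see also Fan et al.\ 2012), $d_t=V^{-1}(\mathrm{I}_t+\mathrm{II}_t+\mathrm{III}_t+\mathrm{IV}_t)$, where $V$ is the diagonal matrix of the $r$ largest eigenvalues of $N^{-1}S_y$, so that $\|V^{-1}\|=O_p(1)$ by Assumption~\ref{ass3.3}, and (writing $\widehat f_s^{PCA}$ for the PCA factor at time $s$) $\mathrm{I}_t=\tfrac1T\sum_s\widehat f_s^{PCA}\gamma_N(s,t)$ with $\gamma_N(s,t)=N^{-1}Eu_s'u_t$, $\mathrm{II}_t=\tfrac1{NT}\sum_s\widehat f_s^{PCA}(u_s'u_t-Eu_s'u_t)$, $\mathrm{III}_t=\tfrac1N(\tfrac1T\sum_s\widehat f_s^{PCA}f_s')\Lambda_0'u_t$ and $\mathrm{IV}_t=\tfrac1N(\tfrac1T\sum_s\widehat f_s^{PCA}u_s'\Lambda_0)f_t$; the terms produced by centring on $\bar y$ are of strictly lower order and are handled identically. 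Because $\|V^{-1}\|=O_p(1)$ it is enough to bound $\tfrac1N\sum_i\|g_i^{(k)}\|^2$ for the four pieces $g_i^{(k)}:=\tfrac1T\sum_t u_{it}(\mathrm{k})_t$, and in each case I would use the PCA normalization $\tfrac1T\sum_s\widehat f_s^{PCA}(\widehat f_s^{PCA})'=I_r$.

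The pieces $\mathrm{I}_t$ and $\mathrm{IV}_t$ are the mild ones. For $\mathrm{I}_t$, $g_i^{(1)}=\tfrac1T\sum_s\widehat f_s^{PCA}\theta_{is}$ with $\theta_{is}=\tfrac1T\sum_t u_{it}\gamma_N(s,t)$, so $\|g_i^{(1)}\|^2\le r\,\tfrac1T\sum_s\theta_{is}^2$; then $\tfrac1N\sum_i\theta_{is}^2=\tfrac1{T^2}\sum_{t,t'}\gamma_N(s,t)\gamma_N(s,t')\,\tfrac1N u_t'u_{t'}\le(\max_\tau N^{-1}\|u_\tau\|^2)(\tfrac1T\sum_t|\gamma_N(s,t)|)^2=O_p(T^{-2})$ uniformly in $s$, using $\sup_s\sum_t|\gamma_N(s,t)|=O(1)$ (from the strong mixing of Assumption~\ref{ass3.2}) and $\max_\tau N^{-1}\|u_\tau\|^2=O_p(1)$; hence $\tfrac1N\sum_i\|g_i^{(1)}\|^2=O_p(T^{-2})$. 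For $\mathrm{IV}_t$, $g_i^{(4)}=\tfrac1N(\tfrac1T\sum_s\widehat f_s^{PCA}u_s'\Lambda_0)(\tfrac1T\sum_t u_{it}f_t)$; the matrix factor has norm $O_p(\sqrt N)$ by Cauchy--Schwarz and Assumption~\ref{ass3.4}(ii), while $\tfrac1N\sum_i\|\tfrac1T\sum_t u_{it}f_t\|^2=O_p(1/T)$ by a direct second-moment bound, so $\tfrac1N\sum_i\|g_i^{(4)}\|^2=O_p((NT)^{-1})$. For $\mathrm{III}_t$, $g_i^{(3)}=\tfrac1N(\tfrac1T\sum_s\widehat f_s^{PCA}f_s')\Lambda_0'(S_u)_{\cdot i}$ with $(S_u)_{\cdot i}$ the $i$th column of the idiosyncratic sample covariance; splitting $(S_u)_{\cdot i}=(\Sigun)_{\cdot i}+[(S_u)_{\cdot i}-(\Sigun)_{\cdot i}]$, the first part gives $O_p(N^{-2})$ because $\sum_i\|\Lambda_0'(\Sigun)_{\cdot i}\|^2=\tr(\Lambda_0'\Sigun^2\Lambda_0)\le c_1^2\|\Lambda_0\|_F^2=O(N)$ (Assumptions~\ref{ass3.2}(ii), \ref{ass3.3}), and the second gives $O_p((NT)^{-1})$ because $\tfrac1N\sum_i E\|\Lambda_0'[(S_u)_{\cdot i}-(\Sigun)_{\cdot i}]\|^2=O(N/T)$ by a second-moment computation using Assumption~\ref{ass3.4}(ii) and the strong mixing.

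This leaves $\mathrm{II}_t$, which I expect to be the main obstacle. Here $\|g_i^{(2)}\|^2\le r\,\tfrac1T\sum_s(\theta_{is}^{(2)})^2$ with $\theta_{is}^{(2)}:=\tfrac1{NT}\sum_t u_{it}(u_s'u_t-Eu_s'u_t)$, and the target is the sharp bound $E(\theta_{is}^{(2)})^2=O((NT)^{-1})$ uniformly in $(i,s)$, whence $\tfrac1N\sum_i\|g_i^{(2)}\|^2=O_p((NT)^{-1})$ and, combining with the previous paragraph, $\tfrac1N\sum_i\|g_i\|^2=O_p(T^{-2}+(NT)^{-1}+N^{-2})$, which is the claim. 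The difficulty in that second-moment computation is that $u_s$ enters \emph{every} summand of $\sum_t u_{it}(u_s'u_t-Eu_s'u_t)$, so for $s$ close to $t$ there is extra dependence on top of the martingale-type cancellation in the $t$-sum; controlling the resulting time covariances is precisely where the normalized moment conditions of Assumption~\ref{ass3.9}, the fourth-moment bound on $N^{-1/2}(u_s'u_t-Eu_s'u_t)$ in Assumption~\ref{ass3.4}(i), and the strong mixing of Assumption~\ref{ass3.2} are used. One further point worth highlighting is the identity $\sum_i\|\Lambda_0'(\Sigun)_{\cdot i}\|^2=\tr(\Lambda_0'\Sigun^2\Lambda_0)=O(N)$ used for $\mathrm{III}_t$: it is this exact cancellation (rather than a crude $\|\Sigun\|_1$ bound) that keeps the final rate free of the sparsity quantity $m_N$.
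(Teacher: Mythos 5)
Your reorganization is genuinely different from the paper's. You apply Cauchy--Schwarz across $i$ at the outset, reducing the lemma to the $\ell^2$-average bound $\tfrac1N\sum_i\|g_i\|^2=O_p(T^{-2}+(NT)^{-1}+N^{-2})$ with $g_i=\tfrac1T\sum_t u_{it}(\fpca-\bar Hf_t)$, and then swap the order of the $s$- and $t$-sums inside each piece of Bai's expansion. The paper instead keeps the full triple sum intact and bounds it term by term (its Lemma B.3), invoking Assumption \ref{ass3.9}(i)--(ii) as high-level moment conditions precisely in the form in which they are stated: sums over $s$ weighted by $f_s$ applied to $u_s'u_t-Eu_s'u_t$, and double cross-sectional sums over $(i,j)$ weighted by $\lambda_{0i}\lambda_{0j}'$. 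Your treatment of the pieces $\mathrm{I}_t$, $\mathrm{III}_t$, $\mathrm{IV}_t$ is sound (modulo replacing the crude bound $\max_\tau N^{-1}\|u_\tau\|^2=O_p(1)$, which costs a log factor under the exponential-tail assumption, by taking expectations of $N^{-1}u_t'u_{t'}$ directly), and the $\tr(\Lambda_0'\Sigun^2\Lambda_0)=O(N)$ observation is a nice way to avoid $m_N$.

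The genuine gap is exactly where you flag it: the piece $\mathrm{II}_t$. Your whole argument hinges on $E\bigl(\tfrac1{NT}\sum_t u_{it}(u_s'u_t-Eu_s'u_t)\bigr)^2=O((NT)^{-1})$ uniformly in $(i,s)$, and this is asserted as a ``target,'' not proved. It does not follow from Assumption \ref{ass3.9}(i), which controls $E\|\tfrac1{TN}\sum_s f_s(u_s'u_t-Eu_s'u_t)\|^2$ --- a sum over $s$ with deterministic-in-distribution weights $f_s$ --- whereas your reorganization produces a sum over $t$ with weights $u_{it}$ that are correlated with $u_s'u_t$. A direct second-moment computation must handle $E[u_{it}u_{it'}z_{st}z_{st'}]$ with $z_{st}=N^{-1/2}(u_s'u_t-Eu_s'u_t)$; because $u_s$ appears in every summand, these cross-covariances do not vanish as $|t-t'|\to\infty$ but only decay to a term of order $N^{-1}(\Sigun^3)_{ii}$, so the sum over $t\neq t'$ contributes $O(N^{-2})$ on top of the $O((NT)^{-1})$ diagonal --- your stated target of $O((NT)^{-1})$ is therefore not even the right bound, though $O((NT)^{-1}+N^{-2})$ would still fit your budget. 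Establishing this requires fourth-moment/cumulant control beyond Assumption \ref{ass3.4}(i) combined with a careful mixing argument for the triple $(u_s,u_t,u_{t'})$; none of this is in the paper's toolkit, which is exactly why the paper routes the argument through Assumption \ref{ass3.9}(i) instead. Since $\mathrm{II}_t$ is the dominant term and the source of the $1/\sqrt{NT}$ rate in the statement, the proof is incomplete without it.
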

\begin{proof}
 By Bai (2003), there are two $r\times r$ matrices $\bar{H}$ and $V$, $\|V\|_F=O_p(1)$, $\|\bar{H}\|_F=O_p(1)$ such that 
 $\fpca-\bar{H}f_t=V(NT)^{-1}\sum_{s=1}^T\widehat{f}_s^{PCA}[u_s'u_t+f_s'\sum_{j=1}^N\lambda_{0j}u_{jt}+f_t'\sum_{j=1}^N\lambda_{0j}u_{js}]$. The desired result then follows from the following Lemma \ref{lb.51add}.

\end{proof}

\begin{lem}\label{lb.51add}
(i)  $\frac{1}{NT}\sum_{t=1}^T\sum_{i=1}^Nu_{it}\lambda_{0i}'\bar{H}^{-1}(NT)^{-1}\sum_{s=1}^T\widehat{f}_s^{PCA}u_s'u_t \xi_i\xi_i'= O_p(\frac{1}{\sqrt{NT}}+\frac{1}{T}+\frac{1}{N}) $\\
(ii)   $\frac{1}{NT}\sum_{t=1}^T\sum_{i=1}^Nu_{it}\lambda_{0i}'\bar{H}^{-1}(NT)^{-1}\sum_{s=1}^T\widehat{f}_s^{PCA}        f_s'\sum_{j=1}^N\lambda_{0j}u_{jt}      \xi_i\xi_i'= O_p(\frac{1}{\sqrt{NT}}+\frac{1}{N}) $\\
(iii) $\frac{1}{NT}\sum_{t=1}^T\sum_{i=1}^Nu_{it}\lambda_{0i}'\bar{H}^{-1}(NT)^{-1}\sum_{s=1}^T\widehat{f}_s^{PCA}         f_t'\sum_{j=1}^N\lambda_{0j}u_{js}      \xi_i\xi_i'=O_p(\frac{1}{\sqrt{NT}}+\frac{1}{T})$.
 
\end{lem}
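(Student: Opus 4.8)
The plan is to handle the three displays in parallel. In each of (i)--(iii) the quantity $\widehat{f}_s^{PCA}$ still appears inside the inner $(NT)^{-1}\sum_s$, so I would first substitute the Bai (2003) decomposition $\widehat{f}_s^{PCA}=\bar H f_s+d_s$, where $d_s:=\widehat{f}_s^{PCA}-\bar H f_s$, and exploit the cancellation $\lambda_{0i}'\bar H^{-1}\bar H f_s=\lambda_{0i}'f_s$. This splits every expression into a \emph{leading} part built only from the primitives $\{u_{it},f_t,\lambda_{0i},\xi_i\}$ and a \emph{remainder} part carrying the $d_s$. Since all objects are $r\times r$ with $r$ fixed, it is enough to bound one generic scalar entry. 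I would also use the identification normalization $T^{-1}\sum_{s=1}^T f_sf_s'=I_r$ to collapse the $\sum_s f_sf_s'$ in (ii) and (iii), and the bound $\max_{j\le N}\|\xi_j\|<\infty$, which Assumption \ref{ass3.7} provides (as noted just above Assumption \ref{ass3.9}).

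For the remainder parts I would apply Cauchy--Schwarz, first in $s$ and then over $t$, controlling $\sum_{s=1}^T\|d_s\|^2=O_p(1+T/N)$ by Lemma \ref{lb.5}(iii) and the remaining factors by crude second-moment bounds: $\sum_{s=1}^T(u_s'u_t)^2=O_p(N^2+TN)$ using Lemma \ref{la.1} together with $\lambda_{\max}(\Sigma_{u0})<\infty$; $\|\sum_{j=1}^N\lambda_{0j}u_{jt}\|^2=O_p(N)$ by Assumption \ref{ass3.4}(ii); and $\|\sum_{i=1}^N u_{it}\xi_i\xi_i'\lambda_{0i}'\|^2=O_p(N)$ using $Eu_{it}=0$, $\max_j\|\xi_j\|<\infty$ and $\sum_{i,j}|\Sigma_{u0,ij}|=O(N)$ (Assumption \ref{ass3.8add}). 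These estimates show every remainder is $O_p(\tfrac1{\sqrt{NT}}+\tfrac1T+\tfrac1N)$, hence within the claimed rate.

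For the leading parts I would decompose each idiosyncratic product into its mean and a mean-zero fluctuation, $u_{it}u_{jt}=Eu_{it}u_{jt}+(u_{it}u_{jt}-Eu_{it}u_{jt})$ and $u_s'u_t=Eu_s'u_t+(u_s'u_t-Eu_s'u_t)$, and additionally separate the time-diagonal $s=t$ from $s\ne t$ in (i) and (iii). After using $Eu_{it}f_t=0$ (Assumption \ref{ass3.2}) to annihilate the purely $f$-weighted averages, the mean pieces reduce to deterministic sums such as $N^{-2}\sum_{i,j}\Sigma_{u0,ij}(\lambda_{0i}'\lambda_{0j})\xi_i\xi_i'$, which is $O(N^{-1})$ because $\sum_{i,j}|\Sigma_{u0,ij}|=O(N)$ and $\max_j\|\xi_j\|<\infty$; these give the $\tfrac1N$ terms. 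The fluctuation pieces are precisely the arrays for which Assumption \ref{ass3.9}(i)--(ii) supplies rates with the matching normalizations $\tfrac1{\sqrt{NT}}$ and $\tfrac1{N\sqrt{NT}}$; combining those with $\|\sum_i u_{it}\xi_i\xi_i'\lambda_{0i}'\|=O_p(\sqrt N)$ and a Cauchy--Schwarz step over $t$ (using the $L^2$, i.e.\ expectation, form of Assumption \ref{ass3.9} rather than the per-$t$ $O_p(1)$ statement) yields the $\tfrac1{\sqrt{NT}}$ terms, while the time-diagonal $s=t$ pieces in (i) and (iii), together with Lemma \ref{la.1}, produce the $\tfrac1T$-type terms.

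The main obstacle is organizational rather than conceptual: there is no single delicate inequality, but a proliferation of cross-terms --- leading versus remainder in the PCA expansion, mean versus fluctuation in the idiosyncratic errors, and $s=t$ versus $s\ne t$ --- each of which must be matched to the correct clause of Assumptions \ref{ass3.2}, \ref{ass3.4}, \ref{ass3.7}, \ref{ass3.8add}, \ref{ass3.9} and of Lemma \ref{la.1}, and shown not to exceed $\tfrac1{\sqrt{NT}}+\tfrac1T+\tfrac1N$. The one genuinely technical point is that Assumption \ref{ass3.9} delivers per-$t$ $O_p(1)$ bounds whose aggregation over the $T$ values of $t$ must be carried out through the $L^2$ form of those assumptions plus Cauchy--Schwarz, not by naively summing $T$ copies of $O_p(1)$; keeping straight which expression needs which normalization is where the care lies.
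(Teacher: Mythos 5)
Your proposal follows essentially the same route as the paper's proof: the split $\widehat{f}_s^{PCA}=\bar H f_s+(\widehat{f}_s^{PCA}-\bar H f_s)$ with the remainder controlled by Lemma \ref{lb.5}(iii) and Cauchy--Schwarz, the mean/fluctuation decomposition of the idiosyncratic products with the fluctuation arrays handed to Assumption \ref{ass3.9} in its $L^2$ form (aggregated over $t$ exactly as you describe), and the mean parts bounded via $\sum_{i,j}|\Sigma_{u0,ij}|=O(N)$ and the summability of $\sum_s|Eu_s'u_t|$ for the $\tfrac1T$ terms. The argument is correct and matches the paper's, so no further comparison is needed.
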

\begin{proof}
 (i) We have,
\begin{eqnarray}
&&\|\frac{1}{NT}\sum_{t=1}^T\sum_{i=1}^Nu_{it}\frac{1}{NT}\sum_{s=1}^T\widehat{f}_s^{PCA'}u_s'u_t\bar{H}^{-1'}\lambda_{0i}\xi_i\xi_i'\|\leq \|\frac{1}{N^2T^2}\sum_{t=1}^T\sum_{i=1}^Nu_{it}\sum_{s=1}^Tf_s'\bar{H}'u_s'u_t\bar{H}^{-1'}\lambda_{0i}\xi_i\xi_i'\|\cr
&&+\|\frac{1}{N^2T^2}\sum_{t=1}^T\sum_{i=1}^Nu_{it}\sum_{s=1}^T(\widehat{f}_s^{PCA'}-f_s'\bar{H}')u_s'u_t\bar{H}^{-1'}\lambda_{0i}\xi_i\xi_i'\|=a+b.
\end{eqnarray}
We bound $a,b$ separately. Here $a$ is upper bounded by $a_1+a_2$, where by Cauchy-Schwarz,
\begin{eqnarray}\label{eqb.7add}
a_1&=&\|\frac{1}{N^2T^2}\sum_{t=1}^T\sum_{i=1}^Nu_{it}\sum_{s=1}^Tf_s'\bar{H}'(u_s'u_t-Eu_s'u_t)\bar{H}^{-1'}\lambda_{0i}\xi_i\xi_i'\|\cr
&\leq&\max_{i\leq N}\|\lambda_{0i}\xi_i\xi_i'\|(\frac{1}{T}\sum_{t=1}^Tu_{it}^2)^{1/2}\|\frac{1}{N}(\frac{1}{T}\sum_{t=1}^T\|\frac{1}{T}\sum_{s=1}^Tf_s(u_s'u_t-Eu_s'u_t)\|^2)^{1/2}\cr
&\leq&O_p(1)(\frac{1}{T}\sum_{t=1}^T\|\frac{1}{TN}\sum_{s=1}^Tf_s(u_s'u_t-Eu_s'u_t)\|^2)^{1/2}.
\end{eqnarray}
Note that $E\frac{1}{T}\sum_{t=1}^T\|\frac{1}{TN}\sum_{s=1}^Tf_s(u_s'u_t-Eu_s'u_t)\|^2=E\|\frac{1}{TN}\sum_{s=1}^Tf_s(u_s'u_t-Eu_s'u_t)\|^2$, which is $O(T^{-1}N^{-1})$ by Assumption  3.9. Hence $a_1=O_p((NT)^{-1/2})$.
\begin{equation}\label{eqb.8add}
a_2= \|\frac{1}{N^2T^2}\sum_{t=1}^T\sum_{i=1}^Nu_{it}\sum_{s=1}^Tf_s'\bar{H}'Eu_s'u_t\bar{H}^{-1'}\lambda_{0i}\xi_i\xi_i'\|\leq\max_{i\leq N}\frac{1}{T}\sum_{t=1}^T|u_{it}|O(1)\frac{1}{TN}\sum_{s=1}^T\|f_sEu_s'u_t\|
\end{equation}
Since $\max_{t\leq T}E(T^{-1}N^{-1}\sum_{s=1}^T\|f_sEu_s'u_t\|)\leq O(T^{-1})\max_t\sum_{s=1}^T|Eu_s'u_t|/N=O(T^{-1})$ by the strong mixing condition (Lemma C.5 of Fan Liao and Mincheva 2012), we have $a_2=O_p(T^{-1})$. This implies $a=O_p(N^{-1/2}T^{-1/2}+T^{-1})$.

Now we bound $b$. Using Cauchy Schwarz inequality, we have $b\leq b_1+b_2$ where
\begin{eqnarray}\label{eqb.9add}
b_1&=&\|\frac{1}{N^2T^2}\sum_{t=1}^T\sum_{i=1}^Nu_{it}\sum_{s=1}^T(\widehat{f}_s^{PCA'}-f_s'\bar{H}')(u_s'u_t-Eu_s'u_t)\bar{H}^{-1'}\lambda_{0i}\xi_i\xi_i'\|\cr
&&\leq O_p(1)\frac{1}{N^2T}\sum_{t=1}^T\sum_{i=1}^N|u_{it}|\left(\frac{1}{T}\sum_{s=1}^T\|\widehat{f}_s^{PCA}-\bar{H}f_s\|^2\right)^{1/2}\left(\frac{1}{T}\sum_{s=1}^T|u_s'u_t-Eu_s'u_t|^2\right)^{1/2}\cr
&\leq&O_p(\frac{1}{N})O_p(\frac{1}{\sqrt{T}}+\frac{1}{\sqrt{N}})O_p(\sqrt{N})=O_p(\frac{1}{N}+\frac{1}{\sqrt{NT}}),
\end{eqnarray}
where the second inequality follows from $ET^{-1}\sum_{s=1}^T|u_s'u_t-Eu_s'u_t|^2=O(N)$. Using Cauchy-Schwarz inequality, we also obtain
\begin{eqnarray}\label{eqb.10add}
b_2&=&\|\frac{1}{N^2T^2}\sum_{t=1}^T\sum_{i=1}^Nu_{it}\sum_{s=1}^T(\widehat{f}_s^{PCA'}-f_s'\bar{H}')(Eu_s'u_t)\bar{H}^{-1'}\lambda_{0i}\xi_i\xi_i'\|\cr
&\leq& O_p(1)(\frac{1}{T}\sum_{s=1}^T\|\widehat{f}_s^{PCA}-\bar{H}f_s\|^2)^{1/2}\left(\frac{1}{T}\sum_{s=1}^T|Eu_s'u_t/N|^2\right)^{1/2}\cr
&=&O_p(\frac{1}{\sqrt{NT}}+\frac{1}{T}).
\end{eqnarray}
(ii) Let $d_{i,kl}$  be the $(k,l)$th element of $\xi_i\xi_i'$. Then the $(k,l)$th element of the object of interest is bounded by $d_1+d_2$, where, by  Cauchy Schwarz inequality,
\begin{eqnarray}\label{eqb.11add}
d_1&=&|\frac{1}{(NT)^2}\sum_{t=1}^T\sum_{s=1}^T\sum_{j=1}^N\sum_{i=1}^N(u_{it}u_{jt}-Eu_{it}u_{jt})\lambda_{0i}'\bar{H}^{-1}\widehat{f}_s^{PCA}f_s'\lambda_{0j}d_{i,kl}|\cr
&\leq&O_p(1)(\frac{1}{T}\sum_{s=1}^T\|\widehat{f}_s^{PCA}\|^2)^{1/2}(\frac{1}{T}\sum_{s=1}^T\|\widehat{f}_s^{PCA}\|^2)^{1/2}\|\frac{1}{N^2T}\sum_{j=1}^N\sum_{i=1}^N\sum_{t=1}^T(u_{it}u_{jt}-Eu_{it}u_{jt})\lambda_{0i}\lambda_{0j}'d_{i,kl}\|\cr
&=&O_p(\frac{1}{\sqrt{NT}}).
\end{eqnarray} The last equality follows from Assumption  3.9.
Also, $\sum_{i,j\leq N}|Eu_{it}u_{jt}|=\sum_{(i,j)\in S_U}|\Sigma_{u0,ij}|+\sum_{(i,j)\in S_L}|\Sigma_{u0,ij}|=O(N)$. Thus
\begin{eqnarray}\label{eqb.12add}
d_2&=&|\frac{1}{(NT)^2}\sum_{t=1}^T\sum_{s=1}^T\sum_{j=1}^N\sum_{i=1}^N(Eu_{it}u_{jt})
\lambda_{0i}'\bar{H}^{-1}\widehat{f}_s^{PCA}f_s'\lambda_{0j}d_{i,kl}|\cr
&\leq&O_p(1)\frac{1}{N^2T}\sum_{s=1}^T\|\widehat{f}_s^{PCA}\|\|f_s\|\sum_{i,j\leq N}|Eu_{it}u_{jt}|=O_p(\frac{1}{N}).
\end{eqnarray}
(iii) The object of interest is bounded by $e_1+e_2$, where
\begin{equation}
e_1=\|\frac{1}{N^2T^2}\sum_{s=1}^T\sum_{j=1}^N\sum_{t=1}^T\sum_{i=1}^Nu_{it}\lambda_{0i}'\bar{H}^{-1}(\widehat{f}_s^{PCA}   -\bar{H}f_s)      f_t'\lambda_{0j}u_{js}      \xi_i\xi_i'
\|=O_p(\frac{1}{\sqrt{NT}}+\frac{1}{T}),
\end{equation}
and we used the fact that $\frac{1}{T}\sum_{t=1}^T\|\widehat{f}_t^{PCA}-\bar{H}f_t\|^2=O_p(T^{-1}+N^{-1})$ from Lemma \ref{lb.5}, and 
  that $N^{-1}\sum_{i=1}^N\|\frac{1}{T}\sum_{t=1}^Tf_tu_{it}\|=O_p(T^{-1/2}).$\footnote{We have $(N^{-1}\sum_{i=1}^N\|\frac{1}{T}\sum_{t=1}^Tf_tu_{it}\|)^2\leq N^{-1}\sum_{i=1}^N\|\frac{1}{T}\sum_{t=1}^Tf_tu_{it}\|^2=N^{-1}\sum_{i=1}^N\sum_{j=1}^r(\frac{1}{T}\sum_{t=1}^Tf_tu_{it})^2$, whose expectation is   $N^{-1}\sum_{i=1}^N\sum_{j=1}^r\var(\frac{1}{T}\sum_{t=1}^Tf_{jt}u_{it})$. Note that $\var(\frac{1}{T}\sum_{t=1}^Tf_{jt}u_{it})=O(T^{-1})$ uniformly in $i\leq N$. 
}
\begin{equation}
e_2=\|\frac{1}{N^2T^2}\sum_{s=1}^T\sum_{j=1}^N\sum_{t=1}^T\sum_{i=1}^Nu_{it}\lambda_{0i}'f_s      f_t'\lambda_{0j}u_{js}      \xi_i\xi_i'
\|=O_p(\frac{1}{T}).
\end{equation}

\end{proof}

\begin{lem}\label{lb.81add} For $S_U$ in the partition $\{(i,j): i,j\leq N\}=S_L\cup S_U$, \\
(i)  $\frac{1}{NT}\sum_{t=1}^T\sum_{i\neq j, (i,j)\in S_U}u_{it}\lambda_{0j}'\bar{H}^{-1}(NT)^{-1}\sum_{s=1}^T\widehat{f}_s^{PCA}u_s'u_t \xi_i\xi_j'= O_p(\frac{1}{\sqrt{NT}}+\frac{1}{T}+\frac{1}{N}) $\\
(ii)   $\frac{1}{NT}\sum_{t=1}^T\sum_{i\neq j, (i,j)\in S_U}u_{it}\lambda_{0j}'\bar{H}^{-1}(NT)^{-1}\sum_{s=1}^T\widehat{f}_s^{PCA}        f_s'\sum_{v=1}^N\lambda_{0v}u_{vt}      \xi_i\xi_j'= O_p(\frac{1}{\sqrt{NT}}+\frac{m_N}{N}) $\\
(iii) $\frac{1}{NT}\sum_{t=1}^T\sum_{i\neq j, (i,j)\in S_U}u_{it}\lambda_{0j}'\bar{H}^{-1}(NT)^{-1}\sum_{s=1}^T\widehat{f}_s^{PCA}         f_t'\sum_{v=1}^N\lambda_{0v}u_{vs}      \xi_i\xi_j'=O_p(\sqrt{\frac{\log N}{NT}}+\frac{\log N}{T})$.
 
\end{lem}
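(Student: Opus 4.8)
The argument runs in close parallel to the proof of Lemma~\ref{lb.51add}: the single index $i$ weighted by $\xi_i\xi_i'$ is now replaced by the pair $(i,j)\in S_U$ with $i\neq j$, weighted by $\xi_i\xi_j'$ and carrying the extra loading $\lambda_{0j}$, and the role played there by Assumption~\ref{ass3.9}(ii) is now taken by Assumption~\ref{ass3.9}(iii), which is designed precisely for double sums restricted to $S_U$. The plan is, first, to dispose of the estimation error in $\widehat f_s^{PCA}$. In each of (i)--(iii), apply Cauchy--Schwarz in $s$ to pull out $(T^{-1}\sum_s\|\widehat f_s^{PCA}\|^2)^{1/2}=O_p(1)$ (which follows from Lemma~\ref{lb.5}(iii) together with stationarity of $f_s$) in the terms where only the magnitude of $\widehat f_s^{PCA}$ is needed, and write $\widehat f_s^{PCA}=\bar H f_s+(\widehat f_s^{PCA}-\bar H f_s)$ in the terms that genuinely carry a factor $f_s$ next to $\widehat f_s^{PCA}$; by Lemma~\ref{lb.5}(iii) the remainder contributes $O_p(T^{-1}+N^{-1})$ times an $O_p(1)$ factor, which is absorbed into the stated rates. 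After this reduction $\lambda_{0j}'\bar H^{-1}\widehat f_s^{PCA}$ becomes $\lambda_{0j}'f_s$, and each term is an average of idiosyncratic-error products weighted by loadings, by $f_t$'s, and by $\xi$'s.

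Next, in each reduced term split the relevant error product into a centered part plus its expectation: $u_s'u_t=(u_s'u_t-Eu_s'u_t)+Eu_s'u_t$ in (i); $u_{it}u_{vt}=(u_{it}u_{vt}-Eu_{it}u_{vt})+Eu_{it}u_{vt}$ in (ii); and, after grouping the two time blocks ($\{s\}$ and $\{t\}$) and Cauchy--Schwarz, the analogous centering plus the separation of the on-diagonal case $t=s$ in (iii). For the centered pieces the leading $O_p((NT)^{-1/2})$ rates come from Assumption~\ref{ass3.9}(iii), namely $\frac{1}{N\sqrt{NT}}\sum_{i\neq j,(i,j)\in S_U}\sum_t\sum_v(u_{it}u_{vt}-Eu_{it}u_{vt})\lambda_{0j}\lambda_{0v}'d_{ij,kl}=O_p(1)$ and $\frac{1}{\sqrt{NT}}\sum_{i\neq j,(i,j)\in S_U}\sum_t(u_{it}u_{jt}-Eu_{it}u_{jt})\xi_i\xi_j'=O_p(1)$, together with Assumptions~\ref{ass3.9}(i) and~\ref{ass3.4}, the uniform bounds $\max_{i\leq N}T^{-1}\sum_t u_{it}^2=O_p(1)$ and $\max_{i\leq N}\|\xi_i\|<\infty$, and Lemma~\ref{la.1} for the $O_p(\sqrt{(\log N)/T})$ maximal averages; the last of these is what forces the extra $\sqrt{\log N}$ in the leading term of (iii).

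For the non-centered (expectation) pieces the tools are sparsity and strong mixing: $\sum_{i\neq j,(i,j)\in S_U}1=O(N)$ (Assumption~\ref{ass3.8add}), $\sum_v|Eu_{it}u_{vt}|=\sum_v|\Sigma_{u0,iv}|\leq\max_{i\leq N}\sum_v|\Sigma_{u0,iv}|^{1-q}|\Sigma_{u0,iv}|^q=O(m_N)$ (Assumption~\ref{ass3.1} with bounded diagonal entries), and $\sum_{s=1}^T|Eu_s'u_t|=O(N)$ uniformly in $t$ by the mixing bound (cf.\ Lemma C.5 of Fan, Liao and Mincheva (2011)). Tracking the $N^{-2}T^{-2}$-type normalisations, these give the non-centered contribution $O_p(T^{-1})$ in (i) (via $\sum_s|Eu_s'u_t|=O(N)$), $O_p(m_N/N)$ in (ii) (the $m_N$ entering exactly through $\sum_{i\neq j,(i,j)\in S_U}\sum_v|\Sigma_{u0,iv}|=O(Nm_N)$), and $O_p((\log N)/T)$ in (iii) (the $\log N$ coming from the PCA-remainder bound combined with the maximal averages of Lemma~\ref{la.1} over the up to $N$ indices of $S_U$, the genuinely $m_N$-dependent pieces in (iii) being of strictly smaller order under Assumption~\ref{ass3.1}). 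Collecting the centered and non-centered pieces in each part yields the three claimed rates of Lemma~\ref{lb.81add}.

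The main obstacle is the control of the expectation terms in (ii) and (iii): they enjoy no central-limit cancellation, so one must simultaneously use the cardinality bound $\sum_{i\neq j,(i,j)\in S_U}1=O(N)$, the row-sum sparsity bound $\sum_v|\Sigma_{u0,iv}|=O(m_N)$, and the mixing-induced summability $\sum_s|Eu_s'u_t|=O(N)$, with the correct normalisations, to land at $O_p(m_N/N)$ and $O_p((\log N)/T)$ rather than at a larger rate. A secondary, purely bookkeeping difficulty is to identify, among the several cross-terms generated by the two nested splits (the $\widehat f^{PCA}$-substitution and the centering of the error products), which one is dominant in each of (i)--(iii).
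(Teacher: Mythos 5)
Your proposal is correct and follows essentially the same route as the paper: the same split of $\widehat f_s^{PCA}$ into $\bar Hf_s$ plus the PCA remainder, the same centering of the error products with Assumption \ref{ass3.9}(iii) handling the centered pieces, and the same use of $\sum_{i\neq j,(i,j)\in S_U}1=O(N)$, $\|\Sigma_{u0}\|_1=O(m_N)$, the mixing bound on $\sum_s|Eu_s'u_t|$, and the maximal averages of Lemma \ref{la.1} for the expectation pieces, yielding the identical rates $O_p(T^{-1})$, $O_p(m_N/N)$ and $O_p((\log N)/T)$. The only cosmetic difference is your description of part (iii), where the paper bounds the two terms $e_1,e_2$ directly via $\max_i\|T^{-1}\sum_tf_tu_{it}\|=O_p(\sqrt{(\log N)/T})$ rather than by a further centering step, but this does not change the argument.
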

\begin{proof}
 (i) The term of interest is bounded by $a+b$, where
\begin{eqnarray*}
 a&=& \|\frac{1}{N^2T^2}\sum_{t=1}^T\sum_{(i,j)\in S_U, i\neq j}u_{it}\sum_{s=1}^Tf_s'\bar{H}'u_s'u_t\bar{H}^{-1'}\lambda_{0j}\xi_i\xi_j'\|,\cr
 b&=&\|\frac{1}{N^2T^2}\sum_{t=1}^T\sum_{(i,j)\in S_U, i\neq j}u_{it}\sum_{s=1}^T(\widehat{f}_s^{PCA'}-f_s'\bar{H}')u_s'u_t\bar{H}^{-1'}\lambda_{0j}\xi_i\xi_j'\|.
\end{eqnarray*}
Here $a$ is upper bounded by $a_1+a_2$, where\\  $a_1=\|\frac{1}{N^2T^2}\sum_{t=1}^T\sum_{(i,j)\in S_U, i\neq j}u_{it}\sum_{s=1}^Tf_s'\bar{H}'(u_s'u_t-Eu_s'u_t)\bar{H}^{-1'}\lambda_{0j}\xi_i\xi_j'\|$, and\\
 $a_2= \|\frac{1}{N^2T^2}\sum_{t=1}^T\sum_{(i,j)\in S_U, i\neq j}u_{it}\sum_{s=1}^Tf_s'\bar{H}'Eu_s'u_t\bar{H}^{-1'}\lambda_{0j}\xi_i\xi_j'\|.$ Note that $a_1$ and $a_2$ can be bounded in   the same way as (\ref{eqb.7add}) and (\ref{eqb.8add}). The only difference is that $N^{-1}\sum_{i=1}^N$ is replaced by a double sum $N^{-1}\sum_{(i,j)\in S_U, i\neq j}$. By the assumption, $N^{-1}\sum_{(i,j)\in S_U, i\neq j}1=O(1)$. The result of the proof is exactly the same, so is omitted. 
We conclude that  $a=O_p(N^{-1/2}T^{-1/2}+T^{-1})$.

On the other hand,     $b\leq b_1+b_2$ where\\
$
b_1=\|\frac{1}{N^2T^2}\sum_{t=1}^T \sum_{(i,j)\in S_U, i\neq j}u_{it}\sum_{s=1}^T(\widehat{f}_s^{PCA'}-f_s'\bar{H}')(u_s'u_t-Eu_s'u_t)\bar{H}^{-1'}\lambda_{0j}\xi_i\xi_j'\|$, and\\
$b_2=\|\frac{1}{N^2T^2}\sum_{t=1}^T\sum_{(i,j)\in S_U, i\neq j}u_{it}\sum_{s=1}^T(\widehat{f}_s^{PCA'}-f_s'\bar{H}')(Eu_s'u_t)\bar{H}^{-1'}\lambda_{0j}\xi_i\xi_j'\|$. Using Cauchy-Schwarz inequality and the strong mixing condition, $b_1$ and $b_2$ can be also bounded in an exactly the same way of (\ref{eqb.9add}) and (\ref{eqb.10add}). We conclude that $b=O_p(N^{-1}+T^{-1}+(NT)^{-1/2})$.

(ii) Let $d_{ij,kl}$  be the $(k,l)$th element of $\xi_i\xi_j'$. Then the $(k,l)$th element of the object of interest is bounded by $d_1+d_2$, where\\
$
d_1=|\frac{1}{(NT)^2}\sum_{t=1}^T\sum_{s=1}^T\sum_{(i,j)\in S_U, i\neq j}\sum_{v=1}^N(u_{it}u_{vt}-Eu_{it}u_{vt})\lambda_{0j}'\bar{H}^{-1}\widehat{f}_s^{PCA}f_s'\lambda_{0v}d_{ij,kl}|
$, and  \\
$d_2=|\frac{1}{(NT)^2}\sum_{t=1}^T\sum_{s=1}^T\sum_{(i,j)\in S_U, i\neq j}\sum_{v=1}^N(Eu_{it}u_{vt})\lambda_{0j}'\bar{H}^{-1}\widehat{f}_s^{PCA}f_s'\lambda_{0v}d_{ij,kl}|
$.  Bounding  $d_1, d_2$ is slightly different from (\ref{eqb.11add}) and (\ref{eqb.12add}), and we give the detail here. By Cauchy Schwarz inequality,
\begin{equation*}
d_1\leq O_p(1)(\frac{1}{T}\sum_{s=1}^T\|\widehat{f}^{PCA}_s\|^2)^{1/2}(\frac{1}{T}\sum_{s=1}^T\|{f}_s\|^2)^{1/2}\|\frac{1}{N^2T}\sum_{i\neq j, (i,j)\in S_U}\sum_{t=1}^T\sum_{v=1}^N(u_{it}u_{vt}-Eu_{it}u_{vt})\lambda_{0j}\lambda_{0v}'d_{ij, kl}\|
\end{equation*}
which is $O_p((NT)^{-1/2})$ by Assumption  3.9. On the other hand, \\$d_2\leq O_p(N^{-2})\sum_{i\neq j, (i,j)\in S_U}\sum_{k=1}^N|\Sigma_{u0,ik}|$. Note that $\|\Sigun\|_1=O(m_N)$, where $m_N$ is as defined in Assumption  3.1. Thus $d_2=O_p(N^{-1}m_N)$.

(iii) The object of interest is bounded by $e_1+e_2$, where\\
$
e_1=\|\frac{1}{N^2T^2}\sum_{s=1}^T\sum_{i\neq j, (i,j)\in S_U}\sum_{t=1}^T\sum_{v=1}^Nu_{it}\lambda_{0j}'\bar{H}^{-1}(\widehat{f}_s^{PCA}   -\bar{H}f_s)      f_t'\lambda_{0v}u_{vs}      \xi_i\xi_j'
\|,
$\\
$
e_1=\|\frac{1}{N^2T^2}\sum_{s=1}^T\sum_{i\neq j, (i,j)\in S_U}\sum_{t=1}^T\sum_{v=1}^Nu_{it}\lambda_{0j}'f_sf_t'\lambda_{0v}u_{vs}      \xi_i\xi_j'
\|
$. 

Since $\max_{i\leq N}\|T^{-1}\sum_{t=1}^Tf_tu_{it}\|=O_p(\sqrt{\log N/T})$, we conclude that $e_1=O_p(\frac{\sqrt{\log N}}{T}+\frac{\sqrt{\log N}}{\sqrt{NT}})$, and $e_2=O_p(\frac{\log N}{T})$.
\end{proof}

From Lemma  \ref{lb.81add}, immediately we have the following result.

\begin{lem}\label{lb.9add}
$
 \frac{1}{NT}\sum_{t=1}^T\sum_{i\neq j, (i,j)\in S_U}u_{it}\lambda_{0j}'\bar{H}^{-1}(\fpca-\bar{H}f_t) \xi_i\xi_j'=   O_p(\omega_T^2+m_N/N)$.
\end{lem}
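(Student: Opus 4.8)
The plan is to obtain this as an immediate consequence of Lemma~\ref{lb.81add} by inserting the standard asymptotic expansion of the principal components estimator. Recall (from Bai 2003, exactly as used in the proof of Lemma~\ref{lb.5add}) that there are $r\times r$ matrices $V$ and $\bar H$ with $\|V\|_F=O_p(1)$, $\|\bar H\|_F=O_p(1)$, such that
\[
\fpca-\bar H f_t=V\,\frac{1}{NT}\sum_{s=1}^T\widehat f_s^{PCA}\Bigl[u_s'u_t+f_s'\sum_{v=1}^N\lambda_{0v}u_{vt}+f_t'\sum_{v=1}^N\lambda_{0v}u_{vs}\Bigr].
\]
First I would substitute this into
\[
\frac{1}{NT}\sum_{t=1}^T\sum_{i\neq j,\,(i,j)\in S_U}u_{it}\lambda_{0j}'\bar H^{-1}(\fpca-\bar H f_t)\,\xi_i\xi_j',
\]
which breaks the left-hand side into three sums, one for each bracketed term. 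Since $V$ has fixed dimension and $\|V\|_F=O_p(1)$, the Cauchy--Schwarz estimates carried out in the proof of Lemma~\ref{lb.81add} — which only use $\|\bar H^{-1}\|_F=O_p(1)$, $\max_j\|\lambda_{0j}\|<\infty$ and $\max_j\|\xi_j\|<\infty$ — apply verbatim with $\bar H^{-1}$ replaced by $\bar H^{-1}V$. Hence the three resulting sums are, up to an $O_p(1)$ multiplicative factor, precisely the quantities bounded in parts (i), (ii), and (iii) of Lemma~\ref{lb.81add}, in that order.

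It then remains only to collect the three rates. Part (i) contributes $O_p(\tfrac{1}{\sqrt{NT}}+\tfrac1T+\tfrac1N)$, part (ii) contributes $O_p(\tfrac{1}{\sqrt{NT}}+\tfrac{m_N}{N})$, and part (iii) contributes $O_p(\sqrt{\tfrac{\log N}{NT}}+\tfrac{\log N}{T})$. Using $\omega_T^2=\tfrac1N+\tfrac{2}{\sqrt N}\sqrt{\tfrac{\log N}{T}}+\tfrac{\log N}{T}$, each of these terms is $O(\omega_T^2+m_N/N)$: indeed $\tfrac1N\le\omega_T^2$, $\tfrac1T\le\tfrac{\log N}{T}\le\omega_T^2$, and by the elementary inequality $\tfrac{1}{\sqrt N}\sqrt{\tfrac{\log N}{T}}\le\omega_T^2$ we get $\tfrac{1}{\sqrt{NT}}\le\sqrt{\tfrac{\log N}{NT}}\le\omega_T^2$ (here $\log N\ge1$ for large $N$). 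Summing the three bounds and using $m_N\ge1$ then yields the claimed rate $O_p(\omega_T^2+m_N/N)$.

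There is essentially no technical obstacle in this lemma once Lemma~\ref{lb.81add} is in hand; the only points that deserve a word of justification are that the fixed-dimensional factor $V$ and the diagonalizing rotation $\bar H$ do not disturb the orders (handled exactly as in the proof of Lemma~\ref{lb.81add}), and the AM--GM-type inequalities $\tfrac{1}{\sqrt{NT}}\le\omega_T^2$ and $\sqrt{\tfrac{\log N}{NT}}\le\omega_T^2$ used to absorb the cross terms into $\omega_T^2$.
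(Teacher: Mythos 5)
Your proposal is correct and is essentially identical to the paper's own proof: the paper likewise substitutes the Bai (2003) expansion $\fpca-\bar H f_t=V(NT)^{-1}\sum_s\widehat f_s^{PCA}[u_s'u_t+f_s'\sum_v\lambda_{0v}u_{vt}+f_t'\sum_v\lambda_{0v}u_{vs}]$ and observes that parts (i)--(iii) of Lemma \ref{lb.81add} sum to $O_p(\omega_T^2+m_N/N)$. Your explicit verification that each of the component rates is absorbed by $\omega_T^2+m_N/N$ is a correct (and slightly more careful) rendering of the step the paper leaves implicit.
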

\begin{proof}
Note that results (i)(ii)(iii) in Lemma  \ref{lb.81add} sum up to $O_p(\omega_T^2+m_N/N)$. Hence Lemma \ref{lb.9add} follows from the equality  $\fpca-\bar{H}f_t=V(NT)^{-1}\sum_{s=1}^T\widehat{f}_s^{PCA}[u_s'u_t+f_s'\sum_{j=1}^N\lambda_{0j}u_{jt}+f_t'\sum_{j=1}^N\lambda_{0j}u_{js}]$.  \end{proof}

The following lemma  strengthens the results of Bai (2003) when $\Sigun$ is sparse.

\begin{lem} \label{lb.6} For the PCA estimator, \\
(i) $N^{-1}\sum_{i=1}^N(R_{ii}-\Sigma_{u0,ii})\xi_i\xi_i'=O_p(\omega_T^2).$ \\
(ii) $N^{-1}\sum_{i\neq j, (i,j)\in S_U}(R_{ij}-\Sigma_{u0,ij})\xi_i\xi_j'=O_p(\omega_T^2+m_N/N)$.
\end{lem}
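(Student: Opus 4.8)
The plan is to start from the identity $R_{ij}=T^{-1}\sum_{t=1}^T\hu_{it}\hu_{jt}$ of Lemma~\ref{lb.5}(i) together with the Bai (2003) first-order expansion of the PCA residuals,
$$
\hu_{it}-u_{it}=-\lambda_{0i}'\bar{H}^{-1}(\fpca-\bar{H}f_t)+(\bar{H}^{'-1}\lambda_{0i}-\lampca_i)'\fpca .
$$
Substituting this into $R_{ij}-\Sigma_{u0,ij}$ yields four pieces: a ``sample minus population'' term $T^{-1}\sum_t u_{it}u_{jt}-\Sigma_{u0,ij}$, two cross terms linear in $\hu-u$, and one term quadratic in $\hu-u$. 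I would weight each piece by $\xi_i\xi_i'$ for part (i) (resp. $\xi_i\xi_j'$ for part (ii)), sum over $i$ (resp. over $\{i\neq j,(i,j)\in S_U\}$), normalize by $N^{-1}$, and bound term by term, using throughout that $\max_{j\le N}\|\xi_j\|<\infty$ (a consequence of Assumption~\ref{ass3.7}) and that $\sum_{i\neq j,(i,j)\in S_U}1=O(N)$ (Assumption~\ref{ass3.8add}).

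For part (i), the sample-minus-population term is $N^{-1}\sum_i(T^{-1}\sum_tu_{it}^2-\Sigma_{u0,ii})\xi_i\xi_i'=O_p((NT)^{-1/2})=O_p(\omega_T^2)$, by Assumption~\ref{ass3.9}(ii) and $(NT)^{-1/2}\le\tfrac12(N^{-1}+T^{-1})=O(\omega_T^2)$. In the cross term $N^{-1}\sum_i\bigl(T^{-1}\sum_tu_{it}(\hu_{it}-u_{it})\bigr)\xi_i\xi_i'$ I substitute the residual expansion: the piece from $\lambda_{0i}'\bar{H}^{-1}(\fpca-\bar{H}f_t)$ is exactly the quantity bounded in Lemma~\ref{lb.5add}, hence $O_p(\omega_T^2)$, while the piece from $(\bar{H}^{'-1}\lambda_{0i}-\lampca_i)'\fpca$ is handled via $\max_i\|\lampca_i-\bar{H}^{'-1}\lambda_{0i}\|=O_p(\omega_T)$ (Lemma~\ref{lb.5}(iii)) and the uniform estimate $\max_i\|T^{-1}\sum_tu_{it}\fpca\|=O_p(\omega_T)$, which follows from $\max_i\|T^{-1}\sum_tf_tu_{it}\|=O_p(\sqrt{(\log N)/T})$ (Lemma~\ref{la.1}(iii)) and $T^{-1}\sum_t\|\fpca-\bar{H}f_t\|^2=O_p(T^{-1}+N^{-1})$ (Lemma~\ref{lb.5}(iii)) by Cauchy--Schwarz; this piece is also $O_p(\omega_T^2)$. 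Finally the quadratic term is bounded by $\bigl(\max_i T^{-1}\sum_t(\hu_{it}-u_{it})^2\bigr)(\max_i\|\xi_i\|)^2=O_p(\omega_T^2)$, using Lemma~\ref{lb.5}(ii). Summing the three bounds gives (i).

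Part (ii) follows the same four-term split but summed over $\{i\neq j,(i,j)\in S_U\}$. The sample-minus-population term is $O_p((NT)^{-1/2})=O_p(\omega_T^2)$ by Assumption~\ref{ass3.9}(iii). The main cross term, the piece involving $\lambda_{0j}'\bar{H}^{-1}(\fpca-\bar{H}f_t)$, is precisely the object bounded in Lemma~\ref{lb.9add}, hence $O_p(\omega_T^2+m_N/N)$; the other cross term $N^{-1}\sum_{i\neq j,(i,j)\in S_U}\bigl(T^{-1}\sum_t(\hu_{it}-u_{it})u_{jt}\bigr)\xi_i\xi_j'$ equals, after relabelling summation indices and using symmetry of $S_U$, the transpose of the same object and so obeys the same bound. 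The loading-error pieces are bounded by $N^{-1}\cdot O(N)\cdot\max_j\|\lampca_j-\bar{H}^{'-1}\lambda_{0j}\|\cdot\max_i\|T^{-1}\sum_tu_{it}\fpca\|\cdot(\max_i\|\xi_i\|)^2=O_p(\omega_T^2)$, and the quadratic term, by Cauchy--Schwarz in $t$ and Lemma~\ref{lb.5}(ii), by $N^{-1}\cdot O(N)\cdot O_p(\omega_T^2)=O_p(\omega_T^2)$. Adding up, (ii) is $O_p(\omega_T^2+m_N/N)$.

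The genuinely non-routine content does not live in this lemma but has already been absorbed into Lemmas~\ref{lb.5add} and~\ref{lb.9add}, whose proofs expand $\fpca-\bar{H}f_t$ through Bai's representation and split each resulting triple sum into a mean-zero part (controlled by Assumption~\ref{ass3.9}) and a mean-containing part (controlled by $\sum_{i,j}|Eu_{it}u_{jt}|=O(N)$ and $\|\Sigma_{u0}\|_1=O(m_N)$), while tracking that no normalizing constant hides an extra power of $N$. Granting those, the only step here needing care is the uniform-in-$i$ bound $\max_i\|T^{-1}\sum_tu_{it}\fpca\|=O_p(\omega_T)$ together with the observation that $\sum_{i\neq j,(i,j)\in S_U}1=O(N)$ cancels the $N^{-1}$ normalization, so the residual loading-error and quadratic pieces come out as products of two $O_p(\omega_T)$ factors rather than $O_p(1)$.
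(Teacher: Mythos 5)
Your proposal is correct and follows essentially the same route as the paper: split $R_{ij}-\Sigma_{u0,ij}$ into the sample-minus-population part (handled by Assumption \ref{ass3.9}), cross terms in $\hu-u$ (whose leading pieces are exactly Lemmas \ref{lb.5add} and \ref{lb.9add}), and a quadratic term (handled by Lemma \ref{lb.5}(ii) and $\sum_{i\neq j,(i,j)\in S_U}1=O(N)$). The only cosmetic difference is that you bound the loading-error piece via the single uniform estimate $\max_i\|T^{-1}\sum_t u_{it}\fpca\|=O_p(\omega_T)$, whereas the paper splits $\fpca$ into $\bar Hf_t$ plus remainder and bounds the two sub-pieces separately; both yield $O_p(\omega_T^2)$.
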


\begin{proof}
(i) $N^{-1}\sum_{i=1}^N(R_{ii}-\Sigma_{u0,ii})\xi_i\xi_i'=\sum_{i=1}^N(R_{ii}-S_{u,ii})\xi_i\xi_i'/N+\sum_{i=1}^N(S_{u,ii}-\Sigma_{u0,ii})\xi_i\xi_i'/N$.
By Assumption  3.9,  $\sum_{i=1}^N(S_{u,ii}-\Sigma_{u0,ii})\xi_i\xi_i'/N=\sum_{i=1}^N\sum_{t=1}^T
(u_{it}^2-\Sigma_{u0,ii})\xi_i\xi_i'/(NT)=O_p(1/\sqrt{NT}).$
On the other hand,  $\frac{1}{N}\sum_{i=1}^N(R_{ii}-S_{u,ii})\xi_i\xi_i'$ is equal to
\begin{eqnarray*}
\frac{1}{NT}\sum_{i=1}^N\sum_{t=1}^T(\hu_{it}^2-u_{it}^2)\xi_i\xi_i'=\frac{1}{NT}\sum_{i=1}^N\sum_{t=1}^T(\hu_{it}-u_{it})^2\xi_i\xi_i'+\frac{2}{NT}\sum_{i=1}^N\sum_{t=1}^Tu_{it}(\hu_{it}-u_{it})\xi_i\xi_i'.
\end{eqnarray*}
The first term on the right hand side is $O_p(\omega_T^2)$. We now work on  the second term.  By Bai (2003), there is a nonsingular   matrix $\bar{H}$ such that
\begin{equation}\label{eqPCA}
\hu_{jt}-u_{jt}=\lamj'\bar{H}^{-1}(\fpca-\bar{H}f_t)+(\lampca_j-\bar{H}^{'-1}\lamj)'(\fpca-\bar{H}f_t)+(\lampca_j-\bar{H}^{'-1}\lamj)\bar{H}f_t.
\end{equation}
By  Lemma \ref{lb.5add}
$
 \frac{1}{NT}\sum_{t=1}^T\sum_{i=1}^Nu_{it}\lambda_{0i}'\bar{H}^{-1}(\fpca-\bar{H}f_t)\xi_i\xi_i'=O_p(\frac{1}{\sqrt{NT}}+\frac{1}{T}+\frac{1}{N}).
  $
 In addition, for each element $d_{i,kl}$ of $\xi_i\xi_i'$, 
$$
\frac{1}{NT}\sum_{j=1}^N\sum_{t=1}^Tu_{jt}(\lampca_j-\bar{H}^{'-1}\lamj)\bar{H}f_td_{j,kl}\leq \frac{1}{N}\sum_{j=1}^N\|d_{j,kl}\frac{1}{T}\sum_{t=1}^Tu_{jt}f_t'\bar{H}'\| \max_j\|\lampca_j-\bar{H}^{'-1}\lambda_{0j}\|,$$
which is  $O_p(\omega_T\sqrt{\frac{\log N}{T}}).$ Also, 
$$
\frac{1}{NT}\sum_{j=1}^N\sum_{t=1}^Tu_{jt}(\lampca_j-\bar{H}^{'-1}\lamj)'(\fpca-\bar{H}f_t)d_{j,kl}=\frac{1}{T}\sum_{t=1}^T(\fpca-\bar{H}f_t)'\frac{1}{N}\sum_{j=1}^Nu_{jt}(\lampca_j-\bar{H}^{'-1}\lambda_{0j})d_{j,kl}$$
$$\leq\left(\frac{1}{T}\sum_{t=1}^T\|\fpca-\bar{H}f_t\|^2\max_j\|\lampca_j-\bar{H}^{'-1}\lambda_{0j}\|^2\frac{1}{T}\sum_{t=1}^T[\frac{1}{N}\sum_{j=1}^N|u_{jt}d_{j,kl}|]^2\right)^{1/2}=O_p(\frac{\omega_T}{\sqrt{T}}+\frac{\omega_T}{\sqrt{N}}).$$

(ii) Since $R_{ij}=T^{-1}\sum_{t=1}^T\hu_{it}\hu_{jt}$, the term of interest equals 
$$
\frac{2}{N}\sum_{i\neq j, (i,j)\in S_U}\frac{1}{T}\sum_{t=1}^Tu_{it}(\hu_{jt}-u_{jt})\xi_i\xi_j'+\frac{1}{N}\sum_{i\neq j, (i,j)\in S_U}\frac{1}{T}\sum_{t=1}^T(\hu_{it}-u_{it})(\hu_{jt}-u_{jt})\xi_i\xi_j'
$$
$$
+\frac{1}{NT}\sum_{i\neq j, (i,j)\in S_U}\sum_{t=1}^T(u_{it}u_{jt}-\Sigma_{u0,ij})\xi_i\xi_j'.
$$
By Assumption  3.9, the third term is $O_p((NT)^{-1/2})$.
By the assumption that $\sum_{i\neq j, (i,j)\in S_U}1=O(N)$ and Cauchy Schwarz inequality, the second term is $O_p(\omega_T^2)$. We now work out the first term.   Again we use the equality $\hu_{jt}-u_{jt}=\lamj'\bar{H}^{-1}(\fpca-\bar{H}f_t)+(\lampca_j-\bar{H}^{'-1}\lamj)'(\fpca-\bar{H}f_t)+(\lampca_j-\bar{H}^{'-1}\lamj)'\bar{H}f_t
$.   Lemma \ref{lb.9add} gives
$$
\frac{1}{N}\sum_{i\neq j, (i,j)\in S_U}\frac{1}{T}\sum_{t=1}^Tu_{jt}\lambda_{0i}'\bar{H}^{-1}(\fpca-\bar{H}f_t)\xi_i\xi_j'=O_p(\omega_T^2+\frac{m_N}{N}).
$$ 
On the other hand,
$
\frac{2}{N}\sum_{i\neq j, (i,j)\in S_U}\frac{1}{T}\sum_{t=1}^Tu_{jt}      (\lampca_i-\bar{H}^{-1'}\lambda_{0i})'\bar{H}f_t   \xi_i\xi_j'$ is bounded by,
$$ \max_{i\leq N}\|\xi_i\|^2\frac{1}{N}\sum_{i\neq j, (i,j)\in S_U}\|\frac{1}{T}\sum_{t=1}^Tu_{jt}f_t'\bar{H}'\| \max_j\|\lampca_j-\bar{H}^{'-1}\lambda_{0j}\|=O_p(\omega_T\sqrt{\frac{\log N}{T}}),$$
since $\max_{i\leq N}\|\xi_i\|=O(1)$. Also, $\frac{1}{N}\sum_{i\neq j, (i,j)\in S_U}\frac{1}{T}\sum_{t=1}^Tu_{jt}        (\lampca_i-\bar{H}^{-1'}\lambda_{0i})'  (\fpca-\bar{H}f_t)  \xi_i\xi_j'$ is bounded by
 $$
O(1)\max_{i\leq N}\|\hat{b}_i-H^{-1'}\lambda_{0i}\|\left(\frac{1}{T}\sum_{t=1}^T\|\hat{f}_t-\bar{H}f_t\|^2\right)^{1/2}\left(\frac{1}{T}\sum_{t=1}^T[\frac{1}{N}\sum_{i\neq j, (i,j)\in S_U}|d_{ij,kl}u_{it}|]^2\right)^{1/2}
$$
 which is $O_p(\frac{\omega_T}{\sqrt{T}}+\frac{\omega_T}{\sqrt{N}})$.

\end{proof}

\textbf{Proof of Theorem  3.4 $N^{-1}\Lambda_0'(\hSigoi-\Siguni)\Lambda_0=O_p(\omega_T^{2-2q}m_N^2)$}

\begin{proof}  By the triangular inequality, the left-hand-side is bounded by
$$
\frac{1}{N}\|\Lambda_0'(\hSigoi -\Sigma_{u0}^{-1})(\Sigma_{u0}-\hSigo)\Sigma_{u0}^{-1}\Lambda_0\|_F+\frac{1}{N}\|\Lambda_0'\Sigma_{u0}^{-1}(\Sigma_{u0}-\hSigo)\Sigma_{u0}^{-1}\Lambda_0\|_F.
$$
 The first term is $O_p(\omega_T^{2-2q}m_N^2  )$. We now bound the second term, which is
\begin{eqnarray*}
\frac{1}{N}\Xi(\hSigo-\Sigun)\Xi'&=&\frac{1}{N}\sum_{i=1}^N(R_{ii}-\Sigma_{u0,ii})\xi_i\xi_i'+\frac{1}{N}\sum_{i\neq j, (i,j)\in S_U}(\hSig_{u,ij}^{(1)}-\Sigma_{u0,ij})\xi_i\xi_j'\cr
&&+\frac{1}{N}\sum_{(i,j)\in S_L}(\hSig_{u,ij}^{(1)}-\Sigma_{u0,ij})\xi_i\xi_j',
 \end{eqnarray*}
 where $\Xi=\Lambda_0'\Sigun^{-1}.$
 The first  term  on the right hand side is $O_p(\omega_T^2)$ by Lemma \ref{lb.6}. The third term is dominated by, $
O( N^{-1})(\sum_{S_L}|\Sigma_{u0,ij}|+\sum_{S_L}|\hSig_{u,ij}^{(1)}|)=O(N^{-1})+O(N^{-1})\sum_{S_L}|\hSig_{u,ij}^{(1)}|.
 $
 By  Theorem  3.3, for any $\epsilon>0$ and any $M>0$, $
 P(\frac{1}{N}\sum_{(i,j)\in S_L}|\hSig_{u,ij}^{(1)}|>M\omega_T^2)\leq P(\exists(i,j)\in S_L, \hSig_{u,ij}^{(1)}\neq0)\leq\epsilon. $
 This implies the third term is $O_p(\omega_T^2).$  The second term equals
  $$
 \frac{1}{N}\sum_{i\neq j, (i,j)\in S_U}(\hSig_{u,ij}^{(1)}-R_{ij})\xi_i\xi_j'+ \frac{1}{N}\sum_{i\neq j, (i,j)\in S_U}(R_{ij}-\Sigma_{u0,ij})\xi_i\xi_j'.
 $$
 By Lemma  \ref{lb.6} (ii), $N^{-1}\sum_{i\neq j, (i,j)\in S_U}(R_{ij}-\Sigma_{u0,ij})\xi_i\xi_j'=O_p(\omega_T^2+m_N/N).$ On the other hand, recall that $|s_{ij}(z)-z|\leq a\tau_{ij}^2$ when $|z|>b\tau_{ij}$ (Section 3.1),
 \begin{eqnarray*}
&&\|\frac{1}{N}\sum_{i\neq j, (i,j)\in S_U}(\hSig_{u,ij}^{(1)}-R_{ij})\xi_i\xi_j'\|=\|\frac{1}{N}\sum_{i\neq j, (i,j)\in S_U, | R_{ij}|>b\tau_{ij}}(s_{ij}(R_{ij})-R_{ij})\xi_i\xi_j'\cr
&&+\frac{1}{N}\sum_{i\neq j, (i,j)\in S_U, | R_{ij}|>b\tau_{ij}}(s_{ij}(R_{ij})-R_{ij})\xi_i\xi_j'\|\leq O_p(\omega_T^2)+\|\frac{1}{N}\sum_{i\neq j, (i,j)\in S_U, | R_{ij}|\leq b\tau_{ij}}(s_{ij}(R_{ij})-R_{ij})\xi_i\xi_j'\|.
 \end{eqnarray*}
 Write $v=\|N^{-1}\sum_{i\neq j, (i,j)\in S_U, | R_{ij}|\leq b\tau_{ij}}(s_{ij}(R_{ij})-R_{ij})\xi_i\xi_j'\|$, then for any $C>0$, and $\epsilon>0$, Lemma \ref{lb.4} implies $
 P(v>M\omega_T^2)\leq P(\exists (i,j)\in S_U, |R_{ij}|\leq b\tau_{ij})<\epsilon$, which yields $v=O_p(\omega_T^2)$. Therefore $\frac{1}{N}\sum_{i\neq j, (i,j)\in S_U}(\hSig_{u,ij}^{(1)}-\Sigma_{u0,ij})\xi_i\xi_j'=O_p(\omega_T^2)$. This implies $N^{-1}\Lambda_0'(\hSigoi-\Siguni)\Lambda_0=O_p(\omega_T^2+\omega_T^{2-2q}m_N^2+m_N/N)=O_p(\omega_T^{2-2q}m_N^2)$.

\end{proof}

\subsubsection{Convergence rate for $J$}
We now improve the rate in Lemma \ref{lb.3}.

\begin{lem}\label{lb.8}
(i) $  H\hLamop\hSigoi[\Lambda_0\frac{1}{T}\sum_{t=1}^Tf_tu_t'+(\Lambda_0\frac{1}{T}\sum_{t=1}^Tf_tu_t')']\hSigoi\hLamo H=O_p(m_NT^{-1/2}(\log N)^{1/2}\omega_T^{1-q}).$\\
(ii) $H\hLamop\hSig_u^{-1}(S_u-\hSigo)\hSigoi\hLamo H=O_p(m_N\omega_T^{1-q}T^{-1/2}(\log N)^{1/2}+m_N\omega_T^{1-q}N^{-1}).$
\end{lem}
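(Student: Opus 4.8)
Both bounds will be obtained by expanding the step-one estimators about the truth. Write $\Delta=\hSigoi-\Siguni$, so that $\|\Delta\|=O_p(m_N\omega_T^{1-q})$ by Theorem~\ref{thb.1}, and recall $\|\hLamo-\Lambda_0\|_F=O_p(\sqrt N\,m_N\omega_T^{1-q})$ from Theorem~\ref{th3.2}, together with the uniform bounds $\|H\|=O_p(N^{-1})$, $\|\hLamo\|_F=\|\Lambda_0\|_F=O_p(\sqrt N)$, $\|\hSigoi\|=O_p(1)$, $\|\Xi\|=O(\sqrt N)$ where $\Xi=\Lambda_0'\Siguni=(\xi_1,\dots,\xi_N)$, $\|\frac1T\sum_t f_t u_t'\|_F=O_p(\sqrt{N\log N/T})$ (Lemma~\ref{la.1}) and $\|S_u\|=O_p(NT^{-1/2}(\log N)^{1/2})$ (as in the proof of Lemma~\ref{la.3}). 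In each part I would single out the one ``clean'' term obtained by replacing every $\hLamo$ by $\Lambda_0$ and every $\hSigoi$ by $\Siguni$, show that this term is $O_p((NT)^{-1/2})$ using Assumption~\ref{ass3.10}, and control every remaining term by naive multiplicative inequalities; since each such term then carries a spare factor $\|\Delta\|$ or $N^{-1/2}\|\hLamo-\Lambda_0\|_F$, it is $O_p(m_N\omega_T^{1-q}\,T^{-1/2}(\log N)^{1/2})$, which is the promised gain over Lemma~\ref{lb.3}.

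\textbf{Part (i).} Put $G=\frac1T\sum_t f_t u_t'$. Because $\Lambda_0 G+G'\Lambda_0'$, $\hSigoi$ and $H$ are symmetric, the expression equals $M+M'$ with $M=H\hLamop\hSigoi\Lambda_0\,G\,\hSigoi\hLamo H$. Since $\hLamop\hSigoi\Lambda_0=H^{-1}-\hLamop\hSigoi(\hLamo-\Lambda_0)$, we have $H\hLamop\hSigoi\Lambda_0=I_r-H\hLamop\hSigoi(\hLamo-\Lambda_0)$ with the correction of Frobenius norm $O_p(m_N\omega_T^{1-q})$; multiplied by the crude bound $\|G\hSigoi\hLamo H\|_F\le\|G\|_F\|\hSigoi\|\|\hLamo\|_F\|H\|=O_p(T^{-1/2}(\log N)^{1/2})$, this contributes $O_p(m_N\omega_T^{1-q}T^{-1/2}(\log N)^{1/2})$. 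For $G\hSigoi\hLamo H$ I would split $\hSigoi\hLamo=\Siguni\Lambda_0+\Delta\Lambda_0+\hSigoi(\hLamo-\Lambda_0)$: the last two terms are $O_p(T^{-1/2}(\log N)^{1/2}m_N\omega_T^{1-q})$ by the same crude multiplications, while the leading term is $G\Siguni\Lambda_0 H=\big(\frac1T\sum_t(\Xi u_t)f_t'\big)'H$, and Assumption~\ref{ass3.10}(iii) gives $\frac1T\sum_t(\Xi u_t)f_t'=\frac1T\sum_{i,t}\xi_i u_{it}f_t'=O_p(\sqrt{N/T})$, so with $\|H\|=O_p(N^{-1})$ this is $O_p((NT)^{-1/2})$, which is dominated. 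Hence $M=O_p(m_N\omega_T^{1-q}T^{-1/2}(\log N)^{1/2})$ and (i) follows.

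\textbf{Part (ii).} Using $\hSigoi\hSigo=I_N$, $\hLamop\hSigoi(S_u-\hSigo)\hSigoi\hLamo=\hLamop\hSigoi S_u\hSigoi\hLamo-\hLamop\hSigoi\hLamo$, so the quantity of interest equals $H\hLamop\hSigoi S_u\hSigoi\hLamo H-H$. Write $\hLamop\hSigoi=\Xi+\Psi$ with $\Psi=(\hLamo-\Lambda_0)'\Siguni+\Lambda_0'\Delta+(\hLamo-\Lambda_0)'\Delta$, so $\|\Psi\|_F=O_p(\sqrt N\,m_N\omega_T^{1-q})$, and expand $(\Xi+\Psi)S_u(\Xi+\Psi)'$. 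The cross terms $H\Xi S_u\Psi'H$, $H\Psi S_u\Xi'H$ and the quadratic term $H\Psi S_u\Psi'H$ are bounded by $\|H\|^2\|\Xi\|\|S_u\|\|\Psi\|_F$ (resp. $\|H\|^2\|\Psi\|_F^2\|S_u\|$), which are $O_p(m_N\omega_T^{1-q}T^{-1/2}(\log N)^{1/2})$, the quadratic one being of smaller order since $m_N\omega_T^{1-q}=o(1)$. For the clean part, $\Xi S_u\Xi'=\Xi(S_u-\Sigun)\Xi'+\Lambda_0'\Siguni\Lambda_0$, and writing $\Lambda_0'\Siguni\Lambda_0=H^{-1}-G_0$ with $G_0=\hLamop\hSigoi\hLamo-\Lambda_0'\Siguni\Lambda_0$ yields $H\Xi S_u\Xi'H-H=H\Xi(S_u-\Sigun)\Xi'H-HG_0H$. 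Assumption~\ref{ass3.10}(i) rearranges to $\Xi(S_u-\Sigun)\Xi'=O_p(N^{3/2}T^{-1/2})$, so the first piece is $O_p((NT)^{-1/2})$; and expanding $G_0$ about $(\Lambda_0,\Siguni)$, its dominant contribution $\Lambda_0'\Siguni(\hLamo-\Lambda_0)+(\hLamo-\Lambda_0)'\Siguni\Lambda_0$ has Frobenius norm $O_p(Nm_N\omega_T^{1-q})$ (every other piece of $G_0$ being of the same or smaller order), whence $\|HG_0H\|_F=O_p(m_N\omega_T^{1-q}/N)$. Summing, and noting $(NT)^{-1/2}$ is dominated by $m_N\omega_T^{1-q}T^{-1/2}(\log N)^{1/2}$, gives the claimed bound.

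\textbf{Main obstacle.} The only genuinely non-routine step — and the whole reason this improves on Lemma~\ref{lb.3} — is recognizing that the two ``clean'' sandwiched quantities $G\Siguni\Lambda_0 H$ and $H\Xi(S_u-\Sigun)\Xi'H$ are of order $(NT)^{-1/2}$ rather than the naive $T^{-1/2}(\log N)^{1/2}$, which forces one to invoke the averaged CLT-type hypotheses Assumption~\ref{ass3.10}(iii) and (i) rather than crude Frobenius or spectral bounds. After that, each proof is mechanical bookkeeping to check that every remaining summand picks up an extra factor $\|\hSigoi-\Siguni\|$ or $N^{-1/2}\|\hLamo-\Lambda_0\|_F$.
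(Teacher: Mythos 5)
Your proposal is correct and follows essentially the same route as the paper: replace $\hLamop\hSigoi$ by $\Lambda_0'\Siguni$ (and $\hSigo$ by $\Sigun$) using the rates $\|\hLamop\hSigoi-\Lambda_0'\Siguni\|_F=O_p(\sqrt{N}m_N\omega_T^{1-q})$ and $\|\hSigoi-\Siguni\|=O_p(m_N\omega_T^{1-q})$, bound the resulting remainders by crude norm inequalities, and apply the averaged conditions of Assumption~\ref{ass3.10}(iii) and (i) to the clean terms $\frac{1}{T}\sum_{i,t}\xi_iu_{it}f_t'H$ and $H\Xi(S_u-\Sigun)\Xi'H$ to get the $O_p((NT)^{-1/2})$ gain. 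The only differences (factor-by-factor substitution, and the identity $H\hLamop\hSigoi\hSigo\hSigoi\hLamo H=H$ in part (ii) in place of the paper's direct bound on $\|S_u-\hSigo\|$) are bookkeeping variants that yield the same error terms.
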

\begin{proof}
(i) By Theorem  3.2,  
\begin{equation}\label{eqb.6add}
\|\hLamo-\Lambda_0\|_F=O_p(\sqrt{N}m_N\omega_T^{1-q})=\|\hLamop\hSigoi-\Lambda_0'\Sigun^{-1}\|_F.
\end{equation}
 Therefore the RHS of part (i) equals
\begin{equation}
H\Lambda_0'\Siguni\left[\Lambda_0\frac{1}{T}\sum_{t=1}^Tf_tu_t'+(\Lambda_0\frac{1}{T}\sum_{t=1}^Tf_tu_t')'\right]\Siguni\Lambda_0 H+O_p(m_N\sqrt{\frac{\log N}{T}}\omega_T^{1-q}).
\end{equation}
Now it follows from Assumption  3.10 that
$$
\frac{1}{NT}\sum_{t=1}^Tf_tu_t' \Sigma_{u0}^{-1}\Lambda_0=\frac{1}{NT}\sum_{t=1}^T\sum_{i=1}^Nf_tu_{it}\xi_i'=O_p(\frac{1}{\sqrt{NT}})=O_p(m_N\sqrt{\frac{\log N}{T}}\omega_T^{1-q}),
$$
which then yields the desired result.

(ii)  Recall that $\|S_u-\hSigo\|=O_p(NT^{-1/2}(\log N)^{1/2}+m_N\omega_T^{1-q})$ and that 

$\|\hLamop\hSigoi-\Lambda_0'\Sigun^{-1}\|_F=O_p(\sqrt{N}m_N\omega_T^{1-q})$. 
By Theorem \ref{thb.1}, the RHS of (ii) equals
$$
H\Lambda_0'\Siguni(S_u-\Sigun)\Siguni\Lambda_0 H+O_p(m_N\omega_T^{1-q}\sqrt{\frac{\log N}{T}}+\frac{m_N\omega_T^{1-q}}{N}).
$$
By Assumption  3.10 (note that $H=O_p(N^{-1})$), 
$$H\Lambda_0'\Siguni(S_u-\Sigun)\Siguni\Lambda_0 H=\frac{1}{T}H\sum_{i\leq N,j\leq N}\sum_{t\leq T}(u_{it}u_{jt}-Eu_{it}u_{jt})\xi_i\xi_j'H=O_p(\frac{1}{\sqrt{NT}}).$$

\end{proof}

\begin{lem}\label{lb.9}
$J=O_p(m_N^2\omega_T^{2-2q})$.
\end{lem}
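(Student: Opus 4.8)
The plan is to sharpen the two relations used in the proof of Lemma~\ref{lb.3} --- the first order condition (\ref{eqb.2}) and the off-diagonal identification relation (\ref{eqb.4}) --- by inserting the refined rates that are now available: $\|\hLamo-\Lambda_0\|_F=O_p(\sqrt{N}\,m_N\omega_T^{1-q})$ from Theorem~\ref{th3.2}, and $N^{-1}\|\Lambda_0'(\hSigoi-\Siguni)\Lambda_0\|_F=O_p(m_N^2\omega_T^{2-2q})$ from Theorem~\ref{l3.1}. Unlike in the proof of Lemma~\ref{lb.3}, no contradiction argument is needed here, precisely because the rate for $\|\hLamo-\Lambda_0\|_F$ is already in hand.

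First I would rewrite (\ref{eqb.2}) as $J+J'=-J'J-H\hLamop(\hSigo)^{-1}B(\hSigo)^{-1}\hLamo H$, with $B=\Lambda_0\frac{1}{T}\sum_{t=1}^T f_tu_t'+(\Lambda_0\frac{1}{T}\sum_{t=1}^T f_tu_t')'+S_u-\hSigo-\bar u\bar u'$, and bound the right side term by term. By Lemma~\ref{lb.3}, $\|J'J\|_F\le\|J\|_F^2=O_p(m_N^2\omega_T^{2-2q})$. Lemma~\ref{lb.8}(i) bounds the contribution of the two $\Lambda_0\frac{1}{T}\sum_{t=1}^T f_tu_t'$ pieces by $O_p(m_NT^{-1/2}(\log N)^{1/2}\omega_T^{1-q})$, Lemma~\ref{lb.8}(ii) bounds that of $S_u-\hSigo$ by $O_p(m_N\omega_T^{1-q}T^{-1/2}(\log N)^{1/2}+m_N\omega_T^{1-q}N^{-1})$, and the $\bar u\bar u'$ term contributes $O_p(\log N/T)$, exactly as in the proof of Lemma~\ref{lb.3}. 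Since $m_N\ge 1$, $T^{-1/2}(\log N)^{1/2}\le\omega_T$ and $N^{-1}\le\omega_T^2$, each of these is $O_p(m_N^2\omega_T^{2-2q})$, so $J+J'=O_p(m_N^2\omega_T^{2-2q})$; in particular $J_{ii}=O_p(m_N^2\omega_T^{2-2q})$ for all $i\le r$.

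Next I would turn to (\ref{eqb.4}), $\text{ndg}\{HJ+J'H\}=\text{ndg}\{-H\Lambda_0'(\hSigoi-\Siguni)\Lambda_0H+HXH\}$ with $X=(\hLamo-\Lambda_0)'\hSigoi(\hLamo-\Lambda_0)$. Using $\|H\|=O_p(N^{-1})$ together with Theorem~\ref{l3.1}, $\|H\Lambda_0'(\hSigoi-\Siguni)\Lambda_0H\|\le\|H\|^2\|\Lambda_0'(\hSigoi-\Siguni)\Lambda_0\|_F=O_p(N^{-2})\cdot O_p(Nm_N^2\omega_T^{2-2q})=O_p(N^{-1}m_N^2\omega_T^{2-2q})$; and by Theorem~\ref{th3.2}, $\|X\|_F=O_p(\|\hLamo-\Lambda_0\|_F^2)=O_p(Nm_N^2\omega_T^{2-2q})$, so $\|HXH\|=O_p(N^{-1}m_N^2\omega_T^{2-2q})$. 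Hence, for each fixed pair $i\ne j$, $h_{ii}J_{ij}+h_{jj}J_{ji}=O_p(N^{-1}m_N^2\omega_T^{2-2q})$, where $h_{ii}$ denotes the $i$th diagonal entry of $H$.

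Finally, I combine the two displays for a fixed $i\ne j$: substituting $J_{ji}=-J_{ij}+O_p(m_N^2\omega_T^{2-2q})$ into the last identity and using $h_{jj}=O_p(N^{-1})$ gives $(h_{ii}-h_{jj})J_{ij}=O_p(N^{-1}m_N^2\omega_T^{2-2q})$. Exactly as in the proof of Lemma~\ref{lb.3}, the diagonal entries of $N^{-1}\hLamop\hSigoi\hLamo$ are distinct with differences bounded away from zero, so wpa1 $|h_{ii}-h_{jj}|\ge cN^{-1}$ for some $c>0$, whence $J_{ij}=O_p(m_N^2\omega_T^{2-2q})$. Since $r$ is fixed, this together with the bound on the $J_{ii}$ yields $\|J\|_F=O_p(m_N^2\omega_T^{2-2q})$, which is the claim. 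The only genuinely delicate point is the bookkeeping that every error term produced by $B$ and by (\ref{eqb.4}) is dominated by $m_N^2\omega_T^{2-2q}$ --- this is where Theorem~\ref{l3.1}, the new ingredient relative to Lemma~\ref{lb.3}, enters --- together with the quantitative separation of the $h_{ii}$ needed to invert the resulting $2\times 2$ system.
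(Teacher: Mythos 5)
Your proposal is correct and follows essentially the same route as the paper: it combines the first order condition (\ref{eqb.2}) with Lemma \ref{lb.8} to control $J+J'$ (hence the diagonal of $J$), and the identification relation (\ref{eqb.4}) with Theorem \ref{l3.1} and the rate $\|\hLamo-\Lambda_0\|_F=O_p(\sqrt{N}m_N\omega_T^{1-q})$ to control $\text{ndg}\{HJ+J'H\}$, then solves the resulting $2\times 2$ system using the separation of the diagonal entries of $H$. The only differences are presentational — you bound $J'J$ explicitly via Lemma \ref{lb.3} rather than dismissing it as lower order, and you write out the inversion step in more detail.
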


\begin{proof}
By (\ref{eqb.2}) and Lemma \ref{lb.8}, ignoring the smaller order $J'J$, we have
$$
J+J'=O_p(m_N\omega_T^{1-q}\sqrt{\frac{\log N}{T}}+\frac{m_N\omega_T^{1-q}}{N}).
$$  This implies that $J_{ii}=O_p(m_N\omega_T^{1-q}(T^{-1/2}(\log N)^{1/2}+N^{-1})).$ 

Moreover, since $H(\hLamo-\Lambda_0)'\hSigoi(\hLamo-\Lambda_0)H=O_p(N^{-1}m_N^2\omega_T^{2-2q})$, (\ref{eqb.4}) and Theorem  3.4 imply $\text{ndg}\{HJ+J'H\}=O_p(N^{-1}m_N^2\omega_T^{2-2q})$. Therefore for $i\neq j$, $J_{ij}=O_p(m_N^2\omega_T^{2-2q}+m_N\omega_T^{1-q}(T^{-1/2}(\log N)^{1/2}+N^{-1}))=O_p(m_N^2\omega_T^{2-2q}).$
The desired result follows immediately.
\end{proof}

\subsubsection{Improved rate for $\hlam_j^{(1)}$}

\begin{lem}\label{lb.10}
(i) $H\hLamop\hSigoi T^{-1}\sum_{t=1}^T(u_{t}u_{jt}-\hSig_{u,j}^{(1)})=O_p(m_N\omega_T^{2-q}+m_N^2\omega_T^{2-2q}N^{-1/2})$.\\
(ii) $H\hLamop\hSigoi T^{-1}\sum_{t=1}^Tu_tf_t'\lamj=O_p(m_N\omega_T^{1-q}T^{-1/2}(\log N)^{1/2}).$
\end{lem}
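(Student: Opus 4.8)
The plan is to reduce both displays to Assumption~\ref{ass3.10} by replacing the random weight $H\hLamop\hSigoi$ with its population counterpart $H\Xi$, where $\Xi=\Lambda_0'\Siguni=(\xi_1,\dots,\xi_N)$. The replacement error will be bounded throughout by $\|H\|\cdot\|\hLamop\hSigoi-\Xi\|_F$ times the Euclidean length of whatever vector the weight is hitting, using $\|\hLamop\hSigoi-\Xi\|_F=O_p(\sqrt N\,m_N\omega_T^{1-q})$ from (\ref{eqb.6add}), $\|H\|=O_p(N^{-1})$, $\max_j\|\xi_j\|=O(1)$, $\lambda_{\max}(\Sigun)=O(1)$, and $\max_j\|\hlam_j^{(1)}-\lamj\|=O_p(m_N\omega_T^{1-q})$ from Theorem~\ref{th3.2}. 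I will freely use the elementary inequalities $N^{-1}\le\omega_T^2$, $(NT)^{-1/2}\le\omega_T^{2-q}$, and $N^{-1/2}m_N\omega_T^{1-q}\le m_N\omega_T^{2-q}$, all of which follow from $\omega_T\ge N^{-1/2}$ and $\omega_T\ge\sqrt{(\log N)/T}\ge T^{-1/2}$ together with $q<1$ and $m_N\ge1$.

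For part (i) the crucial observation is that $\hSig_{u,j}^{(1)}$ is the $j$th column of $\hSigo$, so $\hSigoi\hSig_{u,j}^{(1)}=e_j$ and hence $\hLamop\hSigoi\hSig_{u,j}^{(1)}=\hLamop e_j=\hlam_j^{(1)}$. Writing $S_{u,\cdot j}=T^{-1}\sum_{t=1}^T u_tu_{jt}$, the left side of (i) equals $H\hLamop\hSigoi S_{u,\cdot j}-H\hlam_j^{(1)}$, and the second term is $O_p(N^{-1})$ because $\|H\hlam_j^{(1)}\|\le\|H\|(\|\lamj\|+\max_j\|\hlam_j^{(1)}-\lamj\|)$. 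For the first term I would split $S_{u,\cdot j}=\Sigma_{u0,j}+(S_{u,\cdot j}-\Sigma_{u0,j})$ and, within each piece, write $\hLamop\hSigoi=\Xi+(\hLamop\hSigoi-\Xi)$. Since $\Xi\Sigma_{u0,j}=\Lambda_0'e_j=\lamj$, the $\Sigma_{u0,j}$-piece contributes $H\Xi\Sigma_{u0,j}=O_p(N^{-1})$ plus replacement error $\le\|H\|\,\|\hLamop\hSigoi-\Xi\|_F\,\|\Sigma_{u0,j}\|=O_p(N^{-1/2}m_N\omega_T^{1-q})=O_p(m_N\omega_T^{2-q})$; and $\Xi(S_{u,\cdot j}-\Sigma_{u0,j})=T^{-1}\sum_{i\le N,\,t\le T}\xi_i(u_{it}u_{jt}-Eu_{it}u_{jt})=O_p(\sqrt{N/T})$ by Assumption~\ref{ass3.10} (the bound on $\sum_i\xi_i(u_{it}u_{jt}-Eu_{it}u_{jt})$), so $H\Xi(S_{u,\cdot j}-\Sigma_{u0,j})=O_p((NT)^{-1/2})$, while its replacement error is $O_p(N^{-1}\cdot\sqrt N\,m_N\omega_T^{1-q}\cdot\sqrt N\,\omega_T)=O_p(m_N\omega_T^{2-q})$ once $\|S_{u,\cdot j}-\Sigma_{u0,j}\|\le\sqrt N\max_{i,k}|T^{-1}\sum_t(u_{it}u_{kt}-Eu_{it}u_{kt})|=O_p(\sqrt N\,\omega_T)$ via Lemma~\ref{la.1}(ii). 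Collecting the terms gives the asserted rate (indeed just $O_p(m_N\omega_T^{2-q})$, which is contained in it).

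Part (ii) is the same idea with no thresholded object present, so one replaces $\hLamop\hSigoi$ by $\Xi$ directly. By Lemma~\ref{la.1}(iii), $\|T^{-1}\sum_t u_tf_t'\lamj\|\le\|T^{-1}\sum_t u_tf_t'\|_F\,\|\lamj\|=O_p(\sqrt{N(\log N)/T})$, so the replacement error is $\le\|H\|\,\|\hLamop\hSigoi-\Xi\|_F\,O_p(\sqrt{N(\log N)/T})=O_p(m_N\omega_T^{1-q}\sqrt{(\log N)/T})$, which is exactly the claimed bound; and the leftover main term $H\Xi\,T^{-1}\sum_t u_tf_t'\lamj=T^{-1}H(\sum_{i\le N,\,t\le T}\xi_i u_{it}f_t')\lamj=O_p((NT)^{-1/2})$ by Assumption~\ref{ass3.10} (the bound on $\sum_i\xi_i u_{it}f_t'$), which is of smaller order.

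The only genuinely delicate point would be, in part (i), to control the POET error directly: had one not exploited $\hLamop\hSigoi\hSig_{u,j}^{(1)}=\hlam_j^{(1)}$, one would be forced to bound $H\Xi(\Sigma_{u0,j}-\hSig_{u,j}^{(1)})$ (a single column of $\Xi(\Sigun-\hSigo)$), for which the naive operator-norm bound $O_p(N^{-1/2}m_N\omega_T^{1-q})$ is not sharp enough, and a correct estimate would require a column-wise refinement of Lemma~\ref{lb.6} that tracks the PCA-estimated residuals. The identity sidesteps this completely, so that what remains is the already-available uniform loading rate, Lemma~\ref{la.1}, and two applications of Assumption~\ref{ass3.10}, plus routine bookkeeping with the powers of $\omega_T$.
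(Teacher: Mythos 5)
Your proof is correct and follows essentially the same route as the paper's: replace the estimated weight $H\hLamop\hSigoi$ by $H\Xi$ using $\|\hLamop\hSigoi-\Lambda_0'\Siguni\|_F=O_p(\sqrt{N}m_N\omega_T^{1-q})$, bound the replacement error by crude norm inequalities, and invoke Assumption \ref{ass3.10} for the leading terms. The only departure is the identity $\hSigoi\hSig_{u,j}^{(1)}=e_j$ in part (i), which lets you bypass the column error $\hSig_{u,j}^{(1)}-\Sigma_{u0,j}$ and actually yields the slightly sharper bound $O_p(m_N\omega_T^{2-q})$ (the paper instead absorbs that column into the replacement error, producing the extra $m_N^2\omega_T^{2-2q}N^{-1/2}$ term); note, however, that your closing remark is off --- the naive bound $O_p(N^{-1/2}m_N\omega_T^{1-q})$ on $H\Xi(\Sigma_{u0,j}-\hSig_{u,j}^{(1)})$ \emph{is} already sharp enough, since $N^{-1/2}\leq\omega_T$ makes it $O_p(m_N\omega_T^{2-q})$.
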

\begin{proof}
(i) We have, $\|\hLamop\hSigoi-\Lambda_0'\Sigun^{-1}\|_F=O_p(\sqrt{N}m_N\omega_T^{1-q})$.  Hence $
H(\hLamop\hSigoi-\Lambda_0'\Sigun^{-1})T^{-1}\sum_{t=1}^T(u_{t}u_{jt}-\hSig_{u,j}^{(1)})=O_p(m_N\omega_T^{1-q}T^{-1/2}(\log N)^{1/2}+N^{-1/2}m_N^2\omega_T^{2-2-q}).
$ Hence part (i) equals
$$
H\Lambda_0'\Sigun^{-1}T^{-1}\sum_{t=1}^T(u_{t}u_{jt}-E(u_{t}u_{jt}))+O_p(m_N\omega_T^{2-q}+\frac{m_N^2\omega_T^{2-2q}}{\sqrt{N}})
$$
where $O_p(\cdot)$ is uniform in $j\leq N.$ By Assumption  3.10,  for each $j\leq N$, $$H\Lambda_0'\Sigun^{-1}T^{-1}\sum_{t=1}^T(u_{t}u_{jt}-E(u_{t}u_{jt}))=H\frac{1}{T}\sum_{i=1}^N\sum_{t=1}^T\xi_i(u_{it}u_{jt}-E(u_{t}u_{jt}))=O_p((NT)^{-1/2}).$$

(ii) We have $\|H(\hLamop\hSigoi-\Lambda_0'\Sigun^{-1})T^{-1}\sum_{t=1}^Tu_{t}f_t'\|_F=O_p(m_N\omega_T^{1-q}T^{-1/2}(\log N)^{1/2})$. Hence (ii) equals
$$
H\Lambda_0'\Sigun^{-1}\frac{1}{T}\sum_{t=1}^Tu_tf_t'\lamj+O_p(m_N\omega_T^{1-q}\sqrt{\frac{\log N}{T}}).
$$
By Assumption  3.10,  the first term equals $NH(NT)^{-1}\sum_{i=1}^N\sum_{t=1}^T\xi_iu_{it}f_t'\lamj=O_p((NT)^{-1/2})$, which yields the desired result.
\end{proof}

\begin{lem}\label{lb.11}
 For each fixed $j\leq N$, $$\hlam_j^{(1)}-\lamj=H\hLamop\hSigoi\Lambda_0\frac{1}{T}\sum_{t=1}^Tf_tu_{jt}+ O_p(m_N^2\omega_T^{2-2q}).$$
\end{lem}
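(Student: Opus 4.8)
\textbf{Proof proposal for Lemma \ref{lb.11}.} The plan is to start from the first-order-condition expansion \eqref{eqb.2FOC}, namely $\hlam_j^{(1)}-\lamj=-J'\lamj+H\hLamop(\hSigo)^{-1}a_j$, substitute the four summands of
$a_j=\Lambda_0 T^{-1}\sum_{t=1}^T f_t u_{jt}+T^{-1}\sum_{t=1}^T(u_tu_{jt}-\hSig_{u,j}^{(1)})+T^{-1}\sum_{t=1}^T u_tf_t'\lamj-\bar u\bar u_j$, retain the first one as the advertised leading term $H\hLamop\hSigoi\Lambda_0\frac{1}{T}\sum_{t=1}^T f_t u_{jt}$, and then argue that $-J'\lamj$ together with the contributions of the remaining three summands are all $O_p(m_N^2\omega_T^{2-2q})$.

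For $-J'\lamj$: Assumption \ref{ass3.2}(ii) gives $\|\lamj\|<c_1$ uniformly, and Lemma \ref{lb.9} gives $J=O_p(m_N^2\omega_T^{2-2q})$, so $-J'\lamj=O_p(m_N^2\omega_T^{2-2q})$. For the second summand, $H\hLamop\hSigoi\,T^{-1}\sum_{t=1}^T(u_tu_{jt}-\hSig_{u,j}^{(1)})$, Lemma \ref{lb.10}(i) bounds it by $O_p(m_N\omega_T^{2-q}+m_N^2\omega_T^{2-2q}N^{-1/2})$; since $m_N\ge 1$, $0\le q<1$, and $\omega_T=o(1)$ so that $\omega_T^q\le 1$, we have $m_N\omega_T^{2-q}=m_N\omega_T^{2-2q}\omega_T^q\le m_N^2\omega_T^{2-2q}$ and $m_N^2\omega_T^{2-2q}N^{-1/2}\le m_N^2\omega_T^{2-2q}$, so this term is $O_p(m_N^2\omega_T^{2-2q})$. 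For the third summand, Lemma \ref{lb.10}(ii) gives $H\hLamop\hSigoi\,T^{-1}\sum_{t=1}^T u_tf_t'\lamj=O_p(m_N\omega_T^{1-q}\sqrt{(\log N)/T})$, and since $\sqrt{(\log N)/T}\le\omega_T$ this is at most $m_N\omega_T^{2-q}=O_p(m_N^2\omega_T^{2-2q})$. For the fourth summand, the estimate $\max_{j\le N}\|H\hLamop\hSigoi\bar u\bar u_j\|=O_p((\log N)/T)$ from the proof of Lemma \ref{lb.2} applies, and $(\log N)/T\ll\omega_T^2\le m_N^2\omega_T^{2-2q}$. Assembling these four bounds in \eqref{eqb.2FOC} gives the stated identity.

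There is no genuinely new analytic difficulty at this stage: all the heavy lifting has already been carried out, chiefly in Lemma \ref{lb.9} (which upgrades the rate of $J$ by invoking the refined averaged-error bound $N^{-1}\Lambda_0'(\hSigoi-\Sigun^{-1})\Lambda_0=O_p(m_N^2\omega_T^{2-2q})$ of Theorem \ref{th3.4}/Lemma \ref{l3.1}) and in Lemma \ref{lb.10}. Thus the only obstacle here is the bookkeeping of checking that every residual rate coming out of these lemmas is dominated by $m_N^2\omega_T^{2-2q}$, which reduces to the elementary inequalities $m_N\ge 1$, $\omega_T=o(1)$, $0\le q<1$, and $\sqrt{(\log N)/T}\le\omega_T\le m_N\omega_T$.
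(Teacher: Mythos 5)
Your proposal is correct and follows essentially the same route as the paper: the paper's proof likewise starts from the first-order condition (\ref{eqb.2FOC}) and simply observes that the terms in Lemma \ref{lb.10}(i)--(ii) and the bound on $J$ from Lemma \ref{lb.9} are all dominated by $O_p(m_N^2\omega_T^{2-2q})$. Your write-up merely spells out the elementary rate comparisons ($m_N\ge1$, $\omega_T^q\le1$, $\sqrt{(\log N)/T}\le\omega_T$) that the paper leaves implicit.
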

\begin{proof}
Note that those two terms in Lemma \ref{lb.10} (i) (ii) are dominated by $O_p(m_N^2\omega_T^{2-2q})$. Therefore, the desired expansion follows from  the first order condition (\ref{eqb.2FOC}) and Lemma \ref{lb.9}.
\end{proof}

\subsubsection{Proof of Theorem  3.5}
By Lemma \ref{lb.11}, and (\ref{eqb.6add})
 \begin{eqnarray*}\hlam_j^{(1)}-\lamj&=&H\hLamop\hSigoi(\Lambda_0-\hLamo)\frac{1}{T}\sum_{t=1}^Tf_tu_{jt}+ \frac{1}{T}\sum_{t=1}^Tf_tu_{jt}+O_p(m_N^2\omega_T^{2-2q})\cr
 &=& \frac{1}{T}\sum_{t=1}^Tf_tu_{jt}+O_p(m_N^2\omega_T^{2-2q}+m_N\omega_T^{1-q}\sqrt{\frac{\log N}{T}})=\frac{1}{T}\sum_{t=1}^Tf_tu_{jt}+O_p(m_N^2\omega_T^{2-2q}).
 \end{eqnarray*}
By the assumption that $m_N^2\omega_T^{2-2q}=o(T^{-1/2})$, we have $\sqrt{T}(\hlam_j^{(1)}-\lamj)=T^{-1/2}\sum_{t=1}^Tf_tu_{jt}+o_p(1)$. The limiting distribution follows since
$$
T^{-1/2}\sum_{t=1}^Tf_tu_{jt}\rightarrow^d N_r(0,E(u_{jt}f_tf_t')).
$$

\subsection{Proof of Theorem  3.6}
 
 For any $t\leq T$, $y_t-\bar{y}=\Lambda_0f_t+u_t-\bar{u}$. Hence 
\begin{equation}\label{eqb.9}
\hfto-f_t=-J'f_t+(\hLamop\hSigoi\hLamo)^{-1}\hLamop\hSigoi(u_t-\bar{u}).
\end{equation}

\subsubsection{Convergence rate}
 Since both $f_t$ and $u_t$ have exponential tails, using Bonferroni's method we have, $\max_t\|f_t\|=O_p((\log T)^{1/r_2})$ and $\max_t\|u_t\|=O_p(\sqrt{N}(\log T)^{1/r_1})$.  Thus by Lemma \ref{lb.9}, $\max_{t\leq T}\|J'f_t\|=O_p(m_N^2\omega_T^{2-2q}(\log T)^{1/r_2})$. The term with $\bar{u}$ in (\ref{eqb.9}) is of smaller order hence is negligible. Also $
\|(\hLamop\hSigoi\hLamo)^{-1}(\hLamop\hSigoi-\Lambda_0'\Sigun^{-1})\|_F=O_p(N^{-1/2}m_N\omega_T^{1-q})$,  where we used  $\|\hLamop\hSigoi-\Lambda_0'\Sigun^{-1}\|_F=O_p(\sqrt{N}m_N\omega_T^{1-q})$.  Hence
$$
\max_{t\leq T}(\hLamop\hSigoi\hLamo)^{-1}\hLamop\hSigoi(u_t-\bar{u})=O_p(\frac{1}{N})\Lambda_0'\Sigun^{-1}u_t$$
$$+O_p(m_N^2\omega_T^{2-2q}(\log T)^{1/r_2}+m_N\omega_T^{1-q}(\log T)^{1/r_1})=O_p(\frac{1}{N})\Lambda_0'\Sigun^{-1}u_t+O_p(m_N\omega_T^{1-q}(\log T)^{1/r_1+1/r_2}).
$$
Finally, because $E(\frac{1}{N}\Lambda_0'\Sigun^{-1}u_tu_t'\Sigun^{-1}\Lambda_0)=\frac{1}{N}\Lambda_0'\Sigun^{-1}\Lambda_0$,  whose eigenvalues are bounded. Hence $\frac{1}{\sqrt{N}}\Lambda_0'\Sigun^{-1}u_t=O_p(1)$. Also,  $O_p(N^{-1/2})$ is of smaller order than\\ $O_p(m_N\omega_T^{1-q}(\log T)^{1/r_1+1/r_2+1})$. This implies
$$
\|\hfto-f_t\|=O_p(m_N\omega_T^{1-q}(\log T)^{1/r_1+1/r_2+1}).
$$
The above proof also shows that the rate can be made uniform if $\max_{t\leq T}\|\frac{1}{\sqrt{N}}\Lambda_0'\Sigun^{-1}u_t\|=O_p(\log T).$

\subsubsection{Asymptotic normality}

Recall that $\Xi=\Lambda_0'\Sigun^{-1}$ and $\beta_t=\Sigun^{-1}u_t$.
\begin{lem}\label{lb.12} For any fixed $t\leq T$,
$N^{-1/2}(\hLamo-\Lambda_0)'\Sigun^{-1} u_t=o_p(1)$.
\end{lem}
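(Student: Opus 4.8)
The plan is to pass from the pointwise first-order expansion of $\hlam_j^{(1)}$ to a statement about the whole vector by contracting against the coordinates $\beta_{jt}:=(\Siguni u_t)_j$ of $\beta_t=\Siguni u_t$. Since $(\hLamo-\Lambda_0)'\Siguni u_t=\sum_{j=1}^N(\hlam_j^{(1)}-\lamj)\beta_{jt}$, the expansion (\ref{eqb.2FOC}) gives
\[
N^{-1/2}(\hLamo-\Lambda_0)'\Siguni u_t=-N^{-1/2}J'\Lambda_0'\beta_t+N^{-1/2}H\hLamop\hSigoi\sum_{j=1}^N a_j\beta_{jt}.
\]
The first term is $-J'\cdot N^{-1/2}\Lambda_0'\Siguni u_t$, and it is $o_p(1)$ because $N^{-1/2}\Lambda_0'\Siguni u_t=O_p(1)$ (as $N^{-1}\Lambda_0'\Siguni\Lambda_0$ has bounded eigenvalues) and $J=O_p(m_N^2\omega_T^{2-2q})=o_p(1)$ by Lemma \ref{lb.9}. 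Hence it suffices to control the second term.

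Next I would evaluate $\sum_j a_j\beta_{jt}$ in closed form. With $S_u=T^{-1}\sum_s u_su_s'$ and $S_{fu}=T^{-1}\sum_s f_su_s'$ one finds
\[
\sum_{j=1}^N a_j\beta_{jt}=\Lambda_0 S_{fu}\beta_t+(S_u-\hSigo)\beta_t+S_{fu}'\Lambda_0'\beta_t-\bar u(\bar u'\beta_t),
\]
and I treat the four summands separately. For $\Lambda_0 S_{fu}\beta_t$, use $H\hLamop\hSigoi\Lambda_0=I_r-J'=O_p(1)$ (from $\hLamop\hSigoi\hLamo=H^{-1}$) together with $N^{-1/2}S_{fu}\beta_t=N^{-1/2}T^{-1}\sum_s\sum_i f_su_{is}\beta_{it}=O_p(T^{-1/2})$ by Assumption \ref{ass3.12}(i). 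For $S_{fu}'\Lambda_0'\beta_t$, split $\hLamop\hSigoi=\Lambda_0'\Siguni+(\hLamop\hSigoi-\Lambda_0'\Siguni)$: the leading part is controlled via $\Lambda_0'\Siguni S_{fu}'=O_p(\sqrt{N/T})$ (Assumption \ref{ass3.10}(iii)) and $\|\Lambda_0'\beta_t\|=O_p(\sqrt N)$, the remainder via the rate (\ref{eqb.6add}) and $\|S_{fu}'\|_F=O_p(\sqrt{(N\log N)/T})$ (Lemma \ref{la.1}(iii)), so after the prefactor $N^{-1/2}H=O_p(N^{-3/2})$ both are $o_p(1)$. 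The $\bar u$-term is of strictly smaller order — bound it crudely with $\|\bar u\|=O_p(\sqrt{N/T})$ and $N\ll T^2$ — and so is negligible.

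The substantive step, and the one I expect to be the main obstacle, is the $(S_u-\hSigo)\beta_t$ summand. Write $S_u-\hSigo=(S_u-\Sigun)-(\hSigo-\Sigun)$ and again split $\hLamop\hSigoi$, producing four sub-terms, each premultiplied by $N^{-1/2}H$. (i) The leading sub-term $N^{-1/2}H\,\Lambda_0'\Siguni(S_u-\Sigun)\Siguni u_t$ equals $N^{-1/2}H$ times $T^{-1}\sum_s\sum_i\sum_j\xi_i(u_{is}u_{js}-Eu_{is}u_{js})\beta_{jt}$, which is exactly $o_p(N^{3/2})$ by Assumption \ref{ass3.12}(i); since $\|H\|=O_p(N^{-1})$ this is $o_p(1)$. (ii) The cross sub-term $N^{-1/2}H(\hLamop\hSigoi-\Lambda_0'\Siguni)(S_u-\Sigun)\beta_t$ is bounded, using (\ref{eqb.6add}), the crude Frobenius bound $\|S_u-\Sigun\|_F=O_p(N\sqrt{(\log N)/T})$ (Lemma \ref{la.1}(ii)), $\|\beta_t\|=O_p(\sqrt N)$ and $\|H\|=O_p(N^{-1})$, by $O_p\!\big(N^{1/2}m_N\omega_T^{1-q}\sqrt{(\log N)/T}\big)$; the elementary estimates $\sqrt{(\log N)/T}\le\omega_T$, $\omega_T^q\le1$ and $m_N\le m_N^2$ then reduce this to $\sqrt N\,m_N^2\omega_T^{2-2q}=o(1)$ — this is precisely where the hypothesis $\sqrt N\,m_N^2\omega_T^{2-2q}=o(1)$ of Theorem \ref{th3.6} enters, and no naive bound on $S_u-\Sigun$ suffices without it. (iii)--(iv) The two sub-terms carrying the factor $\hSigo-\Sigun$ are controlled by Theorem \ref{thb.1}, $\|\hSigo-\Sigun\|=O_p(m_N\omega_T^{1-q})$, and evaluate to $O_p(N^{-1/2}m_N\omega_T^{1-q})$ and $O_p(N^{-1/2}m_N^2\omega_T^{2-2q})$ respectively, both $o_p(1)$. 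Summing all the estimates yields $N^{-1/2}(\hLamo-\Lambda_0)'\Siguni u_t=o_p(1)$.
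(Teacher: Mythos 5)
Your proposal is correct and follows essentially the same route as the paper's proof: both contract the first-order expansion (\ref{eqb.2FOC}) against $\beta_t=\Sigun^{-1}u_t$, dispose of the $J$-term via Lemma \ref{lb.9}, handle the $f$--$u$ cross terms and the leading $(S_u-\Sigun)$ term via Assumption \ref{ass3.12}(i), and control the remainder terms through the rate (\ref{eqb.6add}), Theorem \ref{thb.1}, and the hypothesis $\sqrt{N}\,m_N^2\omega_T^{2-2q}=o(1)$. Your write-up is somewhat more explicit about the sub-term bookkeeping than the paper's, but there is no substantive difference.
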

\begin{proof}  
We expand $\hLamo-\Lambda_0$ using the first order condition
\begin{equation}
(\hLamo-\Lambda_0)'=J\Lambda_0'+H\hLamop\hSigoi[\Lambda_0\frac{1}{T}\sum_{s=1}^Tf_su_s'+\frac{1}{T}\sum_{s=1}^Tu_sf_s'\Lambda_0'+S_u -\hSigo]
\end{equation}
 and investigate each term separately.  First of all, since $J=O_p(m_N^2\omega_T^{2-2q})$, and by assumption that $\Lambda_0'\Sigun^{-1}u_t=\sum_{i=1}^N\xi_iu_{it}=O_p(\sqrt{N})$, we have
$N^{-1/2}J\Lambda_0'\Sigun^{-1}u_t=O_p(m_N^2\omega_T^{2-2q}).$ Second,  by the assumption that $(TN)^{-1/2}\sum_{s=1}^Tf_su_s'\Sigun^{-1}u_t=O_p(1)$, we have
$$
\frac{1}{\sqrt{N}}H\hLamop\hSigoi\Lambda_0\frac{1}{T}\sum_{s=1}^Tf_su_s'\Sigun^{-1}u_t=O_p(\frac{1}{\sqrt{T}}).
$$
Third, $N^{-1/2}H\hLamop\hSigoi\frac{1}{T}\sum_{s=1}^Tu_sf_s'\Lambda_0'\Sigun^{-1}u_t=O_p(\sqrt{\log N/T}).$ Moreover, \\ $N^{-1/2}H(\hLamop\hSigoi-\Lambda_0'\Sigun^{-1})(S_u-\Sigun)\Sigun^{-1}u_t=O_p(m_N\omega_T^{1-q}\sqrt{N\log N/T})=o_p(1)$. Therefore,  by the assumption that $(NT\sqrt{N})^{-1}\sum_{i=1}^N\sum_{s=1}^T\xi_i(u_{is}u_s'-Eu_{is}u_s')\beta_t=o_p(1)$, we have,
\begin{eqnarray*}
&&\frac{1}{\sqrt{N}}H\hLamop\hSigoi(S_u-\Sigun)\Sigun^{-1}u_t=\frac{1}{\sqrt{N}}H\Lambda_0'\Sigun^{-1}(S_u-\Sigun)\Sigun^{-1}u_t+o_p(1)\cr
&&=\frac{1}{T\sqrt{N}}H\sum_{i=1}^N\sum_{j=1}^N\sum_{s=1}^T\xi_i(u_{is}u_{js}-Eu_{is}u_{js})\beta_{jt}=o_p(1).
\end{eqnarray*}
Finally, $N^{-1/2}H\hLamop\hSigoi(\hSigo-\Sigun)\Sigun^{-1}u_t=O_p(\frac{1}{\sqrt{N}}m_N\omega_N^{1-q}).$
\end{proof}

\begin{lem}\label{lb.13} For any fixed $t\leq T$,
$N^{-1/2}\Lambda_0'(\hSigoi-\Sigun^{-1})u_t=o_p(1)$
\end{lem}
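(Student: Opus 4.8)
The plan is to reduce the statement to the PCA--expansion estimates already developed for Theorem~\ref{l3.1}, with the $r$--vector $\xi_j'$ there replaced here by the scalar weight $\beta_{jt}$. First I would use the identity
$$\hSigoi-\Sigun^{-1}=\Sigun^{-1}(\Sigun-\hSigo)\Sigun^{-1}+(\hSigoi-\Sigun^{-1})(\Sigun-\hSigo)\Sigun^{-1},$$
so that, writing $\Xi=\Lambda_0'\Sigun^{-1}$ and $\beta_t=\Sigun^{-1}u_t$,
$$\frac{1}{\sqrt N}\Lambda_0'(\hSigoi-\Sigun^{-1})u_t=-\frac{1}{\sqrt N}\Xi(\hSigo-\Sigun)\beta_t+\frac{1}{\sqrt N}\Lambda_0'(\hSigoi-\Sigun^{-1})(\Sigun-\hSigo)\beta_t.$$
The last term is bounded by $N^{-1/2}\|\Lambda_0\|_F\,\|\hSigoi-\Sigun^{-1}\|\,\|\hSigo-\Sigun\|\,\|\Sigun^{-1}\|\,\|u_t\|$; since $\|\Lambda_0\|_F=O(\sqrt N)$, $E\|u_t\|^2=\tr(\Sigun)=O(N)$, and $\|\hSigo-\Sigun\|=\|\hSigoi-\Sigun^{-1}\|=O_p(m_N\omega_T^{1-q})$ by Theorem~\ref{thb.1}, this is $O_p(\sqrt N\,m_N^2\omega_T^{2-2q})=o_p(1)$ under the hypothesis $\sqrt N m_N^2\omega_T^{2-2q}=o(1)$ of Theorem~\ref{th3.6}.

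It then suffices to show $N^{-1/2}\Xi(\hSigo-\Sigun)\beta_t=N^{-1/2}\sum_{i,j}(\hSig_{u,ij}^{(1)}-\Sigma_{u0,ij})\xi_i\beta_{jt}=o_p(1)$, and I would split the double sum into the diagonal part $i=j$, the off-diagonal part over $(i,j)\in S_U$, and the part over $(i,j)\in S_L$. For the $S_L$ part, Theorem~\ref{th3.3} gives $\hSig_{u,ij}^{(1)}=0$ for all $(i,j)\in S_L$ on an event of probability tending to one, so on that event the $S_L$ contribution equals $-N^{-1/2}\sum_{(i,j)\in S_L}\Sigma_{u0,ij}\xi_i\beta_{jt}$; the expectation of its norm is at most $N^{-1/2}\max_i\|\xi_i\|\,\big(\max_j(\Sigun^{-1})_{jj}^{1/2}\big)\sum_{(i,j)\in S_L}|\Sigma_{u0,ij}|=O(N^{-1/2})$, using Assumption~\ref{ass3.8add}, the boundedness of $\|\xi_i\|$, and the boundedness of the eigenvalues of $\Sigun$, so the $S_L$ part is $o_p(1)$.

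For the diagonal part, $\hSig_{u,ii}^{(1)}=R_{ii}$, and I would write $R_{ii}-\Sigma_{u0,ii}=(R_{ii}-S_{u,ii})+(S_{u,ii}-\Sigma_{u0,ii})$ with $S_{u,ii}=T^{-1}\sum_s u_{is}^2$: the term $N^{-1/2}T^{-1}\sum_{i,s}(u_{is}^2-Eu_{is}^2)\xi_i\beta_{it}$ is $o_p(1)$ directly by Assumption~\ref{ass3.12}(i), and $R_{ii}-S_{u,ii}=T^{-1}\sum_s(\hu_{is}-u_{is})^2+2T^{-1}\sum_s u_{is}(\hu_{is}-u_{is})$, where the squared piece is bounded crudely by $\max_i T^{-1}\sum_s(\hu_{is}-u_{is})^2=O_p(\omega_T^2)$ (Lemma~\ref{lb.5}) times $N^{-1/2}\sum_i\|\xi_i\|\,|\beta_{it}|=O_p(\sqrt N)$, giving $O_p(\sqrt N\omega_T^2)=o_p(1)$ since $\omega_T^2\le m_N^2\omega_T^{2-2q}$, while the cross piece is expanded by inserting~(\ref{eqPCA}) for $\hu_{is}-u_{is}$ together with the representation $\fpca-\bar H f_t=V(NT)^{-1}\sum_s\widehat f_s^{PCA}[u_s'u_t+\cdots]$, which after Cauchy--Schwarz reduces to the $o_p(1)$ quantities of Assumption~\ref{ass3.12}(i)--(ii) and the $O_p$ rates of Lemma~\ref{lb.5}, exactly as in Lemmas~\ref{lb.5add} and~\ref{lb.6}(i). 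The off-diagonal $S_U$ part is handled the same way after writing $\hSig_{u,ij}^{(1)}-\Sigma_{u0,ij}=(\hSig_{u,ij}^{(1)}-R_{ij})+(R_{ij}-S_{u,ij})+(S_{u,ij}-\Sigma_{u0,ij})$: on the event of Lemma~\ref{lb.4} one has $|\hSig_{u,ij}^{(1)}-R_{ij}|=|s_{ij}(R_{ij})-R_{ij}|\le a\tau_{ij}^2=O_p(\omega_T^2)$ uniformly over $(i,j)\in S_U$, which contributes $O_p(\sqrt N\omega_T^2)=o_p(1)$ using $\sum_{i\ne j,(i,j)\in S_U}|\beta_{jt}|=O_p(D)=O_p(N)$; the sampling term $N^{-1/2}T^{-1}\sum_{i\ne j,(i,j)\in S_U}\sum_s(u_{is}u_{js}-Eu_{is}u_{js})\xi_i\beta_{jt}$ is $o_p(1)$ by Assumption~\ref{ass3.12}(i); and $R_{ij}-S_{u,ij}$ is expanded through~(\ref{eqPCA}) and reduced to Assumption~\ref{ass3.12}(ii) and Lemma~\ref{lb.5}, as in Lemmas~\ref{lb.81add} and~\ref{lb.6}(ii).

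I expect the main obstacle to be this last step: the bookkeeping in expanding the residual errors $\hu_{is}-u_{is}$ and the factor error $\fpca-\bar H f_t$ and then repeatedly applying Cauchy--Schwarz to isolate the moment terms listed in Assumption~\ref{ass3.12}. The crucial point is that those terms are assumed to be $o_p(1)$ rather than merely $O_p(1)$, which is precisely what forces the $N^{-1/2}$--normalized sums to vanish; extra care is needed for the off-diagonal $S_U$ sums, where only $D=O(N)$ (rather than a sharper bandedness structure) is available, so that with $d_j$ the number of $i\ne j$ with $(i,j)\in S_U$ one must control $\sum_j d_j|\beta_{jt}|$ by its expectation $\sum_j d_j(\Sigun^{-1})_{jj}^{1/2}=O(D)=O(N)$ rather than by Cauchy--Schwarz against $(\sum_j d_j^2)^{1/2}$.
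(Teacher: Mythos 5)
Your proposal is correct and follows essentially the same route as the paper: the paper's own proof of this lemma writes $N^{-1/2}\Lambda_0'(\hSigoi-\Sigun^{-1})u_t=N^{-1/2}\Xi(\hSigo-\Sigun)\beta_t+O_p(\sqrt{N}m_N^2\omega_T^{2-2q})$, splits $\Xi(\hSigo-\Sigun)\beta_t$ into the diagonal, off-diagonal $S_U$, and $S_L$ pieces, and then refers the reader to the arguments of Lemma \ref{lb.6} and Theorem \ref{l3.1} together with Theorem 3.3 and Assumption \ref{ass3.12}, which is exactly the bookkeeping you carry out. Your write-up in fact supplies the details the paper omits, and they check out (including the use of $\sqrt{N}m_N^2\omega_T^{2-2q}=o(1)$ for the remainder and the $E\sum_j d_j|\beta_{jt}|=O(D)$ bound for the $S_U$ sums).
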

\begin{proof}  We note that, 
$N^{-1/2}\Lambda_0'(\hSigoi-\Sigun^{-1})u_t=N^{-1/2}\Xi(\hSigo-\Sigun)\beta_t+
O_p(\sqrt{N}m_N^2\omega_T^{2-2q})$. On the other hand,
\begin{eqnarray*}
\frac{1}{\sqrt{N}}\Xi(\hSigo-\Sigun)\beta_t&=&\frac{1}{\sqrt{N}}\sum_{i=1}^N(R_{ii}-\Sigma_{u0,ii})\xi_i\beta_{it}+\frac{1}{\sqrt{N}}\sum_{i\neq j, (i,j)\in S_U}(\widehat{\Sigma}_{u,ij}^{(1)}-\Sigma_{u0,ij})\xi_i\beta_{jt}\cr
&&+\frac{1}{\sqrt{N}}\sum_{  (i,j)\in S_L}(\widehat{\Sigma}_{u,ij}^{(1)}-\Sigma_{u0,ij})\xi_i\beta_{jt}.
\end{eqnarray*}
The result of the proof is very similar to  that of Lemmas \ref{lb.6} and Theorem {l3.1}, based on the expansion (\ref{lb.6}) and Theorem  3.3, hence is omitted.
\end{proof}

\textbf{Proof of asymptotic normality}

We now fix $t$, then Lemma \ref{lb.9} gives $J'f_t=O_p(m_N^2\omega_T^{2-2q}).$ Hence $\sqrt{N}J'f_t$ is negligible as $\sqrt{N}m_N^2\omega_T^{2-2q}=o(1).$ Moreover, $(\hLamop\hSigoi\hLamo)^{-1}\hLamop\hSigoi\bar{u}$ is of smaller order of $(\hLamop\hSigoi\hLamo)^{-1}\hLamop\hSigoi u_t$, hence is negligible. Next,  $$\sqrt{N}(\hLamop\hSigoi\hLamo)^{-1}\hLamop\hSigoi u_t=\sqrt{N}(\Lambda_0'\Sigun^{-1}\Lambda_0)^{-1}\Lambda_0'\Sigun^{-1}u_t$$ 
$$
+O_p(N^{-1/2})(\hLamop\hSigoi-\Lambda_0'\Sigun^{-1})u_t+O_p(m_N\omega_T^{1-q})
$$
where we used $(\hLamop\hSigoi\hLamop)^{-1}-(\Lambda_0'\Sigun^{-1}\Lambda_0)^{-1}=O_p(N^{-1}m_N\omega_T^{1-q})$. By Lemmas \ref{lb.12} and \ref{lb.13}, $N^{-1/2}(\hLamop\hSigoi-\Lambda_0'\Sigun^{-1})u_t=o_p(1)$. This implies,  for each fixed $t$,
\begin{eqnarray*}
\sqrt{N}(\hfto-f_t)&=&\sqrt{N}(\Lambda_0'\Sigun^{-1}\Lambda_0)^{-1}\Lambda_0'\Sigun^{-1}u_t+O_p(\sqrt{N}m_N^2\omega_T^{2-2q}
+m_N\omega_T^{1-q})\cr
&=&\sqrt{N}(\Lambda_0'\Sigun^{-1}\Lambda_0)^{-1}\Lambda_0'\Sigun^{-1}u_t+o_p(1).
\end{eqnarray*}
The asymptotic normality then follows from the fact that 
$$N^{-1/2}\Lambda_0'\Sigun^{-1}u_t=\frac{1}{\sqrt{N}}\sum_{i=1}^N\xi_iu_{it}\rightarrow^dN(0,Q).$$


\section{Proofs of Section 4}

\subsection{Proof of Theorem  4.1}

Define
\begin{eqnarray*}
Q_1(\Sigma_u)&=&\frac{1}{N}\log|\Sigma_u|+\frac{1}{N}\tr(S_u\Sigma_u^{-1})+\frac{\mu_T}{N}\sum_{i\neq j} w_{ij}|\Sigma_{u, ij}|\cr
&&-\frac{1}{N}\log|\Sigma_{u0}|-\frac{1}{N}\tr(S_u\Sigma_{u0}^{-1})-\frac{\mu_T}{N}\sum_{i\neq j} w_{ij}|\Sigma_{u0, ij}|,
\end{eqnarray*}
Let $L_c(\Lambda,\Sigma_u)=L_2(\Lambda,\Sigma_u)-N^{-1}\log|\Sigma_{u0}|-N^{-1}\tr(S_u\Sigma_{u0}^{-1})-N^{-1}\mu_T\sum_{i\neq j} w_{ij}|\Sigma_{u0, ij}|$. Then the minimizer of $L_c$ is the same as that of $L_2.$ This implies $L_c(\hLamt,\hSigt)\leq L_c(\Lambda_0,\Sigun)$. Recall the definitions of $Q_2(\Lambda,\Sigma_u)$ and $Q_3(\Lambda,\Sigma_u).$ Then
$$
L_c(\Lambda, \Sigma_u)=Q_1(\Sigma_u)+Q_2(\Lambda,\Sigma_u)+Q_3(\Lambda,\Sigma_u).
$$

\begin{lem} \label{lc.1} There is a nonnegative stochastic sequence $0\leq d_T=O_p(
N^{-1}\log N+T^{-1/2}(\log N)^{1/2})$ such that  $Q_1(\hSigt)\leq d_T$ with probability one.
\end{lem}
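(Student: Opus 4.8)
The plan is to exploit the optimality of $(\hLamt,\hSigt)$ together with the decomposition $L_c=Q_1+Q_2+Q_3$, using that $Q_1$ and $Q_2$ both vanish at the true parameter while $Q_2\ge 0$ everywhere; in this way the entire penalty term (which sits inside $Q_1$) is controlled essentially for free.

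First I would record that $(\Lambda_0,\Sigun)$ is a feasible point, i.e. it lies in $\Theta_{\lambda}\times\Gamma$ for all large $N$, by the identification restriction that $\Lambda_0'\Sigun^{-1}\Lambda_0$ is diagonal, Assumptions \ref{ass3.2}--\ref{ass3.3}, and the choice of a sufficiently large $M$ in $\Gamma$. Since $L_c$ differs from $L_2$ only by a quantity not depending on $(\Lambda,\Sigma_u)$, a minimizer of $L_2$ over $\Theta_{\lambda}\times\Gamma$ also minimizes $L_c$, hence $L_c(\hLamt,\hSigt)\le L_c(\Lambda_0,\Sigun)$. Next I would evaluate the three pieces at the truth: $Q_1(\Sigun)=0$ directly from its definition, and substituting $\Lambda=\Lambda_0$ into (\ref{q2}) makes the two traces cancel, so $Q_2(\Lambda_0,\Sigun)=0$; therefore $L_c(\Lambda_0,\Sigun)=Q_3(\Lambda_0,\Sigun)$.

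The one substantive fact needed for the matching lower bound is $Q_2(\hLamt,\hSigt)\ge 0$. Writing $a=(\hSigt)^{-1/2}\Lambda_0$ and $b=(\hSigt)^{-1/2}\hLamt$, the matrix inside the trace in (\ref{q2}) equals $a'(I_r-b(b'b)^{-1}b')a$, that is, $a'$ times the orthogonal projector onto the complement of the column span of $b$; this is positive semidefinite, so its trace is nonnegative. Combining the two inequalities,
$$
Q_1(\hSigt)\le Q_1(\hSigt)+Q_2(\hLamt,\hSigt)=L_c(\hLamt,\hSigt)-Q_3(\hLamt,\hSigt)\le Q_3(\Lambda_0,\Sigun)-Q_3(\hLamt,\hSigt).
$$

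Finally I would bound the right-hand side via Lemma \ref{la.2}. Matching the eigenvalue constraints defining $\Theta_{\lambda}$ and $\Gamma$ with those of $\Xi_{\delta}$ shows $\Theta_{\lambda}\times\Gamma\subseteq\Xi_{\delta}$ for $\delta$ large enough, so $(\hLamt,\hSigt)\in\Xi_{\delta}$ with probability one, while $(\Lambda_0,\Sigun)\in\Xi_{\delta}$ for all large $N$; hence $|Q_3(\hLamt,\hSigt)|$ and $|Q_3(\Lambda_0,\Sigun)|$ are each at most $\sup_{(\Lambda,\Sigma_u)\in\Xi_{\delta}}|Q_3(\Lambda,\Sigma_u)|=O_p(N^{-1}\log N+T^{-1/2}(\log N)^{1/2})$ by Lemma \ref{la.2}. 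Setting $d_T=2\sup_{(\Lambda,\Sigma_u)\in\Xi_{\delta}}|Q_3(\Lambda,\Sigma_u)|\ge 0$ then gives $Q_1(\hSigt)\le d_T$ with probability one, with $d_T$ of the claimed order. The main point requiring care here is not a genuine obstacle but the bookkeeping of constants guaranteeing $(\hLamt,\hSigt),(\Lambda_0,\Sigun)\in\Xi_{\delta}$ and the verification that $(\Lambda_0,\Sigun)$ is feasible; once those are in place the result is just the telescoping identity above combined with the already-established Lemma \ref{la.2}.
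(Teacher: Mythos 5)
Your proposal is correct and follows essentially the same route as the paper: use the optimality $L_c(\hLamt,\hSigt)\le L_c(\Lambda_0,\Sigun)$ together with $Q_2\ge 0$ and $Q_1(\Sigun)=Q_2(\Lambda_0,\Sigun)=0$ to reduce to $Q_3(\Lambda_0,\Sigun)-Q_3(\hLamt,\hSigt)$, then invoke Lemma \ref{la.2} on $\Xi_{\delta}\supset\Theta_{\lambda}\times\Gamma$. The extra details you supply (the projection argument for $Q_2\ge 0$ and the feasibility check for $(\Lambda_0,\Sigun)$) are correct elaborations of steps the paper leaves implicit.
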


\begin{proof} We have $Q_2(\hLamt, \hSigt)\geq0$. In addition, $Q_2(\Lambda_0, \Sigma_{u0})=Q_1(\Sigma_{u0})=0$.  Hence
\begin{eqnarray*}
Q_1(\hSigt)&=&L_c(\hLamt,\hSigt)-Q_2(\hLamt,\hSigt)-Q_3(\hLamt,\hSigt)\cr
&\leq& L_c(\hLamt,\hSigt)-Q_3(\hLamt,\hSigt)\leq L_c(\Lambda_0,\Sigma_{u0})-Q_3(\hLamt,\hSigt)\cr
&=&Q_3(\Lambda_0,\Sigma_{u0})-Q_3(\hLamt,\hSigt).
\end{eqnarray*}
By the definition of $\Theta_{\lambda}\times \Gamma$, there is $\delta>0$ such that $\Theta_{\lambda}\times \Gamma\subset\Xi_{\delta}.$ The result then holds for  $d_T=|Q_3(\Lambda_0,\Sigma_{u0})|+|Q_3(\hLamt,\hSigt)|$ by Lemma \ref{la.2}. 

\end{proof}

Throughout, let  (recall that $D=\sum_{i\neq j, (i,j)\in S_U}1.$)
$$\Delta= \hSigti-\Sigma_{u0}^{-1}, \hspace{1em} K_T=\sum_{(i,j)\in S_L}|\Sigma_{u0, ij}|.$$

\begin{lem} \label{lc.2} For all large enough $T$ and $N$,
\begin{eqnarray*}
NQ_1(\hSigt)
&\geq&\frac{1}{2}\mu_T\min_{(i,j)\in S_L}w_{ij}\sum_{(i,j)\in S_L}|\hSig_{u, ij}-\Sigma_{u0, ij}|+c\|\Delta\|_F^2-2\mu_T\max_{(i,j)\in S_L}w_{ij}K_T\cr
&&-\left(O_p(\sqrt{\frac{\log N}{T}})\sqrt{N+D}+\mu_T\max_{i\neq j, (i,j)\in S_U}w_{ij}\sqrt{D}\right)\|\Delta\|_F.
\end{eqnarray*}

\end{lem}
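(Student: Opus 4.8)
The plan is to split $NQ_1(\hSigt)$ into a Gaussian-likelihood piece and a penalty piece, bound the likelihood piece from below by a curvature term $c\|\Delta\|_F^2$ plus a remainder that is linear in $S_u-\Sigun$, and then pay for that remainder using the curvature term together with half of the $S_L$-penalty gain; this bookkeeping is the source of the factor $\tfrac12$ in the statement. Concretely, using $\log|\hSigt|=-\log|\hSigti|$ and $S_u=\Sigun+(S_u-\Sigun)$ one writes
\[
NQ_1(\hSigt)=g(\hSigti,\Siguni)+\tr\big((S_u-\Sigun)\Delta\big)+\mu_T\sum_{i\neq j}w_{ij}\big(|\hSig_{u,ij}|-|\Sigma_{u0,ij}|\big),
\]
where $g(W,W_0)=-\log|W|+\log|W_0|+\tr\big(W_0^{-1}(W-W_0)\big)$ is the Bregman divergence of $-\log\det$. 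Expanding $-\log|\Siguni+t\Delta|$ to second order in $t\in[0,1]$ gives $g(\hSigti,\Siguni)=\int_0^1(1-t)\,\tr\big(M_t\Delta M_t\Delta\big)\,dt$ with $M_t=(\Siguni+t\Delta)^{-1}$; since $\hSigt,\Sigun\in\Gamma$, every $M_t$ has eigenvalues bounded away from $0$ and $\infty$, so $g(\hSigti,\Siguni)\geq c\|\Delta\|_F^2$ for some $c>0$.

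For the cross term I would not estimate $\tr\big((S_u-\Sigun)\Delta\big)$ directly, because $\Delta$ is a precision-matrix difference that is neither sparse nor controlled by the penalty. Instead, use $\Delta=-\hSigti(\hSigt-\Sigun)\Siguni$ to rewrite it as $-\tr\big(A(\hSigt-\Sigun)\big)$ with $A=\Siguni(S_u-\Sigun)\hSigti$, and observe the entrywise bound $\max_{i,j}|A_{ij}|\leq\|\Siguni\|_1\big(\max_{k,l}|(S_u-\Sigun)_{kl}|\big)\|\hSigti\|_1=O_p\big(\sqrt{(\log N)/T}\big)$, which uses Lemma~\ref{la.1}(ii), Assumption~\ref{ass3.7}, and $\hSigt\in\Gamma$. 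Splitting $\tr\big(A(\hSigt-\Sigun)\big)$ over index pairs in $S_U$ and in $S_L$: on $S_U$, which contains only $N+D$ pairs, Cauchy--Schwarz bounds that part by $O_p\big(\sqrt{(\log N)/T}\big)\sqrt{N+D}\,\|\hSigt-\Sigun\|_F$; on $S_L$ the part is at most $\big(\max_{i,j}|A_{ij}|\big)\sum_{(i,j)\in S_L}|\hSig_{u,ij}-\Sigma_{u0,ij}|$, and since Assumption~\ref{ass4.2}(ii) forces $\mu_T\min_{(i,j)\in S_L}w_{ij}\gg\sqrt{(\log N)/T}$, with probability at least $1-\epsilon$ for all large $N,T$ this does not exceed $\tfrac12\mu_T\min_{(i,j)\in S_L}w_{ij}\sum_{(i,j)\in S_L}|\hSig_{u,ij}-\Sigma_{u0,ij}|$.

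For the penalty piece, split $\sum_{i\neq j}$ over $S_L$ and over the off-diagonal pairs of $S_U$. On $S_L$, the inequality $|\hSig_{u,ij}|-|\Sigma_{u0,ij}|\geq|\hSig_{u,ij}-\Sigma_{u0,ij}|-2|\Sigma_{u0,ij}|$ gives a lower bound $\mu_T\min_{(i,j)\in S_L}w_{ij}\sum_{(i,j)\in S_L}|\hSig_{u,ij}-\Sigma_{u0,ij}|-2\mu_T\max_{(i,j)\in S_L}w_{ij}K_T$; on the off-diagonal part of $S_U$, $\big||\hSig_{u,ij}|-|\Sigma_{u0,ij}|\big|\leq|\hSig_{u,ij}-\Sigma_{u0,ij}|$ followed by Cauchy--Schwarz over the $D$ pairs gives a lower bound $-\mu_T\max_{i\neq j,(i,j)\in S_U}w_{ij}\sqrt{D}\,\|\hSigt-\Sigun\|_F$. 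Finally, convert every covariance-difference Frobenius norm to $\|\Delta\|_F$ via $\|\hSigt-\Sigun\|_F=\|\hSigt\,\Delta\,\Sigun\|_F\leq\|\hSigt\|\,\|\Sigun\|\,\|\Delta\|_F\lesssim\|\Delta\|_F$, add the three pieces, cancel the $-\tfrac12\mu_T\min_{S_L}w_{ij}\sum_{S_L}(\cdot)$ arising from the cross term against the $\mu_T\min_{S_L}w_{ij}\sum_{S_L}(\cdot)$ from the penalty (leaving $+\tfrac12$ of it), and absorb the constants coming from the eigenvalue bounds of $\Gamma$ into the $O_p(\cdot)$ factor and the coefficient of $\|\Delta\|_F$; this yields precisely the stated inequality.

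The step I expect to be the main obstacle is the cross term $\tr\big((S_u-\Sigun)\Delta\big)$: the objective penalizes off-diagonal entries of the covariance while the natural curvature variable is the precision difference $\Delta$, so a crude $\|S_u-\Sigun\|_F\|\Delta\|_F$ bound is useless ($\|S_u-\Sigun\|_F$ is of order $N\sqrt{(\log N)/T}$). Rewriting $\Delta$ through $\hSigt-\Sigun$, controlling $\max_{i,j}|A_{ij}|$ entrywise, performing the $S_U$/$S_L$ split with the correct count $|S_U|=N+D$, and --- most delicately --- using Assumption~\ref{ass4.2} to ensure the $S_L$ remainder is absorbed by half the penalty gain, is where the real work lies; the Bregman lower bound and the penalty manipulations are routine.
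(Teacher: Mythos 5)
Your proposal is correct and follows essentially the same route as the paper: the Bregman/integral-remainder decomposition is exactly the paper's $f(1)-f(0)=f'(0)+\int_0^1(1-t)f''(t)\,dt$, the cross term is handled identically by rewriting $\Delta=-\hSigti(\hSigt-\Sigun)\Siguni$ and bounding $\max_{i,j}|(\Siguni(S_u-\Sigun)\hSigti)_{ij}|=O_p(\sqrt{(\log N)/T})$ entrywise, and the $S_L$/$S_U$ split with absorption of the $S_L$ remainder into half the penalty (via $\mu_T\min_{S_L}w_{ij}\gg\sqrt{(\log N)/T}$) is the same bookkeeping that produces the factor $\tfrac12$. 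No gaps.
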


\begin{proof} Let $\Omega_0=\Sigma_{u0}^{-1}$, $\hO=\hSigti$.  For any $\Sigma_u$, let $\Omega =\Sigma_u^{-1}$. Define a function  $f(t)=-\log|\Omega _0+t\Delta|+\tr(S_u(\Omega _0+t\Delta))$, $t\geq0.$ Then $
-\log|\hO|+\tr(S_u\hO)=f(1);  $  $ -\log|\Omega _0|+\tr(S_u\Omega _0)=f(0);
$
and 
\begin{equation}\label{ea.9}
NQ_1(\hSigt)=f(1)-f(0)+{\mu_T}\sum_{i\neq j} w_{ij}|\hSig_{u, ij}|-\mu_T\sum_{i\neq j} w_{ij}|\Sigma_{u0, ij}|
\end{equation}
By the integral remainder Taylor expansion,  $f(1)-f(0)=f'(0)+\int_0^1(1-t)f''(t)dt$. We now calculate $f'(0)$ and $f''(t)$. Using the matrix differentiation formula, we have, $
f'(t)=\tr(S_u\Delta)-\tr((\Omega _0+t\Delta)^{-1}\Delta),$
which implies, 
\begin{eqnarray*}
 f'(0)&=&
\tr((S_u-\Sigma_{u0})(\hO-\Omega _0))=\tr(\Omega _0(S_u-\Sigma_{u0})\hO(\Sigma_{u0}-\hSigt))\cr
&=&\sum_{ij}(\Omega _0(S_u-\Sigma_{u0})\hO)_{ij}(\Sigma_{u0}-\hSigt)_{ij}.
\end{eqnarray*}
Note that both $\|\Omega_0\|_1$ and $\|\hO\|_1$ are bounded from above for $\Sigma_{u0},\hSigt\in\Gamma$. By Lemma \ref{la.1}(ii), 
$
\max_{ij}|(\Omega _0(S_u-\Sigma_{u0})\hO)_{ij}|\leq\max_{ij}|(S_u-\Sigma_{u0})_{ij}|\|\Omega_0\|_1\|\hO\|_1=O_p(\sqrt{\log N/T}).
$
Therefore, $|f'(0)|=O_p(\sqrt{\log N/T})\sum_{ij}|\Sigma_{u0, ij}-\hSig_{u, ij}|$. 
 In addition,
$$
f''(t)=\tr((\Omega _0+t\Delta)^{-1}\Delta(\Omega _0+t\Delta)^{-1}\Delta)=vec(\Delta)(\Omega _0+t\Delta)^{-1}\otimes (\Omega _0+t\Delta)^{-1}vec(\Delta),
$$
where $vec$ dentoes the vectorization  operator and $\otimes$ denotes the Kronecker product.  Since both $(\hLamt, \hSigt)$ and $(\Lambda_0,\Sigma_{u0})$ are inside $\Theta_{\lambda}\times\Gamma$, $\sup_{0\leq t\leq 1}\lambda_{\max}(t\hSigti+(1-t)\Sigma_{u0}^{-1})$ is bounded from above, which then implies $
\inf_{0\leq t\leq 1}\lambda_{\min}[(\Omega _0+t\Delta)^{-1}]=\inf_{0\leq t\leq 1}\lambda_{\max}^{-1}(t\hSigti+(1-t)\Sigma_{u0}^{-1})$ is bounded below by a positive constant $c$. Hence $\inf_{0\leq t\leq 1}f''(t)\geq c\|\Delta\|_F^2.$ From (\ref{ea.9}) and $f(1)-f(0)\geq-|f'(0)|+c\|\Delta\|_F^2$,  we have
\begin{eqnarray*}
NQ_1(\hSigt)&\geq&{\mu_T}\sum_{i\neq j} w_{ij}|\hSig_{u, ij}|-\mu_T\sum_{i\neq j} w_{ij}|\Sigma_{u0, ij}|+c\|\Delta\|_F^2-O_p(\sqrt{\frac{\log N}{T}})\sum_{ij}|\Sigma_{u0, ij}-\hSig_{u, ij}|\cr
&=&{\mu_T}\sum_{(i,j)\in S_L} w_{ij}|\hSig_{u, ij}|+\mu_T\sum_{i\neq j, (i,j)\in S_U} w_{ij}|\hSig_{u, ij}|-\mu_T\sum_{i\neq j} w_{ij}|\Sigma_{u0, ij}|+c\|\Delta\|_F^2\cr
&&-O_p(\sqrt{\frac{\log N}{T}})\sum_{\Sigma_{u0,ij}\in S_U}|\Sigma_{u0, ij}-\hSig_{u, ij}|-O_p(\sqrt{\frac{\log N}{T}})\sum_{(i,j)\in S_L}|\Sigma_{u0, ij}-\hSig_{u, ij}|.
\end{eqnarray*}
Since  $|\hSig_{u,ij}|\geq |\hSig_{u,ij}-\Sigma_{u0, ij}|-|\Sigma_{u0, ij}|$, and $\sum_{i\neq j} w_{ij}|\Sigma_{u0, ij}|=\sum_{i\neq j, (i,j)\in S_U} w_{ij}|\Sigma_{u0, ij}|+\sum_{(i,j)\in S_L} w_{ij}|\Sigma_{u0, ij}|$. It follows that
\begin{eqnarray*}
NQ_1(\hSigt)&\geq& 
{\mu_T}\sum_{(i,j)\in S_L} w_{ij}|\hSig_{u, ij}-\Sigma_{u0, ij}|-O_p(\sqrt{\frac{\log N}{T}})\sum_{(i,j)\in S_L}|\Sigma_{u0, ij}-\hSig_{u, ij}|+c\|\Delta\|_F^2\cr
&&-{\mu_T}\sum_{(i,j)\in S_L} w_{ij}|\Sigma_{u0, ij}|-O_p(\sqrt{\frac{\log N}{T}})\sum_{\Sigma_{u0,ij}\in S_U}|\Sigma_{u0, ij}-\hSig_{u, ij}|\cr
&&-{\mu_T}  \sum_{i\neq j, (i,j)\in S_U} w_{ij}[|\Sigma_{u0, ij}|-|\hSig_{u, ij}|]-{\mu_T}  \sum_{(i,j)\in S_L} w_{ij}|\Sigma_{u0, ij}|\cr
&\geq &(\mu_T\min_{(i,j)\in S_L}w_{ij}-O_p(\sqrt{\frac{\log N}{T}}))\sum_{(i,j)\in S_L}|\hSig_{u, ij}-\Sigma_{u0, ij}|+c\|\Delta\|_F^2\cr
&&-2{\mu_T}\sum_{(i,j)\in S_L} w_{ij}|\Sigma_{u0, ij}|-O_p(\sqrt{\frac{\log N}{T}})\sum_{\Sigma_{u0,ij}\in S_U}|\Sigma_{u0, ij}-\hSig_{u, ij}|\cr
&&-\mu_T\max_{i\neq j, (i,j)\in S_U}w_{ij}\sum_{i\neq j, (i,j)\in S_U} |\Sigma_{u0, ij}-\hSig_{u, ij}|\cr
&\geq&\frac{1}{2}\mu_T\min_{(i,j)\in S_L}w_{ij}\sum_{(i,j)\in S_L}|\hSig_{u, ij}-\Sigma_{u0, ij}|+c\|\Delta\|_F^2-2\mu_T\max_{(i,j)\in S_L}w_{ij}K_T\cr
&&-O_p(\sqrt{\frac{\log N}{T}})\sqrt{N+D}\|\Delta\|_F-\mu_T\max_{i\neq j, (i,j)\in S_U}w_{ij}\|\Delta\|_F\sqrt{D},
\end{eqnarray*}
which implies the desired result. 
\end{proof}

\begin{lem}\label{lc.3}
\begin{eqnarray*}
\frac{1}{N}\|\Sigma_{u}-\hSigt\|_F^2&=&O_p\left(\frac{1}{N}\left(\mu_T\max_{(i,j)\in S_L}w_{ij}K_T+\log N+\mu_T^2\max_{i\neq j, (i,j)\in S_U}w_{ij}^2{D}\right)\right)\cr
&&+O_p(\frac{D\log N}{NT}+\sqrt{\frac{\log N}{T}}).
\end{eqnarray*}
\end{lem}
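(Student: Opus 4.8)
The plan is to combine the two preceding lemmas into a single scalar quadratic inequality in $\|\Delta\|_F$, where $\Delta=\hSigti-\Sigma_{u0}^{-1}$, solve it, and then transfer the resulting rate from $\Delta$ to $\hSigt-\Sigma_{u0}$. Lemma \ref{lc.1} gives $NQ_1(\hSigt)\le Nd_T$ with $Nd_T=O_p(\log N+N\sqrt{(\log N)/T})$, while Lemma \ref{lc.2} furnishes a lower bound for $NQ_1(\hSigt)$. Subtracting, and discarding the nonnegative term $\tfrac12\mu_T\min_{(i,j)\in S_L}w_{ij}\sum_{(i,j)\in S_L}|\hSig_{u,ij}-\Sigma_{u0,ij}|$ from the right-hand side of that lower bound (which only weakens it), I obtain
\[
c\|\Delta\|_F^2 \le Nd_T + 2\mu_T\max_{(i,j)\in S_L}w_{ij}K_T + \Big(O_p\big(\sqrt{\tfrac{\log N}{T}}\big)\sqrt{N+D}+\mu_T\max_{i\neq j,(i,j)\in S_U}w_{ij}\sqrt{D}\Big)\|\Delta\|_F .
\]
Here $c>0$ is the deterministic constant from Lemma \ref{lc.2}: it is bounded below because $\hSigt,\Sigma_{u0}\in\Gamma$ forces $\lambda_{\max}(\hSigti)$ and $\lambda_{\max}(\Sigma_{u0}^{-1})$ to stay below $M$, so the inequality has a genuinely nonrandom leading coefficient.

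Next I would solve this inequality in the elementary way: $cx^2\le a+bx$ with $c$ a positive constant gives $x\le(b+\sqrt{b^2+4ca})/(2c)$, hence $x^2=O_p(a+b^2)$, which is valid even with $a,b$ stochastic precisely because $c$ is not. Taking $a=Nd_T+2\mu_T\max_{(i,j)\in S_L}w_{ij}K_T$ and $b$ equal to the bracketed coefficient, and bounding $b^2\lesssim \tfrac{(N+D)\log N}{T}+\mu_T^2\max_{i\neq j,(i,j)\in S_U}w_{ij}^2D$, I get
\[
\|\Delta\|_F^2=O_p\Big(\log N+N\sqrt{\tfrac{\log N}{T}}+\mu_T\max_{(i,j)\in S_L}w_{ij}K_T+\tfrac{(N+D)\log N}{T}+\mu_T^2\max_{i\neq j,(i,j)\in S_U}w_{ij}^2D\Big).
\]
To convert this into a bound on $\hSigt-\Sigma_{u0}$ I use the identity $\Sigma_{u0}-\hSigt=\Sigma_{u0}\,\Delta\,\hSigt$, so $\|\Sigma_{u0}-\hSigt\|_F\le\|\Sigma_{u0}\|\,\|\hSigt\|\,\|\Delta\|_F=O_p(\|\Delta\|_F)$ because $\Sigma_{u0},\hSigt\in\Gamma$ have bounded spectral norms; hence $\tfrac1N\|\Sigma_{u0}-\hSigt\|_F^2=O_p(\tfrac1N\|\Delta\|_F^2)$. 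Dividing the displayed bound by $N$, splitting $\tfrac{(N+D)\log N}{NT}=\tfrac{\log N}{T}+\tfrac{D\log N}{NT}$, and absorbing $\tfrac{\log N}{T}$ into $\sqrt{(\log N)/T}$ via $\log N=o(T)$, yields exactly the claimed expression.

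Since all the analytic content sits in Lemmas \ref{lc.1} and \ref{lc.2}, I do not expect a serious obstacle here; the argument is mostly bookkeeping. The two points requiring care are (i) verifying that the leading constant $c$ in the quadratic inequality is deterministic, so that the "$cx^2\le a+bx\Rightarrow x^2=O(a+b^2)$" step survives having random $a,b$, and (ii) the passage from $\|\Delta\|_F$ to $\|\hSigt-\Sigma_{u0}\|_F$, which depends crucially on the boundedness of the eigenvalues of both $\Sigma_{u0}$ and $\hSigt$ guaranteed by the constraint $\|\Sigma_u^{-1}\|_1<M$ built into $\Gamma$; without it this final step would break down.
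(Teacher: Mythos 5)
Your argument is correct and follows essentially the same route as the paper: combine the upper bound from Lemma \ref{lc.1} with the lower bound from Lemma \ref{lc.2} (discarding the nonnegative $S_L$ term), solve the resulting quadratic inequality in $\|\Delta\|_F$, and transfer the rate to $\hSigt-\Sigma_{u0}$ via $\Sigma_{u0}-\hSigt=\hSigt\Delta\Sigma_{u0}$ and the boundedness of $\|\hSigt\|$ and $\|\Sigma_{u0}\|$ on $\Gamma$. Your explicit attention to the determinism of $c$ and the absorption of $N\log N/T$ into $N\sqrt{(\log N)/T}$ are details the paper handles implicitly, but there is no substantive difference.
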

\begin{proof}  Lemma \ref{lc.2} implies
\begin{eqnarray*}
NQ_1(\hSigt)
\geq c\|\Delta\|_F^2-2\mu_T\max_{(i,j)\in S_L}w_{ij}K_T-\left(O_p(\sqrt{\frac{\log N}{T}})\sqrt{N+D}+\mu_T\max_{i\neq j, (i,j)\in S_U}w_{ij}\sqrt{D}\right)\|\Delta\|_F.
\end{eqnarray*}
Lemma \ref{lc.1} gives $
NQ_1(\hSigt)\leq O_p(\log N+N\sqrt{\log N/T}).$
Hence we have
\begin{eqnarray*}
\|\Delta\|_F^2&=&O_p((\sqrt{\frac{(N+D)\log N}{T}}+\mu_T\max_{i\neq j, (i,j)\in S_U}w_{ij}\sqrt{D})^2)\cr
&&+O_p(\mu_T\max_{(i,j)\in S_L}w_{ij}K_T+\log N+N\sqrt{\log N/T})\cr
&=&O_p(\frac{(N+D)\log N}{T}+\mu_T^2\max_{i\neq j, (i,j)\in S_U}w_{ij}^2{D}+\mu_T\max_{(i,j)\in S_L}w_{ij}K_T+\log N+N\sqrt{\log N/T})\cr
&=&O_p(\frac{D\log N}{T}+\mu_T^2\max_{i\neq j, (i,j)\in S_U}w_{ij}^2{D}+\mu_T\max_{(i,j)\in S_L}w_{ij}K_T+\log N+N\sqrt{\log N/T}).
\end{eqnarray*}
Note that $\Sigma_{u0}-\hSigt=\hSigt\Delta\Sigma_{u0}$. Hence the desired result follows from $\|\hSigt\|<M$ wp1 and $\|\Sigun\|<M$.   

\end{proof}

\begin{lem}\label{lc.4}
$N^{-1}\sum_{(i,j)\in S_L}|\hSig_{u, ij}-\Sigma_{u0, ij}|=o_p(1).$
\end{lem}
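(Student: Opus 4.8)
The plan is to obtain Lemma \ref{lc.4} as a direct consequence of the lower bound on $NQ_1(\hSigt)$ in Lemma \ref{lc.2} and the upper bound in Lemma \ref{lc.1}. Write $S_L^{*}=\sum_{(i,j)\in S_L}|\hSig_{u, ij}-\Sigma_{u0, ij}|$, abbreviate $m_L=\min_{(i,j)\in S_L}w_{ij}$, $M_L=\max_{(i,j)\in S_L}w_{ij}$, $M_U=\max_{i\neq j,(i,j)\in S_U}w_{ij}$, and set $B=O_p(\sqrt{(\log N)/T})\sqrt{N+D}+\mu_TM_U\sqrt{D}$, so that Lemma \ref{lc.2} reads $NQ_1(\hSigt)\ge \tfrac12\mu_Tm_LS_L^{*}+c\|\Delta\|_F^2-2\mu_TM_LK_T-B\|\Delta\|_F$. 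Since $Bz-cz^2\le B^2/(4c)$ for every $z\ge0$, discarding the nonnegative term $c\|\Delta\|_F^2$ and bounding the linear-in-$\|\Delta\|_F$ term gives $\tfrac12\mu_Tm_LS_L^{*}\le NQ_1(\hSigt)+2\mu_TM_LK_T+B^2/(4c)$.

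Next I would feed in Lemma \ref{lc.1}, which gives $NQ_1(\hSigt)\le Nd_T=O_p(\log N+N\sqrt{(\log N)/T})$, together with $B^2\le 2\,O_p((\log N)/T)(N+D)+2\mu_T^2M_U^2D$. Dividing the previous inequality by $\tfrac12 N\mu_Tm_L$ then yields
$$\frac1N S_L^{*}=O_p\!\Big(\frac{\log N+N\sqrt{(\log N)/T}}{N\mu_Tm_L}\Big)+O_p\!\Big(\frac{M_LK_T}{Nm_L}\Big)+O_p\!\Big(\frac{(N+D)\log N}{NT\mu_Tm_L}+\frac{\mu_TM_U^2D}{Nm_L}\Big),$$
so it remains to check that each term on the right is $o_p(1)$.

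This last step is pure verification against Assumptions \ref{ass4.1} and \ref{ass4.2}. The first term vanishes because Assumption \ref{ass4.2}(ii) requires $\mu_Tm_L\gg\sqrt{(\log N)/T}+(\log N)/N$. The second term equals $\beta_TK_T/N$ and is $o_p(1)$ by the condition $\beta_T\sum_{(i,j)\in S_L}|\Sigma_{u0,ij}|=o_p(N)$ in Assumption \ref{ass4.2}(i). For the third term, $\mu_Tm_L\gg\sqrt{(\log N)/T}$ bounds $\frac{(N+D)\log N}{NT\mu_Tm_L}$ by a multiple of $\frac{(N+D)\sqrt{\log N}}{N\sqrt T}$, whose $N$-part is $\sqrt{(\log N)/T}\to0$ (using $\log N=o(T)$) and whose $D$-part is controlled by $D=o(N\sqrt{T/\log N})$ from Assumption \ref{ass4.1}(ii); and writing the last piece as $\mu_TM_U^2D/(Nm_L)=\mu_TM_U\alpha_TD/N$, it is $o_p(1)$ by $\mu_TM_U=o(N/(D\alpha_T))$ from Assumption \ref{ass4.2}(ii).

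I expect the only real obstacle to be bookkeeping: the weights enter through the ratios $\alpha_T,\beta_T$ and through the three-way minimum in Assumption \ref{ass4.2}, so one must match each error term to the precise clause that controls it; beyond that there is no new idea, since the $c\|\Delta\|_F^2$ term is simply thrown away here (it is the same positive term exploited in Lemma \ref{lc.3}). A minor point to watch is that all bounds are in probability and some weights may be stochastic, so the $O_p/o_p$ notation must be propagated consistently.
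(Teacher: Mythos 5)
Your proof is correct and starts exactly where the paper's does: the rearranged inequality from Lemma \ref{lc.2} bounding $\tfrac12\mu_T\min_{(i,j)\in S_L}w_{ij}\sum_{(i,j)\in S_L}|\hSig_{u,ij}-\Sigma_{u0,ij}|$ from above, combined with $NQ_1(\hSigt)=O_p(\log N+N\sqrt{(\log N)/T})$ from Lemma \ref{lc.1}. The one place you diverge is in handling the term $B\|\Delta\|_F$: the paper substitutes the explicit rate for $\|\Delta\|_F$ established in Lemma \ref{lc.3} and then checks that $B\|\Delta\|_F=o_p(N\mu_T\min_{(i,j)\in S_L}w_{ij})$ under Assumption \ref{ass4.2}, whereas you absorb the linear term into the quadratic one via $Bz-cz^2\le B^2/(4c)$ and so never need Lemma \ref{lc.3} at all. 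Your variant is slightly cleaner and self-contained (it decouples this lemma from the Frobenius-rate lemma and avoids the cross terms that arise when multiplying $B$ by the several pieces of the $\|\Delta\|_F$ bound), at the cost of having to verify the marginally different condition $B^2=o_p(N\mu_T\min_{(i,j)\in S_L}w_{ij})$; your check of that condition against Assumptions \ref{ass4.1}(ii) and \ref{ass4.2} is accurate, including the identification of $M_LK_T/(Nm_L)$ with $\beta_TK_T/N$ and of $\mu_TM_U^2D/(Nm_L)$ with $\mu_TM_U\alpha_TD/N$.
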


\begin{proof}

 Lemma \ref{lc.2} implies
\begin{eqnarray*}
 &&\frac{1}{2}\mu_T\min_{(i,j)\in S_L}w_{ij}\sum_{(i,j)\in S_L}|\hSig_{u, ij}-\Sigma_{u0, ij}|\leq NQ_1(\hSigt)+2\mu_T\max_{(i,j)\in S_L}w_{ij}K_T\cr
&&+\left(O_p(\sqrt{\frac{\log N}{T}})\sqrt{N+D}+\mu_T\max_{i\neq j, (i,j)\in S_U}w_{ij}\sqrt{D}\right)\|\Delta\|_F.
\end{eqnarray*}

We have $NQ_1(\hSigt)\leq O_p(\log N+N\sqrt{\log N/T}).$ By Lemma \ref{lc.3},
\begin{eqnarray*}
\|\Delta\|_F&=&O_p(\sqrt{\frac{D\log N}{T}}+\mu_T\max_{i\neq j, (i,j)\in S_U}w_{ij}\sqrt{D})\cr
&&+O_p(\sqrt{\mu_T\max_{(i,j)\in S_L}w_{ij}K_T}+\sqrt{\log N}+\sqrt{N}(\frac{\log N}{T})^{1/4}).
\end{eqnarray*}
which   implies the desired result under  Assumption  4.2. 

\end{proof}

\begin{lem}\label{lc.5} $N^{-1}\Lambda_0' (\hSigti-\Sigma_{u0}^{-1})\Lambda_0 =o_p(1).$
\end{lem}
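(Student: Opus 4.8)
The plan is to reduce the $r\times r$ quadratic form to an entrywise $\ell_1$ bound on $\hSigt-\Sigun$ over the partition $S_L\cup S_U$, using that the loadings-weighted version of $\Siguni$ has uniformly bounded columns. Applying the identity $\hSigti-\Siguni=-\Siguni(\hSigt-\Sigun)\hSigti$ together with $\Xi=\Lambda_0'\Siguni$, I would write
$$
\frac{1}{N}\Lambda_0'(\hSigti-\Siguni)\Lambda_0=-\frac{1}{N}\,\Xi(\hSigt-\Sigun)\hSigti\Lambda_0=-\frac{1}{N}\sum_{i\le N}\sum_{j\le N}(\hSig_{u,ij}-\Sigma_{u0,ij})\,\xi_i\eta_j',
$$
where $\xi_i\in\mathbb R^r$ is the $i$th column of $\Xi$ and $\eta_j'$ is the $j$th row of $\hSigti\Lambda_0$, i.e. $\eta_j=\sum_{k\le N}(\hSigti)_{jk}\lambda_{0k}$. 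The key preliminary observation is that both families are uniformly bounded: by Assumption~\ref{ass3.7} and Assumption~\ref{ass3.2}(ii), $\max_{i\le N}\|\xi_i\|\le\|\Siguni\|_1\max_k\|\lambda_{0k}\|=O(1)$; and since $\hSigt\in\Gamma$ forces $\|\hSigti\|_1<M$ with probability one, also $\max_{j\le N}\|\eta_j\|\le\|\hSigti\|_1\max_k\|\lambda_{0k}\|<Mc_1$. Consequently
$$
\left\|\frac{1}{N}\Lambda_0'(\hSigti-\Siguni)\Lambda_0\right\|_F\le\frac{O_p(1)}{N}\sum_{i\le N}\sum_{j\le N}|\hSig_{u,ij}-\Sigma_{u0,ij}|,
$$
and it remains to show the right-hand side is $o_p(1)$.

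I would split the double sum into the $S_L$ part, the diagonal, and the off-diagonal part of $S_U$. The $S_L$ contribution divided by $N$ is $o_p(1)$ directly by Lemma~\ref{lc.4}. For the diagonal, Cauchy--Schwarz gives $\sum_{i\le N}|\hSig_{u,ii}-\Sigma_{u0,ii}|\le\sqrt N\,\|\hSigt-\Sigun\|_F$, so this contribution is at most $(N^{-1}\|\hSigt-\Sigun\|_F^2)^{1/2}=o_p(1)$ by Lemma~\ref{lc.3}, whose bound is $o_p(1)$ under Assumptions~\ref{ass4.1}--\ref{ass4.2}. For the off-diagonal part of $S_U$, Cauchy--Schwarz over the $D$ terms gives a bound $N^{-1}\sqrt D\,\|\hSigt-\Sigun\|_F$.

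The only delicate point is this last term, since the crude estimate $N^{-1}\|\hSigt-\Sigun\|_F^2=o_p(1)$ is not enough ($\sqrt D$ may be as large as $o(N/\sqrt{\log N})$, so $N^{-1}\sqrt D\,\|\hSigt-\Sigun\|_F$ could be $o_p(\sqrt N/\sqrt{\log N})$). Instead I would substitute the precise rate of Lemma~\ref{lc.3} into $\frac{D}{N^2}\|\hSigt-\Sigun\|_F^2=\frac{D}{N}\cdot\frac{1}{N}\|\hSigt-\Sigun\|_F^2$ and check term by term that each summand is $o_p(1)$: $\frac{D}{N^2}\mu_T\max_{(i,j)\in S_L}w_{ij}K_T=o_p(1)$ because Assumption~\ref{ass4.2}(ii) gives $\mu_T\max_{S_L}w_{ij}K_T=o(N^2/D)$; $\frac{D\log N}{N^2}$ and $\frac{D^2\log N}{N^2T}$ are $o(1)$ from $D=o(N^2/\log N)$ and $D=o(N\sqrt{T/\log N})$ in Assumption~\ref{ass4.1}(ii); $\frac{D^2}{N^2}\mu_T^2\max_{i\ne j,(i,j)\in S_U}w_{ij}^2=o(1)$ because Assumption~\ref{ass4.2}(ii) gives $\mu_T\max_{S_U}w_{ij}=o(N/D)$; and $\frac{D}{N}\sqrt{\log N/T}=o(1)$ from $D=o(N\sqrt{T/\log N})$. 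This yields $N^{-1}\sqrt D\,\|\hSigt-\Sigun\|_F=o_p(1)$ and completes the argument.

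The main obstacle is precisely this off-diagonal-$S_U$ bookkeeping: one must not collapse $\Xi(\hSigt-\Sigun)\hSigti\Lambda_0$ with a spectral-norm bound (which after the factor $1/N$ gives only $O_p(\|\hSigt-\Sigun\|_F)=o_p(\sqrt N)$), but rather decompose it into the bounded rank-one pieces $\xi_i\eta_j'$, so that the controlling quantity becomes the entrywise $\ell_1$ norm of $\hSigt-\Sigun$ restricted to each block --- which is exactly what Assumptions~\ref{ass4.1} and~\ref{ass4.2} are calibrated to bound. Everything else is routine given Lemmas~\ref{lc.3} and~\ref{lc.4}.
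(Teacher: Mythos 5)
Your proof is correct and follows essentially the same route as the paper's: decompose $\Lambda_0'(\hSigti-\Siguni)\Lambda_0$ as $-\Xi(\hSigt-\Sigun)\hSigti\Lambda_0=\sum_{i,j}\Delta_{1,ij}\xi_i\eta_j'$ with uniformly bounded $\xi_i,\eta_j$ via the $\ell_1$-norm bounds from $\Gamma$ and Assumption~\ref{ass3.7}, control the $S_L$ block by Lemma~\ref{lc.4}, and control the $S_U$ block by Cauchy--Schwarz ($\sqrt{N+D}$ terms) together with Lemma~\ref{lc.3}. The only difference is that you explicitly verify, term by term, that $\tfrac{D}{N^2}\|\hSigt-\Sigun\|_F^2=o_p(1)$ under Assumptions~\ref{ass4.1}--\ref{ass4.2}, a step the paper asserts without detail; your verification is accurate.
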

\begin{proof} Let $\Delta_1=\hSigt-\Sigma_{u0}$, $ \Xi=\Lambda_0' \Sigma_{u0}^{-1}=(\xi_1,...,\xi_N)$, and $\hV=\hSigti\Lambda_0 $. Since the $l_1$ norms of $\hSigti$  and $\Sigma_{u0}^{-1}$ are bounded away from infinity, we have,  $\sup_{i\leq N}\|\hV_i\|=O_p(1)$ and $\sup_{i\leq N}\|\xi_i\|=O(1)$. 
Then
\begin{eqnarray*}
\frac{1}{N}\Lambda_0' (\Sigma_{u0}^{-1}-\hSigti)\Lambda_0 &=&\frac{1}{N} \Xi\Delta_1\hV
=\frac{1}{N}\sum_{(i,j)\in S_L} \xi_i\hV_j'\Delta_{1,ij}+\frac{1}{N}\sum_{\Sigma_{u0, ij}\in S_U} \xi _i\hV_j'\Delta_{1,ij}
\cr
&\leq& O_p(\frac{1}{N})\sum_{(i,j)\in S_L}|\Delta_{1,ij}|+ O_p(\frac{1}{N})\sum_{\Sigma_{u0, ij}\in S_U}|\Delta_{1,ij}|.
\end{eqnarray*}
The first term on the right hand side is $o_p(1)$ by Lemma \ref{lc.4}, and the second is bounded by  $N^{-1}\|\hSigt-\Sigma_{u0}\|\sqrt{N+D}$ (using Cauchy-Schwarz inequality), which is also $o_p(1)$ by Lemma \ref{lc.3} and Assumption    4.2. 

\end{proof}

\begin{lem}\label{lc.6} For $(\hLam,\hSig)=(\hLamt,\hSigt)$, Lemma  \ref{assa.1} is satisfied.

\end{lem}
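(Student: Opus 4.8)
The plan is to mirror the proof of Lemma~\ref{lb.1}, with the unpenalized objective $L_1$ replaced by the recentered penalized objective $L_c$ and the decomposition $L_c(\Lambda,\Sigma_u)=Q_1(\Sigma_u)+Q_2(\Lambda,\Sigma_u)+Q_3(\Lambda,\Sigma_u)$ playing the role of the representation in Lemma~\ref{la.2}. Part~(ii) of Lemma~\ref{assa.1}, the first-order condition $\hLamt'(\hLamt\hLamt'+\hSigt)^{-1}(S_y-\hLamt\hLamt'-\hSigt)=0$, is immediate and is obtained exactly as in Lemma~\ref{lb.1}: the weighted $l_1$ penalty $N^{-1}\mu_T\sum_{i\neq j}w_{ij}|\Sigma_{u,ij}|$ does not depend on $\Lambda$, so stationarity of $L_2$ in $\Lambda$ at $\hLamt$ coincides with stationarity of the likelihood part, the diagonality restriction merely selecting a representative of an orbit on which that part is constant.

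For part~(i) the aim is to show $Q_2(\hLamt,\hSigt)=o_p(1)$. Since $(\hLamt,\hSigt)$ minimizes $L_c$ over $\Theta_{\lambda}\times\Gamma$, we have $L_c(\hLamt,\hSigt)\le L_c(\Lambda_0,\Sigun)$, and using $Q_1(\Sigun)=Q_2(\Lambda_0,\Sigun)=0$ this rearranges to
\[
Q_2(\hLamt,\hSigt)\le Q_3(\Lambda_0,\Sigun)-Q_3(\hLamt,\hSigt)-Q_1(\hSigt).
\]
Because $\Theta_{\lambda}\times\Gamma\subset\Xi_{\delta}$ for some $\delta>0$, Lemma~\ref{la.2} bounds both $Q_3$ terms by $O_p\bigl(N^{-1}\log N+T^{-1/2}(\log N)^{1/2}\bigr)=o_p(1)$, so it remains to bound $Q_1(\hSigt)$ from below. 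I would take the inequality of Lemma~\ref{lc.2}, discard its two manifestly nonnegative terms $\tfrac12\mu_T\min_{(i,j)\in S_L}w_{ij}\sum_{(i,j)\in S_L}|\hSig_{u,ij}-\Sigma_{u0,ij}|$ and $c\|\Delta\|_F^2$, and obtain
\begin{align*}
NQ_1(\hSigt)\,\ge\,&-2\mu_T\max_{(i,j)\in S_L}w_{ij}K_T\\
&-\Bigl(O_p\bigl(\sqrt{(\log N)/T}\bigr)\sqrt{N+D}+\mu_T\max_{i\neq j,\,(i,j)\in S_U}w_{ij}\sqrt{D}\Bigr)\|\Delta\|_F.
\end{align*}
Lemma~\ref{lc.3} together with Assumptions~\ref{ass4.1}--\ref{ass4.2} give $\|\Delta\|_F=o_p(\sqrt{N})$, while the same assumptions give $\mu_T\max_{(i,j)\in S_L}w_{ij}K_T=o(N)$ and $O_p(\sqrt{(\log N)/T})\sqrt{N+D}+\mu_T\max_{i\neq j,\,(i,j)\in S_U}w_{ij}\sqrt{D}=o_p(\sqrt{N})$; hence $NQ_1(\hSigt)\ge-o_p(N)$, i.e.\ $Q_1(\hSigt)\ge-o_p(1)$. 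Plugging this and the $Q_3$ bounds into the displayed inequality yields $Q_2(\hLamt,\hSigt)=o_p(1)$.

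To convert this into the matrix statement~(i), set $M=\Lambda_0'\hSigti\Lambda_0-\Lambda_0'\hSigti\hLamt(\hLamtp\hSigti\hLamt)^{-1}\hLamtp\hSigti\Lambda_0$, so that $\tr M=N\,Q_2(\hLamt,\hSigt)=o_p(N)$. Since $M=\Lambda_0'(\hSigt)^{-1/2}(I_N-P)(\hSigt)^{-1/2}\Lambda_0$ with $P$ the orthogonal projection onto the column space of $(\hSigt)^{-1/2}\hLamt$, $M$ is positive semidefinite and $r\times r$ with $r$ fixed, so $0\le\lambda_{\max}(M)\le\tr M=o_p(N)$, hence $M=o_p(N)$ entrywise. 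The identity $I_r-J=\Lambda_0'\hSigti\hLamt(\hLamtp\hSigti\hLamt)^{-1}$, which follows directly from $J=(\hLamt-\Lambda_0)'\hSigti\hLamt(\hLamtp\hSigti\hLamt)^{-1}$, gives $(I_r-J)\hLamtp\hSigti\hLamt(I_r-J)'=\Lambda_0'\hSigti\hLamt(\hLamtp\hSigti\hLamt)^{-1}\hLamtp\hSigti\Lambda_0$, so $M=\Lambda_0'\hSigti\Lambda_0-(I_r-J)\hLamtp\hSigti\hLamt(I_r-J)'$. Finally, adding $\Lambda_0'(\Siguni-\hSigti)\Lambda_0=o_p(N)$ from Lemma~\ref{lc.5} gives
\[
\Lambda_0'\Siguni\Lambda_0-(I_r-J)\hLamtp\hSigti\hLamt(I_r-J)'=\Lambda_0'(\Siguni-\hSigti)\Lambda_0+M=o_p(N),
\]
which is part~(i).

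The main obstacle is the lower bound $Q_1(\hSigt)\ge-o_p(1)$. In the two-step case (Lemma~\ref{lb.1}) the analogous step is trivial because Theorem~\ref{thb.1} directly controls $N^{-1}\Lambda_0'((\hSigo)^{-1}-\Siguni)\Lambda_0$; here no off-the-shelf covariance rate is available for the penalized estimator, so the bound must be extracted from the self-referential inequality of Lemma~\ref{lc.2}, carefully using Assumption~\ref{ass4.2} to absorb all the cross terms involving $\|\Delta\|_F$, $K_T$, and $D$. Once that estimate is in place (it is the content of Lemmas~\ref{lc.2}--\ref{lc.3}), everything else is routine linear algebra, and the passage from $Q_2(\hLamt,\hSigt)=o_p(1)$ to~(i) is the same positive-semidefiniteness argument used in the proof of Lemma~\ref{lb.1}.
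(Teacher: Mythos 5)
Your proposal is correct and follows essentially the same route as the paper: minimality of $L_c$ at $(\hLamt,\hSigt)$ plus $Q_1(\Sigun)=Q_2(\Lambda_0,\Sigun)=0$, the uniform bound on $Q_3$ from Lemma \ref{la.2}, the lower bound on $Q_1(\hSigt)$ extracted from Lemma \ref{lc.2} to conclude $Q_2(\hLamt,\hSigt)=o_p(1)$, the positive-semidefiniteness argument to pass from the trace to the matrix statement, and Lemma \ref{lc.5} to replace $\hSigti$ by $\Siguni$. You merely make explicit two steps the paper leaves implicit — the verification that the cross terms in Lemma \ref{lc.2} are $o_p(N)$ under Assumptions \ref{ass4.1}--\ref{ass4.2}, and the identity $I_r-J=\Lambda_0'\hSigti\hLamt(\hLamtp\hSigti\hLamt)^{-1}$ — and both check out.
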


\begin{proof}
  We first show part (i) of Lemma \ref{assa.1}.  Since $L_c(\hLamt,\hSigt)\leq L_c(\Lambda_0, \Sigma_{u0})$, and $Q_1(\Sigma_{u0})=Q_2(\Lambda_0, \Sigma_{u0})=0$, there is a nonnegative sequence $d_n=O_p(N^{-1}\log N+ T^{-1/2}(\log N)^{1/2})$ such that $Q_1(\hSigt)+Q_2( \hLamt, \hSigt)\leq d_n.$
Lemma \ref{lc.2} then implies $0\leq Q_2(\hSigt,\hLamt)=o_p(1)$. 
On the other hand,
$$
Q_2(\hSigt,\hLamt)=\frac{1}{N}\tr\left[\Lambda_0' \hSigti\Lambda_0 -\Lambda_0' \hSigti\hLamt(\hLam'\hSigti\hLamt)^{-1}
\hLam'\hSigti\Lambda_0 \right].
$$The matrix in the bracket is semi-positive definite.
Hence 
\begin{equation}\label{eqc.2}
\frac{1}{N}\Lambda_0' \hSigti\Lambda_0 -(I_r- J )\frac{1}{N}\hLam'\hSigti\hLamt(I_r- J )'=o_p(1).
\end{equation}
Finally, the desired result follows from  Lemma \ref{lc.5}. 

The first order condition in part (ii) is easy to derive and  is the same as that in Bai and Li (2012).

\end{proof}

\textbf{Proof of Theorem  4.1}

$N^{-1}\|\hSigt-\Sigma_{u0}\|_F^2=o_p(1)$ follows from Lemma \ref{lc.3} and Assumption  4.2.  On the other hand,  equation (\ref{eqc.2})   also implies
$$
\frac{1}{N}(\hLamt-\Lambda_0)'\hSig_u^{-1}(\hLamt-\Lambda_0)-J\frac{1}{N}H^{-1}J'=o_p(1).
$$
By Lemma \ref{la.4}, $N^{-1}JH^{-1}J'=o_p(1)$. Hence $N^{-1}(\hLamt-\Lambda_0)'\hSig_u^{-1}(\hLam-\Lambda_0)=o_p(1)$, which implies the consistency  $N^{-1}\|\hLam-\Lambda_0\|^2=o_p(1)$  because  the  eigenvalues of $\hSig_u^{-1}$ are bounded away from zero. Q.E.D.

To prove the consistency of $\hftt$, we note that the expansion (\ref{eqb.9}) still holds for $\hftt$.  Since $J=o_p(1)$ by Lemma \ref{la.4},  and $\bar{u}$ is of smaller order than $u_t$ for each fixed $t$. Hence $
\hftt-f_t=O_p(N^{-1})\hLamtp\hSigti u_t+o_p(1).
$
Moreover, since $\|\hSigti\|$ and $\|\hSigt\|$ are both $O_p(1)$ and $\|\hLamt\|_F=O_p(\sqrt{N})$ by the restriction of the parameter space $\Theta_{\lambda}\times\Gamma$, we have $N^{-1}\|\hLamtp\hSigti-\Lambda_0\Sigun^{-1}\|_F=O_p(N^{-1/2}\|\hLamt-\Lambda_0\|_F+N^{-1/2}\|\hSigt-\Sigun\|_F)$, which is $o_p(1)$ as proved above. Therefore, since $N^{-1}\Lambda_0'\Sigun^{-1}u_t=N^{-1}\sum_{i=1}^N\xi_iu_{it}=O_p(N^{-1/2})$,
$$
\hftt-f_t=O_p(N^{-1})\Lambda_0'\Sigun^{-1}u_t+o_p(1)=o_p(1).
$$

\subsection{Proof of Theorem  4.2}
We now verify Assumption  4.2 for the Adaptive Lasso.  

\begin{lem}\label{lc.7}
For adaptive lasso,\\
(i) $\min_{i\neq j,  (i,j)\in S_U}|\Sigma_{u0, ij}|^{\gamma}\max_{i\neq j, (i,j)\in S_U}w_{ij}=O_p(1)$.\\
(ii) $\delta_T^{\gamma}\max_{  (i,j)\in S_L}w_{ij}=O_p(1)$,\\
(iii) $\omega_T^{-\gamma}(\min_{  (i,j)\in S_L}w_{ij})^{-1}=O_p(1)$ (recall that $\omega_T=N^{-1/2}+T^{-1/2}(\log N)$).
\end{lem}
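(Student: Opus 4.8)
The plan is to read off all three bounds directly from the explicit form of the adaptive-lasso weights, $w_{ij}=(|R_{ij}|+\delta_T)^{-\gamma}$ with $R_{ij}=\hSig_{u,ij}^*$ the PCA estimator of $\Sigma_{u0,ij}$, using three facts already available: the uniform PCA rate $\max_{i,j\le N}|R_{ij}-\Sigma_{u0,ij}|=O_p(\omega_T)$ (Lemma \ref{lb.5}(iv), i.e.\ Fan et al.\ 2012), the separation of large and small entries in Assumption \ref{ass3.6}, namely $\max_{(i,j)\in S_L}|\Sigma_{u0,ij}|\ll\omega_T\ll\min_{i\neq j,(i,j)\in S_U}|\Sigma_{u0,ij}|$, and the rate restriction $\delta_T\ll\omega_T$ from (\ref{e4.2}). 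Everything then reduces to monotonicity of $t\mapsto t^{-\gamma}$.

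Part (ii) is immediate: since $|R_{ij}|\ge0$ we have $w_{ij}\le\delta_T^{-\gamma}$ for every pair $(i,j)$, so $\delta_T^{\gamma}\max_{(i,j)\in S_L}w_{ij}\le1=O_p(1)$. For part (iii), write $(\min_{(i,j)\in S_L}w_{ij})^{-1}=\bigl(\max_{(i,j)\in S_L}(|R_{ij}|+\delta_T)\bigr)^{\gamma}$. For $(i,j)\in S_L$ the triangle inequality gives $|R_{ij}|\le\max_{(i,j)\in S_L}|\Sigma_{u0,ij}|+\max_{k,l\le N}|R_{kl}-\Sigma_{u0,kl}|=o(\omega_T)+O_p(\omega_T)=O_p(\omega_T)$ by Assumption \ref{ass3.6} and Lemma \ref{lb.5}(iv); combined with $\delta_T\ll\omega_T$ this gives $\max_{(i,j)\in S_L}(|R_{ij}|+\delta_T)=O_p(\omega_T)$, hence $(\min_{(i,j)\in S_L}w_{ij})^{-1}=O_p(\omega_T^{\gamma})$, which is (iii).

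Part (i) is the only place requiring a little care, since one needs a lower bound on $|R_{ij}|$ holding uniformly over $i\neq j$, $(i,j)\in S_U$ with probability approaching one. Fix $\epsilon>0$. Because $\max_{k,l}|R_{kl}-\Sigma_{u0,kl}|=O_p(\omega_T)$ while $\omega_T\ll\min_{i\neq j,(i,j)\in S_U}|\Sigma_{u0,ij}|$, for all large $N,T$ the event $\mathcal{A}=\{\max_{k,l}|R_{kl}-\Sigma_{u0,kl}|\le\tfrac12\min_{i\neq j,(i,j)\in S_U}|\Sigma_{u0,ij}|\}$ has probability at least $1-\epsilon$. On $\mathcal{A}$, for every $i\neq j$ with $(i,j)\in S_U$ one has $|R_{ij}|\ge|\Sigma_{u0,ij}|-\max_{k,l}|R_{kl}-\Sigma_{u0,kl}|\ge\tfrac12|\Sigma_{u0,ij}|$, so $w_{ij}\le|R_{ij}|^{-\gamma}\le2^{\gamma}|\Sigma_{u0,ij}|^{-\gamma}$; taking the maximum over such $(i,j)$ yields $\max_{i\neq j,(i,j)\in S_U}w_{ij}\le2^{\gamma}\bigl(\min_{i\neq j,(i,j)\in S_U}|\Sigma_{u0,ij}|\bigr)^{-\gamma}$ on $\mathcal{A}$. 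Multiplying through by $\min_{i\neq j,(i,j)\in S_U}|\Sigma_{u0,ij}|^{\gamma}$ shows the product in (i) is bounded by $2^{\gamma}$ on $\mathcal{A}$, and since $\epsilon$ was arbitrary this is $O_p(1)$.

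Thus there is no deep obstacle; the one thing to watch is that the separation hypothesis in Assumption \ref{ass3.6} is exactly what promotes the entrywise rate of $R_{ij}$ into the uniform comparison $|R_{ij}|\asymp|\Sigma_{u0,ij}|$ on $S_U$ and $|R_{ij}|=O_p(\omega_T)$ on $S_L$, after which the weight bounds are one-line consequences of the monotonicity of $t\mapsto t^{-\gamma}$. One should also state explicitly that $\delta_T\to0$ and $\gamma\in(0,1]$ are used only to keep $t\mapsto t^{-\gamma}$ monotone and the inequalities well defined.
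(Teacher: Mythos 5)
Your proposal is correct and follows essentially the same route as the paper: both arguments combine the uniform PCA rate $\max_{i,j}|\hSig_{u,ij}^*-\Sigma_{u0,ij}|=O_p(\omega_T)$ with the separation condition of Assumption \ref{ass3.6} and the requirement $\delta_T\ll\omega_T$, then read the three bounds off the explicit formula $w_{ij}=(|\hSig_{u,ij}^*|+\delta_T)^{-\gamma}$. Your write-up merely makes explicit the high-probability event underlying part (i), which the paper leaves implicit.
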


\begin{proof} By Lemma \ref{lb.5} 
$\max_{i\leq N, j\leq N}|\hSig_{u,ij}^*-\Sigma_{u0, ij}|=O_p(\tau).$
  Given this result and the assumption that $\min_{(i,j)\in S_U}|\Sigma_{u0, ij}|\gg \omega_T$, we have result (i).  For any $(i,j)\in S_L$, the following inequality holds:
$
\delta_T^{-\gamma}\leq w_{ij}^{-1}\leq (|\Sigma_{u0, ij}|+|\Sigma_{u0, ij}-\hSig_{u, ij}|+\delta_T)^{\gamma},
$
which then implies results (ii) and (iii), due to the assumptions that $\delta_T=o(\omega_T)$, and $\Sigma_{u0, ij}=O(\omega_T).$
\end{proof}

\textbf{Proof of Assumption  4.2 for Adaptive Lasso}

It follows from the previous lemma that
$\alpha_T=O_p(\omega_T^{\gamma}(\min_{i\neq j, \Sigma_{u0, ij}\in S_U}|\Sigma_{u0, ij}|)^{-\gamma})=o_p(1),$
and $\beta_T=O_p((\omega_T/\delta_T)^{\gamma})$. By the assumption that $D=O(N)$,  
$$\zeta=\min\left\{
\sqrt{\frac{T}{\log N}}\frac{N}{D},\left(\frac{T}{\log N}\right)^{1/4}\sqrt{\frac{N}{D}}, \frac{N}{\sqrt{D
\log N}}\right\}\gg\min\left\{
\left(\frac{T}{\log N}\right)^{1/4}, \sqrt{\frac{N}{\log N}}\right\}.$$
 Hence $\alpha_T=O_p(\zeta)$. This together with the lower bound assumption on $\delta_T$ yields Assumption  4.2 (i).

For part (ii), note that $\alpha_T=o_p(1)$ implies that with probability approaching one, $$\min\{N, \frac{N^2}{D}, \frac{N^2}{D}\alpha_T^{-2}\}=N, \hspace{1em}\min\{\frac{N}{D}, \sqrt{\frac{N}{D}}, \frac{N}{D}\alpha_T^{-1}\}=\sqrt{\frac{N}{D}}.$$
By Lemma \ref{lc.7}(ii), (recall that $K_T=\sum_{(i,j)\in S_L}|\Sigma_{u0, ij}|$)  and the lower bound $\delta_T\gg \omega_T(K_T/N)^{1/\gamma}$,  $\mu_T\max_{(i,j)\in S_L}w_{ij}K_T=O_p(\mu_T\delta_T^{-\gamma}K_T)=o_p(N).$

By Lemma \ref{lc.7}(i) and the assumptions that $D=O(N)$ and $\min_{i\neq j, (i,j)\in S_U}|\Sigma_{u0, ij}|\gg \omega_T,$ we have $\mu_T\max_{i\neq j, (i,j)\in S_U}w_{ij}=O_p\mu_T( \min_{i\neq j, (i,j)\in S_U}|\Sigma_{u0, ij}|^{\gamma})^{-1}=o_p(\sqrt{N/D}),
$
due to the upper bound on $\mu_T=o(\omega_T^{\gamma}).$ Finally,  by Lemma \ref{lc.7}(iii) and the assumption  that  $\mu_T\gg \omega_T^{1+\gamma}$,  we have $\mu_T\min_{(i,j)\in S_L}w_{ij}\gg \omega_T.$ 

\textbf{Proof of Assumption  4.2 for SCAD}

Since $\mu_T/\min_{i\neq j, (i,j)\in S_U}|R_{ij}|=o_p(1)$ and $\max_{(i,j)\in S_L}|R_{ij}|=o_p(\mu_T)$,
it is easy to verify that with probability approaching one, 
 $\max_{i\neq j, (i,j)\in S_U}w_{ij}= 0$, $\min_{(i,j)\in S_L}w_{ij}=\max_{(i,j)\in S_L}w_{ij}=\mu_T$. 
   Hence $\alpha_T=0$ and $\beta_T=1.$ This immediately implies the desired result.

\newpage

\end{document}